\newcommand*{\domain}[0]{\ensuremath{\mathrm{dom}}}
\newcommand*{\Fraisse}[0]{Fra\"iss\'e\xspace}
\newcommand*{\FO}[1]{\ensuremath{\mathrm{FO}_{#1}}}
\newcommand{\N}{\ensuremath{\mathbb{N}}}  
\newcommand*{\height}[0]{\ensuremath{\mathrm{hgt}}}
\newcommand*{\TOP}[1]{\ensuremath{\mathrm{top}_{#1}}}
\newcommand*{\Push}[1]{\ensuremath{\mathrm{push}_{#1}}}
\newcommand*{\Pop}[1]{\ensuremath{\mathrm{pop}_{#1}}}
\newcommand*{\Clone}[1]{\ensuremath{{\mathrm{clone}_{#1}}}}
\newcommand*{\Op}[0]{\mathsf{OP}}
\newcommand*{\op}[0]{\mathsf{op}}
\newcommand*{\trans}[1]{\ensuremath{\mathrel{{\vdash^{#1}}}}}
\newcommand*{\invtrans}[1]{\ensuremath{\mathrel{{\dashv^{#1}}}}}
\newcommand*{\prefixeq}[0]{\ensuremath{\mathop{\trianglelefteq}}}
\newcommand*{\notprefixeq}[0]{\ensuremath{\mathop{\not\trianglelefteq}}}
\newcommand*{\length}[0]{\mathrm{len}}
\newcommand*{\RelAnc}[2]{\ensuremath{{\mathrm{RA}_{#1}({#2})}}}
\newcommand*{\AncestorClass}[4]{\mathfrak{R}_{#1, #2, #3, #4}}
\newcommand*{\RelAncequiv}[4]{\mathrel{{_{#4}^{#1}}{\equiv}_{#2}^{#3}}}
\newcommand*{\HONPT}[0]{\ensuremath{\mathrm{NPT}}\xspace}
\newcommand*{\BoundHeight}[0]{\zeta} 
\newcommand*{\BoundWidth}[0]{\eta}
\newcommand*{\BoundRunLength}[0]{\theta}
\newcommand*{\FuncBoundReturnLength}[2]{\ensuremath{ \mathrm{RL}_{#2}^{#1}}}  
\newcommand*{\FuncBoundLoopLength}[2]{\ensuremath{ \mathrm{LL}_{#2}^{#1}}}  
\newcommand*{\FuncBoundHighLoopLength}[2]{\ensuremath{ \mathrm{HLL}_{#2}^{#1}}}
\newcommand*{\Lin}[3]{\ensuremath\mathfrak{Lin}_{#1}^{#2;#3}}
\newcommand*{\Typ}[3]{\ensuremath\mathrm{Type}_{#1}^{#2;#3}}
\newcommand*{\wordequiv}[2]{\equiv_{#1}^{#2}}
\newcommand*{\stackequiv}[3]{\mathrel{{_{#3}}{\equiv}_{#1}^{#2}}}
\newcommand*{\stackequivTyp}[3]{{\stackequiv{#1}{#2}{#3}}\text{-}\mathrm{Type}}
\newcommand*{\ReturnFunc}[1]{\ensuremath{\mathrm{\#Ret}^{#1}}}
\newcommand*{\LoopFunc}[1]{\ensuremath{\mathrm{\#Loop}^{#1}}}
\newcommand*{\HighLoopFunc}[1]{\ensuremath{\mathrm{\#HLoop}^{#1}}}
\newcommand*{\FuncBoundTopWord}[0]{\alpha}
\newcommand*{\FuncBoundWidthWord}[0]{\beta}
\newcommand*{\ConstBoundHeightWord}[0]{\mathrm{B_{hgt}}}
\newcommand*{\BoundHeightOnestepConstructionSimultanious}[0]{\gamma}
\newcommand*{\Milestones}[0]{\ensuremath{\mathrm{MS}}}
\newcommand*{\genMilestones}[0]{\ensuremath{\mathrm{GMS}}}
\newcommand*{\jumpedge}[0]{\curvearrowright}
\newcommand*{\jumpleftedge}[0]{\curvearrowleft}
\newcommand*{\plusedge}[0]{\lcurvearrowne}
\newcommand*{\plusleftedge}[0]{\rcurvearrownw}
\newcommand*{\titlefootnote}{%
  This paper was mainly written while the author was a student at
  Technische Universit\"at Darmstadt and was funded by the DFG via
  the project 
  ``Strukturkonstruktionen und modelltheoretische Spiele in
  speziellen Strukturklassen''.
}
\title{First-Order Logic on Higher-Order Nested Pushdown Trees}
\author{ALEXANDER KARTZOW
  \processifversion{ARXIV}{\footnote{\titlefootnote}}
  \\ Universit\"at Leipzig}
\begin{abstract}
  We introduce a new hierarchy of \emph{higher-order
   nested pushdown trees} generalising  
    Alur et al.'s concept of nested pushdown trees.
   Nested pushdown trees are useful representations of control flows
   in the verification of programs with recursive calls of
   first-order functions. 
   Higher-order nested pushdown trees are expansions 
   of unfoldings of 
   graphs generated by higher-order pushdown systems. 
   Moreover, the class of nested pushdown trees of level $n$ is
   uniformly first-order interpretable in the class of collapsible
   pushdown graphs of level $n+1$. The relationship between the class
   of higher-order pushdown graphs and the class of collapsible
   higher-order pushdown graphs is not very well understood. 
   We hope that the further study of the nested pushdown tree
   hierarchy leads to a better 
   understanding of these two hierarchies.
   In this paper, we are concerned with the first-order model checking
   problem on 
   higher-order nested pushdown trees.
   We show that the first-order 
   model checking on the first two levels of this hierarchy is
   decidable. Moreover, we obtain an $2$-EXPSPACE algorithm  for the
   class of nested pushdown trees of level $1$. The proof technique
   involves a pseudo-local analysis of strategies in the
   Ehrenfeucht-\Fraisse games on two identical copies of a nested
   pushdown tree. Ordinary 
   locality arguments in the spirit of Gaifman's lemma do not apply
   here because nested pushdown trees tend to have small diameters. 
   We introduce the  notion of relevant ancestors which provide a
   sufficient 
   description of the $\FO{k}$-type of each element in a higher-order
   nested pushdown tree. The local analysis of these ancestors allows
   us to prove the existence of restricted  winning strategies in the
   Ehrenfeucht-\Fraisse game. These strategies are then used to create
   a first-order model checking algorithm.
\end{abstract}
\keywords{
  higher-order pushdown graph, higher-order pushdown system,
  first-order logic, FO, decidability
  , pumping lemma, nested pushdown tree,
  Ehrenfeucht-\Fraisse game, first-oder model checking}
\begin{document}

\begin{ACM}
  \begin{bottomstuff}
    \titlefootnote
  \end{bottomstuff}
\end{ACM}

\maketitle

\section{Introduction}
During the last decade, different generalisations of pushdown systems
have gained attention in the field of software verification and model
checking.  
Knapik et al.\
\citeyear{KNU02} showed that Higher-order pushdown systems, first defined
by Maslov \citeyear{Maslov74,Maslov76}, generate the same
class of trees as \emph{safe} higher-order recursion schemes.
Safety is a syntactic condition concerning the order of the 
output compared to the order of the inputs of higher-order recursion schemes.
A higher-order pushdown system is a pushdown system that uses a
nested stack structure instead of an ordinary stack. This means that
a level $2$ pushdown system ($2$-PS) uses a stack of stacks, a level
$3$ pushdown system ($3$-PS) uses a stack of stacks of stacks,
etc.  Hague et al.\ \citeyear{Hague2008} defined the class of
collapsible pushdown systems by adding a new stack operation called
collapse. 
They proved that the trees generated by level $l$ collapsible
pushdown systems coincide with the trees generated by level $l$
recursion schemes. The exact relationship between the higher-order
pushdown hierarchy and the collapsible pushdown hierarchy remains 
an open problem. It is not known whether the trees generated by safe
recursion schemes are a proper subclass of the trees generated by all
recursion schemes.\footnote{Recently,
  P. Parys \citeyear{parys:LIPIcs:2011:3047} proved the uniform
  safety conjecture for level $2$: there is a level $2$ recursion
  scheme that is not generated by any level $2$ safe scheme.} 
Due to the correspondence of recursion schemes and pushdown trees,
this question can be equivalently formulated as follows: is there
some collapsible pushdown system that generates a tree which is not
generated by any higher-order pushdown system?

Also from a model theoretic perspective these hierarchies are
interesting classes. 
The graphs generated by higher-order pushdown systems are
exactly the graphs in the Caucal-hierarchy \cite{cawo03}. Thus, they
are one of the largest known classes with decidable monadic
second-order theories. In contrast,  Broadbent
\citeyear{Broadbent2012} recently showed that the first-order theories
of  graphs generated by
level $3$ collapsible pushdown systems are in general undecidable
(level $2$ collapsible pushdown graphs have undecidable monadic
second-order theories but decidable first-order theories
\cite{Hague2008,Kartzow10}).
Thus, the collapse  operation induces a drastic
change with respect to classical decidability issues. 

Furthermore, collapsible pushdown graphs have decidable modal
$\mu$-calculus theories \cite{Hague2008}. In fact, the class of collapsible
pushdown graphs and the class of nested pushdown trees are the only
known natural classes with decidable modal $\mu$-calculus theories
but undecidable monadic second-order theories. 

Further study of the higher-order pushdown hierarchy and the
collapsible pushdown hierarchy is
necessary for a better understanding of these results. It  may also
reveal an answer to the 
question whether safety implies a semantical restriction for recursion
schemes. 

In this paper we introduce the hierarchy of higher-order nested
pushdown trees. This is a new hierarchy between the hierarchy of
higher-order pushdown trees and that of collapsible pushdown
graphs. We hope that its study reveals more insights into the
structure of these
hierarchies. 

Nested pushdown trees were first introduced by 
Alur et al.\ \citeyear{Alur06languagesof}. These are trees generated by
pushdown systems (of level 1) enriched by a new \emph{jump relation}
that connects each push operation with the corresponding pop
operations. They introduced these trees in order to 
verify  specifications concerning  pre/postconditions on function
calls/returns in 
recursive first-order programs. ``Ordinary'' pushdown trees offer suitable
representations of control flows of recursive first-order functions.
Since these trees have decidable monadic second-order theories
\cite{MullerS85}, one
can use these representations fruitfully for verification
purposes. But monadic
second-order logic does not provide the expressive power necessary for
defining the position before and after the call of a certain
function in such a pushdown tree. Alur et al.'s new jump relation makes
these pairs of positions 
definable by a quantifier-free formula.  Unfortunately, this new
relation turns the monadic second-order theories undecidable. 
But they showed that modal
$\mu$-calculus model checking is still decidable on the class of nested
pushdown trees. Thus, nested pushdown trees form a suitable
representation for control flows of first-order recursive programs
for the verification of  modal $\mu$-calculus definable
properties of the control flows including  pre/postconditions on
function calls/returns.

Of course, the idea of making corresponding push and pop operations
visible is not restricted to pushdown
systems of level $1$. We define a level $n$ nested pushdown tree 
($n$-NPT) to be a tree generated by a level $n$ pushdown
system (without 
collapse!) expanded by a jump relation that connects every push of
level $n$ with the corresponding pop operations (of level $n$). 

This new hierarchy contains by definition expansions of
higher-order pushdown trees. Moreover, we show that the class of 
$n$-NPT is uniformly first-order interpretable in the class of level
$n+1$ collapsible pushdown graphs. 

We then study first-order model checking on the first two levels of
this new hierarchy. In particular, we provide a $2$-EXPTIME alternating
Turing machine  deciding the model
checking problem
for the class of $1$-NPT. We already proved the same complexity bound in
\cite{Kartzow09}. Here, we reprove the statement with a different
technical approach that generalises to the higher levels of the nested
pushdown tree hierarchy.

\paragraph{Outline}
Section \ref{sec:Preliminaries} contains some basic definitions
concerning first-order logic and higher-order pushdown
systems. Moreover, we recall the basics of Ehrenfeucht-\Fraisse games
and explain how the analysis of strategies in these games can be used
to derive first-order model checking algorithms on certain classes of
structures. 
In Section \ref{sec:HONPT}, we then introduce the hierarchy of nested
pushdown 
trees. We relate this hierarchy to the hierarchies of pushdown
trees and of collapsible pushdown graphs. 
From that point on, we only focus on the first-order model checking
problem for the first two levels of the
nested pushdown tree hierarchy. 
In Section \ref{sec:TowardsModelChecking}, we explain the rough
picture 
how the ideas of Section \ref{sec:EFGame} lead to a model checking
algorithm for this class. 
We then give an outline of the Sections
\ref{sec:TheoryRuns}--\ref{sec:FODecidability} which
provide the details of the correctness proof for the model checking
algorithm presented in Section \ref{sec:FODecidability}. 
Finally, we give some concluding remarks and point to open problems
in Section \ref{sec:Conclusion}.

\section{Preliminaries and Basic Definitions}
\label{sec:Preliminaries}
 We denote first-order logic by $\FO{}$. The quantifier
 rank of some formula $\varphi\in\FO{}$
 is the maximal number of nestings of
 existential and universal quantifiers in $\varphi$.
 We denote by $\FO{\rho}$ the set of first-order
 formulas of quantifier rank up to $\rho$ and 
 by $\equiv_\rho$ equivalence of structures (with parameters) with
 respect to all 
 $\FO{\rho}$ formulas. This means that for 
structures $\mathfrak{A}, \mathfrak{B}$ and parameters $\bar a\in
\mathfrak{A}^n$, $\bar b\in \mathfrak{B}^n$,
$\mathfrak{A}, \bar a
\equiv_\rho \mathfrak{B}, \bar b$ if and only if for all
$\varphi\in\FO{\rho}$ (with $n$ free variables), 
\mbox{$\mathfrak{A}\models\varphi(\bar a) \Leftrightarrow
\mathfrak{B}\models\varphi(\bar b)$} holds. 

The $\FO{}$ model checking problem on a class $\mathcal{C}$
asks for an algorithm that  determines whether
$\mathfrak{A}\models\varphi$ on input $(\mathfrak{A}, \varphi)$ 
where $\mathfrak{A}\in\mathcal{C}$ and
$\varphi\in\FO{}$. In Section \ref{sec:EFGame}, we develop
a translation from \emph{dynamic-small-witness strategies} in
Ehrenfeucht-\Fraisse games to 
\FO{} model checking algorithms on nice classes of structures. A
dynamic-small-witness strategy in the Ehrenfeucht-\Fraisse game allows
Duplicator to answer any challenge of Spoiler by choosing some element
with a short representation. 
In Section \ref{sec:HPSDef} we introduce higher-order pushdown
systems. 

\subsection{Ehrenfeucht-\Fraisse Games and First-Order Model Checking}
\label{sec:EFGame}
The equivalence $\equiv_\rho$  has a nice characterisation via
\emph{Ehrenfeucht-\Fraisse  
games}. Based on the work of \Fraisse \citeyear{Fraisse54},
Ehrenfeucht \citeyear{Ehren60} 
introduced these games which have become one of the most important
tools for proving inexpressibility of properties in first-order
logic. 
In this paper, we use a nonstandard application of
Ehrenfeucht-\Fraisse game 
analysis to the $\FO{}$ model checking problem: strategies of
Duplicator that 
only choose elements with small representations can be turned into a
model checking 
algorithm. After briefly recalling the basic definitions, we explain
this approach to model checking in detail. In the main part of this
paper, we will see that this approach yields an $\FO{}$ model checking
algorithm on the class of nested pushdown trees of level $2$. 
\begin{definition}
  Let $\mathfrak{A}_1$ and $\mathfrak{A}_2$ be $\sigma$-structures. 
  For tuples
  \begin{align*}
    &\bar a^1=a^1_1, a^1_2, \dots, a^1_m\in A_1^m\text{ and }
    \bar a^2=a^2_1, a^2_2, \dots, a^2_m\in A_2^m
  \end{align*}
  we write $\bar a^1
  \mapsto \bar a^2$ for the map that maps $a^1_i$ to $a^2_i$ for all
  $1\leq i \leq m$. 
  In the  $n$-round Ehrenfeucht-\Fraisse game on
  $\mathfrak{A}_1, a^1_1, a^1_2, \dots, a^1_m$ and $\mathfrak{A}_2,
  a^2_1, a^2_2, \dots, a^2_m$ for $a^j_i\in A_j$ there are two
  players, Spoiler 
  and Duplicator, which play according to the following rules. 
  The game is played for $n$ rounds. The $i$-th round consists of the
  following steps.
  \begin{enumerate}
  \item Spoiler chooses one of the structures, i.e., he chooses $j\in\{1,2\}$.
  \item Then he chooses one of the elements of his structure, i.e., he
    chooses some $a^j_{m+i}\in A_j$.
  \item Now, Duplicator chooses some $a^{3-j}_{m+i} \in A_{3-j}$. 
  \end{enumerate}
  Having executed $n$ rounds, Spoiler and Duplicator have chosen
  tuples 
  \begin{align*}
  &\bar a^1:= a^1_1, a^1_2, \dots, a^1_{m+n} \in A_1^{m+n}\text{ and }
  \bar a^2:= a^2_1, a^2_2, \dots, a^2_{m+n} \in A_2^{m+n}.    
  \end{align*}
  Duplicator wins 
  the play if $f:\bar a^1 \mapsto \bar a^2$ is a partial isomorphism,
  i.e., if $f$ satisfies 
  \begin{enumerate}
  \item $a^1_i = a^1_j$ if and only if  $a^2_i = a^2_j$ for all $1\leq
    i \leq j \leq m+n$, and
  \item for each $R_i\in \sigma$ of arity $r$ the following holds: for
    $i_1, i_2, \dots, i_r$ numbers between $1$ and $m+n$, 
    \mbox{$\mathfrak{A_1}, \bar a^1 \models R_i x_{i_1} x_{i_2} \dots x_{i_r}$}
    if and only if
    \mbox{$\mathfrak{A_2}, \bar a^2 \models R_i x_{i_1} x_{i_2} \dots x_{i_r}$}.
  \end{enumerate}
\end{definition}

\begin{lemma}[\cite{Fraisse54,Ehren60}]
  Let $\mathfrak{A}_1$, $\mathfrak{A}_2$ be structures and let 
  \mbox{$\bar a^1 \in \mathfrak{A}_1^n$}, \mbox{$\bar a^2\in
    \mathfrak{A}_2^n$} be $n$-tuples. 
  Duplicator has a winning strategy in the $\rho$-round 
  Ehrenfeucht-\Fraisse game on 
  $\mathfrak{A}_1, \bar a^1$ and $\mathfrak{A}_2, \bar a^2$ if and
  only if 
  $\mathfrak{A}_1, \bar a^1 \equiv_\rho\mathfrak{A}_2, \bar a^2$.
\end{lemma}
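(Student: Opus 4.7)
The plan is to prove the equivalence by induction on the quantifier rank $\rho$, as is standard for this folklore result. Since the paper works only with finite relational signatures (note that relations $R_i$ appear with a fixed arity $r$), the inductive machinery is entirely symmetric and does not require additional care for functions or constants.

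For the base case $\rho=0$, the $0$-round game has no moves, so Duplicator wins if and only if the map $f\colon \bar a^1\mapsto \bar a^2$ is already a partial isomorphism. On the logical side, \FO{0} consists exactly of Boolean combinations of atomic formulas over the parameters $\bar a^j$, and two tuples agree on all such formulas precisely when $f$ preserves equalities and the interpretations of each $R_i\in\sigma$ in both directions. Hence the two conditions coincide.

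For the inductive step, assume the claim for $\rho$ and prove it for $\rho+1$. For the direction from winning strategies to logical equivalence, I would fix $\varphi\in\FO{\rho+1}$ and, after reducing to the case $\varphi = \exists x\,\psi(\bar y,x)$ with $\psi\in\FO{\rho}$ (closure under Boolean combinations being immediate), let $a$ be a witness in $\mathfrak{A}_1$. Simulating this as a Spoiler move and letting Duplicator respond with $b$ according to her strategy leaves a position from which she still wins the remaining $\rho$ rounds; by the induction hypothesis the extended tuples are $\equiv_\rho$-equivalent, so $\psi$ transfers, and thus the existential statement holds in $\mathfrak{A}_2$ as well. The universal and negation cases are analogous or dual.

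The converse direction is the main obstacle and the place where the finiteness of $\sigma$ is used. To convert $\equiv_{\rho+1}$-equivalence into a winning strategy, I would first prove the auxiliary fact that for every finite relational $\sigma$, every $n$, and every $\rho$, there are only finitely many $\equiv_\rho$-classes of $n$-tuples, and each such class is defined by a single \emph{Hintikka/characteristic} formula $\chi_{\mathfrak{A},\bar a}^\rho\in\FO{\rho}$; this is itself proved by induction on $\rho$, starting from the finite conjunction of atomic and negated atomic formulas at level $0$ and using the inductive bound on the number of $(\rho{-}1)$-types to form, at rank $\rho$, the conjunction of all formulas $\exists x\,\chi^{\rho-1}_{\mathfrak{A},\bar a c}$ together with the negations of those realised in other types. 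Once this is in hand, Duplicator's strategy is defined as follows: if Spoiler picks $a\in A_1$ in the $(\rho{+}1)$-st round, the formula $\exists x\,\chi^{\rho}_{\mathfrak{A}_1,\bar a^1 a}(\bar y, x)$ lies in \FO{\rho+1} and is satisfied by $\bar a^1$ in $\mathfrak{A}_1$; by $\equiv_{\rho+1}$-equivalence it is also satisfied by $\bar a^2$ in $\mathfrak{A}_2$, yielding a witness $b$ that Duplicator plays. By construction the extended tuples are $\equiv_\rho$-equivalent, so by the inductive hypothesis Duplicator has a winning strategy for the remaining $\rho$ rounds, completing the proof.
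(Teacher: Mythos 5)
The paper does not prove this lemma at all --- it is quoted as the classical Ehrenfeucht--Fra\"iss\'e theorem with citations to Fra\"iss\'e and Ehrenfeucht --- so there is no in-paper argument to compare against. Your proof is the standard textbook one (induction on quantifier rank, with Hintikka formulas and the finiteness of the relational signature carrying the hard direction from $\equiv_{\rho+1}$ back to a strategy), and it is correct.
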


In this paper we apply the  analysis of Ehrenfeucht-\Fraisse
games  to the \FO{} model checking problem. We use a variant of the
notion of $H$-boundedness of Ferrante and Rackoff \citeyear{Fer79}. The
existence of 
certain restricted strategies in the 
game played on 
two identical copies of a structure yields an 
$\FO{}$ model checking algorithm. 

We  
consider the game played on two 
copies of the same structure, i.e., the game on $\mathfrak{A}, \bar a^1$ and
$\mathfrak{A}, \bar a^2$ with identical choice of the initial
parameter 
$\bar a^1 = \bar a^2 \in \mathfrak{A}$. Of course, Duplicator has a
winning 
strategy in this setting: she can copy each move of Spoiler. 
But we look for winning strategies
with certain constraints. In our application the constraint is that
Duplicator is only allowed to choose elements 
that are represented by short runs of higher-order pushdown
systems, but the idea 
can be formulated more generally. 
\begin{definition}
  Let $\mathcal{C}$ be a class of structures. Assume that
  $S^{\mathfrak{A}}(m) \subseteq \mathfrak{A}^m$ is a subset of the
  $m$-tuples of the structure $\mathfrak{A}$ for each
  $\mathfrak{A}\in\mathcal{C}$ and each $m\in\N$. Set
  $S:=(S^{\mathfrak{A}}(m))_{m\in\N, \mathfrak{A}\in\mathcal{C}}$.
  We
  call $S$ a \emph{constraint for Duplicator's strategy} and we say 
  Duplicator has an  
  \emph{$S$-preserving} winning strategy if she has a strategy for
  each game 
  played on two copies of $\mathfrak{A}$ for some $\mathfrak{A}\in
  \mathcal{C}$ with parameters, i.e., a game on
  $\mathfrak{A}, \bar a^1$ and $\mathfrak{A}, \bar a^2$ for $n$-tuples
  $\bar a^1, \bar a^2$, with the
  following property. 
  Let $\bar b^1 \mapsto \bar b^2$ be a position 
  reached after $m$ rounds where Duplicator used her strategy.
  If $\bar b^2\in S^{\mathfrak{A}}(m+n)$ and Spoiler chooses some
  element in the first 
  copy of $\mathfrak{A}$, then her
  strategy chooses 
  an element $a^2_{m+1}$ such that 
  $\bar n^2,a^2_{m+1}\in S^{\mathfrak{A}}(m+n+1)$.
\end{definition}
\begin{remark}
  We write $S(m)$ for $S^{\mathfrak{A}}(m)$ if
  $\mathfrak{A}$ is clear from the context.
\end{remark}

We now want to turn an $S$-preserving strategy of Duplicator into a
model checking algorithm. The idea is to restrict the search for
witnesses of existential quantifications to the sets defined by $S$. 
In order to obtain a terminating algorithm, $S$ must be finitary in
the sense of the following definition.
\begin{definition}
  Given a  class $\mathcal{C}$ of finitely represented structures, 
  we call a constraint $S$ for Duplicator's strategy 
  \emph{finitary} on $\mathcal{C}$, if
  for each $\mathfrak{A}\in\mathcal{C}$ we can compute a monotone
  function 
  $f_{\mathfrak{A}}$ such that for all $n\in \N$ 
  \begin{itemize}
  \item $S^{\mathfrak{A}}(n)$ is finite,
  \item there is a representation for each $\bar a\in
    S^{\mathfrak{A}}(n)$ in space $f_{\mathfrak{A}}(n)$, and
  \item $\bar a \in S^\mathfrak{A}(n)$ is effectively decidable. 
  \end{itemize}
\end{definition}

Recall the following fact:
if Duplicator uses a winning strategy in the $n$ round game,
 her choice in the $(m+1)$-st round is an
element $a^i_{m+1}$ for appropriate $i\in\{1,2\}$ such that  
\mbox{$\mathfrak{A},\bar a^1, a^1_{m+1} \equiv_{n-m-1} \mathfrak{A}, \bar a^2,
a^2_{m+1}$}. 
If Duplicator has an $S$-preserving winning
strategy, then
for every formula
\mbox{$\varphi(x_1, x_2, \dots, x_{m+1})\in \FO{n-m-1}$} and for all
$\bar 
a\in A^m$ with $\bar a\in S(m)$ the following holds:
\begin{align}
 \label{eqn:RestrictedSearch1} &\text{there is an element }a\in A\text{ such that }\bar a, a\in
  S(m+1)\text{ and } \mathfrak{A}, \bar a, a \models \varphi \\ 
  \label{eqn:RestrictedSearch2}
  \text{iff }&\text{there is an element }a\in A\text{ such that }\mathfrak{A},
  \bar a, a \models  \varphi \\
  \label{eqn:RestrictedSearch3}
  \text{iff }
  &\mathfrak{A}, \bar a \models \exists x_{m+1} \varphi.
\end{align}
All implications except for \eqref{eqn:RestrictedSearch2}
$\Rightarrow$ 
\eqref{eqn:RestrictedSearch1} are trivial. 
This implication follows from the definition of an
$S$-preserving winning strategy applied to a game starting in
position
$\mathfrak{A}, \bar a \equiv_{n-m} \mathfrak{A}, \bar a$. If
Spoiler chooses any $a'\in\mathfrak{A}$ such that
$\mathfrak{A}, \bar a, a\models \varphi$, then by definition of
$S$-preserving winning strategy, Duplicator may respond with some
$a\in \mathfrak{A}$ such that $\bar a, a \in S(m+1)$ such that
$\mathfrak{A}, \bar a,a \equiv_{n-m-1} \mathfrak{A}, \bar a,a'$.

Let us fix some class $\mathcal{C}$ of finitely represented structures
and let $S$ be some finitary constraint on $\mathcal{C}$ such that
Duplicator has an $S$-preserving winning strategy. 
Iterated use of the equivalence of \eqref{eqn:RestrictedSearch1} and
\eqref{eqn:RestrictedSearch3} shows that 
the alternating Turing machine described in Algorithm
\ref{AlgoSPReservingModelCheck} solves the \FO{} model
checking problem on $\mathcal{C}$.%
\footnote{Without loss of generality we
  assume first-order 
  formulas to be generated from atomic and negated atomic formulas
  only by means of 
  disjunction $\lor$, conjunction $\land$, existential quantification
   $\exists$ and universal quantification $\forall$.}
Note that for a universal quantifier $\forall x \varphi$ we actually
apply the equivalence to  the formula $\neg\varphi$. 
The running time of this algorithm on input
$(\mathfrak{A}, \varphi)$ is $f_{\mathfrak{A}}(\lvert \varphi
\rvert)\cdot \lvert \varphi\rvert$ and it uses at most $\lvert \varphi
\rvert$ many alternations where $\lvert \varphi \rvert$ denotes the
length of the formula $\varphi$. 

\begin{algorithm}[h]
  SModelCheck\\
  \SetKwFunction{ModelCheck}{SModelCheck}
  \SetKw{accept}{accept}
  \SetKw{reject}{reject}
  \SetKw{guess}{guess}
  \SetKw{universally}{universally choose}
  \KwIn{a structure $\mathfrak{A}$
    , a formula $\varphi\in\FO{\rho}$, an
    assignment $\bar x \mapsto \bar a$ for tuples $\bar x, \bar a$ of
    arity $m$ such that $\bar a\in S(m)$}
  \If{ $\varphi$ is a (possibly negated) atom } {
    \lIf{ $\mathfrak{A}, \bar a \models \varphi(\bar x)$} 
    {\accept }\lElse{\reject\;}}
  \If{$\varphi = \varphi_1 \lor \varphi_2$}{
    \lIf{\ModelCheck{$\mathfrak{A}, \bar a, \varphi_1$} $=$ accept}{\accept}
    \Else{ \lIf {\ModelCheck{$\mathfrak{A}, \bar a, \varphi_2$}$=$
        accept} {\accept}
      \lElse{\reject\;}
    }
  }
  \If{$\varphi = \varphi_1 \land \varphi_2$}{
    \If{\ModelCheck{$\mathfrak{A}, \bar a, \varphi_1$} $=$ accept}
    {\lIf {\ModelCheck{$\mathfrak{A}, \bar a, \varphi_2$}$=$
        accept} {\accept}
      \lElse{\reject\;}}
    \lElse{  \reject\;}
  }
  \If{$\varphi=\exists x \varphi_1(\bar x,x)$}{
    \guess an $a\in\mathfrak{A}$ with
    $\bar a, a\in S(m+1)$ and
    \ModelCheck{$\mathfrak{A}, \bar a a, \varphi_1$}\;}
  \If{$\varphi=\forall x_i \varphi_1$}{
    \universally  
    an $a\in\mathfrak{A}$ with
    $\bar a, a\in S(m+1)$ and
    \ModelCheck{$\mathfrak{A}, \bar a a, \varphi_1$}\;}
  \label{AlgoSPReservingModelCheck}
  \caption{\FO{}-model checking on $S$-preserving structures}
\end{algorithm}

\subsection{Higher-Order Pushdown Systems}
\label{sec:HPSDef}
In order to define what a level $n$ nested pushdown tree is, we first
have to introduce pushdown systems of level $n$ ($n$-PS). 

An $n$-PS can be seen as a finite
automaton with access to an $n$-fold nested stack structure.  This
generalises the notion of a pushdown system by replacing a single
stack with a structure that is a stack of stacks of stacks  \dots
of stacks. 
A nested stack of level $n$ can be
manipulated by level $l$ push and pop
operations for each level $l\leq n$. 
For $2\leq l \leq n$, the level $l$ push operation $\Clone{l}$ duplicates the
topmost entry of the topmost level $l$ stack. 
The level $1$ push operation $\Push{\sigma}$ writes the symbol $\sigma$
on top of the topmost level $1$ stack. 
For $1\leq l \leq n$, the level $l$ pop operation $\Pop{l}$ removes the topmost
entry of the topmost level $l$ stack.

For some alphabet $\Sigma$, we inductively define the \emph{set of level $n$
stacks over $\Sigma$} ($n$-stacks), denoted by $\Sigma^{+n}$ as follows.
Let $\Sigma^{+1}:=\Sigma^+$ denote the set of all
nonempty finite words over alphabet $\Sigma$. We then define
\mbox{$\Sigma^{+(n+1)}:=(\Sigma^{+n})^+$}.  

Let us fix an $(n+1)$-stack $s\in \Sigma^{+(n+1)}$.
This stack $s$ consists of an ordered list 
\mbox{$s_1, s_2, \dots, s_m \in\Sigma^{+n}$}. If 
we want to state this list explicitly, we
separate them by colons writing $s=s_1: s_2 : \dots : s_m$. 
By $\lvert s \rvert$ we denote the number of $n$-stacks $s$
consists of, i.e., $\lvert  s\rvert = m$. We call $\lvert s \rvert$ 
the \emph{width} of $s$. We also use the notion of the \emph{height} of
$s$. This is  $\height(s):=\max\{\lvert 
s_i \rvert: 1\leq i \leq m\}$, i.e., the width of the widest
$n$-stack occurring in $s$.

Let $s$ and $s'$ be $(n+1)$-stacks such that
$s=s_1:s_2:\dots: s_m \in \Sigma^{+(n+1)}$ and
$s'=s_1':s_2':\dots: s_l' \in \Sigma^{+(n+1)}$. We write 
$s:s'$ for the concatenation  $s_1: s_2 : \dots :s_m:s_1':s_2':\dots:s_l'$.

If $s\in\Sigma^{+(n-1)}$, we denote by $[s]$ the $n$-stack
that only consists of a list of length $1$ that contains 
$s$. We regularly omit the brackets if no confusion arises. 

Let $\Sigma$ be some finite alphabet with a distinguished
bottom-of-stack symbol $\bot\in\Sigma$. 
The \emph{initial stack} $\bot_l$ of level $l$ over $\Sigma$ is inductively
defined by  $\bot_1:=[\bot]$ and $\bot_n:=[\bot_{n-1}]$.

Before we formally define the stack operations, we
introduce an auxiliary function $\TOP{k}$ that returns the topmost
entry of the topmost $k$-stack. 
Let \mbox{$s=s_1:s_2:\dots: s_n\in\Sigma^{+l}$}
be some stack and let $1\leq k \leq l$.
We define the \emph{topmost level $k-1$ stack of
  $s$} to be  
\mbox{$\TOP{k}(s):=
\begin{cases}
  s_n & \text{if } k=l,\\
  \TOP{k}(s_n) &\text{otherwise.}
\end{cases}$}
\begin{definition}
  For $s=s_1:s_2:\dots: s_n\in \Sigma^{+l}$,
  for $\sigma\in\Sigma\setminus\{\bot\}$,
  for $1\leq k \leq l$ and for $2\leq j \leq l$, we define
  the stack operations  
  \begin{align*}
    \Clone{j}(s):=
    &\begin{cases}
      s_1: s_2 : \dots : s_{n-1}: s_n : s_n & \text{if } j = l\geq 2,\\
      s_1: s_2 : \dots : s_{n-1}: \Clone{j}(s_n) & \text{otherwise.}
    \end{cases}\\
    \Push{\sigma}(s):=
    &\begin{cases}
      s\sigma  &\text{if } l=1,\\
      s_1: s_2 : \dots : s_{n-1}: \Push{\sigma}(s_n)& \text{otherwise.}
    \end{cases} \\
    \Pop{k}(s):=
    &\begin{cases}
      s_1: s_2 : \dots : s_{n-1}: \Pop{k}(s_n) & \text{if } k<l,\\
      s_1: s_2 : \dots : s_{n-1} & \text{if } k=l, n>1,\\
      \text{undefined} & \text{otherwise, i.e.,} k=l, n=1.
    \end{cases}
  \end{align*}
  The \emph{set of level $l$ operations} is denoted by
  $\Op_l$.
\end{definition}
For $2\leq i \leq n$ and $\sigma\in\Sigma$, we call $\Push{\sigma}$
a push of level $1$ and $\Clone{i}$ a push of level $i$. 

For stacks $s, s'$ we write $s\leq s'$ and say $s$ is a substack of $s'$
if $s$ is generated from $s'$ by application of a
sequence of pop operations (of possibly different levels).
Note that on $1$-stacks, i.e., on words, $\leq$ coincides with the
usual prefix relation. 

Having defined $l$-stacks, we present pushdown
systems of level $l$. 

\begin{definition}
  A \emph{pushdown system} of level $l$ ($l$-PS) is
  a tuple 
  $\mathcal{S} = (Q,\Sigma, \Delta, q_0)$
  where $Q$ is a
  finite set of states, $\Sigma$  a
  finite stack  alphabet with a distinguished bottom-of-stack symbol
  $\bot\in\Sigma$, $q_0\in Q$ the initial state, and 
  \mbox{$\Delta\subseteq
    Q\times \Sigma \times Q \times \Op_l$}
  the transition relation.

  An  $l$-\emph{configuration} is a pair $(q,s)$ where $q\in Q$ and
  $s\in \Sigma^{+l}$. 
  For \mbox{$q_1,q_2\in Q$}, \mbox{$s,t\in \Sigma^{+l}$} and for
  $\delta= (q_1, \sigma,  q_2, op)\in\Delta$, we 
  define the $\delta$-relation $\trans{\delta}$ as follows. Set
  $(q_1,s) \trans{\delta} (q_2, t)$ if  
  $\op(s)=t$ and
  \mbox{$\TOP{1}(s) = \sigma$}. 
  We call $\trans{}:=\bigcup_{\delta\in\Delta} \trans{\delta}$ the
  transition relation of $\mathcal{S}$. 
\end{definition}
\begin{definition}
Let $\mathcal{S}$ be an $l$-PS. 
A \emph{run} $\rho$ of $\mathcal{S}$ is a sequence of configurations
that are connected by transitions, i.e., a sequence
\begin{align*}
  c_0 \trans{\delta_1} c_1\trans{\delta_2} c_2 \trans{\delta_3} \cdots
  \trans{\delta_n}c_n.    
\end{align*}
\end{definition}
We also write $\rho(i):=c_i$ for the $i$-th configuration occurring
within $\rho$. 
We call $n$ the \emph{length} of $\rho$ and set $\length(\rho):=n$. 
If some run $\pi$ is an initial segment of the run $\rho$, we write
$\pi \preceq \rho$. We write $\pi \prec \rho$ if $\pi$ is a proper
initial segment of $\rho$.

If $\pi$ and $\rho$ are runs
\begin{align*}
  &\pi = c_0 \trans{\delta_1} c_1\trans{\delta_2} c_2 \trans{\delta_3} \cdots
  \trans{\delta_n} c_n  \\
  &\rho = c_n \trans{\delta_{n+1}} c_{n+1}\trans{\delta_{n+2}} c_{n+2}
  \trans{\delta_{n+3}} \cdots 
  \trans{\delta_{n+m}}c_{n+m},   
\end{align*}
then we denote by $\pi\circ\rho$ the
composition of $\pi$ 
and $\rho$ which is the run from $c_0$ to $c_{n+m}$ defined by 
\begin{align*}
  \pi\circ\rho = c_0 \trans{\delta_1} c_1\trans{\delta_2} c_2
  \trans{\delta_3} \cdots 
  \trans{\delta_n}c_n \trans{\delta_{n+1}} c_{n+1}\trans{\delta_{n+2}} c_{n+2}
  \trans{\delta_{n+3}} \cdots  \trans{\delta_{n+m}}c_{n+m}.   
\end{align*}

\section{The Nested Pushdown Tree Hierarchy}
\label{sec:HONPT}

Generalising the definition of nested pushdown trees
(cf.\ \cite{Alur06languagesof}),  
we define a hierarchy of higher-order nested pushdown
trees. 
A nested pushdown tree is the unfolding of the configuration graph of
a pushdown system expanded by a new relation (called
\emph{jump relation}) which connects each push operation with the
corresponding pop operations. 
Since higher-order pushdown systems have push and pop operations for
each stack level, there is no unique generalisation of this concept to
trees generated by higher-order pushdown systems. 
We choose the following (simplest) version: we connect corresponding push
and pop operations of the highest stack level. This choice ensures
that the jump edges form a well-nested relation. We discuss possible
other choices for the definition at the end of this section. 

\begin{definition}
  Let $\mathcal{N}=(Q,\Sigma,q_0,\Delta)$ be an $n$-PS.
  The  \emph{level $n$ nested pushdown tree} (\mbox{$n$-\HONPT})
  $\HONPT(\mathcal{N})=(R, (\trans{\delta})_{\delta\in\Delta},
  \jumpedge)$ is the  
  unfolding of the pushdown graph of $\mathcal{N}$ from its initial
  configuration expanded by a new 
  \emph{jump relation} $\jumpedge$ which is formally defined as
  follows.
  \begin{itemize}
  \item $R$ is the set of all runs of $\mathcal{N}$ that start in the
    initial configuration $(q_0, \bot_n)$ of $\mathcal{N}$,
  \item $\trans{\delta}$ contains a pair
    of runs 
    $(\rho_1,\rho_2)$ if $\rho_2=\rho_1\circ \rho'$ for some run
    $\rho'$ of length $1$ such that $\rho' = \rho_1(\length(\rho_1))
    \trans{\delta} c$ for some 
    configuration $c$, and
  \item $\jumpedge$ is the binary relation such that
    $\rho_1\jumpedge \rho_2$
    if $\rho_2$ decomposes as
    $\rho_2=\rho_1\circ \rho$ for some run $\rho$ of length $m\geq 2$
    starting and ending 
    in the same stack $s$
    such that
    \begin{align*}
      &\rho(0) \text{ and } \rho(1) \text{ are connected by a level }
      n\text{ push operation}, \\  
      &\rho(m-1) \text{ and } \rho(m) \text{ are connected by a level }
      n \text{ pop operation, and}\\
      &\rho(i)\neq(\hat q, s)\text{ for all } 1\leq i < m
      \text{ and all }\hat q\in Q.
    \end{align*}
  \end{itemize}
  We use $\trans{}$ as abbreviation for the union
  $\trans{}:=\bigcup_{\delta\in\Delta} \trans{\delta}$. 
\end{definition}
\begin{remark}
  Note that for all $\rho_2\in\HONPT(\mathcal{N})$ there is at most
  one $\rho_1\in\HONPT(\mathcal{N})$ with $\rho_1\jumpedge
  \rho_2$, but for each $\rho_1\in\HONPT(\mathcal{N})$ there may be
  infinitely many $\rho_2\in\HONPT(\mathcal{N})$ with
  $\rho_1\jumpedge\rho_2$. 
\end{remark}
\begin{example}
  Consider the $2$-PS $\mathcal{N}=(Q, \Sigma, q_0, \Delta)$ with
  \begin{itemize}
  \item $Q=\{q_0,q_1,q_2\}$,
  \item $\Sigma=\{\bot, a\}$, and
  \item $\Delta=\{ (q_0, \bot, q_1, \Clone{2}), (q_1, \bot, q_1,
    \Push{a}), (q_1, a, q_1, \Push{a}), (q_1, a, q_2, \Pop{2})\}$. 
  \end{itemize}
  The $2$-\HONPT $\mathcal{N}$ is the  graph
  $\HONPT(\mathcal{N})$ depicted in Figure
  \ref{fig:NPTExample}. 
  \begin{figure}[h]\label{fig:NPTExample}
    \centering
      \begin{tikzpicture}
    [
    level distance=3 cm,
    growth parent anchor=south,
    konf/.style={
      rectangle, draw, anchor=south},
    Up/.style={
      grow=up, level distance=2.3 cm},
    Right/.style={
      grow=right, level distance=2 cm},
    invjump/.style={<-left hook, shorten >=2pt, shorten <=2pt, thick}
    ]
    \node [konf] (root) {$\bot$}
    [->]
    child [grow=down] { 
      node[konf,matrix] { \node{$\bot$}; \pgfmatrixnextcell \node{$\bot$}; \\}
      child [Right] {
        node[konf, matrix] 
        { \pgfmatrixnextcell \node{$a$};\\
          \node{$\bot$}; \pgfmatrixnextcell \node{$\bot$}; \\}
        child[Up]{ node[konf] {$\bot$}}
        child[Right] {
          node[konf, matrix] 
          { \pgfmatrixnextcell \node{$a$};\\
            \pgfmatrixnextcell \node{$a$};\\
            \node{$\bot$}; \pgfmatrixnextcell \node{$\bot$}; \\}
          child[Up] { node[konf] {$\bot$}
          }
          child[Right] {
            node[konf, matrix] 
            { \pgfmatrixnextcell \node{$a$};\\
              \pgfmatrixnextcell \node{$a$};\\
              \pgfmatrixnextcell \node{$a$};\\
              \node{$\bot$}; \pgfmatrixnextcell \node{$\bot$}; \\}
            child[Up] { node[konf] {$\bot$}
              child[grow=right , arrows= - , dotted,
              thick,draw opacity =  1, shorten <=5pt]{
                node[anchor=south,transparent]{$\bot$}}
            }
            child[parent anchor=east,  
            grow=right, arrows= - , dotted,
            thick,draw opacity =  1, shorten <=5pt ]{ 
              node[minimum height=2cm, anchor=south] {}
            }
          }
        }
      }
    }
    ;
    \draw[invjump, bend angle=15, bend right ]   (root-1-1-1) to (root);
    \draw[invjump, bend angle=15, bend right ]   (root-1-1-2-1) to (root);
    \draw[invjump, bend angle=15, bend right ]   (root-1-1-2-2-1) to (root);

  \end{tikzpicture}

  \caption{Example of a $2$-\HONPT (edge labels are omitted)}    
  \end{figure}
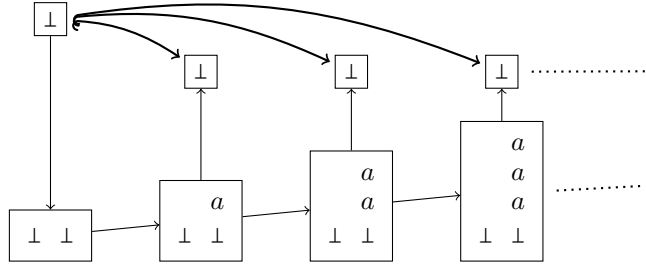

\end{example}

For each $l<l'$ the class of nested pushdown trees of
level $l$ are uniformly first-order interpretable in the class of
nested pushdown trees of level $l'$. 
If $l=1$, one just replaces any $\Push{\sigma}$ transition by a 
$\Clone{l'}$ transition followed by a $\Push{\sigma}$
transition. Furthermore, one replaces
each $\Pop{1}$ transition by a $\Pop{l'}$ transition. 
In all other cases, we just replace $\Clone{l}$ and $\Pop{l}$ by
$\Clone{l'}$ and $\Pop{l'}$. 

\paragraph{Comparison of Nested Pushdown Tree Hierarchy and 
  Other Hierarchies}
The hierarchy of higher-order nested pushdown trees is a hierarchy
strictly extending the hierarchy of trees generated by higher-order
pushdown systems  (without $\varepsilon$-contraction!). 
Furthermore, it is first-order interpretable in the
collapsible pushdown graph hierarchy.\footnote{The hierarchy of collapsible
  pushdown graphs is the class of configuration graphs of
  \emph{collapsible pushdown systems}
  (cf.~\cite{Hague2008}). Collapsible pushdown systems of level 
  $l$ are defined analogously to $l$-PS but with transitions that may
  also use a  
  stack-operation called collapse. This new operation summarises
  several pop operations.}
We now sketch the proof of this claim. 
Fix an $l$-PS $\mathcal{N}$. $\mathcal{N}$ generates an $l$-NPT
$\mathfrak{N}$. 
Each node of $\mathfrak{N}$
represents a run of $\mathcal{N}$ starting in the
initial configuration. A run can be seen as a list of configurations. 
This is a list of pairs of states and stacks. Pushing the state on top
of the stack, a run can be represented as a list of $l$-stacks. Let
$s_1, s_2, s_3, \dots, s_n$ be the stacks representing some run. Then 
\mbox{$s_1:s_2:\dots:s_n$} is an  $(l+1)$-stack representing the run. 
In this representation, an edge in the $l$-NPT corresponds to the
extension of $s_1:s_2:\dots:s_n$ to a list
\mbox{$s_1:s_2:\dots:s_n:s_{n+1}$} where $s_{n+1}$ is generated from
$s_n$ by removing the state written on top of $s_n$, applying a stack
operation and writing the new final state on top of the stack. 
Hence, we can use this representation and define a level $(l+1)$-PS
$\mathcal{S}$ such that the tree generated by $\mathcal{N}$ is
first-order interpretable in the configuration graph of $\mathcal{S}$. We 
interpret each edge of $\HONPT(\mathcal{N})$ as a path of length $4$
in the configuration graph of $\mathcal{S}$. 
Such a path performs the 
operations $\Clone{n}$ - $\Pop{1}$ - $\op$ - $\Push{q}$ for
some level $l$ operation $\op$. 
Replacing $\mathcal{S}$ by a certain collapsible pushdown system
$\hat{\mathcal{S}}$, we can generate the same graph but with additional
collapse-transitions that form exactly the reversals of the jump edges
of $\mathfrak{N}$. This means that if $a,b\in\mathfrak{N}$ such that
$a\jumpedge b$, and $a', b'$ are the representatives of $a$ and
$b$ in the configuration graph of $\hat{\mathcal{S}}$, then there is a
collapse edge from $b$ to $a$. 
A detailed proof of this claim can be found in \cite{KartzowPHD}.

Of course, these observations immediately lead to the following
questions concerning the relationship between nested pushdown trees
and collapsible pushdown graphs. In the following, we use the term
interpretation for any kind of logical interpretation that allows to
transfer properties definable in some fixed
logic from one class to another. 
\begin{enumerate}
\item Can level $n$ nested pushdown trees be interpreted in the class
  of level $n$ collapsible pushdown graphs?
  We have a weak conjecture that the answer to this question is no. 
  Since $n$-NPT contain the trees generated by $n$-PS, they seem to be
  on level $2.5$ with respect to graph hierarchies: recall that the
  graphs of $(n+1)$-PS are obtained from the graphs of $n$-PS using an
  unfolding followed by a monadic second order interpretation. The
  unfolding operation yields the class of trees of $n$-PS. In this
  sense, $n$-NPT expand level  $n+\frac{1}{2}$ of the pushdown graph
  hierarchy and we conjecture that even this class is not
  interpretable in collapsible pushdown graphs of level $n$. 
\item Is there a meaningful notion of logical interpretation that
  allows to interpret the class of collapsible pushdown graphs of
  level $n+1$  (or $n$) in the class level $n$ nested pushdown trees?
  Note that we know that collapsible pushdown graphs of level $3$ are
  not first-order interpretable in $2$-NPT because the former do not
  enjoy decidable first-order theories while the latter do. 
  Also note that collapsible pushdown graphs of level $1$ are just
  pushdown graphs and level $1$-NPT expand the unfoldings of such
  graphs. As discussed in question (1), we expect that there is no
  interpretation of level 1 collapsible pushdown graphs in $1$-NPT. 
  Thus, we tend to expect that the answer to both questions is no, but
  further investigation is needed for clarification. 
\end{enumerate}

\paragraph{Other possible definitions of $n$-NPT}

For the definition of jump edges, one could consider push operations
of every level $l\leq n$. Here the following possibilities seem to be
plausible.
\begin{enumerate}
\item Only connect runs $\rho_1$ and $\rho_2$ if $\rho_2$ extends
  $\rho_1$ by a run $\rho$ such that $\rho$ starts with
  a $\Push{l}$-operation and ends with a $\Pop{l}$ operation that
  removes the $l-1$ stack created in the1 first transition of $\rho$. 
  This choice implies that not every push operation induces a jump
  edge along every path of the nested pushdown tree: if a $\Push{l}$
  operation is followed by a $\Pop{l'}$ operation for $l<l'$ the stack
  that was created by the push is removed together with the higher
  level stack it is contained in. Thus, this generalisation lacks the
  nice property of $1$-NPT that if a jump edges starts at some run
  $\rho$, then $\rho$ has a jump successor in each branch starting at
  $\rho$. On the positive side we expect that this definition would
  still allow an interpretation of $n$-NPT in collapsible pushdown
  graphs of level $n+1$ as described in the previous paragraph.
\item A jump edge starts in each branch whenever we perform a push
  operation of any level and it points to the first descendant along
  this branch where the newly created stack disappears for the first
  time. As opposed to the previous possibility, here we would have a
  jump edge from $\rho$ to $\rho \trans{\Push{l}} c' \trans{\Pop{l'}}
  c$ if $l'>l$. This choice leads to fairly complicated
  structures. In the models discussed before, the indegree is always
  bounded by $2$. In this version of nested pushdown trees the
  indegree is finite but unbounded because many $\Push{l}$ operations
  may have the same  corresponding  $\Pop{l'}$ operation if $l'>l$. 
  In particular, it seems to be difficult to find an interpretation of
  this version in the class of collapsible pushdown graphs. The given
  interpretation for the version that we chose relies heavily on the
  functionality of the inverse jump edges.
\end{enumerate}
Both possible extensions of our definition of $n$-NPT seem to make the
structures more complicated. We do not know whether the techniques
presented in this paper can be adapted to treat those structures as
well.

\subsection{Towards Model Checking on Level 2 NPT}
\label{sec:TowardsModelChecking}
 In the following,  we develop an \FO{} model
checking algorithm on $2$-NPT.
In fact, we prove that the general approach via the
dynamic-small-witness property developed in Section
\ref{sec:EFGame} is applicable in this case. In other words, we
prove that we can compute a  
finitary constraint for Duplicator's strategy on an arbitrary $2$-NPT
$\mathfrak{N}$. 

Fix some $2$-PS $\mathcal{N}$ of level $2$ and set
$\mathfrak{N}:=\HONPT(\mathcal{N})$.
We show the following. 
If \mbox{$\mathfrak{N}, \bar \rho \models\exists
x\varphi$} for some
formula $\varphi\in\FO{}$, then there is a short 
witness \mbox{$\rho\in \mathfrak{N}$} for this existential
quantification. Here, the length of an  
element is given by 
the length of the run representing this element. 
We consider a run to be short if
its size is bounded in terms of the length of the 
runs representing the parameters $\bar\rho$. 

We stress that
locality arguments in the spirit of Gaifman's theorem do not
apply in this setting: the jump edges tend to make the diameter
of $2$-NPT small (recall that the $2$-NPT in figure
\ref{fig:NPTExample} has diameter 4). 

The rough picture of our proof is as follows.
We analyse the $\alpha$-round Ehrenfeucht-\Fraisse game on two
copies of \mbox{$\mathfrak{N}$} and show that Duplicator has a
restricted winning-strategy. 
Our main technical tool is the concept of
\emph{relevant ancestors}. For each element of
$\mathfrak{N}$, the relevant 
$l$-ancestors are a finite set of initial subruns of this element. 
Intuitively, some run $\rho'$ is a relevant $l$-ancestor of a  run $\rho$
if it is an  ancestor of $\rho$ which may be connected to $\rho$ via a path of
length up to $l$ that witnesses the fact that $\rho'$ is an ancestor
of $\rho$. It turns out that there are at most $4^l$ such ancestors. 
Surprisingly, the set of $2^l$-ancestors 
characterises the $\FO{l}$-type of $\rho$. 
Thus, Duplicator has a winning strategy choosing small
runs if for every element of $\mathfrak{N}$ there is a small one that
has an isomorphic set of relevant ancestors.

The analysis of relevant ancestors reveals that
a relevant
ancestor $\rho_1$ is connected to the next one, say $\rho_2$,
by either a single transition or 
by a run $\pi$ of a certain kind. 
This run $\pi$ satisfies the following conditions:
$\rho_2$ decomposes as 
$\rho_2=\rho_1\circ\pi$, 
the initial
stack of $\pi$ is $s:w$ where $s$ is some stack and $w$ is some
word. The final stack of $\pi$ is $s:w:v$ for some word $v$ and
$\pi$ does never pass a proper substack of $s:w$.

Due to this result, a typical set of relevant ancestors is of the form
\begin{align*}
  \rho_1 \prec \rho_2 \prec \rho_3 \prec \dots \prec \rho_m=\rho,
\end{align*}
where
$\rho_{n+1}$ extends $\rho_n$ by either one transition or by a run
that extends the last stack of $\rho_n$ by a new word $v$. 
If we want to construct a short run $\rho'$ with isomorphic relevant
ancestor set, we have to provide short runs 
\begin{align*}
  \rho'_1 \prec \rho'_2 \prec \rho'_3 \prec \dots \prec \rho'_m=\rho'  
\end{align*}
where $\rho_{n+1}'$ extends $\rho_n'$ in exactly the same manner as
$\rho_{n+1}$ extends $\rho_n$. 

We first concentrate on one step of this
construction.  Assume that $\rho_1$ ends in some configuration $(q,s:w)$ and
$\rho_2$ extends $\rho_1$ by a run creating the stack $s:w:v$.
How can we find another stack $s'$ and words $w',v'$ such that
there is a short run $\rho_1'$ to $(q,s':w')$ and a short run $\rho_2'$ that
extends $\rho_1'$ by a run from $(q,s':w')$ to the stack $s':w':v'$?

We introduce a family of equivalence relations on words
that preserves the existence of such runs. 
If we find some $w'$ that is equivalent to $w$
with respect to the $i$-th equivalence relation, then for each run from
$s:w$ to $s:w:v$ there is a run from $s':w'$ to $s':w':v'$ for $v$
and $v'$ equivalent with respect to the $(i-1)$-st equivalence
relation. 

Let us explain these equivalence relations. 
Let $\rho_1$ be a run to some stack $s:w$ and let $\rho_2$ be a run
that extends $\rho_1$ and ends in a stack $s:w:v$. 
We can prove that this extension is of the form
$  \op_n\circ\lambda_{n}\circ\op_{n-1}\circ\lambda_{n-1}\circ\dots\circ
  \op_1\circ\lambda_1  
$
where the $\lambda_i$ are loops, i.e., runs that start and end with
the same stack and $\op_{n},
\op_{n-1}, \dots, \op_1$ is the minimal sequence generating $s:w:v$
from $s:w$. 
Thus, we are especially interested in the loops of each prefix
$\Pop{1}^k(w)$ of $w$ and 
each prefix $\Pop{1}^k(w')$ of $w'$. 
For this purpose we consider the word models of $w$ and $w'$ enriched by 
information on runs between certain prefixes of $w$ or $w'$.
Especially, each prefix is annotated with the number of possible loops
of each prefix.  
$w$ and $w'$ are equivalent with respect to the first equivalence
relation if the $\FO{k}$-types
of their enriched word structures coincide. 
The higher-order equivalence relations are then defined as follows.
We colour every element of the word model of some word $w$ by the
equivalence class of the corresponding prefix with respect to the
$(i-1)$-st equivalence relation. 
For the $i$-th equivalence relation we compare the $\FO{k}$-types of
these coloured word models. This means that 
two words $w$ and $w'$ are equivalent with respect to the $i$-th
equivalence relation if the $\FO{k}$-types of their word models
expanded by predicates encoding the $(i-1)$-st equivalence class of
each prefix coincide.

This iteration of equivalence of prefixes leads to the following
result. Let $w$ and $w'$ be equivalent with respect to the $i$-th
relation. Then we can transfer runs creating $i$ words in the
following sense: if
$\rho$ is a run creating $w:v_1:v_2:\dots:v_i$ from $w$, then there is
a run $\rho'$ creating $w':v'_1:v'_2:\dots:v'_i$ from $w'$ such that
$v_k$ and $v'_k$ are equivalent with respect to the $(i-k)$-th
relation. 
This property then allows us to construct isomorphic relevant ancestors
for a given set of relevant ancestors of some run $\rho$. We only have
to start with a 
stack $s':w'$ such that $w'$ is $i$-equivalent to the topmost word of
the minimal element of the relevant ancestors of $\rho$ for some large
$i\in\N$. 

This observation reduces the problem of
constructing runs with isomorphic relevant ancestors to the problem of
finding runs whose last configurations have equivalent topmost
words (with respect to the $i$-th equivalence relation for some
sufficiently large $i$) such that one of these runs is always short. 

We solve this problem by developing  shrinking constructions
that allow the preservation
of the equivalence class of the topmost word of the final configuration
of a run  while shrinking the length of the run. 

Putting all these results together, 
for every nested pushdown tree of level $2$, we can compute a finitary
constraint $S$ such that
Duplicator has an
$S$-preserving strategy. This shows that the general model checking
algorithm from Section \ref{sec:EFGame} solves the the $\FO{}$ 
model checking problem on $2$-NPT. 

\subsection{Outline of the Proof Details}

In the next section, we discuss the theory of \emph{loops} first
developed in \cite{Kartzow10}. We also develop shrinking lemmas for
long runs.  
Then we introduce the central notion of \emph{relevant ancestors}
and develop some basic theory concerning these sets in Section
\ref{sec:RelevantAncestors}. 
In Section \ref{subsec:EquivalenceonWords} we define
equivalence relations on words and trees which can be used to construct
runs with isomorphic relevant 
ancestors. In Section \ref{sec:EquivalenceonTuples}, we first lift
these equivalences on stacks to equivalences on tuples of elements of
$2$-NPT by pairwise comparison of the equivalence type of the topmost
stacks of each relevant ancestor of each element in the tuples. We
then prove that the preservation of equivalence classes of relevant
ancestors is a winning strategy in the Ehrenfeucht-\Fraisse game. 
Furthermore, Duplicator can always choose small representatives. 
Finally, in  Section \ref{sec:FODecidability} we derive an \FO{} model
checking algorithm on $2$-NPT. 
In that section we also show that this algorithm restricted to 
the class of all $1$-NPT is in $2$-EXPSPACE. Unfortunately, we do not
obtain any complexity bounds for the algorithm on the class of
$2$-NPT.

\section{Shrinking Lemmas for Runs of 2-PS}
\label{sec:TheoryRuns}

In this section, we analyse runs of $2$-PS between certain
configurations. Especially, we look at runs starting at the initial
configuration and at runs extending its starting stack $s$ by some
word $w\in\Sigma^+$, i.e., runs starting in $s$ and ending in $s:w$
that do not visit substacks of $s$. 
If there is a run $\rho$ of one of these types, we will see that there
is also a short run $\rho'$ with the same initial and final
configuration as $\rho$. ``short'' means that $\length(\rho')$ is
bounded by some function depending on the pushdown system, on the
width and height of $s$ and on the length of $w$. 
We first developed this theory in \cite{Kartzow10}. 
We briefly recall this theory and sketch the main proofs.
For a detailed presentation, 
see \cite{KartzowPHD}. 

We first introduce the concept of milestones and generalised
milestones of a $2$-stack $s$. 
A stack $t$ is a generalised milestone of $s$ if every run of every
$2$-PS that leads from the initial
configuration to $s$ has to pass $t$. Milestones are those generalised
milestones that are substacks of $s$. 
Generalised milestones of $s$ induce a natural decomposition of any
run from the initial 
configuration to some configuration with stack $s$. 
Due to a result of Carayol (that we state soon),
generalised milestones can be characterised as follows.

\begin{definition}
  Let $s=w_1:w_2:\dots :w_k$ be a $2$-stack. 
  We define the set of \emph{generalised milestones of $s$}, denoted by
  $\genMilestones(s)$, as follows.
  A stack $m$ is in $\genMilestones(s)$ if 
  \begin{align*}
    &m=w_1:w_2:\dots: w_i:v_{i+1} \text{ where }
    0\leq i<k,\\
    &w_i \sqcap w_{i+1}\leq v_{i+1}\text{ and }\\
    &v_{i+1}\leq w_i\text{ or }v_{i+1}\leq w_{i+1}.
  \end{align*}
  where $ w \sqcap v $ denotes the maximal common prefix of $w$ and
  $v$. 
  If $v_{i+1} \leq w_{i+1}$, we call $m$ a milestone of $s$. The set
  of milestones of $s$ is denoted by $\Milestones(s)$. 
\end{definition}
\begin{remark} \label{Rem:NumberMilestones}
  Note that
  $\lvert \genMilestones(s)\rvert \leq 2 \cdot \height(s) \cdot
  \lvert s\rvert$.
\end{remark}

In order to show that this definition fits
our informal description, we define the notions of
loops and returns. Loops appear as a central notion in our formulation
of Carayol's
result. Moreover, the theory of loops and returns is also the key
ingredient to our shrinking constructions. We will generate short runs
from longer ones by replacing long loops by shorter ones. 

\begin{definition} \label{DefLoop}
  Let $s=t:w$ be some stack with topmost word $w$ and $q,q'\in Q$.
  A run $\lambda$  
  from $(q,s)$ to $(q',s)$ is called \emph{loop} if it does not pass 
  $t$. It is called a \emph{high loop} if it additionally
  does not pass 
  $\Pop{1}(s)$.  
  \label{DefReturn}
  A run $\rho$ from $(q,t:w)$ to $(q',t)$ is called \emph{return} if
  it visits $t$ 
  only in its final configuration.
\end{definition}

The following lemma of Carayol shows that generalised milestones and
loops play a 
crucial role in understanding the existence of runs.

\begin{lemma}[\cite{Carayol05}] \label{Lemma:Carayol05Milestones}
  For each stack $s$ there is
  a minimal sequence of operations $\op_1, \op_2, \dots, \op_n\in\{
  \Push{\sigma}, \Pop{1}, \Clone{2}\}$ such that 
  $s=\op_n(\op_{n-1}(\dots(\op_1(\bot_2))\dots ))$.

  For each $0\leq j\leq n$, 
  the stack $\op_j (\op_{j-1}(\dots\op_0(\bot_2)))$      
  is a  generalised milestone of $s$ and every  generalised milestone
  is generated by such a sequence. 
  
  Furthermore, every run $\rho$ from $(q_0, \bot_2)$ to $s$
  passes all generalised milestones of $s$. More precisely, $\rho$
  decomposes as  
  $\rho = \lambda_1 \circ \rho_1 \circ \lambda_{2} \circ \rho_{2}
  \circ \ldots \circ \lambda_n \circ \rho_n \circ \lambda_{n+1}$, where
  $\rho_i$ is a run of length $1$ that performs the operation $\op_i$
  and $\lambda_i$ is a loop of the $i$-th generalised milestone of $s$. 
\end{lemma}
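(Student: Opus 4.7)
The plan is to prove the three claims in order: construct the minimal operation sequence, identify its intermediate stacks with the generalised milestones, and derive the decomposition of an arbitrary run.

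I construct the sequence by induction on the width $k$ of $s = w_1 : w_2 : \dots : w_k$. If $k = 1$ and $w_1 = \bot\sigma_1 \dots \sigma_l$, the sequence is $\Push{\sigma_1}, \dots, \Push{\sigma_l}$ applied to $\bot_2$. If $k \geq 2$, I first build $w_1 : \dots : w_{k-1}$ recursively, then apply $\Clone{2}$, then $|w_{k-1}| - |w_{k-1} \sqcap w_k|$ applications of $\Pop{1}$ to bring the topmost word down to $w_{k-1} \sqcap w_k$, and finally push the remaining symbols of $w_k$. Minimality is a counting argument: any generating sequence must contain at least $k-1$ net $\Clone{2}$ operations (the only width-increasing ones), plus enough $\Push{\sigma}$'s and $\Pop{1}$'s to obtain each $1$-stack's symbols while honouring the common prefix at each word transition. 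For the second claim, the intermediate stacks of the construction are precisely the singletons $[v]$ with $v \leq w_1$ (case $i = 0$), the clone result $w_1 : \dots : w_{i-1} : w_{i-1}$, the pop-phase stacks $w_1 : \dots : w_{i-1} : v$ with $w_{i-1} \sqcap w_i \leq v \leq w_{i-1}$, and the push-phase stacks $w_1 : \dots : w_{i-1} : v$ with $w_{i-1} \sqcap w_i \leq v \leq w_i$ --- which exhaust the defining cases of $\genMilestones(s)$.

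For the third claim I proceed by induction on $n$, relying on the forcing fact that every run $\rho$ from $(q_0, \bot_2)$ to $(q, s)$ visits every generalised milestone of $s$ (see the main obstacle below). The base case $n = 0$ gives $s = \bot_2$, so $\rho$ is a loop at $\bot_2$. For the inductive step, let $j$ be the last position in $\rho$ at which the current stack equals $m_{n-1}$. By the choice of $j$, the suffix of $\rho$ after position $j$ starts at $m_{n-1}$, never revisits $m_{n-1}$, and ends at $m_n = s$. A case analysis on whether $\op_n$ is a push, a clone, or a pop shows that this suffix must begin with precisely one $\op_n$-transition, followed by a run from $s$ back to $s$ that avoids $m_{n-1}$ --- that is, a loop at $s$ in the sense of Definition \ref{DefLoop}. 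The prefix of $\rho$ up to position $j$ is a run from $\bot_2$ to $m_{n-1}$, whose canonical sequence is $\op_1, \dots, \op_{n-1}$, so the inductive hypothesis provides the decomposition.

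The main obstacle is establishing the forcing fact. The width of the current stack evolves by $\pm 1$ only via $\Clone{2}$/$\Pop{2}$, which is the lever for a two-step argument. First, let $j_i$ be the last position in $\rho$ at which the current width is at most $i$; then the stack at $j_i$ has width exactly $i$ and, since its bottom $i$ $1$-stacks are never touched afterwards, must equal $w_1 : \dots : w_i$. This yields the milestones $w_1 : \dots : w_i$ for $i = 1, \dots, k$ in the correct order. Second, between positions $j_{i-1}$ and $j_i$ the bottom $i-1$ $1$-stacks stay fixed at $w_1, \dots, w_{i-1}$, and any width-$i$ moment in this interval yields a stack $w_1 : \dots : w_{i-1} : v$ for some $v$; since only $\Push{\sigma}$ and $\Pop{1}$ change the topmost word at width $i$ (excursions via $\Clone{2}$/$\Pop{2}$ cancel on the topmost word), the sequence of topmost words must traverse the common prefix $w_{i-1} \sqcap w_i$ between $w_{i-1}$ and $w_i$, producing every intermediate generalised milestone in turn. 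The delicate point is precisely ruling out that an excursion to higher width changes the topmost word at width $i$ --- the content of the ``cancel out'' claim above.
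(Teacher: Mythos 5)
The paper does not prove this lemma: it is imported verbatim from Carayol [2005] and used as a black box, so there is no in-paper proof to compare yours against. Judged on its own, your argument is the standard one and its skeleton is sound: the explicit construction of the minimal sequence, the identification of its intermediate stacks with $\genMilestones(s)$, the ``forcing fact'' via the last position of width $\leq i$ and the observation that excursions above width $i$ leave the width-$i$ stack unchanged, and the induction peeling off the last operation. You also correctly isolate the delicate point (that a detour to higher width cannot alter the topmost word of the width-$i$ prefix).

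One step is elided in a way that matters. You conclude that the suffix after the $\op_n$-transition is ``a run from $s$ back to $s$ that avoids $m_{n-1}$ --- that is, a loop at $s$''. By Definition \ref{DefLoop} a loop of $s$ must avoid $\Pop{2}(s)$, and $\Pop{2}(s)=m_{n-1}$ only when $\op_n=\Clone{2}$. When $\op_n$ is $\Push{\sigma}$ or $\Pop{1}$, the two stacks differ, and avoiding $m_{n-1}$ does not literally give a loop; you must additionally show that any return from $\Pop{2}(s)=w_1:\dots:w_{k-1}$ back to $s$ forces an intermediate visit to $m_{n-1}$. This does follow from your own tree-path argument, but only because the minimality of the sequence pins down the last operation: e.g.\ if $\op_n=\Push{\sigma}$ then $w_k\not\leq w_{k-1}$, so $w_{k-1}\sqcap w_k\leq \Pop{1}(w_k)$ and the climb of the topmost word from $w_{k-1}\sqcap w_k$ up to $w_k$ necessarily passes $\Pop{1}(w_k)$, i.e.\ the stack passes $m_{n-1}$. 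Without invoking this property of the \emph{minimal} sequence the claim is false, so it should be said. A second, smaller omission: your induction applies the hypothesis to the prefix ending at $m_{n-1}$ under the assertion that its canonical sequence is $\op_1,\dots,\op_{n-1}$; that prefixes of the minimal sequence for $s$ are themselves the minimal sequences of the milestones they generate is true (it follows from $w_i\sqcap v_{i+1}=w_i\sqcap w_{i+1}$ for a push-phase milestone) but is used without justification.
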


Thus, loops play a crucial
role in understanding the existence of runs from one configuration to
another. Moreover, returns and loops of smaller stacks provide a
natural decomposition of loops. Hence, counting the number of
loops and returns of 
certain stacks can be used in order to count the number of runs
between certain configurations. We define the following notation
concerning counting loops and returns up to some threshold
$k\in\N$.\footnote{Counting the size of a set up to some threshold
  $k\in\N$ means that  we assign the value $k$  to the set if it
  contains at least $k$ elements, i.e., $k$ stands for ``$k$ or
  more''.}

\begin{definition} \label{Def:ReturnFunc}
  Let $\mathcal{S}=(Q,\Sigma, q_I, \Delta)$ be a $2$-PS. 
  Let
  $\ReturnFunc{k}_{\mathcal{S}}(s):Q\times Q \rightarrow
  \{0,1,\dots,k\}$ be the function that assigns $(q,q')$ to the number
  of returns from $(q,s)$ to $(q',\Pop{2}(s))$ up to threshold $k$. 
  \label{Def:LoopFunc}
  Analogously, let 
  $\LoopFunc{k}_{\mathcal{S}}(s)$ and
  $\HighLoopFunc{k}_{\mathcal{S}}(s)$ count the loops and high loops,
  respectively,  from
  $(q,s)$ to 
  $(q',s)$ up to  threshold $k$.
  
  If $\mathcal{S}$ is clear from the context, we omit it and write
  $\LoopFunc{k}$ for $\LoopFunc{k}_{\mathcal{S}}$, etc. 
\end{definition}
\begin{remark}
  Note that loops and returns of some stack $s:w$ never look into
  $s$. Hence, $\LoopFunc{k}(s:w) = \LoopFunc{k}(t:w)$ 
  for every word $w$ and arbitrary stacks $s$ and $t$. If $s$ and $t$ are
  both 
  nonempty stacks, the analogous statement
  holds for $\ReturnFunc{k}$. 
  Thus, we will write $\LoopFunc{k}(w)$ for $\LoopFunc{k}(s:w)$ where
  $s$ is an arbitrary stack. Analogously
  $\ReturnFunc{k}(w):=\ReturnFunc{k}(s:w)$ for any nonempty stack $s$.  
\end{remark}

The numbers of loops and of returns of some stack $s$ inductively depend
on the number of loops and returns of its substacks. 
In preparation for the proof of this observation we  recall the notion of
\emph{stack replacement} introduced in \cite{Blumensath2008}.

\begin{definition}
  For some level $2$ stack $t$ and some substack $s\leq t$ we say that
  $s$
  is a \emph{prefix} 
  of $t$ and write 
  $s\prefixeq t$, if there are $n\leq m
  \in\N$ such that $s=w_1:w_2:\dots : w_{n-1}: w_n$ and 
  $t=w_1:w_2 \dots :w_{n-1} : v_n : v_{n+1} : \dots : v_m$ such that $w_n\leq
  v_j$ for all $n\leq j \leq m$. This means that $s$ and $t$ agree on the
  first $\lvert s\rvert -1$ words and the last word of $s$ is a prefix of all
  other words of $t$. 
  We extend this definition to runs by writing
  $s\prefixeq \rho$ if $\rho$ is some run and
  $s\prefixeq t_i$ for $t_i$ the stack at $\rho(i)$ and for all
  $i\in\domain(\rho)$.  

  Let $s,t,u$ be $2$-stacks such that $s\prefixeq t$. Assume that 
  \begin{align*}
   &s=w_1:w_2:\dots : w_{n-1}: w_n,\\
   &t=w_1:w_2 \dots :w_{n-1} : v_n : v_{n+1} : \dots : v_m, \text{
     and}\\
   &u=x_1: x_2: \dots : x_p
  \end{align*}
  for numbers $n,m,p\in\N$ with $n\leq m$. 
  For $n\leq i \leq m$, let $\hat v_i$ be the unique word such that
  $v_i=w_n\circ \hat v_i$. Set
  \begin{align*}
    t[s/u]:= x_1: x_2: \dots: x_{p-1} : (x_p\circ \hat v_n) :
    (x_p \circ \hat v_{n+1}): \dots : (x_p\circ \hat v_m)
  \end{align*}
  and call $t[s/u]$ 
  \emph{the stack obtained from $t$ by replacing the prefix $s$ by
    $u$}. 
  For $c=(q,t)$ a configuration, set $c[s/u]:=(q,t[s/u])$. 
\end{definition}

\begin{lemma}[\cite{Blumensath2008}] \label{Lem:BlumensathHOLevel2}
  Let $\rho$ be a run of some $2$-PS
  $\mathcal{S}$. Let $s,u\in \Sigma^{+2}$ be stacks such that
  $s\prefixeq \rho$ and $\TOP{1}(u)=\TOP{1}(s)$. 
  Then the function 
  $\rho[s/u]$ defined by \mbox{$\rho[s/u](i):=\rho(i)[s/u]$} is a run of
  $\mathcal{S}$.  
\end{lemma}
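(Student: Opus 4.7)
The plan is to verify two commutation properties of the stack-replacement operation and then conclude by induction on the length of $\rho$. The intuition is that every transition of a $2$-PS fires based only on the state and the topmost level-$1$ symbol, and only modifies the part of the stack strictly above the prefix $s$. Since $s \prefixeq \rho$ throughout, the portion swapped in by $[s/u]$ is inert during the run.

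First I would prove top-symbol invariance: $\TOP{1}(t[s/u]) = \TOP{1}(t)$ whenever $s \prefixeq t$ and $\TOP{1}(u) = \TOP{1}(s)$. With the notation from the definition of $[s/u]$, the topmost word of $t$ is $w_n \circ \hat v_m$ while that of $t[s/u]$ is $x_p \circ \hat v_m$. If $\hat v_m \neq \varepsilon$ both end in the same letter; if $\hat v_m = \varepsilon$ the topmost words reduce to $w_n$ and $x_p$, whose last letters agree by hypothesis.

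Next I would show the commutation identity $\op(t)[s/u] = \op(t[s/u])$ for each stack operation $\op \in \{\Push{\sigma}, \Pop{1}, \Clone{2}, \Pop{2}\}$ under the assumption that $s \prefixeq t$ and $s \prefixeq \op(t)$. This is a direct case analysis from the definitions: $\Push{\sigma}$ and $\Pop{1}$ modify only the $\hat v_m$-suffix of the topmost word, which is shared by the two representations; $\Clone{2}$ appends a copy of the topmost word on both sides; $\Pop{2}$ discards the topmost word, and the precondition $s \prefixeq \Pop{2}(t)$ forces $m > n$ so that the result still contains an image of $s$ for the replacement to act on.

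With these two properties, the main claim follows by induction on $i$. For a transition $\rho(i) \trans{\delta} \rho(i+1)$ with $\delta = (q, \sigma, q', \op) \in \Delta$, write $\rho(i) = (q, t)$ so that $\TOP{1}(t) = \sigma$ and $\rho(i+1) = (q', \op(t))$. By the first commutation lemma, $\TOP{1}(t[s/u]) = \sigma$, and by the second the successor stack is $\op(t[s/u]) = \op(t)[s/u]$, so $\rho[s/u](i) \trans{\delta} \rho[s/u](i+1)$ is a legal transition of $\mathcal{S}$. The main technical obstacle is the boundary bookkeeping for the pop operations in the second step (ensuring $\hat v_m \neq \varepsilon$ for $\Pop{1}$ and $m > n$ for $\Pop{2}$); the hypothesis $s \prefixeq \rho$ delivers exactly these side conditions at every step, so no further assumption is needed.
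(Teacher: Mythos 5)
The paper does not prove this lemma at all: it is imported verbatim from Blumensath (2008) as a black box, so there is no in-paper argument to compare against. Your proposal is a correct, self-contained verification. The two commutation facts you isolate are exactly the right ones, and the boundary bookkeeping is handled properly: $\TOP{1}(t[s/u])=\TOP{1}(t)$ follows from the case split on whether $\hat v_m=\varepsilon$ together with $\TOP{1}(u)=\TOP{1}(s)$; the hypothesis $s\prefixeq\op(t)$ forces $\hat v_m\neq\varepsilon$ for $\Pop{1}$ (otherwise the popped topmost word would no longer have $w_n$ as a prefix) and forces $m>n$ for $\Pop{2}$ (otherwise $\Pop{2}(t)$ would have fewer words than $s$), and in both cases the replacement then commutes with the operation on the nose, while $\Push{\sigma}$ and $\Clone{2}$ commute unconditionally. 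Since a transition $(q,\sigma,q',\op)$ of a $2$-PS is enabled purely by the state and $\TOP{1}$, each transition of $\rho$ remains a legal transition with the same label after replacement, and the induction closes. The only point worth making explicit in a written-up version is that $\Pop{1}$ and $\Pop{2}$ remain \emph{defined} on $t[s/u]$ (the topmost word $x_p\circ\hat v_m$ has length at least $2$ when $\hat v_m\neq\varepsilon$, and $t[s/u]$ has at least $p+1\geq 2$ words when $m>n$), but this is immediate from the same side conditions you already invoke.
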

We extend the notion of prefix replacement to
runs that are $s$-prefixed at the beginning and at the end and that
never visit the substack $\Pop{2}(s)$.  
Such a run may contain ``holes'', i.e., parts that are not prefixed by
$s$. We show that these holes are always loops or returns. Thus, we
can replace a prefix by another one if these two share the same types
of loops and returns. 
The following lemma prepares this new kind of prefix replacement. 

\begin{lemma} \label{Lem:prefixedstartstopdecomposition}
  Let $\mathcal{N}$ be some $2$-PS and let
  $\rho$ be a run of $\mathcal{N}$ of length $n$. 
  Let $s$ be a stack with topmost word $w:=\TOP{2}(s)$ such that
  \begin{align*}
    s\prefixeq\rho(0), s\prefixeq\rho(n), \text{ and }
    \lvert s \rvert \leq \lvert \rho(i) \rvert\text{ for all }0\leq i
    \leq n.
  \end{align*}
  There is a unique sequence $0=i_0 \leq j_0 < i_1 \leq j_1 < \dots <
  i_{m-1} \leq  j_{m-1} < i_m \leq j_m=n$ such that
  \begin{enumerate}
  \item $s\prefixeq \rho{\restriction}_{[i_k,j_k]}$ for all $0\leq k
    \leq m$ and 
  \item $\TOP{2}(\rho(j_k+1))=\Pop{1}(w)$,
    $\rho{\restriction}_{[j_k,i_{k+1}]}$ is either a loop or a return,
    and $\rho{\restriction}_{[j_k,i_{k+1}]}$ does not visit the stack of
    $\rho(j_k)$ between its initial configuration and its final
    configuration for all $0\leq k < m$.
  \end{enumerate}
\end{lemma}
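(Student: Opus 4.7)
The plan is to define the indices greedily and then analyse the boundary transitions. I set $i_0 := 0$ and, having defined $i_k$, let $j_k$ be the largest index in $[i_k, n]$ such that $s \prefixeq \rho(l)$ for every $l \in [i_k, j_k]$; if $j_k < n$, I let $i_{k+1}$ be the smallest index in $(j_k, n]$ with $s \prefixeq \rho(i_{k+1})$, which exists because $s \prefixeq \rho(n)$. Part (1) of the lemma and the uniqueness of the decomposition follow immediately.

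For the topmost-word claim in part (2), I would analyse the transition $\rho(j_k) \to \rho(j_k+1)$ by cases on the operation. Since $s \prefixeq \rho(j_k)$, every word of $\rho(j_k)$ at a position at least $\lvert s \rvert$ extends $w := \TOP{2}(s)$. A $\Push{\sigma}$ or $\Clone{2}$ preserves the lower words and produces a new topmost word that still extends $w$; a $\Pop{2}$ (feasible because $\lvert \rho(j_k+1) \rvert \geq \lvert s \rvert$) removes the top word but leaves the remaining words extending $w$; and a $\Pop{1}$ applied to a strict extension $w\gamma$ of $w$ yields a word still extending $w$. Thus only a $\Pop{1}$ applied to a topmost word exactly equal to $w$ can break the prefix condition, which forces $\TOP{2}(\rho(j_k+1)) = \Pop{1}(w)$.

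For the loop-or-return part, write $t_k$ for the stack at $\rho(j_k)$. The extra clause is immediate: any stack visited strictly between $j_k$ and $i_{k+1}$ differs from $t_k$, because $s \prefixeq t_k$ while $s \notprefixeq \rho(l)$ for $j_k < l < i_{k+1}$. To identify the final stack, I would establish a frozen-bottom invariant for the open interval: whenever the current width is at least $\lvert t_k \rvert$, the lower $\lvert t_k \rvert - 1$ words coincide with $w_1, \ldots, w_{\lvert t_k \rvert - 1}$ (the bottom words of $t_k$); moreover, whenever the width strictly exceeds $\lvert t_k \rvert$, the word at position $\lvert t_k \rvert$ is frozen at $\Pop{1}(w)$, which is strictly shorter than $w$, so no such configuration can satisfy $s \prefixeq$. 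The conclusion then splits into two cases. Either the width stays at least $\lvert t_k \rvert$ throughout the open interval, in which case the first topmost word extending $w$ that can be reached at width $\lvert t_k \rvert$ is $w$ itself, obtained by pushing the final symbol of $w$ onto $\Pop{1}(w)$, so that the resulting stack is exactly $t_k$ --- the loop case. Or the width first dips to $\lvert t_k \rvert - 1$ via a $\Pop{2}$, and by the invariant the resulting stack is $w_1 : \cdots : w_{\lvert t_k \rvert - 1} = \Pop{2}(t_k)$, for which $s \prefixeq \Pop{2}(t_k)$ holds; note that this scenario requires $\lvert t_k \rvert > \lvert s \rvert$, since otherwise the dip would violate the hypothesis $\lvert \rho(i) \rvert \geq \lvert s \rvert$. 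This is the return case.

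The main obstacle is the invariant analysis for the interior: verifying that at any width above $\lvert t_k \rvert$ the word at position $\lvert t_k \rvert$ is immutable and hence equal to $\Pop{1}(w)$, and tracking how the topmost word at width exactly $\lvert t_k \rvert$ evolves under $\Push{\sigma}$ and $\Pop{1}$ operations, interspersed with $\Clone{2}/\Pop{2}$ excursions to greater widths, so as to conclude that the only word extending $w$ reachable at position $\lvert t_k \rvert$ from $\Pop{1}(w)$ is $w$ itself, and that any longer extension of $w$ must first pass through $w$.
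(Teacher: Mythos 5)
Your decomposition and your analysis of the boundary transition match the paper's proof: the paper packages the iteration as an induction on the length of $\rho$ rather than a greedy definition of all indices at once, which is cosmetic, and it is in fact terser than you are about why the interior piece is a loop or a return (it simply asserts that the stack at $i_{k+1}$ is either $\Pop{2}(\rho(j_k))$ or $\rho(j_k)$). However, one invariant you state is false as written: it is not true that whenever the width exceeds $\lvert t_k\rvert$ the word at position $\lvert t_k\rvert$ is ``frozen at $\Pop{1}(w)$''. That word is immutable only while the width is strictly above $\lvert t_k\rvert$; each time the width returns to exactly $\lvert t_k\rvert$ the run may rewrite it by $\Push{\sigma}$ and $\Pop{1}$ operations before cloning upward again, so on a later excursion the word at that position can be, for instance, $\Pop{1}(w)\,c$ for a letter $c$ different from the last letter of $w$.

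The flaw is repairable with an argument you already sketch for the width-exactly-$\lvert t_k\rvert$ case. The word at position $\lvert t_k\rvert$ changes only while the width equals $\lvert t_k\rvert$, one letter at a time, and a single $\Push{\sigma}$ or $\Pop{1}$ can turn a word that does not extend $w$ into one that does only by producing $w$ itself; at that instant the lower words are still those of $t_k$, so the stack is exactly $t_k$ and satisfies $s\prefixeq$, contradicting the minimality of $i_{k+1}$. Hence the correct interior invariant is that the word at position $\lvert t_k\rvert$ \emph{does not extend} $w$ (rather than that it equals $\Pop{1}(w)$), and this weaker statement is all that either of your two concluding cases uses. With that substitution your argument is complete and supplies the detail the paper omits.
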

\begin{proof}
  If $s\prefixeq\rho$, then we set $m:=0$ and we are done. 

  Otherwise, we proceed by induction on the length of $\rho$. 
  There is a minimal position $j_0+1$ such that
  $s\notprefixeq\rho(j_0+1)$. By assumption on $s$,
  \mbox{$\rho(j_0+1)\neq\Pop{2}(s)$}. Thus,
  $\TOP{2}(\rho(j_0))=w$ and  
  $\TOP{2}(\rho(j_0+1))=\Pop{1}(w)$. 
  Now, let $i_1>j_0$ be minimal such that $s\prefixeq\rho(i_1)$.
  Concerning the stack at $i_1$ there are the following possibilities. 
  \begin{enumerate}
  \item 
    If $\rho(i_1)=\Pop{2}(\rho(j_0))$
    then
    $\rho{\restriction}_{[j_0,i_1]}$ is a return.
  \item 
    Otherwise, 
    the stacks of $\rho(j_0)$ and $\rho(i_1)$ coincide whence
    $\rho{\restriction}_{[j_0,i_1]}$ is a loop (note that between $j_0$
    and $i_1$ the stack $\Pop{2}(\rho(j_0))$ is never visited 
    because $i_1$ is minimal, we are not in the first case, and by
    assumption 
    $\lvert s \rvert \leq \lvert \rho(i_1)$).
  \end{enumerate}
  $\rho{\restriction}_{[i_1,n]}$ is shorter than $\rho$. Thus, it
  decomposes by induction hypothesis and 
  the lemma follows immediately.  
\end{proof}

This lemma gives rise to the following extension of the prefix
replacement for prefixes $s,u$ where $s$ and $u$
share similar loops and returns. 
\begin{definition}\label{Def:StackreplacementinMilestones}
  Let $s$ be some stack and $\rho$ be a run of some $2$-PS
  $\mathcal{N}$  such that 
  $s\prefixeq \rho(0)$, $s\prefixeq\rho(\length(\rho))$ and
  $\lvert s \rvert \leq \lvert \rho(i) \rvert$ for all
  $i\in\domain(\rho)$. 
  Let $u$ be 
  some stack such that \mbox{$\TOP{1}(u)=\TOP{1}(s)$}, 
  $\LoopFunc{1}(u) = \LoopFunc{1}(s)$ and
  $\ReturnFunc{1}(u)=\ReturnFunc{1}(s)$. 

  Let $0=i_0 \leq j_0 < i_1 \leq j_1 < \dots <
  i_{m-1} \leq  j_{m-1} < i_m \leq j_m=\length(\rho)$ be the sequence
  corresponding to $\rho$
  in the sense of the previous lemma. 
  We set $(q_k,s_k):=\rho(j_k)$ and $(q'_k,s'_k):=\rho(i_{k+1})$. 
  By definition,
  $\rho{\restriction}_{[j_k,i_{k+1}]}$ is a loop or a return 
  from $(q_k,s_k)$ to $(q_k', s_k')$ and
  $\TOP{2}(s_k)=\TOP{2}(s)$ and $s\prefixeq s_k$. 
  Thus, $\TOP{2}(s_k[s/u])=\TOP{2}(u)$. Since
  \mbox{$\ReturnFunc{1}(u)=\ReturnFunc{1}(s)$} and
  \mbox{$\LoopFunc{1}(u) = \LoopFunc{1}(s)$}, there is a run from
  $(q_k,s_k[s/u])$ to $(q'_{k+1}, s'_{k+1}[s/u])$. 
  We set $\rho_k$ to be the length-lexicographically shortest run from
  $(q_k,s_k[s/u])$ to $(q'_{k+1}, s'_{k+1}[s/u])$.\footnote{We assume
    $\Delta$ to be equipped with some fixed but arbitrary linear order.}  

  Then we define the run $\rho[s/u]$ by 
  \begin{align*}
    \rho[s/u]:=\rho{\restriction}_{[i_0,j_0]}[s/u] \circ \rho_0
    \circ \rho{\restriction}_{[i_1,j_1]}[s/u] \circ \rho_1 \circ \dots
    \circ \rho_{m-1} \circ \rho{\restriction}_{[i_{m},j_{m}]}[s/u].  
  \end{align*}
\end{definition}
\begin{remark}
  Note that $\rho[s/u]$ is a well-defined run from $\rho(0)[s/u]$ to
  $\rho(\length(\rho))[s/u]$. 
\end{remark}
Next, we turn to the analysis of loops and returns. In this part, we show
that for each $2$-PS there is a function relating the height of the
topmost word of a stack with a bound on the length of the shortest
loops and returns of this stack. 

First, we characterise loops and returns in terms of loops
and returns of smaller stacks. The proof is completely
analogous to the proof of Lemma
\ref{Lem:prefixedstartstopdecomposition}.
Every loop  
or return of some stack $s$ decomposes into parts that are prefixed by
$s$ and parts that are returns or loops of stacks with topmost word
$\Pop{1}(\TOP{2}(s))$. This observation gives rise to the observation
that the numbers of loops and of returns of a stack $s$ only depend on
its topmost symbol and the number of loops and returns of
$\Pop{1}(s)$. 

\begin{lemma} \label{Lem:UniqueSequencesLoopsReturns}
  Let $\rho$ be some return from some stack $s:w$ to $s$ of
  length $n$. Then there is a 
  unique sequence 
  $0=i_0 \leq j_0 < i_1 \leq j_1 < \dots < i_{m-1} \leq  j_{m-1} <
  i_m \leq j_m=n$ such that the following holds.
  \begin{enumerate}
  \item  $s:w\prefixeq \rho{\restriction}_{[i_k,j_{k}]}$ for all
    $0\leq k \leq m$ and
  \item $\rho{\restriction}_{[j_{k}+1, i_{k+1}]}$ is a return from some
    stack $s':\Pop{1}(w)$ to $s'$. 
  \end{enumerate}
  Let $\lambda$ be some loop of length $n$ starting and ending
  in some stack $s:w$. Then there is a 
  unique sequence 
  $0=i_0 \leq j_0 < i_1 \leq j_1 < \dots < i_{m-1} \leq  j_{m-1} <
  i_m \leq j_m=n$ such that the following holds.
  \begin{enumerate}
  \item  $s:w\prefixeq \lambda{\restriction}_{[i_k,j_{k}]}$ for all
    $0\leq k \leq m$ and
  \item $\lambda{\restriction}_{[j_k+1, i_{k+1}]}$ is either a 
    return starting in
    some stack $s':\Pop{1}(w)$ 
    or it is a loop of $\Pop{1}(s:w)$ followed by a
    $\Push{\TOP{1}(w)}$ transition. In this case $\lambda(j_k+1)$ is
    the first  and $\lambda(i_{k+1}-1)$ is the last occurrence of
    $s: \Pop{1}(w)$ in $\lambda$. 
  \end{enumerate}
\end{lemma}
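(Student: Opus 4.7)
The plan is to mirror the proof template of Lemma~\ref{Lem:prefixedstartstopdecomposition}, proceeding by induction on the run length $n$ and tracking when $s:w$ ceases to be a prefix of the current stack. The key observation is that the relation $s:w \prefixeq \rho(i)$ can be broken by a single transition in only two ways: either by a $\Pop{1}$ that reduces the topmost word from $w$ down to $\Pop{1}(w)$, or by a $\Pop{2}$ that removes the last $w$-extension sitting above $s$. Every other transition preserves the prefix, so the maximal subintervals on which $s:w$ is a prefix are separated by exactly these two kinds of exit.

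For Part~1, I would set $i_0 := 0$ and take $j_0$ maximal with $s:w \prefixeq \rho(j)$ for every $0 \leq j \leq j_0$. If the violating transition $j_0 \to j_0+1$ is a $\Pop{2}$, then $\rho(j_0)=s:w$ and $\rho(j_0+1)=s$; by the return property this forces $j_0+1=n$ and terminates the decomposition. Otherwise it is a $\Pop{1}$, and $\rho(j_0+1) = s':\Pop{1}(w)$ for some $s'$ whose width is at least $\lvert s\rvert$. Letting $i_1$ be the smallest index after $j_0$ with $s:w \prefixeq \rho(i_1)$, the subrun $\rho{\restriction}_{[j_0+1,\,i_1]}$ goes from $s':\Pop{1}(w)$ to $s'$ without revisiting $s'$ in between, by minimality of $i_1$, and so is a return of the required form. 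The induction hypothesis applied to $\rho{\restriction}_{[i_1,\,n]}$ then yields the remainder of the decomposition.

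For Part~2 the argument has the same shape, but each gap $\lambda{\restriction}_{[j_k+1,\,i_{k+1}]}$ can take two shapes. The gap again begins with a $\Pop{1}$ into a stack of the form $s':\Pop{1}(w)$. Either it eventually performs a $\Pop{2}$ from some stack $s'':\Pop{1}(w)$ down to $s''$, in which case it is a return exactly as in Part~1; or it never performs such a $\Pop{2}$, and then every stack it visits is an extension of $s:\Pop{1}(w)$. In the latter case, the gap must end with a single $\Push{\TOP{1}(w)}$ transition that restores $s:w$ at $i_{k+1}$, and the part before that push is a loop of $s:\Pop{1}(w)$. The assertions that $\lambda(j_k+1)$ and $\lambda(i_{k+1}-1)$ are the first and last visits of $s:\Pop{1}(w)$ in the gap are forced by the extremal choices of $j_k$ and $i_{k+1}$, and these extremal choices also deliver uniqueness of the whole decomposition.

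The main obstacle I anticipate is in Part~2: verifying that the final transition of a ``loop-plus-push'' gap is necessarily a $\Push{\TOP{1}(w)}$, rather than some other push or a $\Clone{2}$. This should follow from the observation that $s:w \prefixeq \lambda(i_{k+1})$ forces the topmost word at $i_{k+1}$ to extend $w$, while at $i_{k+1}-1$ the prefix is still violated, so the topmost word there must be a proper prefix of $w$; the only single-step operation that bridges these two configurations is $\Push{\TOP{1}(w)}$ applied to $s:\Pop{1}(w)$. Everything else is routine bookkeeping once these extremal positions and cases are identified.
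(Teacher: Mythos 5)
Your overall template (induct on the length of the run, locate the first position where the $s:w$-prefix fails, close off a gap, recurse on the remainder) is the right one and is what the paper intends, since the paper's own proof is just the remark that the argument is analogous to Lemma~\ref{Lem:prefixedstartstopdecomposition}. However, the way you close off the gaps is wrong, and this is precisely the point where the present lemma differs from that one. You take $i_{k+1}$ to be the smallest index after $j_k$ at which $s:w\prefixeq\rho(i_{k+1})$ holds again. That choice works in Lemma~\ref{Lem:prefixedstartstopdecomposition}, where a gap is allowed to be a \emph{loop or} a return, but not here, where every gap (except the one distinguished loop-plus-push gap in the loop case) must be a \emph{return} of a stack with topmost word $\Pop{1}(w)$. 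Concretely, suppose $\rho(j_k)=s:w:w$, the next transition is a $\Pop{1}$, and the transition after that is $\Push{\TOP{1}(w)}$. Then $\rho(j_k+2)=s:w:w$ is already $s:w$-prefixed again, so your $i_{k+1}=j_k+2$ and your gap $\rho{\restriction}_{[j_k+1,j_k+2]}$ is a single push. This is neither a return from $s':\Pop{1}(w)$ to $s'=\Pop{2}(\rho(j_k))=s:w$ nor a loop of $s:\Pop{1}(w)$ followed by a push (it lives at width $\lvert s\rvert+2$), so the sequence you construct violates condition~(2). The correct endpoint of the gap is the first index at which the stack equals $\Pop{2}(\rho(j_k))$ (it exists because the return eventually reaches $s$, respectively because a loop ends at width $\lvert s\rvert+1$), and this is in general strictly later than the first restoration of the prefix; in particular the intervals $[i_k,j_k]$ are \emph{not} the maximal $s:w$-prefixed subintervals, since a gap may internally revisit $s:w$-prefixed stacks.

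The same mis-choice also breaks the loop case and, with it, uniqueness. If $\lambda$ dips to $s:\Pop{1}(w)$, pushes back to $s:w$, and later dips to $s:\Pop{1}(w)$ a second time, your construction produces two separate one-step gaps, whereas the lemma requires a single gap containing \emph{all} occurrences of $s:\Pop{1}(w)$ in $\lambda$ — that is the force of the clause ``$\lambda(j_k+1)$ is the first and $\lambda(i_{k+1}-1)$ is the last occurrence of $s:\Pop{1}(w)$ in $\lambda$'', which you read as referring only to occurrences inside the gap. It is exactly this clause that rules out the competing decomposition and yields uniqueness. Once $i_{k+1}$ is redefined correctly — first return to $\Pop{2}(\rho(j_k))$ when $\lvert \rho(j_k)\rvert>\lvert s\rvert+1$, and the position following the last occurrence of $s:\Pop{1}(w)$ when $\rho(j_k)=s:w$ in the loop case — the remaining ingredients of your argument (the classification of prefix-breaking transitions, and the forced $\Push{\TOP{1}(w)}$ after the last visit to $s:\Pop{1}(w)$) do go through.
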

\begin{remark}
  Note that property (2) of the characterisation of returns implies
  that
  $\rho{\restriction}_{[j_{k}, i_{k+1}]}$ is a return from some
  stack $s':w$ to $s'$. 
  Similarly, property (2) of the characterisation of loops implies
  that
  $\lambda{\restriction}_{[j_{k}, i_{k+1}]}$ is a return from some
  stack $s':w$ to $s'$ or a loop of $s$ containing all occurrences of
  $\Pop{1}(s)$ in the run $\lambda$. In particular, there is at most
  one $k$ such that 
  $\lambda{\restriction}_{[j_{k}, i_{k+1}]}$ is a loop. 
\end{remark}
This lemma gives rise to the inductive computation of 
the number of loops and returns of a stack.

\begin{proposition} \label{Prop:InductiveComputability}
  Given a  $2$-PS $\mathcal{S}$,
  $\LoopFunc{k}(w)$ and $\ReturnFunc{k}(w)$
  only depend on  
  $\TOP{1}(w)$, $\Pop{1}(\TOP{2}(w))$, $\LoopFunc{k}(\Pop{1}(w))$,
  and 
  $\ReturnFunc{k}(\Pop{1}(w))$.
  Moreover, 
  $\HighLoopFunc{k}(w)$ only depends on $\TOP{1}(w)$ and 
  $\ReturnFunc{k}(\Pop{1}(w))$.
\end{proposition}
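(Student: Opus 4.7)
The plan is to apply Lemma~\ref{Lem:UniqueSequencesLoopsReturns} in order to decompose every return or loop of a stack with topmost word $w$ uniquely as an alternating sequence of \emph{upper} sub-runs --- those prefixed by $s:w$ --- and \emph{lower} sub-runs. By the lemma, each lower sub-run is either a return from some stack $s':\Pop{1}(w)$ to $s'$, or, in the loop case and at most once, a loop of $\Pop{1}(s:w)$ bracketed by a $\Pop{1}$ transition and a matching $\Push{\TOP{1}(w)}$ transition. Since this decomposition is unique, counting runs of each of the three types up to threshold $k$ reduces to counting compatible interleavings of upper and lower pieces.

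The crucial point to establish is that the number of upper sub-runs joining any pair of states $(q_1,q_2)\in Q\times Q$ depends only on the transition relation $\Delta$ together with $\TOP{1}(w)$: by the prefix-replacement Lemma~\ref{Lem:BlumensathHOLevel2}, an $s:w$-prefixed sub-run is determined, modulo its prefix, by the sequence of applied transitions, and every stack visited inside the piece has a topmost word extending $w$, so the only symbol-based choices concern $\TOP{1}(w)$ and the previously pushed symbols. The additional datum $\Pop{1}(\TOP{2}(w))$ is required at the interface between an upper piece and the subsequent lower piece in the loop case: the matching $\Push{\TOP{1}(w)}$ closing a sub-loop of $\Pop{1}(s:w)$ is applicable according to the symbol $\TOP{1}(\Pop{1}(w))$. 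Given the upper-piece counts as functions of $(q_1,q_2)$ and the recursive values $\ReturnFunc{k}(\Pop{1}(w))$ and (for loops) $\LoopFunc{k}(\Pop{1}(w))$, the totals $\ReturnFunc{k}(w)$ and $\LoopFunc{k}(w)$ are obtained by a fixed arithmetic expression over all compatible interleavings, with all values capped at~$k$.

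For $\HighLoopFunc{k}(w)$, the big-loop option is forbidden since it would require visiting $\Pop{1}(s:w)=s:\Pop{1}(w)$; the dependence on $\LoopFunc{k}(\Pop{1}(w))$ drops out, and with it the need for $\Pop{1}(\TOP{2}(w))$. The main technical point I would verify carefully is that the admissible lower sub-returns of a high loop do not themselves visit $s:\Pop{1}(w)$: the high-loop condition forces each upper piece to end at some stack $s':w$ strictly above $s:w$ (otherwise its terminating $\Pop{1}$ would land directly in $s:\Pop{1}(w)$), so every lower sub-return starts from a stack $s':\Pop{1}(w)$ with $s'$ strictly above $s:w$; such a return stays strictly above $s'$ apart from its final configuration and therefore avoids the substack $s:\Pop{1}(w)\leq s'$. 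Since the remark after Definition~\ref{Def:LoopFunc} guarantees that $\ReturnFunc{k}(\Pop{1}(w))$ is independent of the particular base stack, the count of admissible lower sub-returns is still recovered from $\ReturnFunc{k}(\Pop{1}(w))$ alone, and the remaining counting argument goes through as in the loop case, yielding $\HighLoopFunc{k}(w)$ as a function of $\TOP{1}(w)$ and $\ReturnFunc{k}(\Pop{1}(w))$ only.
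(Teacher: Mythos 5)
Your proposal follows essentially the same route as the paper: decompose each return or loop via Lemma~\ref{Lem:UniqueSequencesLoopsReturns} into $s:w$-prefixed pieces and lower pieces, transfer the prefixed pieces by the replacement $[s:w/s':w']$ of Lemma~\ref{Lem:BlumensathHOLevel2}, and account for the lower pieces through the recursive counts $\ReturnFunc{k}(\Pop{1}(w))$ and $\LoopFunc{k}(\Pop{1}(w))$. Your treatment of high loops is in fact more detailed than the paper's sketch (which only does returns), and the verification that the lower sub-returns of a high loop automatically avoid $\Pop{1}(s:w)$ --- because they start from a stack of width strictly greater than $\lvert s:w\rvert$ and a return never descends below its base --- is correct and worth spelling out.

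The one step that does not stand as phrased is the last one: the claim that the capped counts are ``obtained by a fixed arithmetic expression over all compatible interleavings.'' The number of pieces in a decomposition is unbounded, the set of $s:w$-prefixed pieces between a fixed pair of states is in general infinite, and the upper pieces occurring within a single run are not independent (each must start in the exact configuration where the previous lower piece ended), so there is no finite sum of products of per-piece counts to write down. The paper sidesteps computing anything: given two stacks $s:w$ and $s':w'$ agreeing on $\TOP{1}$ and on $\ReturnFunc{k}(\Pop{1}(\cdot))$, it fixes matching enumerations of the returns of $\Pop{1}(w)$ and $\Pop{1}(w')$ and maps each return of $s:w$ to one of $s':w'$ piece by piece (prefixed pieces via $[s:w/s':w']$, lower returns via the enumerations); uniqueness of the decomposition makes this map injective, and exchanging the roles of the two stacks gives equality of the counts up to threshold $k$. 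You have every ingredient for this injection already; replace the appeal to an arithmetic formula by it (or, if you insist on a formula, by a least-fixpoint computation over the capped semiring $\{0,\dots,k\}$, which is more than the ``only depends on'' statement requires).
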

\begin{proof}[sketch]
  We only consider the return case and prove the following claim.
  Let $s:w$ and $s':w'$ be stacks such that $s$ and $s'$ are nonempty,
  $\TOP{1}(w)=\TOP{1}(w')$,  and $\ReturnFunc{k}(\Pop{1}(w)) =
  \ReturnFunc{k}(\Pop{1}(w'))$. We have to show that the number of
  returns of $s:w$ and of 
  $s':w'$ coincide, i.e., $\ReturnFunc{k}(w) = \ReturnFunc{k}(w')$.

  For reasons of simplicity, we assume that
  $R(q,q'):=\ReturnFunc{k}(\Pop{1}(w))(q,q')<k$  for each $q,q'\in Q$
  and we fix an enumeration 
  $\rho^{q,q'}_1, \rho^{q,q'}_2, \dots, \rho^{q,q'}_{R(q,q')}$ of the
  returns from \mbox{$(q,s:\Pop{1}(w))$ to $(q', s)$.}
  Similarly, let 
  $\hat\rho^{q,q'}_1, \hat\rho^{q,q'}_2, \dots,
  \hat\rho^{q,q'}_{R(q,q')}$ be an enumeration of the returns from
  $(q, s':\Pop{1}(w'))$ to $(q', s')$. 

  For each return from $(q_1, s:w)$ to $(q_2, s)$ we construct a
  return from $(q_1, s':w')$ to $(q_2, s')$ as follows. 
  Let $\rho$ be such a return and let 
  \mbox{$0=i_0 \leq j_0 < i_1 \leq j_1 < \dots 
  < i_{m-1} \leq  j_{m-1} < i_m \leq j_m$} be the unique sequence
  according to Lemma \ref{Lem:UniqueSequencesLoopsReturns}. 
  Now, $\rho_k:=\rho{\restriction}_{[i_k,j_k]}$ is prefixed by
  $s:w$ and ends in topmost word $w$. Hence,  there is a run
  $\hat\rho_k:=\rho_k[s:w/s':w']$ ending in topmost word $w'$.  
  Furthermore, $\pi_k:=\rho{\restriction}_{[j_k, i_{k+1}]}$ is a
  $\Pop{1}$ transition followed by a return $\sigma_k$. 
  Since $\sigma_k$ starts with topmost word $\Pop{1}(w)$, it is
  equivalent to some $\rho^{q,q'}_l$ in the sense that it 
  performs the same transitions as $\rho^{q,q'}_l$. Now let
  $\hat\sigma_k$ be the return that copies the transitions of
  $\hat\rho^{q,q'}_l$ and starts in $(q,\Pop{1}(\hat\rho_k))$. 
  Since $\TOP{1}(w)=\TOP{1}(w')$ we can define a run $\hat\pi_k$ that
  starts in the final configuration of $\hat\rho_k$ performs the same
  first transition as $\pi_k$ and then agrees with $\hat\sigma_k$. 

  It is straightforward to prove that 
  $\hat\rho:=\hat\rho_0 \circ\hat\pi_0 \circ \hat\rho_1 \circ \dots
  \circ \hat\rho_{m-1} \circ \hat\pi_{m-1} \circ \hat\rho_m$ is a
  return from $(q_1, s':w')$ to $(q_2, s')$. 
  Furthermore, this construction transforms distinct returns from
  $(q_1, s:w)$ to $(q_2, s)$ into distinct returns from $(q_1, s':w')$
  to $(q_2, s')$. Hence, there are at least as many returns from 
  $(q_1, s':w')$  to $(q_2, s')$ as from $(q_1, s:w)$ to $(q_2,
  s)$. 

  Reversing the roles of $s:w$ and $s':w'$ we obtain the reverse
  result and we conclude that
  $\ReturnFunc{k}(w)(q_1,q_2) = \ReturnFunc{k}(w')(q_1,q_2)$. 
\end{proof}

\begin{proposition} \label{Prop:FuncBoundLoopLengthLemma}  
  There is an algorithm that,   on input some $2$-PS
  $\mathcal{S}$ and a natural number $k$,  computes  a
  function
  $\FuncBoundLoopLength{\mathcal{S}}{k}:\N \to\N$ 
  with  the
  following properties. 
  \begin{enumerate}
  \item 
    For all stacks $s$, all $q_1,q_2\in Q$ and for 
    $i:=\LoopFunc{k}(s)(q_1,q_2)$, the
    length-lexicographically shortest loops
    $\lambda_1,\dots, \lambda_i$ from
    $(q_1,s)$ to $(q_2,s)$ satisfy 
    \mbox{$ \length(\lambda_j)\leq
      \FuncBoundLoopLength{\mathcal{S}}{k}(\lvert 
      \TOP{2}(s)\rvert)$}
    for all $1\leq j \leq i$.
  \item If there is a loop $\lambda$ from $(q_1,s)$ to $(q_2, s)$ with
    $\length(\lambda) > \FuncBoundLoopLength{\mathcal{S}}{k}(\lvert
    \TOP{2}(s)\rvert)$, then there are $k$ loops from $(q_1,s)$ to
    $(q_2, \Pop{2}(s))$ of length at most 
    $\FuncBoundLoopLength{\mathcal{S}}{k}(\lvert \TOP{2}(s)\rvert)$.  
  \end{enumerate}
  
  Analogously, there are functions 
  $\FuncBoundReturnLength{\mathcal{S}}{k}:\N \to\N$ and
  $\FuncBoundHighLoopLength{\mathcal{S}}{k}:\N \to\N$
  (computable from
  $\mathcal{S}$) that satisfy the
  same assertions for the set of returns and high loops,
  respectively. 
\end{proposition}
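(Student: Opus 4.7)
The plan is to prove the existence and computability of $\FuncBoundLoopLength{\mathcal{S}}{k}$, $\FuncBoundReturnLength{\mathcal{S}}{k}$, and $\FuncBoundHighLoopLength{\mathcal{S}}{k}$ simultaneously by induction on $n := \lvert \TOP{2}(s) \rvert$, exploiting the decomposition from Lemma \ref{Lem:UniqueSequencesLoopsReturns} and the inductive dependence of Proposition \ref{Prop:InductiveComputability}. For the base case $n = 1$, the top word consists only of $\bot$, so neither $\Push{\sigma}$ nor $\Pop{1}$ can fire at the top. A direct finite-state analysis on configurations with top word $[\bot]$ yields an explicit constant depending only on $|Q|$ and $|\Sigma|$.

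For the inductive step, fix $s$ with $\TOP{2}(s) = w$ of length $n+1$ and consider a loop or return $\rho$ from $(q_1,s)$ to $(q_2,\cdot)$. Lemma \ref{Lem:UniqueSequencesLoopsReturns} expresses $\rho$ as
\[ \rho = \rho_0 \circ \pi_0 \circ \rho_1 \circ \pi_1 \circ \dots \circ \rho_{m-1} \circ \pi_{m-1} \circ \rho_m, \]
where each $\rho_k$ stays prefixed by $s$ with top word extending $w$, and each $\pi_k$ is a single top-level transition followed by a return or loop on the shorter top word $\Pop{1}(w)$. By the inductive hypothesis, each such $\pi_k$ can be assumed short, of length at most $1 + \FuncBoundReturnLength{\mathcal{S}}{k}(n)$ (resp.\ $\FuncBoundLoopLength{\mathcal{S}}{k}(n)$ in the loop branch of the decomposition). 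Thus the proof reduces to (i) bounding the number $m$ of alternations, and (ii) bounding the length of each "top-prefixed" segment $\rho_k$.

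For (i), Proposition \ref{Prop:InductiveComputability} tells us that $\LoopFunc{k}(\Pop{1}(w))$ and $\ReturnFunc{k}(\Pop{1}(w))$ depend only on data already computable from $\Pop{1}(w)$; since we only need to generate the length-lexicographically shortest $i \leq k$ witnesses, $m$ is bounded by a function of $k$, $|Q|$, and these finitary counts. For (ii), I would appeal to Definition~\ref{Def:StackreplacementinMilestones} and Lemma \ref{Lem:BlumensathHOLevel2}: a segment $\rho_k$ is prefixed by $s$ and only interacts with its top part through loops/returns of substacks whose top word has length at most $n$. If $\rho_k$ visits two positions sharing the same state together with the same $\ReturnFunc{1}$- and $\LoopFunc{1}$-profile of the topmost word (a finite invariant by Proposition \ref{Prop:InductiveComputability}), then the prefix-replacement lemma permits excising the intermediate segment and yields a shorter run with identical endpoints. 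Iterating this shrinking bounds $\length(\rho_k)$ by a computable function of $n$, $|Q|$, and the inductive bounds. Assembling the two bounds yields the required computable $\FuncBoundLoopLength{\mathcal{S}}{k}(n+1)$; the return and high-loop cases are entirely parallel, with the high-loop case being strictly simpler since no $\pi_k$-blocks appear.

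For the second assertion, I would argue by pumping: if a loop exceeds the computed bound, then by the bounds established in (i)--(ii), some $\pi_k$-block must be "duplicable", namely, among the at least $k$ shortest returns of $\Pop{1}(w)$ (which exist because otherwise the outer bound on $m$ would apply), any of them can be substituted at the position of $\pi_k$ to yield a distinct return of $(q_1,s)$ to $(q_2, \Pop{2}(s))$ of the required length. The main obstacle I anticipate is the argument in (ii): one must set up the shrinking invariant so that prefix replacement on segments of $\rho_k$ reduces to inductively controlled data rather than to the very quantity $\FuncBoundLoopLength{\mathcal{S}}{k}(n+1)$ we are defining. Treating the local loop/return profile of the topmost word (which is an $n$-level invariant) as the pigeonhole witness, rather than the full stack, is what avoids circularity.
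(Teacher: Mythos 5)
Your overall skeleton matches the paper's: induction on $\lvert\TOP{2}(s)\rvert$, the decomposition of Lemma \ref{Lem:UniqueSequencesLoopsReturns} into $s{:}w$-prefixed segments interleaved with returns/loops of the shorter topmost word, the inductive bound on the latter, and a recursion of the shape $\FuncBoundReturnLength{\mathcal{S}}{k}(n+1)= m + n\cdot\FuncBoundReturnLength{\mathcal{S}}{k}(n)$. The divergence --- and the gap --- is in how the two constants $m$ (total length of the prefixed segments) and $n$ (number of sub-returns) are obtained. You propose to bound them explicitly by a pigeonhole/excision argument: find two positions inside a prefixed segment with the same state and the same $\LoopFunc{1}$/$\ReturnFunc{1}$-profile of the topmost word and cut out the piece in between via prefix replacement. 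This step does not work as described. Two such positions need not carry the same stack (inside an $s{:}w$-prefixed segment the width and the topmost words can grow without bound), so after excising and applying the replacement of Definition \ref{Def:StackreplacementinMilestones} to the tail, the segment no longer ends in its original configuration; the subsequent blocks of the decomposition then fail to compose, and the result is no longer a loop/return with the prescribed endpoints. Your remark about "avoiding circularity" addresses a different worry and does not repair this. Similarly, your claim in (i) that the number of alternations is "bounded by a function of $k$, $\lvert Q\rvert$ and the finitary counts" is asserted without argument, and no such bound is established in the paper.

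The paper circumvents both problems by not bounding $m$ and $n$ at all in terms of $\lvert\mathcal{S}\rvert$. Instead it uses that the pair $(\TOP{1}(w),\ReturnFunc{k}(w))$ takes only finitely many, computable values (Proposition \ref{Prop:InductiveComputability}); for one representative word $w_i$ of each class it \emph{measures} the concrete numbers $m_i,n_i$ on the actual length-lexicographically shortest returns (which can be found by exhaustive search since $\ReturnFunc{k}(w_i)$ is known), and then transfers these numbers to every word in the same class via the replacement construction of Proposition \ref{Prop:InductiveComputability}, which preserves the number of prefixed positions and of sub-returns. Taking $m_{\max},n_{\max}$ over the finitely many classes gives a computable recursion with no explicit bound --- and indeed the paper remarks immediately after the proof that no bound on $m_{\max},n_{\max}$ in terms of $\lvert\mathcal{S}\rvert$ is known, which is precisely why no complexity bound for the $2$-NPT algorithm is obtained. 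So the quantity your pumping argument would have to produce is one the paper explicitly leaves open; to salvage your route you would need a genuinely new shrinking construction for the prefixed segments that preserves both endpoints, not the profile-based excision you sketch.
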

\begin{proof}[sketch]
  Again, we only sketch the proof for the case of returns. 
  
  The previous proof showed that for stacks $s:w$ and $s':w'$ with
  $\TOP{1}(w)=\TOP{1}(w')$ and $\ReturnFunc{k}(w)=\ReturnFunc{k}(w')$ 
  the returns of $s:w$ and of $s':w'$ are closely connected via the
  replacement $[s:w/s':w']$. A return from $s:w$ to
  $s$ decomposes into parts prefixed by $s:w$ and parts that are returns
  of stacks with topmost word $\Pop{1}(w)$. It is a straightforward
  observation that the $k$ shortest returns from $s:w$ to $s$ only
  contain returns from stacks with topmost word $\Pop{1}(w)$ among the
  $k$ shortest of such returns. Furthermore, the decomposition of
  returns of $s:w$ and $s':w'$ agree on their $s:w$-prefixed and
  $s':w'$-prefixed parts in the sense that they perform the same
  transitions. Now, let 
  $R(q_1, q_2):=\ReturnFunc{k}(w)(q_1,q_2)$ and let 
  $\rho_1, \rho_2, \dots, \rho_{R(q_1, q_2)}$ be the $R(q_1, q_2)$
  shortest returns from $(q_1, s:w)$ to $(q_2, s)$. 
  Let $m(\rho_i)$ denote the number of positions in $\rho_i$ that are
  $s:w$ prefixed and let $n(\rho_i)$ denote the number of returns from
  stacks with topmost word $\Pop{1}(w)$ occurring in $\rho_i$. 
  Let $m$ be the maximum over all $m(\rho_i)$ and $n$ the maximum over
  all $n(\rho_i)$. 
  Analogously to the proof of the previous proposition, we can
  construct
  $\ReturnFunc{k}(w')(q_1, q_2)$
  many  
  returns from $(q_1, s':w')$ to $(q_2, s')$ that have at most $m$
  positions that are $s':w'$ prefixed and that contain at most $n$
  many returns of stacks with topmost word $\Pop{1}(w')$. 
  Thus, if we have already defined
  $\FuncBoundReturnLength{\mathcal{S}}{k}(\lvert \Pop{1}(w')\rvert)$,
  then the shortest $\ReturnFunc{k}(w')(q_1,q_2)$ many returns from 
  $(q_1, s':w')$ to $(q_2, s')$ have length at most 
  $m + n \cdot \FuncBoundReturnLength{\mathcal{S}}{k}(\lvert
  \Pop{1}(w')\rvert)$. 
  
  Due to the previous lemma, we can compute 
  a maximal finite sequence of words $w_1, w_2, \dots, w_k$ such that
  for each pair $w_i, w_j$, $\TOP{1}(w_i)\neq\TOP{1}(w_j)$ or
  $\ReturnFunc{k}(w_i) \neq 
  \ReturnFunc{k}(w_j)$.

  Repeating this  construction for each of these words,
  we obtain numbers $m_i$ and $n_i$ for each $1\leq i \leq k$. 
  Set $m_{\max}:=\max\{m_i: 1\leq i \leq k\}$ and 
  $n_{\max}:=\max\{n_i: 1\leq i \leq k\}$.
  Since for any word $w$ there is some $i\leq k$ such that
  $\TOP{1}(w)=\TOP{1}(w_i)$ and $\ReturnFunc{k}(w)=\ReturnFunc{k}(w_i)$
  the shortest $\ReturnFunc{k}(w)(q_1,q_2)$ many returns from 
  $(q_1, s':w)$ to $(q_2, s')$ have length at most 
  \begin{align*}
    m_i + n_i \cdot 
    \FuncBoundReturnLength{\mathcal{S}}{k}(\lvert\Pop{1}(w')\rvert)\leq
    m_{\max} + n_{\max} \cdot 
    \FuncBoundReturnLength{\mathcal{S}}{k}(\lvert\Pop{1}(w')\rvert).
  \end{align*}
  Thus, setting 
  $\FuncBoundReturnLength{\mathcal{S}}{k}(0):=0$ and 
  $\FuncBoundReturnLength{\mathcal{S}}{k}(n+1):=
  m_{\max} + n_{\max} \cdot \FuncBoundReturnLength{\mathcal{S}}{k}(n)$
  settles the claim. 
\end{proof}
\begin{remark}
  Note that we do not know any bound on $m_{\max}$ and $n_{\max}$ in
  terms of  $\lvert \mathcal{S} \rvert$. 
\end{remark}

We conclude this section with two corollaries that allow us to shrink
runs between certain configurations. These corollaries are important
in the proof that $2$-NPT have the dynamic-small-witness
property. 

\begin{corollary} \label{Cor:GlobalBoundRun}
  Let $\mathcal{S}$ be some level $2$ pushdown
  system. Furthermore, let  $(q,s)$ be some configuration and 
  $\rho_1, \dots, \rho_n$ be pairwise distinct runs from the initial
  configuration to $(q,s)$. 
  There is a run $\hat\rho_1$ from the initial configuration to
  $(q,s)$ such that the following holds.
  \begin{enumerate}
  \item $\hat\rho_1\neq\rho_i$ for $2\leq i \leq n$ and
  \item $\length(\hat\rho_1)\leq 2 \cdot \lvert s\rvert \cdot \height(s) 
    ( 1+ \FuncBoundLoopLength{\mathcal{S}}{n}(\height(s)))$.
  \end{enumerate}
\end{corollary}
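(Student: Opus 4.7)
The plan is to decompose $\rho_1$ via Carayol's lemma (Lemma~\ref{Lemma:Carayol05Milestones}) and construct $\hat\rho_1$ by retaining $\rho_1$'s sequence of length-$1$ transitions while replacing each loop by one drawn from a small family of short loops. Concretely, write
\[
\rho_1 \;=\; \lambda_0\,\circ\,\sigma_1\,\circ\,\lambda_1\,\circ\,\sigma_2\,\circ\,\cdots\,\circ\,\sigma_N\,\circ\,\lambda_N,
\]
where $\sigma_i$ is the transition performing the $i$-th operation of the minimal sequence generating $s$, and $\lambda_i$ is a loop at the $i$-th generalised milestone $m_i$ of $s$. By Remark~\ref{Rem:NumberMilestones}, there are $N+1\leq 2\cdot\lvert s\rvert\cdot\height(s)$ generalised milestones, and from the definition of $\genMilestones$ we have $\lvert\TOP{2}(m_i)\rvert\leq\height(s)$ for each $i$.

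At each $m_i$, let $q_i^s,q_i^e\in Q$ be the source and target states that a loop must realise (forced by the adjacent $\sigma$-transitions, with $q_0^s=q_0$ and $q_N^e=q$). Let $T_i$ be the total number of loops from $(q_i^s,m_i)$ to $(q_i^e,m_i)$, and let $S_i$ consist of the $\min(T_i,n)$ length-lexicographically shortest such loops. Proposition~\ref{Prop:FuncBoundLoopLengthLemma}(1), applied with threshold $n$, guarantees that every $\mu\in S_i$ satisfies $\length(\mu)\leq \FuncBoundLoopLength{\mathcal{S}}{n}(\height(s))$ (using that $\FuncBoundLoopLength{\mathcal{S}}{n}$ is monotone in its argument, as seen from its recursive definition in the proof of Proposition~\ref{Prop:FuncBoundLoopLengthLemma}). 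Carayol's uniqueness ensures that distinct tuples in $\prod_iS_i$ produce distinct candidate runs of the form $\mu_0\circ\sigma_1\circ\mu_1\circ\cdots\circ\sigma_N\circ\mu_N$ from the initial configuration to $(q,s)$.

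The next step is a pigeonhole argument. Let $J:=\{j\in\{2,\ldots,n\}:\rho_j$'s Carayol decomposition uses exactly the transitions $\sigma_1,\ldots,\sigma_N\}$. Since $\rho_1$ together with $\{\rho_j:j\in J\}$ are $\lvert J\rvert+1$ pairwise distinct runs all sharing those unit transitions, $\prod_iT_i\geq\lvert J\rvert+1$. The central combinatorial claim is that $\prod_i\lvert S_i\rvert=\prod_i\min(T_i,n)\geq\lvert J\rvert+1$: if some $T_{i_0}\geq n$ then $\min(T_{i_0},n)=n\geq\lvert J\rvert+1$ while the remaining factors are at least $1$; otherwise every $T_i<n$, so $\min(T_i,n)=T_i$ and the product equals $\prod_iT_i\geq\lvert J\rvert+1$. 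Pigeonhole then yields a candidate $\hat\rho_1$ different from every $\rho_j$ with $j\in J$, and for $j\in\{2,\ldots,n\}\setminus J$ the inequality $\hat\rho_1\neq\rho_j$ is automatic because $\rho_j$'s Carayol decomposition uses a different sequence of unit transitions.

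The length estimate is then routine: $\length(\hat\rho_1)\leq N+(N+1)\cdot\FuncBoundLoopLength{\mathcal{S}}{n}(\height(s))\leq(N+1)(1+\FuncBoundLoopLength{\mathcal{S}}{n}(\height(s)))\leq 2\cdot\lvert s\rvert\cdot\height(s)\cdot(1+\FuncBoundLoopLength{\mathcal{S}}{n}(\height(s)))$. The main obstacle is the combinatorial pigeonhole step; the $\min(T_i,n)$-trick is what makes it work, and committing to $\rho_1$'s unit transitions from the start is essential so that the loop state constraints at each milestone are fixed throughout the construction, allowing the counts $T_i$ and $\lvert S_i\rvert$ to combine multiplicatively.
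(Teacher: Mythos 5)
Your proof is correct and takes essentially the same route as the paper's: decompose $\rho_1$ via Lemma~\ref{Lemma:Carayol05Milestones} into unit transitions and loops at the generalised milestones, use Remark~\ref{Rem:NumberMilestones} to bound their number by $2\cdot\lvert s\rvert\cdot\height(s)$, and use Proposition~\ref{Prop:FuncBoundLoopLengthLemma} to supply short replacement loops, finishing with a pigeonhole against $\rho_2,\dots,\rho_n$. The only difference is organisational — the paper replaces one over-long loop at a time via a local pigeonhole ($n$ short candidates against $n-1$ forbidden runs) and iterates, whereas you run a single global pigeonhole over $\prod_i \min(T_i,n)$; both arguments are valid and yield the same length bound.
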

\begin{proof}
  If $\length(\rho_1)\leq
  2 \cdot \lvert s \rvert \cdot \height(s) 
  ( 1+ \FuncBoundLoopLength{\mathcal{S}}{n}(\height(s)))$, set
  $\hat\rho_1:=\rho_1$ and we are done. 
  Otherwise,  
   Remark 
  \ref{Rem:NumberMilestones} and Lemma
  \ref{Lemma:Carayol05Milestones} implies that $\rho_1$ decomposes as
  \begin{align*}
    \rho_1 = \lambda_0 \circ \op_1 \circ \lambda_1
    \circ \dots \circ \lambda_{m-1} \circ \op_m \circ \lambda_{m}    
  \end{align*}
  where every $\lambda_i$ is a loop, every $\op_i$ is a run of
  length $1$, and $m\leq 2\cdot \lvert s \rvert \cdot
  \height(s)$. Proposition \ref{Prop:FuncBoundLoopLengthLemma} implies
  the following: 
  If $\length(\lambda_i) >
  \FuncBoundLoopLength{\mathcal{S}}{n}(\height(s)))$, then there are
  $n$ loops from $\lambda_i(0)$ to $\lambda_i(\length(\lambda_i))$ of length 
  at most $\FuncBoundLoopLength{\mathcal{S}}{n}(\height(s)))$. 
  At least one of these can be plugged into the position of
  $\lambda_i$ such that the resulting run does not coincide with any
  of the $\rho_2, \rho_3, \dots, \rho_n$. In other words, there is
  some loop $\lambda_i'$ of length at most
  $\FuncBoundLoopLength{\mathcal{S}}{n}(\height(s)))$ such that
  \begin{align*}
    \hat\rho_1:=
    \lambda_0 \circ \op_1 \circ \lambda_1
    \circ \dots \circ \op_i \circ \lambda_i'\circ\op_{i+1}\circ
    \lambda_{i+1} \circ\dots \circ  \lambda_{m-1} \circ \op_m \circ
    \lambda_{m}     
  \end{align*}
  is a run to $(q,s)$ distinct from $\rho_2, \rho_3 \dots, \rho_n$ and
  shorter than $\rho_1$.
  Iterated replacement of large loops results in a run 
  $\rho_1'$ with the desired properties. 
\end{proof}

We state a second corollary that is quite similar to the previous
one but deals with runs of a different form. These runs become
important in Section \ref{sec:EquivalenceonTuples}.

\begin{corollary}\label{Cor:GlobalBoundRun2}
  Let $\hat\rho_1, \hat\rho_2, \dots \hat\rho_n$ be runs from the
  initial configuration to some configuration $(q,s)$. 
  Furthermore, let $w$ be some word and 
  $\rho_1, \rho_2, \dots \rho_n$ be runs from $(q,s)$ to $(q',s:w)$
  that do not visit proper substacks of $s$. 
  If $\hat\rho_1 \circ \rho_1, \hat\rho_2\circ\rho_2, \dots,
  \hat\rho_n\circ\rho_n$ are pairwise distinct, then there is a run
  $\rho_1'$ from $(q,s)$ to $(q',s:w)$ that satisfies the following.
  \begin{enumerate}
  \item $\rho_1'$does not visit a proper
    substack of $s$,
  \item   
    $\length(\rho_1') \leq 2\cdot \height(s:w)  \cdot
    ( 1+ \FuncBoundLoopLength{\mathcal{S}}{n}(\height(s:w)))$, and
  \item $\hat\rho_1\circ\rho_1'$ is distinct from each
    $\hat\rho_i\circ\rho_i$ for $2\leq i \leq n$. 
  \end{enumerate}
\end{corollary}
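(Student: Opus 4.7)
The plan is to mirror the proof of Corollary \ref{Cor:GlobalBoundRun}: if $\rho_1$ already satisfies the stated length bound, I would take $\rho_1' := \rho_1$ and all three conclusions follow immediately from the pairwise distinctness of the $\hat\rho_i\circ\rho_i$. Otherwise, I would shrink $\rho_1$ by repeatedly replacing its long loops with short ones having the same endpoints, after decomposing $\rho_1$ in the style of Lemma \ref{Lemma:Carayol05Milestones}.

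First I would apply Lemma \ref{Lemma:Carayol05Milestones} to the complete run $\hat\rho_1\circ\rho_1$, which reaches $(q',s\mathbin{:}w)$ from the initial configuration. Because $\hat\rho_1$ already arrives at $s$ and $\rho_1$ visits no proper substack of $s$, the portion of the induced decomposition contributed by $\rho_1$ traverses exactly those generalised milestones of $s\mathbin{:}w$ that properly extend $s$. Each such milestone has the form $s\mathbin{:}v$ with $v$ a prefix of the topmost word of $s$ or of $w$ (extending their common prefix), and counting these yields at most $2\cdot\height(s\mathbin{:}w)$ of them. Hence
\[
  \rho_1 = \lambda_0 \circ \op_1 \circ \lambda_1 \circ \op_2 \circ \cdots \circ \op_m \circ \lambda_m,
\]
with $m \leq 2\cdot\height(s\mathbin{:}w)$, each $\op_i$ a single transition between consecutive milestones, and each $\lambda_i$ a loop at the corresponding milestone (with $\lambda_0$ possibly a high loop at $s$ itself). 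Crucially, since every milestone involved strictly extends $s$ (or, for $\lambda_0$, is $s$ traversed as a high loop), none of the $\lambda_i$ can visit a proper substack of $s$.

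For each $\lambda_i$ with $\length(\lambda_i) > \FuncBoundLoopLength{\mathcal{S}}{n}(\height(s\mathbin{:}w))$, Proposition \ref{Prop:FuncBoundLoopLengthLemma} supplies $n$ distinct loops of length at most $\FuncBoundLoopLength{\mathcal{S}}{n}(\height(s\mathbin{:}w))$ between the same pair of configurations. Substituting these candidates into the position of $\lambda_i$ produces $n$ distinct runs $\rho_1^{(1)},\dots,\rho_1^{(n)}$ and hence $n$ distinct concatenations $\hat\rho_1\circ\rho_1^{(k)}$. Since there are only $n-1$ forbidden concatenations $\hat\rho_i\circ\rho_i$ (for $2\leq i\leq n$), by pigeonhole at least one candidate avoids coinciding with any of them; the substituted loop stays above $s$, so the resulting run still satisfies property~(1). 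Iterating this replacement on every long loop produces a run $\rho_1'$ whose total length is bounded by $m + (m{+}1)\cdot\FuncBoundLoopLength{\mathcal{S}}{n}(\height(s\mathbin{:}w)) \leq 2\cdot\height(s\mathbin{:}w)\cdot(1+\FuncBoundLoopLength{\mathcal{S}}{n}(\height(s\mathbin{:}w)))$.

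The main obstacle I expect is handling the leading segment $\lambda_0$ cleanly: it is a high loop at $s$ rather than a loop at an extension of $s$, so preserving property~(1) forces us to shrink it using short \emph{high} loops at $s$, which are controlled by $\FuncBoundHighLoopLength{\mathcal{S}}{n}$ rather than $\FuncBoundLoopLength{\mathcal{S}}{n}$. This technicality is absorbed by tacitly letting $\FuncBoundLoopLength{\mathcal{S}}{n}$ dominate the companion bound $\FuncBoundHighLoopLength{\mathcal{S}}{n}$ provided by Proposition \ref{Prop:FuncBoundLoopLengthLemma}, which is harmless for the statement.
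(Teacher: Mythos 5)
Your proposal is correct and follows essentially the same route as the paper's (very terse) proof: decompose $\rho_1$ into at most $2\cdot\height(s{:}w)$ single operations interleaved with loops, then replace each overlong loop by one of the $n$ short candidates from Proposition \ref{Prop:FuncBoundLoopLengthLemma}, using pigeonhole against the $n-1$ forbidden concatenations. Your observation that the leading segment must be shrunk by a short \emph{high} loop of $s$ (so that property (1) is not destroyed by a replacement visiting $\Pop{1}(s)$) is a genuine subtlety that the paper's two-sentence proof leaves implicit, and your handling of it via $\FuncBoundHighLoopLength{\mathcal{S}}{n}$ is sound.
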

\begin{proof}
  It is straightforward to see that $\rho_1$ decomposes as 
  \begin{align*}
    \rho_1= \lambda_0 \circ \op_1 \circ \lambda_1
    \circ \dots \circ \lambda_{m-1} \circ \op_m \circ \lambda_{m}    
  \end{align*}
  where every $\lambda_i$ is a loop and every $\op_i$ is a run of
  length $1$ such that $m\leq 2\cdot \height(s:w)$. 
  Now apply the construction from the previous proof again. 
\end{proof}

\section{Relevant Ancestors}
\label{sec:RelevantAncestors}

This section aims at identifying those ancestors of a run $\rho$ in a
$2$-\HONPT $\mathfrak{N}$ that characterise its
$\FO{k}$-type. We show that  
only finitely many ancestors of a certain kind already fix the
$\FO{k}$-type of $\rho$.
We call these finitely many ancestors the
\emph{relevant $2^k$-ancestors} of $\rho$.

\subsection{Definition and Basic Observations}

Before we formally define relevant ancestors, we introduce
some sloppy notation concerning runs.
We apply functions defined on stacks to
configurations. For example if $c=(q,s)$ we write $\lvert c \rvert$
for $\lvert s \rvert$ and $\Pop{2}(c)$ for $\Pop{2}(s)$. 
We further abuse this notation by application
of functions defined on stacks to runs, meaning that we
apply the function to the last stack occurring in a run. For instance,
we write 
$\TOP{2}(\rho)$ for $\TOP{2}(s)$ and $\lvert \rho \rvert$ for $\lvert s
\rvert$  if \mbox{$\rho(\length(\rho))=(q,s)$}.
In the same sense one has to understand equations like
$\rho(i)=\Pop{1}(s)$. This equation says that
$\rho(i)=(q,\Pop{1}(s))$ for some $q\in 
Q$. Keep in mind that $\lvert \rho \rvert$ denotes the width of  
the last stack of $\rho$ and not its length $\length(\rho)$. 
Also recall that we write $\rho \preceq \rho'$ if the run $\rho$ is an
initial segment of the run $\rho'$. 

\begin{definition}
  Let $\mathfrak{N}$ be some $n$-NPT.
  Define the relation  $\plusedge \subseteq
  \mathfrak{N}\times \mathfrak{N}$ by
  \begin{align*}
  \rho \plusedge \rho'  \text{ if }
  \rho\prec \rho', \lvert \rho \rvert = \lvert \rho' \rvert -1, \text{
    and }
    \lvert \pi \rvert > \lvert \rho \rvert \text{ for all } \rho\prec
    \pi \prec \rho'. 
  \end{align*}
  We define the set of 
  \emph{relevant $l$-ancestors of $\rho$} by induction on $l$.
  The set of relevant $0$-ancestors of $\rho$ is
  $\RelAnc{0}{\rho}:=\{\rho\}$. 
  Set 
  \begin{align*}
    \RelAnc{l+1}{\rho}&:= \RelAnc{l}{\rho} 
    \cup \left\{\pi\in \mathfrak{N}: \exists \pi'\in \RelAnc{l}{\rho}\
      \pi\trans{} \pi' \text{ or } \pi\jumpedge \pi' 
      \text{ or } \pi\plusedge \pi' \right\}.
  \end{align*}
  For $\bar\rho=(\rho_1, \rho_2, \dots, \rho_n)$, we write
  $\RelAnc{l}{\bar\rho}:=\bigcup\limits_{i=1}^n \RelAnc{l}{\rho_i}$.
\end{definition}

\begin{remark} \label{rem:CharacterisePlusOneRelation}
  For $1$-NPT, the relation $\plusedge$ can be
  characterised as follows: for runs $\rho, \rho'$, we have
  $\rho\plusedge \rho'$ if and only if 
  $\rho'=\rho\circ \pi$ for  a run $\pi$ starting at some
  $w_\rho$ and 
  ending in $w_\rho a$, the first operation of $\pi$ is a $\Push{a}$
  and $\pi$ visits $w_\rho$ only in its initial configuration. 

  For $2$-NPT, there is a similar characterisation:
  we have
  $\rho\plusedge \rho'$ if and only if 
  $\rho'=\rho\circ \pi$ for  some run $\pi$ starting at some stack $s_\rho$ and
  ending in some stack $s_\rho:w$, the first operation of $\pi$ is a clone
  and $\pi$ visits $s_\rho$ only in its initial configuration. 

  The motivation for the definition is the following. 
  If there are elements $\rho,\rho'\in \mathfrak{N}$ such that 
  $\rho'\preceq \rho$
  and there 
  is a path in $\mathfrak{N}$ of length at most $l$ that witnesses
  that $\rho'$ is an 
  ancestor of $\rho$, then we want that $\rho'\in \RelAnc{l}{\rho}$. 
  The relation $\plusedge$ is  tailored
  towards this idea. Assume that there are runs $\rho_1 \prec \rho_2
  \trans{\delta} \rho_3$ with $\delta$ performing some $\Pop{n}$
  operation such 
  that $\rho_2 \trans{\delta} \rho_3 
  \jumpleftedge \rho_1$. This path of length $2$ witnesses that
  $\rho_1$ is an ancestor of $\rho_2$. By definition, one  sees
  immediately that $\rho_1 \plusedge \rho_2$
  whence
  $\rho_1\in\RelAnc{1}{\rho_2}$. In this sense,
  $\plusedge$ relates the ancestor $\rho_1$ with
  $\rho_2$ if  the distance between $\rho_1$ and  $\rho_2$ in
  $\mathfrak{N}$ may be small because of a descendant of $\rho_2$ that
  is a jump edge successor of $\rho_1$. 

  In the following, it may be helpful to think of a relevant
  $l$-ancestor $\rho'$
  of a run $\rho$ as  an ancestor of $\rho$ that may have a path of
  length up to $l$  witnessing that $\rho'$ is an ancestor of $\rho$. 
\end{remark}
From the definitions, we obtain immediately the following lemmas. 
\begin{lemma}
  For each run $\rho'$ there is at most one run $\rho$ such that
  $\rho\plusedge \rho'$. $\rho$ is the maximal
  ancestor of $\rho'$ satisfying $\rho' = \Pop{n}(\rho)$. 
\end{lemma}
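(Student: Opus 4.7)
The plan rests on the elementary operational fact that a single transition of an $n$-PS changes the width $\lvert\cdot\rvert$ of the stack by at most $1$: only $\Clone{n}$ (for $n\geq 2$) or $\Push{\sigma}$ (for $n=1$) increases the width (by one); only $\Pop{n}$ decreases it (by one); every other operation preserves it. I would state this observation up front, by inspection of the recursive definitions of $\Clone{j}$, $\Push{\sigma}$ and $\Pop{k}$.

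\textbf{Uniqueness.} Suppose both $\rho_1\plusedge \rho'$ and $\rho_2\plusedge \rho'$. Since $\rho_1$ and $\rho_2$ are both strict prefixes of $\rho'$, they are $\prec$-comparable, so we may assume $\rho_1\preceq\rho_2$. If the inclusion were strict, $\rho_2$ would be a strict intermediate prefix between $\rho_1$ and $\rho'$, forcing $\lvert\rho_2\rvert > \lvert\rho_1\rvert$ by the definition of $\plusedge$; but this contradicts $\lvert\rho_1\rvert = \lvert\rho_2\rvert = \lvert\rho'\rvert-1$. Hence $\rho_1=\rho_2$.

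\textbf{Maximality.} The widths of the successive prefixes of $\rho'$ form a walk on $\mathbb{N}$ whose step size is at most one. The definition of $\plusedge$ pins $\rho$ down as the latest (hence maximal under $\prec$) proper prefix of $\rho'$ of width $\lvert\rho'\rvert - 1$, which exists since $\lvert\rho'\rvert = \lvert\rho\rvert+1$ and the width walk must have last dropped below $\lvert\rho'\rvert$ at exactly this prefix.

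\textbf{The $\Pop{n}$ identity.} The transition from $\rho$ to its immediate successor prefix in $\rho'$ must push the width from $\lvert\rho'\rvert-1$ to $\lvert\rho'\rvert$, and so is a $\Clone{n}$ (or $\Push{\sigma}$ if $n=1$). This creates a fresh topmost level-$(n-1)$ stack sitting above an untouched copy of $\rho$'s final stack. From then on no further transition of $\rho'$ drops the width below $\lvert\rho'\rvert$, so no $\Pop{n}$ is ever executed along the remainder; a routine case analysis on the recursive definitions of the remaining operations (including $\Clone{j}$ and $\Pop{j}$ for $j<n$) shows that each of them affects only the topmost level-$(n-1)$ stack. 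Consequently, $\Pop{n}$ applied to the final stack of $\rho'$ recovers precisely the final stack of $\rho$, establishing that $\rho$ is indeed the maximal ancestor of $\rho'$ with the required pop-property.

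The most delicate step is the final paragraph's operation-by-operation bookkeeping: one must check that every operation whose definition recurses into the topmost level-$(n-1)$ stack never reaches the underlying substack. This is immediate from the recursive definitions, but deserves to be stated explicitly so that the stack identity is not taken on faith.
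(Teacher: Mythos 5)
Your proof is correct and is exactly the argument the paper has in mind: the paper states this lemma without proof, as one of several that follow immediately from the definitions, and your width bookkeeping (a single transition changes $\lvert\cdot\rvert$ by at most one, only $\Clone{n}$ resp.\ $\Push{\sigma}$ increases it and only $\Pop{n}$ decreases it, and no operation touches the non-topmost level-$(n-1)$ stacks) is the natural way to make that immediacy explicit. Note that you have also, correctly, read the statement's $\rho' = \Pop{n}(\rho)$ as the evidently intended $\rho = \Pop{n}(\rho')$, since the literal reading is inconsistent with $\lvert\rho\rvert = \lvert\rho'\rvert - 1$.
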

\begin{lemma}
  Let $\rho$ and $\rho'$ be runs such that $\rho \jumpedge
  \rho'$. Let $\hat\rho$ be the predecessor of $\rho'$, i.e.,
  $\hat\rho$ is the unique element such that $\hat\rho \trans{}
  \rho'$. Then $\rho\plusedge\hat\rho$. 
\end{lemma}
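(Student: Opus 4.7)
The plan is to unfold both $\jumpedge$ and $\plusedge$ and to exploit the fact that in a level $n$ pushdown system, the portion of the current stack below the topmost $(n-1)$-entry is frozen as long as the width $\lvert\cdot\rvert$ stays strictly above its original value. Let $k:=\lvert \rho \rvert$ and fix $s = s_1 : s_2 : \dots : s_k$ to be the stack of the last configuration of $\rho$. By the definition of $\jumpedge$, there is a run $\pi$ of length $m\geq 2$ such that $\rho'=\rho\circ\pi$, the stacks of $\pi(0)$ and $\pi(m)$ both equal $s$, the transition $\pi(0)\trans{}\pi(1)$ is a level $n$ push, the transition $\pi(m-1)\trans{}\pi(m)$ is a level $n$ pop, and $\pi(i)$ does not have stack $s$ for any $1\leq i<m$. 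Since $\hat\rho$ is the immediate predecessor of $\rho'$ in the tree, it equals $\rho$ extended by the first $m-1$ transitions of $\pi$, so $\hat\rho$ ends in $\pi(m-1)$.

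The key step is to prove the width estimate $\lvert \pi(i)\rvert\geq k+1$ for every $1\leq i \leq m-1$. Width $\lvert\cdot\rvert$ changes by exactly one under level $n$ operations and remains unchanged under all other operations. Moreover $\lvert \pi(1)\rvert = k+1$ (after the initial level $n$ push) and $\lvert \pi(m-1)\rvert = k+1$ (since the final $\Pop{n}$ transition must bring the width from $k+1$ back to $k$). If the estimate failed, I would pick the least $j$ in $[1, m-1]$ with $\lvert\pi(j)\rvert = k$; since $\lvert\pi(j-1)\rvert = k+1$, the transition $\pi(j-1)\trans{}\pi(j)$ would have to be a $\Pop{n}$.

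Here the structural invariant enters. While the width of $\pi(i)$ stays strictly above $k$, the bottom $k$ entries of its stack must still coincide with $s_1,\dots,s_k$, because the operations of the pushdown system can only inspect or modify the topmost $(n-1)$-entry. Hence each $\pi(i)$ with $1 \leq i \leq j-1$ has stack of the form $s_1:\dots:s_k:u_i$ for some $(n-1)$-stack $u_i$. Applying $\Pop{n}$ to such a stack of width $k+1$ produces precisely $s_1:\dots:s_k = s$, so $\pi(j)$ has stack $s$, contradicting the third clause in the definition of $\jumpedge$. The width estimate is therefore established.

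Finally, I combine this with the definition of $\plusedge$. Every run $\tau$ with $\rho\prec \tau \prec\hat\rho$ equals $\rho$ extended by a nonempty proper initial segment of the first $m-1$ transitions of $\pi$, hence ends at some $\pi(i)$ with $1\leq i\leq m-2$, and the estimate yields $\lvert \tau\rvert>k = \lvert\rho\rvert$. Together with $\rho\prec \hat\rho$ and $\lvert \hat\rho\rvert = \lvert\pi(m-1)\rvert = k+1 = \lvert\rho\rvert+1$, this is exactly the definition of $\rho \plusedge \hat\rho$. The only real obstacle is making the frozen-substack property precise; once that is in hand, the rest is routine bookkeeping with stack widths.
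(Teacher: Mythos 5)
Your proof is correct and fills in exactly the argument the paper treats as immediate from the definitions (the paper states this lemma without proof): unfold $\jumpedge$, observe that the width stays above $\lvert\rho\rvert$ strictly between the push and the matching pop because a drop back to width $k$ would, by the frozen-bottom-$k$-entries invariant, reproduce the stack $s$ and violate the third clause of $\jumpedge$, and then read off the three conditions of $\plusedge$. The only cosmetic slip is the claim that every $\pi(i)$ with $1\leq i\leq j-1$ has the form $s_1:\dots:s_k:u_i$ for a \emph{single} $(n-1)$-stack $u_i$ (intermediate configurations may have width greater than $k+1$); what you actually need and use is only that the bottom $k$ entries are frozen and that $\pi(j-1)$ has width $k+1$, both of which you establish.
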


\begin{lemma} \label{LemmaRelAnclocalIso}
  If $\rho,\rho'\in \mathfrak{N}$  such that $\rho\trans{} \rho'$ or
  $\rho\jumpedge \rho'$, then $\rho\in \RelAnc{1}{\rho'}$. 
\end{lemma}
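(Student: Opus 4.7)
The plan is essentially to unfold the definitions. Recall that $\RelAnc{0}{\rho'}=\{\rho'\}$, so by the inductive clause
\[
\RelAnc{1}{\rho'} = \{\rho'\}\ \cup\ \bigl\{\pi\in\mathfrak{N} : \pi\trans{}\rho' \text{ or } \pi\jumpedge \rho' \text{ or } \pi\plusedge \rho'\bigr\},
\]
since the only element $\pi'$ available as a witness in $\RelAnc{0}{\rho'}$ is $\rho'$ itself.

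Given the hypothesis, I would split into two cases. If $\rho\trans{}\rho'$, take $\pi:=\rho$ and $\pi':=\rho'\in\RelAnc{0}{\rho'}$; then $\pi\trans{}\pi'$ holds, so $\rho$ belongs to the second set in the union above. If instead $\rho\jumpedge\rho'$, take the same $\pi,\pi'$; then $\pi\jumpedge\pi'$ holds, again placing $\rho$ into the second set. In both cases $\rho\in\RelAnc{1}{\rho'}$, which is exactly what is to be shown.

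There is no real obstacle here: the statement is a sanity check that the base step of the inductive definition of relevant ancestors captures direct predecessors along both the transition relation and the jump relation. The only thing to double-check is that the third edge-type $\plusedge$ is genuinely not needed for the base case, which is clear because the hypothesis already supplies one of the first two alternatives in the union. No appeal to earlier structural lemmas (such as Lemma~\ref{Lemma:Carayol05Milestones} or the shrinking corollaries) is required.
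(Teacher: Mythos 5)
Your proof is correct and coincides with the paper's treatment: the paper states this lemma without proof as an immediate consequence of the definitions, and your unfolding of $\RelAnc{1}{\rho'}$ with witness $\pi'=\rho'\in\RelAnc{0}{\rho'}$ is exactly that immediate argument.
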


\begin{lemma}
  For all $l\in\N$ and $\rho\in \mathfrak{N}$, 
  $\lvert \RelAnc{l}{\rho} \rvert \leq 4^l$ and
  $\RelAnc{l}{\rho}$ is linearly ordered by $\preceq$. 
\end{lemma}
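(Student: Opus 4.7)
The plan is to prove both assertions simultaneously by induction on $l$. More precisely, I would strengthen the inductive hypothesis to include a third statement: \emph{every element of $\RelAnc{l}{\rho}$ is an initial segment of $\rho$ (that is, $\pi \preceq \rho$ for every $\pi \in \RelAnc{l}{\rho}$).} Once this is established, the linear order claim follows for free, since the set of initial segments of a fixed run $\rho$ is totally ordered by $\preceq$.

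The base case $l=0$ is immediate: $\RelAnc{0}{\rho} = \{\rho\}$ has size $1 \leq 4^0$, and $\rho \preceq \rho$. For the inductive step, assume the claims hold for $l$. The key observation is that each of the three relations used in the recursive definition is \emph{right-functional}: for any target $\pi'$, there is at most one $\pi$ with $\pi \trans{} \pi'$ (the unique tree predecessor of $\pi'$), at most one with $\pi \jumpedge \pi'$ (by the remark following the definition of the jump relation), and at most one with $\pi \plusedge \pi'$ (by the first lemma stated immediately after the definition of $\RelAnc{l}{\rho}$). Consequently,
\begin{align*}
  \lvert \RelAnc{l+1}{\rho} \rvert \leq \lvert \RelAnc{l}{\rho} \rvert + 3 \cdot \lvert \RelAnc{l}{\rho} \rvert = 4 \cdot \lvert \RelAnc{l}{\rho} \rvert \leq 4^{l+1},
\end{align*}
by the inductive hypothesis.

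To propagate the ``initial segment'' invariant, let $\pi \in \RelAnc{l+1}{\rho} \setminus \RelAnc{l}{\rho}$. Then there is some $\pi' \in \RelAnc{l}{\rho}$ with $\pi \trans{} \pi'$, $\pi \jumpedge \pi'$, or $\pi \plusedge \pi'$. In all three cases, $\pi \prec \pi'$ holds by direct inspection of the definitions (a transition extends a run by one configuration; a jump edge extends by a loop-like run; $\plusedge$ extends by a clone-initiated run). Combining with the inductive hypothesis $\pi' \preceq \rho$ yields $\pi \prec \pi' \preceq \rho$, so $\pi$ remains an initial segment of $\rho$. Since any two initial segments of $\rho$ are comparable under $\preceq$, this gives the linear order claim.

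I do not anticipate a serious obstacle here; the proof is a direct bookkeeping induction whose only subtlety is making sure to have already recorded the right-functionality of each of the three relations $\trans{}$, $\jumpedge$, and $\plusedge$ (which the preceding remark and lemmas provide). The tuple version $\RelAnc{l}{\bar\rho} = \bigcup_i \RelAnc{l}{\rho_i}$ is not claimed to be linearly ordered, so no complication arises there.
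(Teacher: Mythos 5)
Your proof is correct, and it matches the paper's (implicit) argument: the paper states this lemma among those that follow ``immediately from the definitions,'' and your bookkeeping induction --- using the right-functionality of $\trans{}$, $\jumpedge$, and $\plusedge$ for the bound $4\cdot 4^l$, and the fact that each relation points to a proper initial segment for the linear order --- is exactly the intended justification.
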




We now investigate the structure of relevant ancestors and 
the possible intersections of relevant ancestors of different runs.
First, we characterise the minimal element of $\RelAnc{l}{\rho}$. 

\begin{lemma}\label{LemmaMinRelAnc}
  Let $\mathfrak{N}$ be a $n$-NPT.
  Let $\rho_l\in \RelAnc{l}{\rho}$ be minimal with respect to $\preceq$. Then
  \begin{align*}
    \text{either }&\lvert \rho_l \rvert = 1 \text{ and }
    \lvert \rho \rvert \leq l \\  
    \text{or } &\rho_l = \Pop{n}^l(\rho) \text{ and } 
    \lvert \rho_l \rvert < \lvert \rho' \rvert \text{ for all }
    \rho'\in \RelAnc{l}{\rho}\setminus\{\rho_l\}.      
  \end{align*}
\end{lemma}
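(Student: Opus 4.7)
The plan is to prove the lemma by induction on $l$. The base case $l=0$ is immediate: $\RelAnc{0}{\rho}=\{\rho\}$, so $\rho_0=\rho=\Pop{n}^0(\rho)$ and the strict width inequality in the second case holds vacuously.

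For the inductive step the principal situation is that case (ii) holds at $l$ with $\lvert\rho_l\rvert\geq 2$. Here I would first locate a $\plusedge$-predecessor $\pi_P$ of $\rho_l$: since widths change by at most one per transition and the initial configuration has width $1$, some ancestor of $\rho_l$ has width $\lvert\rho_l\rvert-1$, and the latest such ancestor is the unique $\plusedge$-predecessor of $\rho_l$ (uniqueness being forced by the clause that intermediate widths strictly exceed $\lvert\pi_P\rvert$). Analysing the shape of a $\plusedge$-run (the first transition is a level-$n$ push, and all intermediate widths strictly exceed $\lvert\pi_P\rvert$, so the bottom $\lvert\pi_P\rvert$ many $(n-1)$-stacks are never modified) shows that $\pi_P$'s final stack is obtained by popping level $n$ from $\rho_l$'s final stack, which by the inductive hypothesis is $\Pop{n}^{l+1}(\rho)$. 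To conclude that $\pi_P$ is the new $\preceq$-minimum and is strictly narrower than every other element of $\RelAnc{l+1}{\rho}$, I would show that any new element lying strictly below $\rho_l$ must be a predecessor of $\rho_l$ itself. For $\trans{}$- and $\plusedge$-predecessors of some $\pi'\in\RelAnc{l}{\rho}\setminus\{\rho_l\}$ this is a length/width check using the inductive strict inequality $\lvert\rho_l\rvert<\lvert\pi'\rvert$.

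The crucial technical observation I plan to use for the $\jumpedge$ case is that intermediate configurations along a level-$n$ jump run have width strictly greater than the common source/target width: any drop back to that width would produce a configuration with stack equal to the source stack $s$, contradicting the clause $\rho(i)\neq(\hat q,s)$ in the definition of $\jumpedge$. This rules out jump predecessors of some $\pi'\neq\rho_l$ from lying below $\rho_l$, and a short comparison of the widths of the tree parent $\pi_T$ and the jump source $\pi_J$ of $\rho_l$ with $\lvert\pi_P\rvert$ (noting in particular that if $\pi_T$ itself has width $\lvert\rho_l\rvert-1$ then $\pi_T=\pi_P$ by uniqueness) finishes case (ii) at $l+1$. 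The remaining situations, when $\rho_l$ already has width $1$, reduce to the same bookkeeping: the only candidates below $\rho_l$ are $\pi_T$ and $\pi_J$, and whenever $\pi_T$ would have width $2$ the last transition into $\rho_l$ is $\Pop{n}$, which automatically yields an earlier jump source $\pi_J$ of width $1$; hence the new minimum retains width $1$ and the bound $\lvert\rho\rvert\leq l+1$ is inherited, giving case (i) at $l+1$. The main obstacle I foresee is keeping the case analysis over the three predecessor relations under control and making the jump-run width argument work uniformly across the $n=1$ and $n\geq 2$ settings.
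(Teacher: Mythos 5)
Your proposal is correct and follows essentially the same route as the paper's proof: induction on $l$, splitting on whether $\lvert\rho_l\rvert\geq 2$ (where the new minimum is the latest ancestor of width $\lvert\rho_l\rvert-1$, i.e.\ the unique $\plusedge$-predecessor) or $\lvert\rho_l\rvert=1$ (where it is the tree parent or a width-one jump source), and ruling out anything further below via the observation that intermediate configurations of $\jumpedge$- and $\plusedge$-runs have width strictly exceeding that of their endpoints. The paper performs the same width bookkeeping, merely phrasing the case split as $\lvert\rho\rvert\leq l+1$ versus $\lvert\rho\rvert>l+1$ and describing the new minimum uniformly as the maximal ancestor of the appropriate width.
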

\begin{remark}
  Recall that $\lvert \rho \rvert \leq l$ implies that
  $\Pop{n}^{l}(\rho)$ is undefined. 
\end{remark}
\begin{proof}
  The proof is by induction on $l$. For $l=0$, there is nothing to
  show because \mbox{$\rho_0 = \rho = \Pop{n}^0(\rho)$.}
  Now assume that the statement is true for some $l$. 
  \begin{enumerate}
  \item   Consider the case $\lvert \rho \rvert \leq l+1$. Then
    $\rho_l$ satisfies $\lvert 
    \rho_l \rvert = 1$. If $\rho_l$ has no
    predecessor it is also the minimal element of $\RelAnc{l+1}{\rho}$
    and we are done. Otherwise, there is a maximal ancestor  $\hat
    \rho\prec \rho_l$ such that $\lvert \hat\rho \rvert = 1$. Since
    $\hat \rho\trans{} \rho_l$ or $\hat \rho\jumpedge \rho_l$,
    $\hat \rho\in
    \RelAnc{l+1}{\rho}\setminus\RelAnc{l}{\rho}$. Furthermore, 
     no  ancestor of $\hat 
    \rho$ can be contained in $\RelAnc{l+1}{\rho}$. 
    Heading for a contradiction, assume that there is some element
    $\tilde \rho\prec\hat\rho$ such 
    that \mbox{$\tilde\rho\in\RelAnc{l+1}{\rho}$}. Then there is some
    $\tilde\rho'\in\RelAnc{l}{\rho}$ with $\tilde\rho \prec \hat\rho \prec
    \tilde\rho'$ such that $\tilde\rho\jumpedge\tilde\rho'$ 
    or $\tilde\rho\plusedge\tilde\rho'$.
    But this leads to the contradiction
    $1\leq \lvert \tilde\rho' \rvert < \lvert \hat\rho \rvert =1$. 
    Thus, the minimal element of $\RelAnc{l+1}{\rho}$ is
    $\rho_{l+1}=\hat \rho$.
  \item 
    Now assume that $\lvert \rho \rvert > l+1$. 
    Let $\hat \rho$ be the maximal ancestor of $\rho_l$ such that $\lvert
    \hat \rho \rvert +1 = \lvert \rho_l \rvert$. Then 
    $\hat \rho \plusedge \rho_l$
    or $\hat\rho \trans{} \rho_1$, whence 
    $\hat \rho\in \RelAnc{l+1}{\rho}$.  
    We have to show that $\hat \rho$ is the minimal element of
    $\RelAnc{l+1}{\rho}$ and
    that there is no other element of width $\lvert \hat\rho \rvert$ in
    $\RelAnc{l+1}{\rho}$.  
    For the second part, assume that there is some 
    $\rho'\in \RelAnc{l+1}{\rho}$ with
    $\lvert \rho' \rvert = \lvert \hat \rho \rvert$. Then $\rho'$ has to
    be connected via 
    $\trans{}, \plusedge$, or
    $\jumpedge$  to some element 
    $\rho''\in \RelAnc{l}{\rho}$. By
    definition of these relations $\lvert \rho'' \rvert \leq \lvert \rho' \rvert
    +1$. By induction hypothesis, this implies $\rho''=\rho_l$. But then it is
    immediately clear that $ \rho' = \hat \rho$ by definition. 
    
    Similar to the previous case, the minimality of $\hat\rho$ in
    $\RelAnc{l+1}{\rho}$ is proved by 
    contradiction.
    Assume that there is some $\rho'\prec \hat \rho$
    such that $\rho'\in \RelAnc{l+1}{\rho}$. Then there is some 
    \mbox{$\hat \rho\prec \rho_l \preceq \rho''\in 
      \RelAnc{l}{\rho}$} such that 
    \mbox{$\rho'\plusedge \rho''$} or 
    $\rho' \jumpedge \rho''$. By the definition of  
    $\jumpedge$
    and $\plusedge$, we obtain
    $\lvert  \rho'' \rvert \leq \lvert \hat \rho \rvert$. But 
    this contradicts  
    $\lvert \rho'' \rvert \geq \lvert \rho_l \rvert > \lvert \hat \rho \rvert$. 
    Thus, we conclude that $\hat\rho$ is the minimal element of
    $\RelAnc{l+1}{\rho}$, i.e., $\hat \rho=\rho_{l+1}$. \qed
  \end{enumerate}
\end{proof}

The previous lemma shows that the width of stacks among the relevant
ancestors cannot decrease too much. Furthermore, the width cannot grow
too much. 
\begin{corollary} \label{CorDistRelAnc}
  Let $\pi,\rho\in \mathfrak{N}$ such that
  $\pi\in\RelAnc{l}{\rho}$. Then $\big\lvert 
  \lvert \rho \rvert - \lvert \pi \rvert \big\lvert \leq l$.
\end{corollary}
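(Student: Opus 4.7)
The plan is to proceed by induction on $l$, using the inductive definition of $\RelAnc{l}{\rho}$ together with the fact that each of the three relations $\trans{}$, $\jumpedge$, and $\plusedge$ changes the stack width by at most $1$.

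For the base case $l=0$, we have $\RelAnc{0}{\rho}=\{\rho\}$, so $\pi=\rho$ and the inequality holds trivially. For the inductive step, assume the claim holds for $l$ and let $\pi\in\RelAnc{l+1}{\rho}$. If $\pi\in\RelAnc{l}{\rho}$, we are done by the induction hypothesis. Otherwise, by the definition of $\RelAnc{l+1}{\rho}$, there exists some $\pi'\in\RelAnc{l}{\rho}$ such that $\pi\trans{}\pi'$, $\pi\jumpedge\pi'$, or $\pi\plusedge\pi'$.

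The core observation is then to check case by case that $\big|\lvert\pi\rvert-\lvert\pi'\rvert\big|\leq 1$ in each of the three sub-cases. If $\pi\trans{}\pi'$, then $\pi'$ is obtained from $\pi$ by a single stack operation of some level, which changes the width of the stack by at most $1$. If $\pi\jumpedge\pi'$, then by definition of the jump relation the extending run starts and ends in the same stack, so in fact $\lvert\pi\rvert=\lvert\pi'\rvert$. If $\pi\plusedge\pi'$, then by definition $\lvert\pi\rvert=\lvert\pi'\rvert-1$, so the widths differ by exactly $1$. In all three cases the width difference is at most $1$.

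Applying the triangle inequality,
\begin{align*}
\big|\lvert\rho\rvert-\lvert\pi\rvert\big|
\leq \big|\lvert\rho\rvert-\lvert\pi'\rvert\big|+\big|\lvert\pi'\rvert-\lvert\pi\rvert\big|
\leq l+1,
\end{align*}
using the induction hypothesis for the first summand. This closes the induction. I do not anticipate any real obstacle: the proof is essentially a one-step bound on width plus a triangle inequality, and all the structural information required has already been recorded in the definitions of $\trans{}$, $\jumpedge$, and $\plusedge$.
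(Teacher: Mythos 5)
Your proof is correct and follows essentially the same approach as the paper: an induction on $l$ driven by the observation that each of the relations $\trans{}$, $\jumpedge$, $\plusedge$ changes the stack width by at most one. The only (harmless) difference is that you obtain the lower bound $\lvert\pi\rvert\geq\lvert\rho\rvert-l$ from the same one-step argument via the triangle inequality, whereas the paper derives it from Lemma~\ref{LemmaMinRelAnc} (the characterisation of the minimal relevant ancestor); your version is slightly more self-contained.
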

\begin{proof}
  From the previous lemma, we know that the minimal width of the last
  stack of an element in $\RelAnc{l}{\rho}$ is $\lvert \rho \rvert -l$. 
  We prove by induction that the maximal width is 
  $\lvert \rho \rvert +l$. The case $l=0$ is trivially true. 
  Assume that $\lvert \pi \vert \leq \lvert \rho \rvert +l-1$ for all
  $\pi\in\RelAnc{l-1}{\rho}$.
  Let $\hat\pi\in \RelAnc{l}{\rho}\setminus\RelAnc{l-1}{\rho}$. Then
  there is an 
  $\pi\in\RelAnc{l-1}{\rho}$ such that $\hat\pi\trans{} \pi$,
  $\hat\pi\jumpedge \pi$, $\hat\pi \plusedge \pi$.
  For the last two cases the width of $\hat\pi$ is smaller than the width
  of $\pi$ whence $\lvert \hat\pi \rvert \leq \lvert \rho \rvert +l-1$. 
  For the first case, recall that all stack operations of an $n$-PS
  alter the width of the stack by at most 
  $1$. Thus, $\lvert \hat\pi \rvert \leq \lvert \pi \rvert +1 \leq
  \lvert \rho \rvert + l$. 
\end{proof}

Next we show a kind of triangle inequality for the relevant
ancestor relation. If $\rho_2$ is a relevant ancestor of $\rho_1$ then
all relevant ancestors of $\rho_1$ that are ancestors of $\rho_2$ are
relevant ancestors of $\rho_2$.

\begin{lemma} \label{LemmaRelAncComposition}
  Let $\rho_1, \rho_2 \in \mathfrak{N}$ and let $l_1,l_2\in\N$. 
  If $\rho_1\in \RelAnc{l_1}{\rho_2}$, then
  \begin{align*}
    &\RelAnc{l_2}{\rho_1}\subseteq \RelAnc{l_1+l_2}{\rho_2} \text{ and}\\
    &\RelAnc{l_2}{\rho_2}\cap \{\pi: \pi\preceq \rho_1\} \subseteq
    \RelAnc{l_1+l_2}{\rho_1}.
  \end{align*}
\end{lemma}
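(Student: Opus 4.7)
The plan is to establish the two inclusions separately, each by induction on $l_2$.

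For the first inclusion, $\RelAnc{l_2}{\rho_1} \subseteq \RelAnc{l_1+l_2}{\rho_2}$, the induction is straightforward. The base case $l_2 = 0$ reduces to $\{\rho_1\} \subseteq \RelAnc{l_1}{\rho_2}$, which is the hypothesis. In the inductive step, any $\pi \in \RelAnc{l_2+1}{\rho_1} \setminus \RelAnc{l_2}{\rho_1}$ is a $\trans{}$-, $\jumpedge$-, or $\plusedge$-predecessor of some $\pi' \in \RelAnc{l_2}{\rho_1}$; by the induction hypothesis $\pi' \in \RelAnc{l_1+l_2}{\rho_2}$, and the inductive definition of $\RelAnc{l_1+l_2+1}{\rho_2}$ absorbs $\pi$.

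For the second inclusion the induction on $l_2$ is more delicate. The base case is immediate since $\rho_2 \preceq \rho_1 \preceq \rho_2$ forces equality throughout. In the inductive step, given $\pi \in \RelAnc{l_2+1}{\rho_2}$ with $\pi \preceq \rho_1$, I reduce to the subcase where $\pi$ is a $\trans{}$-, $\jumpedge$-, or $\plusedge$-predecessor of some $\pi' \in \RelAnc{l_2}{\rho_2}$. When $\pi' \preceq \rho_1$, the induction hypothesis gives $\pi' \in \RelAnc{l_1+l_2}{\rho_1}$ and $\pi$ is absorbed into $\RelAnc{l_1+l_2+1}{\rho_1}$. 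The interesting case is $\pi' \not\preceq \rho_1$: since both $\pi'$ and $\rho_1$ are ancestors of $\rho_2$, they are $\preceq$-comparable, so $\rho_1 \prec \pi'$. The $\trans{}$-subcase is impossible because tree-uniqueness of successors on the path from $\pi$ to $\rho_2$ would force $\pi' \preceq \rho_1$. The $\jumpedge$-subcase reduces to the $\plusedge$-subcase via the preliminary lemma $\pi \jumpedge \pi' \Rightarrow \pi \plusedge \hat\pi'$, where $\hat\pi'$ is the tree-predecessor of $\pi'$; moreover $\rho_1 \prec \pi'$ forces $\rho_1 \preceq \hat\pi'$.

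The central geometric argument then handles the situation $\pi \plusedge \tau$ with $\pi \prec \rho_1 \preceq \tau$, where $\tau$ is $\pi'$ or $\hat\pi'$. If $\rho_1 = \tau$, immediately $\pi \in \RelAnc{1}{\rho_1}$. Otherwise $\rho_1 \prec \tau$ strictly, and the defining property of $\plusedge$ yields $|\rho_1| > |\pi|$. I construct a chain $\rho_1 = \rho_1^0, \rho_1^1, \ldots, \rho_1^{|\rho_1|-|\pi|} = \pi$ by letting $\rho_1^{i+1}$ be the deepest ancestor of $\rho_1^i$ of width $|\rho_1^i|-1$. Each step satisfies $\rho_1^{i+1} \plusedge \rho_1^i$ (since widths change by $\pm 1$ along tree-steps and $\rho_1^{i+1}$ is chosen deepest), and the chain terminates exactly at $\pi$ because any deeper ancestor of $\rho_1$ of width $\leq |\pi|$ would yield a $\sigma$ with $\pi \prec \sigma \prec \tau$ and $|\sigma| \leq |\pi|$, contradicting $\pi \plusedge \tau$. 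This places $\pi \in \RelAnc{|\rho_1| - |\pi|}{\rho_1}$, and Corollary~\ref{CorDistRelAnc} applied to $\rho_1 \in \RelAnc{l_1}{\rho_2}$ and $\pi \in \RelAnc{l_2+1}{\rho_2}$ bounds $|\rho_1| - |\pi| \leq l_1 + l_2 + 1$.

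The main obstacle is precisely this geometric construction: one must verify that the sequence of deepest width-decrements indeed yields a $\plusedge$-chain and that it lands exactly on $\pi$, for which the $\plusedge$-condition on $(\pi, \tau)$ is essential. Once this chain is in hand, the width-bound from Corollary~\ref{CorDistRelAnc} matches the target quantifier $l_1 + l_2 + 1$ on the nose, and the induction closes.
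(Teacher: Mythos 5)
Your proof is correct and follows essentially the same route as the paper's: induction on $l_2$, splitting on whether the witness $\pi'\in\RelAnc{l_2}{\rho_2}$ lies before or after $\rho_1$, and in the crossing case combining the width bound from Corollary~\ref{CorDistRelAnc} with the non-revisiting property of $\jumpedge$/$\plusedge$ to identify $\pi$ as the deepest ancestor of $\rho_1$ of its width. The only cosmetic difference is that you build the $\plusedge$-chain from $\rho_1$ down to $\pi$ explicitly, whereas the paper gets the same conclusion by citing Lemma~\ref{LemmaMinRelAnc} to recognise $\pi$ as the minimal element of $\RelAnc{\lvert\rho_1\rvert-\lvert\pi\rvert}{\rho_1}$.
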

\begin{proof}
  The first inclusion holds by induction on the definition of relevant
  ancestors.  

  For the second claim, we proceed by induction on $l_2$. For
  $l_2=0$ the claim holds because  $\RelAnc{0}{\rho_2}=\{\rho_2\}$ and
  $\rho_1\preceq\rho_2$ imply that $\RelAnc{0}{\rho_2}\cap\{\pi:
  \pi\preceq \rho_1\}\neq\emptyset$ if and only if $\rho_1=\rho_2$ whence
  $\{\rho_2\}\in\RelAnc{0}{\rho_1}$.  
  For the induction step assume that 
  \begin{align*}
    \RelAnc{l_2-1}{\rho_2}\cap \{\pi: \pi\preceq \rho_1\} \subseteq
    \RelAnc{l_1+l_2-1}{\rho_1}.
  \end{align*}
  Furthermore, assume that 
  $\pi\in \RelAnc{l_2}{\rho_2} \cap \{\pi:\pi \preceq \rho_1\}$. 
  We show that $\pi\in\RelAnc{l_1+l_2}{\rho_1}$.
  By
  definition there is 
  some $\pi\prec \hat\pi$ such that $\hat\pi\in
  \RelAnc{l_2-1}{\rho_2}$ and $\pi \in  \RelAnc{1}{\hat\pi}$.
  We distinguish the following cases. 
  \begin{itemize}
  \item Assume that $\hat\pi \preceq \rho_1$. By
    hypothesis,  
    \mbox{$\hat\pi\in \RelAnc{l_1+l_2-1}{\rho_1}$} whence
    \mbox{$\pi\in \RelAnc{l_1+l_2}{\rho_1}$.}
  \item  Assume that
    $\pi\prec \rho_1\prec \hat\pi \prec \rho_2$. 
    This implies that $\pi\jumpedge\hat\pi$ or
    $\pi\plusedge \hat\pi$ whence 
    \mbox{$\lvert \pi\rvert = \lvert \hat\pi \rvert -j<
    \lvert \rho_1 \rvert$} for some
    $j\in \{0,1\}$. 
    From Corollary \ref{CorDistRelAnc}, we know that
    \begin{align*}
      \left\rvert \lvert \hat\pi \rvert - \lvert \rho_2 \rvert \right\rvert
      \leq l_2-1 \text{ and}
      \left\rvert \lvert \rho_1 \rvert - \lvert \rho_2 \rvert \right\rvert
      \leq l_1.
    \end{align*}
    This implies that $\lvert \rho_1 \rvert - \lvert \pi \rvert \leq
    l_1+l_2$. 
    By definition of $\jumpedge$ and
    $\plusedge$, there cannot be any 
    element $\pi'$ with \mbox{$\pi\prec \pi' \prec \hat\pi$} and
    $\lvert \pi' \rvert = \lvert \pi \rvert$.
    Thus, $\pi$ is the maximal predecessor of $\rho_1$ with
    $\pi=\Pop{2}^{\lvert \rho_1 \rvert - \lvert \pi \rvert}(\rho_1)$. 
    Application of Lemma \ref{LemmaMinRelAnc} shows that 
    $\pi$ is the minimal
    element of 
    $\RelAnc{\lvert \rho_1 \rvert - \lvert \pi \rvert}{\rho_1}$. 
    Hence, $\pi \in \RelAnc{ \lvert \rho_1 \rvert - \lvert \pi \rvert}{\rho_1}
      \subseteq \RelAnc{l_1+l_2}{\rho_1}$.      \qed
  \end{itemize} 
\end{proof}

\begin{corollary} \label{CorRelAncDistBound}
  If $\rho\in \RelAnc{l}{\rho_1} \cap \RelAnc{l}{\rho_2}$ then
  \mbox{$\RelAnc{l}{\rho_1} \cap \{\pi: \pi\preceq \rho\} \subseteq
  \RelAnc{3l}{\rho_2}$.}  
\end{corollary}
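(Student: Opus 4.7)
The plan is to derive the corollary by two successive applications of the ``triangle inequality'' for relevant ancestors (Lemma \ref{LemmaRelAncComposition}). The two hypotheses $\rho \in \RelAnc{l}{\rho_1}$ and $\rho \in \RelAnc{l}{\rho_2}$ let us route any ancestor $\pi$ of $\rho_1$ which lies below $\rho$ first down to a relevant ancestor of $\rho$ itself, and then back up to a relevant ancestor of $\rho_2$, picking up a factor $l$ in the depth bound at each step.

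More concretely, I would argue as follows. Fix an arbitrary $\pi \in \RelAnc{l}{\rho_1} \cap \{\pi : \pi \preceq \rho\}$. Since $\rho \in \RelAnc{l}{\rho_1}$, the second inclusion of Lemma \ref{LemmaRelAncComposition}, applied with $\rho_1$ there playing the role of the ``low'' run $\rho$ here and $\rho_2$ there playing the role of $\rho_1$ here, gives
\begin{align*}
\RelAnc{l}{\rho_1} \cap \{\pi : \pi \preceq \rho\} \subseteq \RelAnc{2l}{\rho}.
\end{align*}
Hence $\pi \in \RelAnc{2l}{\rho}$. Now use the first inclusion of the same lemma with $\rho \in \RelAnc{l}{\rho_2}$: this yields
\begin{align*}
\RelAnc{2l}{\rho} \subseteq \RelAnc{l+2l}{\rho_2} = \RelAnc{3l}{\rho_2},
\end{align*}
so $\pi \in \RelAnc{3l}{\rho_2}$. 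Since $\pi$ was arbitrary, the desired inclusion follows.

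There is essentially no obstacle here beyond matching the hypotheses of Lemma \ref{LemmaRelAncComposition} in the right order; the only thing to double-check is that in the first application the ancestor condition is met, i.e., that $\rho \in \RelAnc{l}{\rho_1}$ indeed implies $\rho \preceq \rho_1$, which is immediate from the definition of relevant ancestors (they are all initial segments of the run). Once this is noted the bound $3l = l + 2l$ falls out of combining the two inclusions, and no further case analysis on $\plusedge$, $\jumpedge$, or $\trans{}$ is required.
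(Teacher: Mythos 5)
Your proposal is correct and follows exactly the paper's own argument: two applications of Lemma \ref{LemmaRelAncComposition}, first its second inclusion (with $\rho\in\RelAnc{l}{\rho_1}$) to land in $\RelAnc{2l}{\rho}$, then its first inclusion (with $\rho\in\RelAnc{l}{\rho_2}$) to reach $\RelAnc{3l}{\rho_2}$. Nothing further is needed.
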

\begin{proof}
  By the previous lemma,  $\rho\in \RelAnc{l}{\rho_1}$ implies
  $\RelAnc{l}{\rho_1} \cap \{\pi: \pi\preceq \rho\} \subseteq
  \RelAnc{2l}{\rho}$.  
  Using the lemma again, $\rho\in \RelAnc{l}{\rho_2}$ implies 
  $\RelAnc{2l}{\rho}\subseteq \RelAnc{3l}{\rho_2}$.
\end{proof}

The previous corollary shows that
if the relevant $l$-ancestors of two elements $\rho_1$ and $\rho_2$
intersect at some point $\rho$, then all relevant $l$-ancestors of
$\rho_1$ that are ancestors of $\rho$ are contained in the relevant
$3l$-ancestors of $\rho_2$.  Later, we use the contraposition of
this result in order to
prove that relevant ancestors of certain runs are disjoint sets. 

The following proposition describes how $\RelAnc{l}{\rho}$
embeds into the full $2$-\HONPT $\mathfrak{N}$. 
Successive relevant ancestors of some run $\rho$ are connected
by a single edge or by a $\plusedge$-edge.
We will see that this proposition allows us, given an arbitrary run
$\rho$, to explicitly construct  a relevant ancestor set 
isomorphic to $\RelAnc{l}{\rho}$ that consists of small runs.  

\begin{proposition} \label{Prop:NextRelAnc}
  Let $\rho_1\prec \rho_2 \prec \rho$ such that 
  $\rho_1,\rho_2\in  \RelAnc{l}{\rho}$.  
  If $\pi\notin \RelAnc{l}{\rho}$ for all $\rho_1\prec \pi\prec \rho_2$, then
  $\rho_1\trans{} \rho_2$ or $\rho_1\plusedge \rho_2$.
\end{proposition}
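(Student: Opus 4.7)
I plan to prove this by induction on $l$. The base case $l = 0$ is vacuous since $\RelAnc{0}{\rho} = \{\rho\}$, and when both $\rho_1, \rho_2 \in \RelAnc{l-1}{\rho}$ they are also consecutive in that smaller set and the inductive hypothesis applies. The essential case is $\rho_1 \in \RelAnc{l}{\rho} \setminus \RelAnc{l-1}{\rho}$: by definition of $\RelAnc{l}{\rho}$ there exist $\sigma \in \RelAnc{l-1}{\rho}$ and $R \in \{\trans{}, \jumpedge, \plusedge\}$ with $\rho_1 R \sigma$, and I pick $\sigma$ minimal under $\preceq$. Since $\sigma$ lies in $\RelAnc{l}{\rho}$ strictly above $\rho_1$, consecutiveness forces $\rho_2 \preceq \sigma$.

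I then split on $R$. If $R = \trans{}$, $\sigma$ is the immediate tree-successor of $\rho_1$, forcing $\rho_2 = \sigma$ and $\rho_1 \trans{} \rho_2$. If $R = \jumpedge$, the jump path has length at least $2$, so the tree-predecessor $\sigma^-$ of $\sigma$ lies strictly between $\rho_1$ and $\sigma$ and belongs to $\RelAnc{l}{\rho}$ (since $\sigma^- \trans{} \sigma$ with $\sigma \in \RelAnc{l-1}{\rho}$); consecutiveness then gives $\rho_2 = \sigma^-$, and the jump path's width conditions ($\lvert \sigma^- \rvert = \lvert \rho_1 \rvert + 1$ with all intermediate widths exceeding $\lvert \rho_1 \rvert$) yield $\rho_1 \plusedge \rho_2$. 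If $R = \plusedge$ and $\rho_2 = \sigma$, we conclude $\rho_1 \plusedge \rho_2$ directly; otherwise $\rho_1 \prec \rho_2 \prec \sigma$ and the $\plusedge$-property gives $\lvert \rho_2 \rvert > \lvert \rho_1 \rvert$. When $\lvert \rho_2 \rvert = \lvert \rho_1 \rvert + 1$, the intermediate-width constraint inherited from $\rho_1 \plusedge \sigma$ immediately yields $\rho_1 \plusedge \rho_2$.

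The hard case is $R = \plusedge$ with $\lvert \rho_2 \rvert > \lvert \rho_1 \rvert + 1$, which I plan to derive to a contradiction by exhibiting an element of $\RelAnc{l}{\rho}$ strictly between $\rho_1$ and $\rho_2$. Take $\tau^-$ to be the latest node strictly before $\rho_2$ on the $\plusedge$-path with $\lvert \tau^- \rvert = \lvert \rho_2 \rvert - 1$; by the choice of $\tau^-$, widths between $\tau^-$ and $\rho_2$ stay $\geq \lvert \rho_2 \rvert$, so $\tau^- \plusedge \rho_2$. If $\rho_2 \in \RelAnc{l-1}{\rho}$, this already places $\tau^- \in \RelAnc{l}{\rho}$ strictly between $\rho_1$ and $\rho_2$, yielding the contradiction. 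When $\rho_2$ is itself new, let $\sigma' \in \RelAnc{l-1}{\rho}$ be its minimal witness: width comparisons on the $\plusedge$-path (where $\sigma$ sits at the minimum width $\lvert \rho_1 \rvert + 1$) rule out $\sigma' \succ \sigma$ (a jump or plus edge from $\rho_2$ would force intermediate widths exceeding $\lvert \rho_2 \rvert$, yet the tree path descends through $\sigma$), and minimality of $\sigma$ rules out any on-path $\sigma' \prec \sigma$ with $\lvert \sigma' \rvert = \lvert \rho_1 \rvert + 1$. The central remaining case is $\sigma' = \sigma$ with $\rho_2 \trans{\Pop{n}} \sigma$; here the key structural observation that on a $\plusedge$-path the substack below the topmost $1$-stack is preserved throughout implies that $\tau^-$'s stack equals $\sigma$'s stack, whence $\tau^- \jumpedge \sigma$ and thus $\tau^- \in \RelAnc{l}{\rho}$, completing the contradiction.

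The symmetric situation where $\rho_1 \in \RelAnc{l-1}{\rho}$ but $\rho_2$ is new is treated analogously by using the inductive hypothesis to obtain the immediate $\RelAnc{l-1}{\rho}$-successor $\mu$ of $\rho_1$ (related by $\trans{}$ or $\plusedge$) and running the same width-based analysis with $\mu$ playing the role of $\sigma$. The technical crux throughout is the stack-matching step, which exploits the preservation of substacks along $\plusedge$-paths to produce the witnessing jump edge $\tau^- \jumpedge \sigma$.
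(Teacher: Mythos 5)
Your induction on $l$ and the reduction to the position of a minimal witness $\sigma$ of $\rho_1$ is a reasonable plan, but the case analysis in your ``hard case'' is incomplete, and the missing subcase is exactly where the real difficulty of the proposition sits. After you produce $\tau^-$ with $\rho_1\prec\tau^-\prec\rho_2$ and $\tau^-\plusedge\rho_2$, the contradiction requires $\tau^-\in\RelAnc{l}{\rho}$, hence a link from $\tau^-$ to some element of $\RelAnc{l-1}{\rho}$. When $\rho_2$ is new, your discussion of its minimal witness $\sigma'$ only excludes $\sigma'\succ\sigma$, excludes on-path $\sigma'$ of width $\lvert\rho_1\rvert+1$, and settles $\sigma'=\sigma$. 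But $\sigma'$ may also satisfy $\rho_2\prec\sigma'\prec\sigma$ with $\lvert\sigma'\rvert>\lvert\rho_1\rvert+1$ (for instance $\rho_2\jumpedge\sigma'$ at width $\lvert\rho_2\rvert$, or $\rho_2\plusedge\sigma'$ at width $\lvert\rho_2\rvert+1$, or $\rho_2\trans{}\sigma'$ by a transition that does not bring the width down to $\lvert\rho_1\rvert+1$). None of your three tools applies there: $\tau^-\plusedge\rho_2$ only yields $\tau^-\in\RelAnc{l+1}{\rho}$ since $\rho_2\notin\RelAnc{l-1}{\rho}$, minimality of $\sigma$ is not violated, and the stack-matching jump edge $\tau^-\jumpedge\sigma$ is not available. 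The same omission resurfaces in your $R=\jumpedge$ case, where consecutiveness only gives $\rho_2\preceq\sigma^-$ rather than $\rho_2=\sigma^-$; the possibility $\rho_2\prec\sigma^-$ with $\lvert\rho_2\rvert>\lvert\rho_1\rvert+1$ drops you back into the unhandled situation.

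The paper avoids this by pivoting on a different minimal object: $\hat\rho$ minimal in $M=\{\pi\in\RelAnc{l}{\rho}:\rho_1\plusedge\pi\}$ (a subset of $\RelAnc{l}{\rho}$, not of $\RelAnc{l-1}{\rho}$), and then showing no element of $\RelAnc{l}{\rho}$ lies strictly between $\rho_1$ and $\hat\rho$. The ingredient you are missing is the observation that a witnessing chain for any $\pi$ with $\rho_1\prec\pi\prec\hat\rho$ cannot skip over $\hat\rho$: every run strictly between $\rho_1$ and $\hat\rho$ has width at least $\lvert\hat\rho\rvert$, so no $\jumpedge$- or $\plusedge$-edge can start in that region and land beyond $\hat\rho$, and $\trans{}$-steps move to immediate successors. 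Hence the chain passes through $\hat\rho$, which forces $\hat\rho\in\RelAnc{l-1}{\rho}$, and then the last run of width $\lvert\hat\rho\rvert$ before $\hat\rho$ is an element of $M$ below $\hat\rho$ --- the desired contradiction. Some form of this ``chains cannot jump over'' argument is what you need to dispose of the stray positions of $\sigma'$; without it the proof does not go through.
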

\begin{proof}
  Assume that   $\rho_1 \not{\trans{}} \rho_2$. 
  Consider the set
  \mbox{$M:=\{\pi\in\RelAnc{l}{\rho}:\rho_1\plusedge \pi\}.$}
  $M$ is nonempty because there is some $\pi\in\RelAnc{l-1}{\rho}$
  such that 
  either $\rho_1\plusedge \pi$ (whence $\pi\in M)$ or
  $\rho_1\jumpedge \pi$ (whence the predecessor $\hat\pi$ of
  $\pi$ satisfies $\hat\pi\in M$). 
  Let $\hat\rho\in M$ be minimal. It suffices to show that
  $\hat\rho=\rho_2$. For this purpose, we show that  
  $\pi\notin\RelAnc{l}{\rho}$ for all $\rho_1\prec \pi \prec
  \hat\rho$. Since $\hat\rho\in\RelAnc{l}{\rho}$, this implies that
  $\hat\rho=\rho_2$.   
  We start with two general observations.
  \begin{enumerate}
  \item \label{ImmerGroesser}
    For all $\rho_1\prec \pi\prec \hat\rho$,  
    $\lvert \pi \rvert \geq \lvert \hat\rho \rvert$
    due to the definition of $\rho_1\plusedge
    \hat\rho$. Furthermore, due to the minimality of $\hat\rho$ in
    $M$, for all  
    $\rho_1\prec \pi \prec \hat\rho $ with $\pi\in\RelAnc{l}{\rho}$, $\lvert
    \pi\rvert>\lvert \hat\rho \rvert$ (otherwise $\pi\in M$
    which contradicts the minimality of $\hat\rho$). 
  \item \label{KeineHaken}
    Note that there cannot exist $\rho_1\prec \pi \prec \hat\rho \prec \hat\pi$
    with $\pi \jumpedge \hat\pi$ or $\pi \plusedge
    \hat\pi$ because $\lvert \pi \rvert \geq \lvert \hat\rho \rvert$.
  \end{enumerate}
  Heading for a contradiction, assume that there is some 
  $\rho_1\prec \pi \prec \hat\rho$ such that \mbox{$\pi\in\RelAnc{l}{\rho}$.}

  Due to observation \ref{KeineHaken}, there is
  a chain $\pi_0:=\pi, \pi_1, \dots, \pi_{n-1}, \pi_n:=\hat\rho$ such
  that 
  for each $0\leq i<n$ there is $*\in\{\trans{}, \jumpedge,
  \plusedge\}$ such that $\pi_i\mathrel{*} \pi_{i+1}$
  and $\pi_i\in \RelAnc{l-i}{\rho}$. By assumption, 
  $n\neq 0$, whence $\hat\rho\in\RelAnc{l-1}{\rho}$. 
  Due to observation \ref{ImmerGroesser}, we have 
  $\lvert \rho_1 \rvert < \lvert \hat\rho \rvert < \lvert \pi
  \rvert$. Since each 
  stack operation alters the width 
  of the stack by at most $1$, we conclude that the set 
  \begin{align*}
    M':=\left\{\pi': \rho_1\prec \pi' \prec \hat\rho, \lvert \hat\rho \rvert =
      \lvert \pi'  \rvert\right\}     
  \end{align*}
  is nonempty because on the path from $\rho_1$ to $\pi$ there
  occurs at least one run with final stack of width $\lvert \hat\rho
  \rvert$. But the maximal element $\pi'\in M'$ satisfies  
  $\rho_1 \plusedge \pi'\trans{} \hat\rho$ or 
  $\rho_1 \plusedge \pi' \jumpedge
  \hat\rho$. 
  Since
  $\hat\rho\in\RelAnc{l-1}{\rho}$, this would imply $\pi' \in M$ which
  contradicts the minimality of $\hat\rho$ in $M$. Thus, no 
  $\rho_1\prec \pi 
  \prec \hat\rho$ 
  with $\pi\in\RelAnc{l}{\rho}$ can exist. 

  We conclude that
  $\pi\notin\RelAnc{l}{\rho}$ for all
  $\rho_1\prec\pi\prec \hat\rho$ and
  $\rho_1\plusedge \hat\rho=\rho_2$. 
\end{proof}

In the final part of this section, we consider relevant ancestors of
two different runs $\rho$ and $\rho'$. Since we aim at a construction
of small runs $\hat\rho$ and $\hat\rho'$ such that the relevant
ancestors of $\rho$ and $\rho'$ are isomorphic to the relevant
ancestors of $\hat\rho$ and $\hat\rho'$, we need to know how sets of
relevant ancestors touch each other. Every isomorphism from the
relevant ancestors of $\rho$ and $\rho'$ to those of $\hat\rho$ and
$\hat\rho'$ has to preserve edges between a relevant ancestor of
$\rho$ and another one of $\rho'$. 

The positions where the relevant $l$-ancestors of $\rho$ and
$\hat\rho$ touch can  be identified by looking at the intersection of
their relevant \mbox{$(l+1)$-ancestors}. This is shown in the next
lemma. 
For $A$ and $B$ subsets of some $n$-NPT $\mathfrak{N}$ and $\rho$ some
run of $\mathfrak{N}$, we 
say $A$ and $B$ \emph{touch after $\rho$} if there are runs $\rho\prec
\rho_A,\rho\prec\rho_B$ such  
that $\rho_A\in A$, $\rho_B\in B$ and either $\rho_A=\rho_B$ or
$\rho_A * \rho_B$ for some $*\in \{\trans{}, \invtrans{}, \jumpedge,
\jumpleftedge, \plusedge, \plusleftedge\}$. In this case we say $A$ and $B$ touch at 
$(\rho_A,\rho_B)$. 
In the 
following, we reduce the question whether $l$-ancestors of two
elements touch after 
some $\rho$ to the
question whether the $(l+1)$-ancestors of these elements intersect after
$\rho$. 

\begin{lemma}
  If $\rho_1,\rho_2$ are runs such that $\RelAnc{l_1}{\rho_1}$ and 
  $\RelAnc{l_2}{\rho_2}$
  touch after some $\rho_0$, then  
  $\RelAnc{l_1+1}{\rho_1} \cap \RelAnc{l_2+1}{\rho_2}
  \cap \{\pi: \rho_0\preceq \pi\} \neq \emptyset$.
\end{lemma}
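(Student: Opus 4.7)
\medskip
\noindent\textbf{Proof plan.} The statement is essentially a definitional unfolding combined with a case analysis on the type of edge along which the two ancestor sets touch, so my plan is to keep it routine and to emphasise which case produces the witness and why it lies in $\{\pi : \rho_0 \preceq \pi\}$. First I would unpack the hypothesis: there exist runs $\rho_A \in \RelAnc{l_1}{\rho_1}$ and $\rho_B \in \RelAnc{l_2}{\rho_2}$ with $\rho_0 \preceq \rho_A$ and $\rho_0 \preceq \rho_B$ such that either $\rho_A = \rho_B$ or $\rho_A \mathbin{*} \rho_B$ for some $* \in \{\trans{}, \invtrans{}, \jumpedge, \jumpleftedge, \plusedge, \plusleftedge\}$. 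In the trivial case $\rho_A = \rho_B$ this element lies in $\RelAnc{l_1}{\rho_1} \cap \RelAnc{l_2}{\rho_2} \subseteq \RelAnc{l_1+1}{\rho_1} \cap \RelAnc{l_2+1}{\rho_2}$ and is an extension of $\rho_0$, so we are done.

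In the three ``forward'' cases $\rho_A \trans{} \rho_B$, $\rho_A \jumpedge \rho_B$ and $\rho_A \plusedge \rho_B$, the definition of $\RelAnc{l_2+1}{\rho_2}$ immediately gives $\rho_A \in \RelAnc{l_2+1}{\rho_2}$ because $\rho_B \in \RelAnc{l_2}{\rho_2}$. Since $\rho_A \in \RelAnc{l_1}{\rho_1} \subseteq \RelAnc{l_1+1}{\rho_1}$, the element $\rho_A$ witnesses the intersection, and $\rho_0 \preceq \rho_A$ by assumption. Symmetrically, in the three ``reverse'' cases $\rho_A \invtrans{} \rho_B$, $\rho_A \jumpleftedge \rho_B$ and $\rho_A \plusleftedge \rho_B$ we have $\rho_B \mathbin{*} \rho_A$ for $* \in \{\trans{}, \jumpedge, \plusedge\}$, so the definition of $\RelAnc{l_1+1}{\rho_1}$ yields $\rho_B \in \RelAnc{l_1+1}{\rho_1}$, while $\rho_B \in \RelAnc{l_2}{\rho_2} \subseteq \RelAnc{l_2+1}{\rho_2}$, and $\rho_0 \preceq \rho_B$ by assumption.

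In every case we have produced a run in $\RelAnc{l_1+1}{\rho_1} \cap \RelAnc{l_2+1}{\rho_2} \cap \{\pi : \rho_0 \preceq \pi\}$, completing the argument. There is no genuine obstacle here; the only point worth checking carefully is that the one extra ``round'' in the definition of relevant ancestors is precisely what is needed to absorb a single edge of any of the six possible types, and that the condition $\rho_0 \preceq \pi$ is inherited from whichever of $\rho_A$ or $\rho_B$ one selects as the witness. The lemma should be stated as a tight combinatorial fact to be used later (in Section~\ref{sec:EquivalenceonTuples}) when reducing questions about touching of $l$-ancestor sets of two different runs to questions about intersections of their $(l+1)$-ancestor sets.
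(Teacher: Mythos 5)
Your proof is correct and follows essentially the same route as the paper's: a case split on whether the touching pair coincides or is joined by a forward or reverse edge, with the one extra level of relevant ancestors absorbing that single edge. The only cosmetic difference is that the paper picks a minimal touching pair, which is not actually needed for the argument.
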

\begin{proof}
  Let $\rho_0$ be some run, 
  $\rho_0\prec \hat\rho_1\in \RelAnc{l_1}{\rho_1}$, and $\rho_0\prec
  \hat\rho_2\in 
  \RelAnc{l_2}{\rho_2}$ such that the 
  pair $(\hat\rho_1,\hat\rho_2)$ is minimal and $\RelAnc{l_1}{\rho_1}$
  and $\RelAnc{l_2}{\rho_2}$ 
  touch at $(\hat\rho_1,\hat\rho_2)$. 
  Then one of the following holds.
  \begin{enumerate}
  \item $\hat\rho_1=\hat\rho_2$: there is nothing to prove because
    $\hat\rho_1 \in
    \RelAnc{l_1}{\rho_1}\cap\RelAnc{l_2}{\rho_2}\cap\{\pi:\rho_0\preceq\pi\}$.
  \item $\hat\rho_1\trans{} \hat\rho_2$ or
    $\hat\rho_1\jumpedge \hat\rho_2$ 
    or $\hat\rho_1 \plusedge \hat\rho_2:$ this implies that
    $\hat\rho_1\in \RelAnc{l_2+1}{\rho_2}\cap\RelAnc{l_1}{\rho_1}$.
  \item 
    $\hat\rho_1\invtrans{} \hat\rho_2$ or $\hat\rho_1\jumpleftedge \hat\rho_2$
    or $\hat\rho_1 \plusleftedge \hat\rho_2:$ this
    implies that  
    $\hat\rho_2\in \RelAnc{l_1+1}{\rho_1}\cap\RelAnc{l_2}{\rho_2}$. \qed
  \end{enumerate}
\end{proof}

\begin{corollary} \label{CorTouch}
  If $\rho$ and $\rho'$ are runs such that $\RelAnc{l_1}{\rho}$ and
  $\RelAnc{l_2}{\rho'}$ touch after some run $\rho_0$ then there 
  exists some $\rho_0\prec \rho_1 \in \RelAnc{l_1+1}{\rho} \cap
  \RelAnc{l_2+1}{\rho'}$ such that
  \begin{align*}
    \RelAnc{l_1+1}{\rho}\cap \{x: x \preceq \rho_1\} \subseteq 
    \RelAnc{l_2 + 2 l_1 + 3}{\rho'}.    
  \end{align*}
\end{corollary}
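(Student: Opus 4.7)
The plan is to chain together two of the results already proved in this section. The previous lemma gives us a common relevant ancestor of $\rho$ and $\rho'$ above $\rho_0$; then Lemma \ref{LemmaRelAncComposition} moves us first down to this common ancestor and then back up to $\rho'$.

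First I would apply the immediately preceding lemma to $\RelAnc{l_1}{\rho}$ and $\RelAnc{l_2}{\rho'}$ touching after $\rho_0$. This directly produces some
\[
  \rho_0 \prec \rho_1 \in \RelAnc{l_1+1}{\rho} \cap \RelAnc{l_2+1}{\rho'},
\]
which is the $\rho_1$ claimed in the corollary.

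Next I would use Lemma \ref{LemmaRelAncComposition} twice. Since $\rho_1 \in \RelAnc{l_1+1}{\rho}$, the second clause of that lemma (applied with parameters $l_1 + 1$ on both sides) yields
\[
  \RelAnc{l_1+1}{\rho} \cap \{\pi : \pi \preceq \rho_1\} \subseteq \RelAnc{2l_1+2}{\rho_1}.
\]
Since moreover $\rho_1 \in \RelAnc{l_2+1}{\rho'}$, the first clause of the same lemma (applied with parameters $l_2+1$ and $2l_1+2$) gives
\[
  \RelAnc{2l_1+2}{\rho_1} \subseteq \RelAnc{(l_2+1)+(2l_1+2)}{\rho'} = \RelAnc{l_2 + 2l_1 + 3}{\rho'}.
\]
Composing these two inclusions delivers exactly the required containment
\[
  \RelAnc{l_1+1}{\rho} \cap \{x : x \preceq \rho_1\} \subseteq \RelAnc{l_2 + 2l_1 + 3}{\rho'}.
\]

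There is no real obstacle here; the only thing to be careful about is choosing the quantitative parameters in Lemma \ref{LemmaRelAncComposition} so that the two bounds compose to $l_2 + 2l_1 + 3$ rather than something larger (for example, applying the triangle-inequality Corollary \ref{CorRelAncDistBound} directly gives $3\max(l_1+1,l_2+1)$, which is only sharp enough when $l_1 = l_2$). Going through $\rho_1$ via the composition lemma, as above, preserves the asymmetry between $l_1$ and $l_2$ that the statement demands.
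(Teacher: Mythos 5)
Your proposal is correct and is exactly the argument the paper intends: its one-line proof (``Use the previous lemma and Lemma \ref{LemmaRelAncComposition}'') unpacks to precisely your two applications of Lemma \ref{LemmaRelAncComposition} through the common ancestor $\rho_1$ supplied by the touching lemma, mirroring the computation already carried out in Corollary \ref{CorRelAncDistBound}. The parameter bookkeeping ($2l_1+2$ then $+\,l_2+1$) is right and yields the stated bound $l_2+2l_1+3$.
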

\begin{proof}
  Use the previous lemma and Lemma \ref{LemmaRelAncComposition}.
\end{proof}

\subsection{A Family of Equivalence Relations on Words and Stacks}
\label{subsec:EquivalenceonWords}

Next, we  define a family of
equivalences on words that is useful for constructing runs with
similar relevant ancestors. 
The
basic idea is to classify 
words according to the  $\FO{k}$-type of the word model associated
to the word $w$ expanded by information about certain runs between
prefixes of $w$. This additional information 
describes
\begin{enumerate}
\item  the number of possible loops and returns with certain initial
  and final state of each prefix $v\leq
  w$, and 
\item the number of runs from $(q,w)$ to $(q',v)$ for each  
  prefix $v\leq w$ and all pairs $q,q'$ of states.
\end{enumerate}
It turns out that this equivalence has the following property:
if $w$ and $w'$ are equivalent and $\rho$ is a run starting in $(q,w)$
and ending in $(q',w:v)$, 
then there is a run from $(q,w')$
to $(q,w':v')$ such that the loops and returns of $v$ and $v'$
agree. This is important because runs of this 
kind connect consecutive elements of relevant ancestor sets
(cf. Proposition \ref{Prop:NextRelAnc}). 

In order to copy relevant ancestors, we want to apply this kind of
transfer 
property iteratively. For instance, we want to take a run from
$(q_1, w_1)$ via $(q_2,w_1:w_2)$ to $(q_3,w_1:w_2:w_3)$ and translate
it into some run from $(q_1,w_1')$ via $(q_2, w_1':w_2')$ to 
$(q_3,w_1':w_2':w_3')$ such that the loops and returns of $w_3$ and $w_3'$
agree. 
Analogously, we want to take a run creating $n$ new words and
transfer it to a new run starting in another word and creating $n$
words such that the last words agree on their loops and returns. 
If we can do this, then we can transfer the whole set of relevant
ancestors from some run to another one. This allows us to construct isomorphic
relevant ancestors that consist only of short runs. 

The family of equivalence
relations that we define have the following transfer property. Words 
that are equivalent with respect to the
$n$-th relation allow a transfer of runs creating $n$ new words. 
The idea of the definition is as follows. Assume that we have already
defined the 
$(i-1)$-st equivalence relation. 
We  take the
word model of some word $w$ and annotate each prefix of the
word by its equivalence class with respect to the $(i-1)$-st
relation. Then we define two words to be equivalent with respect to
the $i$-th relation if the $\FO{k}$-types of their enriched word models
agree.  

These equivalence relations and the transfer properties that they induce
are an important tool in the next section. There we 
apply them to an arbitrary  set of relevant ancestors in order to
obtain isomorphic 
copies of the substructure induced by these ancestors. 

For the rest of this section, we fix some $2$-PS $\mathcal{N}$.
For $w$ some word, we use $w_{-n}$ as an abbreviation for
$\Pop{1}^n(w)$.   

\begin{definition}
  For each word $w\in\Sigma^*$,  
  we define a family of expanded word models
  $\Lin{n}{k}{z}(w)$ by induction on $n$. Note that for $n=0$ the
  structure will be 
  independent of the parameter $k$ but for greater $n$ this parameter
  influences the expansion of the structure.  
  Let
  $\Lin{0}{k}{z}(w)$  be the expanded word model 
  \begin{align*}
    \Lin{0}{k}{z}(w):=(\{0,1,\dots, \lvert w \rvert -1\},
    \mathrm{succ}, 
    (P_\sigma)_{\sigma\in\Sigma}, (S^j_{q,q'})_{(q,q')\in Q^2, j\leq z}, 
    (R_j)_{j\in J}, (L_j)_{j\in J}, 
    (H_j)_{j\in J})    
  \end{align*}
  such
  that for $0\leq i <\lvert w \rvert$ the following holds.
  \begin{itemize}
  \item $\mathrm{succ}$ and $P_\sigma$ form the standard word model of
    $w$ in reversed order, i.e., $\mathrm{succ}$ is the usual successor
    relation on the domain and 
    $i\in P_{\sigma}$ if and only if
    $\TOP{1}(w_{-i})=\sigma$, 
  \item $i\in S^j_{q,q'}$, if there are $j$ pairwise distinct runs 
    $\rho_1, \dots, \rho_j$ such that each run starts in $(q,w)$ and ends in 
    $(q',w_{-i})$.
  \item The predicates $R_j$ encode the function
    $i\mapsto\ReturnFunc{z}(w_{-i})$ (cf. Definition \ref{Def:ReturnFunc}). 
  \item The predicates $L_j$ encode the function
    $i\mapsto\LoopFunc{z}(w_{-i})$.  
  \item The predicates $H_j$ encode the function
    $i\mapsto\HighLoopFunc{z}(w_{-i})$. 
  \end{itemize}

  Now, set $\Typ{0}{k}{z}(w):=\FO{k}[\Lin{0}{k}{z}(w)]$, the quantifier
  rank $k$ 
  theory  of $\Lin{0}{k}{z}(w)$. 

  Inductively, we define $\Lin{n+1}{k}{z}(w)$ to be the expansion of
  $\Lin{n}{k}{z}(w)$ by predicates describing $\Typ{n}{k}{z}(v)$ for
  each prefix $v\leq w$. 
  More formally,  fix a maximal list $\theta_1, \theta_2, \dots,
  \theta_m$ of pairwise distinct $\FO{k}$-types that are realised 
  by some $\Lin{n}{k}{z}(w)$. 
  We define predicates $T_1, T_2, \dots, T_m$ such that 
  $i\in T_j$ if
  $\Typ{n}{k}{z}\left(w_{-i}\right)=\theta_j$ 
  for all $0\leq i\leq n$. Now, let $\Lin{n+1}{k}{z}(w)$ be the
  expansion of $\Lin{n}{k}{z}(w)$ by the 
  predicates $T_1, T_2, \dots, T_m$. 
  We conclude the inductive definition by setting
  $\Typ{n+1}{k}{z}(w):= \FO{k}[\Lin{n+1}{k}{z}(w)]$. 
\end{definition}
\begin{remark}
  Each element of $\Lin{n}{k}{z}(w)$ corresponds to a prefix of
  $w$. In this sense, we write $v\in S^j_{q,q'}$ for
  some prefix $v\leq w$ if $v=w_{-i}$ and 
  $\Lin{n}{k}{z}(w) \models i\in S^j_{q,q'}$.  

  It is an important observation that $\Lin{n}{k}{z}(w)$ is  a finite 
  successor structure with finitely many colours. Thus, for fixed
  $n,k,z\in\N$, $\Typ{n}{k}{z}$ has finite image.
\end{remark}
For our application, $k$ and $z$ can be chosen to be
some fixed large numbers, depending on the quantifier rank of the
formula we are interested in. 
Furthermore, it will turn out that the conditions on $k$ and $z$
coincide whence we will 
assume that $k=z$. This is due to the fact that both parameters are 
counting thresholds in some sense: $z$ is the threshold for counting
the existence of loops and returns, while $k$ can be seen as the
threshold for distinguishing different prefixes of $w$ which 
have the same atomic type. Thus, we
identify $k$ and $z$ in  
the following definition of the equivalence relation induced by
$\Typ{n}{k}{z}$. 
\begin{definition}
  For words $w,w'\in\Sigma^*$, we write $ w\wordequiv{n}{z} w'$ if
  $\Typ{n}{z}{z}(w)=\Typ{n}{z}{z}(w')$. 
\end{definition}

As a first step, we want to show that $\wordequiv{n}{z}$ is a right
congruence. We prepare the proof of this fact in the following lemma. 
\begin{lemma}
  Let $n\in\N$, $z\geq 2$ and $\mathcal{N}$ be some $2$-PS. 
  Let $w$ be some word and $\sigma\in\Sigma$ some letter. 
  For each $0\leq i <\lvert w \rvert$, the atomic types of $i$ and of
  $0$ in
  $\Lin{n}{z}{z}(w)$ determine the atomic type of $i+1$ in
  $\Lin{n}{z}{z}(w\sigma)$.
\end{lemma}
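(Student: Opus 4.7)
The plan is to compare, predicate by predicate, the atomic type of position $i+1$ in $\Lin{n}{z}{z}(w\sigma)$ with the atomic types of positions $i$ and $0$ in $\Lin{n}{z}{z}(w)$. Since $(w\sigma)_{-(i+1)} = w_{-i}$, position $i+1$ of the larger structure and position $i$ of the smaller structure correspond to the same prefix $w_{-i}$. For the predicates $P_\tau$, $R_j$, $L_j$, $H_j$, and (when $n\geq 1$) the colour predicates $T_j$—each of which depends only on the prefix associated to the position—the value at $i+1$ in $\Lin{n}{z}{z}(w\sigma)$ coincides with its value at $i$ in $\Lin{n}{z}{z}(w)$, and is therefore read off directly from the atomic type of $i$.

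The only nontrivial case is the predicate $S^j_{q,q'}$, which records, up to threshold $z$, the number of runs from $(q,w\sigma)$ to $(q',w_{-i})$. I use a bijective decomposition: each such run $\pi$ splits uniquely as $\pi=\pi_1\circ\pi_2$, where $\pi_1$ is the initial segment of $\pi$ ending at the first configuration whose topmost word equals $w$. A short induction on the prefix-structure of the level-$1$ stacks present during $\pi_1$ shows that every intermediate topmost word carries $w\sigma$ as a prefix and thus has length strictly greater than $\lvert w\rvert$; in particular no level-$1$ stack ever held inside the $2$-stack during $\pi_1$ can have topmost word $w$. Consequently the final step of $\pi_1$ cannot be a $\Pop{2}$ and must be a $\Pop{1}$-transition out of a configuration with topmost word $w\sigma$. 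Hence $\pi_1$ factorises as a high loop of $w\sigma$ from $(q,w\sigma)$ to some $(q_3,w\sigma)$ followed by a single transition $(q_3,\sigma,q_2,\Pop{1})\in\Delta$, and $\pi_2$ is an arbitrary run from $(q_2,w)$ to $(q',w_{-i})$.

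For the counting, the number of admissible $\pi_2$ with given intermediate state $q_2$ is recorded by the $S^{(\cdot)}_{q_2,q'}$-predicates at position $i$ of $\Lin{n}{z}{z}(w)$ and is therefore determined by the atomic type of $i$. For the $\pi_1$-component, Proposition \ref{Prop:InductiveComputability} yields that $\HighLoopFunc{z}(w\sigma)$ depends only on $\sigma$ and on $\ReturnFunc{z}(w)$; the latter is encoded by the $R_j$-predicates at position $0$ and so is determined by the atomic type of $0$. Summing the products over $q_2$ and $q_3$ and truncating at $z$ yields the desired $S^j_{q,q'}$-values at position $i+1$. The main obstacle is the bookkeeping behind this decomposition: one must justify rigorously that the last step of $\pi_1$ is forced to be a $\Pop{1}$ (which is exactly the prefix-argument above) and that the $z$-truncated total is recoverable from the $z$-truncated factors. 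The latter is routine because any summand whose two factors are both at least $z$ already exceeds $z$ on its own, so the total hits the threshold, while every other summand is computed exactly from the truncated data.
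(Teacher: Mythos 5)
Your overall route is the same as the paper's: split each run from $(q,w\sigma)$ to $(q',w_{-i})$ into a high loop of $w\sigma$, a single $\Pop{1}$-transition reading $\sigma$, and a residual run from $w$ to $w_{-i}$; then observe that the three factors are determined, respectively, by $\ReturnFunc{z}(w)$ at position $0$ (via Proposition~\ref{Prop:InductiveComputability}), by $\Delta$ alone, and by the $S$-predicates at position $i$. The threshold bookkeeping at the end is essentially correct as well.

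However, the justification of the decomposition contains a genuine error. You cut $\pi$ at the first configuration whose \emph{topmost word} equals $w$, and you support this with the claim that every topmost word occurring before that point has $w\sigma$ as a prefix. That claim is false: starting from a stack with topmost word $w\sigma$, the run may perform $\Clone{2}$ followed by $\Pop{1}$, reaching the $2$-stack $w\sigma:w$, whose topmost word is $w$ but which is not the substack obtained by applying $\Pop{1}$ to the initial stack. At such a cut point the last transition is a $\Pop{1}$ performed inside a cloned copy, not out of the initial configuration, so the factorisation into ``high loop of $w\sigma$ followed by one $\Pop{1}$'' fails there, and the residual piece does not start in a configuration $(q_2,w)$ of the kind counted by the $S$-predicates of $\Lin{n}{z}{z}(w)$. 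The correct cut point is the first occurrence of the \emph{stack} $\Pop{1}(s)$, where $s$ is the initial stack: since the first $\lvert s\rvert-1$ words of the stack can only change at positions of minimal width, the topmost words at those positions evolve by single push/pop steps, so the first proper substack of $s$ that the run reaches is $\Pop{1}(s)$, entered by a $\Pop{1}$-transition from $s$ itself. The initial segment is then a high loop in the sense of Definition~\ref{DefLoop} --- and note that a high loop is only required to avoid the two stacks $\Pop{1}(s)$ and $\Pop{2}(s)$; it may perfectly well visit configurations whose topmost word equals $w$, or is even shorter, inside cloned copies. With the cut placed there, the counting argument you give goes through exactly as in the paper.
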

\begin{proof}
  Recall that $i\in\Lin{n}{z}{z}(w)$ represents 
  $w_{-i}$ and $i+1\in\Lin{n}{z}{z}(w\sigma)$ represents
  $w\sigma_{-(i+1)}$. 
  Since $w_{-i}=w\sigma_{-(i+1)}$, it follows directly that the two
  elements agree on  
  $(P_\sigma)_{\sigma\in\Sigma}$, $(R_j)_{j\in J}$, $(L_j)_{j\in J}$, and
  $(H_j)_{j\in J}$ and that 
  $w_{-i}\wordequiv{n-1}{z} w\sigma_{-(i+1)}$ (recall that the
  elements of $\Lin{n}{z}{z}(w)$ are coloured by $\wordequiv{n-1}{z}$-types). 

  We claim that the function $\ReturnFunc{z}(w)$ and the set
  \begin{align*}
    \{(j,q,q')\in\N\times Q\times Q: j\leq z, \Lin{n}{z}{z}(w)\models i\in
    S^j_{q,q'}\}
  \end{align*}
  determine whether $\Lin{n}{z}{z}(w\sigma)\models (i+1)\in
  S^j_{q,q'}$.  Recall that the predicates $S^j_{q,q'}$ in
  $\Lin{n}{z}{z}(w)$ encode at each position $i$ the number of runs
  from $(q,w)$ to $(q' , w_{-i})$. 
  We now want to determine the number of runs from $(q,w\sigma)$ to
  $(q', w\sigma_{-(i+1)}) = (q', w_{-i})$. 

  It is clear that such a run starts with a high loop from
  $(q,w\sigma)$ to some $(\hat q,w\sigma)$. Then it performs some
  transition of the form $(\hat q, \sigma,  \hat q', \Pop{1})$
  and then it continues with a run from $(\hat q', w)$ to 
  $(q', w_{-i})$. 
  
  In order to determine whether 
  $\Lin{n}{z}{z}(w\sigma)\models (i+1)\in S^j_{q,q'}$, we have to
  count whether $j$ runs of this form exist. 
  To this end,
  we define the  numbers
  \begin{align*}
    &k_{(\hat q, \hat q')}:=\HighLoopFunc{z}(w\sigma)(q,\hat q),\\
    &j_{(\hat q, \hat q')}:=\lvert \{ (\hat q, \sigma, 
     \hat q', \Pop{1})\in \Delta\}\rvert,
    \text{ and}  \\
    &i_{(\hat q, \hat q')}:=\max\{k: w_{-i}\in S^{k}_{(\hat q',q')}\}
  \end{align*}
  for each pair $\bar q=(\hat q, \hat q') \in Q^2$. 
  It follows directly that there are
  $\sum\limits_{\bar q\in Q^2} i_{\bar q}j_{\bar q} k_{\bar q}$
  many such runs up to threshold $z$. 
  Note that $j_{\bar q}$ only depends on the pushdown system. 
  Due to Proposition
  \ref{Prop:InductiveComputability},
  $\HighLoopFunc{z}(w\sigma)$ is determined by $\sigma$ and
  $\ReturnFunc{z}(w)$.
  Thus, $k_{\bar q}$ is determined by the atomic type of $0$ in
  $\Lin{n}{z}{z}(w)$. 
  $i_{\bar q}$ only depends on the atomic type of $i$ in 
  $\Lin{n}{z}{z}(w)$. These observations  complete the proof. 
\end{proof}

\begin{corollary}
  Let $n,z\in\N$ such that $z\geq 2$.  
  Let $w_1$ and $w_2$ be words such that 
  \mbox{$w_1 \wordequiv{n}{z} w_2$.}
  Any strategy of Duplicator in the $z$ round Ehrenfeucht-\Fraisse game on
  $\Lin{n}{z}{z}(w_1)$ and $\Lin{n}{z}{z}(w_2)$ translates directly
  into a strategy of Duplicator
  in the $z$ round Ehrenfeucht-\Fraisse game on
  $\Lin{n}{z}{z}(w_1\sigma){\restriction}_{[1,\lvert w_1\rvert]}$ and
  $\Lin{n}{z}{z}(w_2\sigma){\restriction}_{[1,\lvert w_2\rvert]}$.
\end{corollary}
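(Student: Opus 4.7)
The restricted structure $\Lin{n}{z}{z}(w\sigma){\restriction}_{[1,\lvert w\rvert]}$ has domain $\{1,2,\dots,\lvert w \rvert\}$, where position $j$ represents the prefix $(w\sigma)_{-j}$. Since $(w\sigma)_{-j} = w_{-(j-1)}$ for $1 \leq j \leq \lvert w \rvert$, the map $f_w: j \mapsto j-1$ is a natural bijection between $[1,\lvert w\rvert]$ and $[0,\lvert w\rvert -1]$, the latter being the domain of $\Lin{n}{z}{z}(w)$. The plan is to use $f_{w_1}$ and $f_{w_2}$ to pull back Duplicator's given strategy on the pair $(\Lin{n}{z}{z}(w_1),\Lin{n}{z}{z}(w_2))$ to a strategy on the pair of restricted extended structures.

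More precisely, suppose Duplicator has a winning strategy $\tau$ in the $z$-round game on $\Lin{n}{z}{z}(w_1)$ and $\Lin{n}{z}{z}(w_2)$. In the game on $\Lin{n}{z}{z}(w_1\sigma){\restriction}_{[1,\lvert w_1\rvert]}$ and $\Lin{n}{z}{z}(w_2\sigma){\restriction}_{[1,\lvert w_2\rvert]}$, when Spoiler picks some $j_r \in [1,\lvert w_r \rvert]$, Duplicator treats this as the move $f_{w_r}(j_r) = j_r - 1$ in the unrestricted game, consults $\tau$ for a response $k_{3-r} = \tau(\dots, j_r - 1)$ in $\Lin{n}{z}{z}(w_{3-r})$, and plays $f_{w_{3-r}}^{-1}(k_{3-r}) = k_{3-r} + 1$ in the restricted extended game.

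To verify this is a winning strategy, we must check that the resulting partial map is a partial isomorphism. The bijection $f_w$ clearly preserves and reflects the successor relation, so agreement on successor edges in the original game transports to agreement in the restricted extended game. For the unary predicates, we invoke the preceding lemma: the atomic type of position $i+1$ in $\Lin{n}{z}{z}(w\sigma)$ (and hence in its restriction to $[1,\lvert w\rvert]$) is determined by the atomic types of $i$ and of $0$ in $\Lin{n}{z}{z}(w)$. If Duplicator's original strategy produces a partial isomorphism between positions $\bar i$ in $\Lin{n}{z}{z}(w_1)$ and $\bar j$ in $\Lin{n}{z}{z}(w_2)$, then for every index, the atomic types of $i_r$ and $j_r$ coincide.

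The main obstacle is ensuring that the atomic type of position $0$ agrees between $\Lin{n}{z}{z}(w_1)$ and $\Lin{n}{z}{z}(w_2)$, since the lemma's determination uses $0$ as a second parameter. This however follows from $w_1 \wordequiv{n}{z} w_2$ together with the hypothesis $z \geq 2$: the element $0$ is the unique element with no predecessor and is therefore definable by an $\FO{1}$-formula, so any quantifier-free property of it is captured by an $\FO{2}$-sentence, and these sentences must agree between the two structures. Combining these observations, the atomic types of $i_r + 1$ in the restricted extended structures coincide, which establishes the partial isomorphism condition and completes the translation of the strategy.
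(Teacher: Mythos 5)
Your proof is correct and follows essentially the same route as the paper: shift the restricted extended structures back onto the original word models, invoke the preceding lemma to transport atomic types from position $i$ (together with position $0$) to position $i+1$, and note that the atomic type of $0$ agrees across the two structures because $z\geq 2$ and $0$ is definable. The paper's own proof is just a terser version of this argument (it justifies the agreement at position $0$ directly from the existence of Duplicator's strategy rather than via definability, but the two justifications are interchangeable).
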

\begin{proof}
  It suffices to note that the existence of Duplicator's strategy
  implies that the atomic types of $0$ in 
  $\Lin{n}{z}{z}(w_1)$ and $\Lin{n}{z}{z}(w_2)$ agree. Hence, the
  previous lemma applies. 
  Thus, if the atomic type of $i\in\Lin{n}{z}{z}(w_1)$ and
  $j\in\Lin{n}{z}{z}(w_2)$ agree, then the atomic types of
  $i+1\in\Lin{n}{z}{z}(w_1\sigma)$ and 
  $j+1\in\Lin{n}{z}{z}(w_2\sigma)$ agree. 
  Hence, we can obviously
  translate Duplicator's strategy on 
  $\Lin{n}{z}{z}(w_1)$ and $\Lin{n}{z}{z}(w_2)$ into a strategy on 
  $\Lin{n}{z}{z}(w_1\sigma){\restriction}_{[1,\lvert w_1\rvert]}$ and
  $\Lin{n}{z}{z}(w_2\sigma){\restriction}_{[1,\lvert w_2\rvert]}$.
\end{proof}

The previous corollary is the main ingredient for the following
lemma. It states that $\wordequiv{n}{z}$ is a right congruence. 

\begin{lemma} \label{LemmaTypeConcatenation}
  For $z\geq 2$, $\wordequiv{n}{z}$ is a right congruence, i.e., 
  if $\Typ{n}{z}{z}(w_1) = \Typ{n}{z}{z}(w_2)$ for some $z\geq 2$, then
  $\Typ{n}{z}{z}(w_1w) = \Typ{n}{z}{z}(w_2w)$ for all
  $w\in\Sigma^*$. 
\end{lemma}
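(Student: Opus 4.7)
The plan is to prove the lemma by a nested induction: the outer induction on $n$ and the inner induction on $|w|$. The base case $|w|=0$ is trivial, and the inductive step reduces, by iterating, to the single-letter claim that if $u_1 \wordequiv{n}{z} u_2$ then $u_1\sigma \wordequiv{n}{z} u_2\sigma$ for every $\sigma \in \Sigma$. Note that $\wordequiv{n}{z}$ refines $\wordequiv{n-1}{z}$ because $\Lin{n-1}{z}{z}$ is a reduct of $\Lin{n}{z}{z}$, so the outer induction hypothesis at level $n-1$ will be available.

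To establish the single-letter claim I would invoke the preceding corollary to obtain a winning strategy $\mathcal{S}$ for Duplicator in the $z$-round Ehrenfeucht-\Fraisse game on the restricted structures $\Lin{n}{z}{z}(u_1\sigma){\restriction}_{[1,|u_1|]}$ and $\Lin{n}{z}{z}(u_2\sigma){\restriction}_{[1,|u_2|]}$, each of which is canonically isomorphic via the shift $j\mapsto j+1$ to $\Lin{n}{z}{z}(u_i)$. I would then extend $\mathcal{S}$ to the full structures by adjoining the rule that position $0$ is matched with position $0$. To verify that the extended strategy is winning it suffices to check that position $0$ carries the same atomic type in both full structures and that the $\mathrm{succ}$-edge between positions $0$ and $1$ is respected. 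The equality of atomic types at position $0$ is handled predicate by predicate: the colour $P_\sigma$ is the same on both sides (both topmost letters are $\sigma$); the counters $L_j, R_j, H_j$ at position $0$ of $\Lin{n}{z}{z}(u_i\sigma)$ are, by Proposition \ref{Prop:InductiveComputability}, determined by $\sigma$ together with the values of $L_j$ and $R_j$ at position $0$ of $\Lin{n}{z}{z}(u_i)$, which agree by assumption; the predicate $S^j_{q,q'}$ at position $0$ coincides with $L_j$ there; and for $n\geq 1$ the predicate $T_j$ at position $0$ encodes $\Typ{n-1}{z}{z}(u_i\sigma)$, whose agreement is exactly what the outer induction hypothesis at level $n-1$ provides.

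The main obstacle is to guarantee that the extended strategy respects the successor between position $0$ and position $1$, i.e., that $\mathcal{S}$ maps position $1$ to position $1$ in the restricted structures. Under the shift isomorphism this amounts to requiring that Duplicator's strategy on $\Lin{n}{z}{z}(u_1)$ and $\Lin{n}{z}{z}(u_2)$ sends position $0$ to position $0$. I would resolve this by observing that in a finite successor structure position $0$ is first-order definable as the unique element with no $\mathrm{succ}$-predecessor (using only one quantifier, which is available since $z\geq 2$), so its atomic type is uniquely realised; hence any winning strategy of Duplicator may, without loss of generality, be taken to match position $0$ with position $0$. Together with the verification of the atomic type at position $0$ of the $\sigma$-extended structures, this turns the adjoined rule into a winning extension of $\mathcal{S}$, completing the single-letter step and, by the inner induction, the lemma.
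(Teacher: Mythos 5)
Your proof is correct and follows essentially the same route as the paper's: reduce to a single appended letter, use Proposition \ref{Prop:InductiveComputability} together with the $\FO{z}$-definability of position $0$ (available since $z\geq 2$) to pin down the atomic type of the new first element, invoke the preceding corollary for the shifted positions, and compose, handling the $T$-predicates by an induction on the level (the paper runs this induction over the index of $\Lin{m}{z}{z}$ inside the single-letter step, you run it as an outer induction on $n$ over the whole lemma; the two are interchangeable). The one inaccuracy is your claim that $\Lin{n}{z}{z}(u_i\sigma){\restriction}_{[1,\lvert u_i\rvert]}$ is canonically isomorphic to $\Lin{n}{z}{z}(u_i)$ via the shift --- it is not, since the predicates $S^j_{q,q'}$ at position $j+1$ count runs issuing from the top of $u_i\sigma$ rather than of $u_i$, which is precisely why the paper needs the lemma preceding the corollary --- but because you obtain the strategy from the corollary (whose translation is exactly the shift of Duplicator's moves, so matching $0\mapsto 0$ upstairs yields $1\mapsto 1$ downstairs) rather than from this purported isomorphism, the argument is unaffected.
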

\begin{proof}
  It is sufficient to prove the claim for $w=\sigma\in\Sigma$.
  The lemma then follows by induction on $\lvert w \rvert$. 
  First observe that 
  \begin{align*}
    &\LoopFunc{z}(w_1\sigma) = \LoopFunc{z}(w_2\sigma),\\ 
    &\HighLoopFunc{z}(w_1\sigma) = \HighLoopFunc{z}(w_2\sigma),\text{ and}\\
    &\ReturnFunc{z}(w_1\sigma) = \ReturnFunc{z}(w_2\sigma),     
  \end{align*}
  because these values are determined by the values of the corresponding
  functions at  $w_1$ and $w_2$ (cf. Proposition
  \ref{Prop:InductiveComputability}). These functions agree on $w_1$ and
  $w_2$ because the first elements 
  of $\Lin{n}{z}{z}(w_1)$ and $\Lin{n}{z}{z}(w_2)$ are
  $\FO{z}$ definable ($z\geq 2$!).

  For $i\in\{1,2\}$, $\Lin{n}{z}{z}(w_i\sigma) \models 0\in S^j_{(q,q')}$ if
  and only if there are $j$ loops from $(q,w_i\sigma)$ to 
  $(q', w_i\sigma)$ (at position $0$ 
  the runs counted by the $S^j_{(q,q')}$ coincide with loops). Since
  $\LoopFunc{z}(w_1\sigma)=\LoopFunc{z}(w_2\sigma)$, we
  conclude that the atomic types of the first elements of 
  $\Lin{0}{z}{z}(w_1\sigma)$ and of 
  $\Lin{0}{z}{z}(w_2\sigma)$ coincide. 
  
  Due to the previous corollary, we know that Duplicator has a
  strategy in the $z$ round Ehrenfeucht-\Fraisse game on
  $\Lin{n}{z}{z}(w_1\sigma){\restriction}_{[1,\lvert w_1\rvert]}$ and
  $\Lin{n}{z}{z}(w_2\sigma){\restriction}_{[1,\lvert w_2\rvert]}$.

  Standard composition arguments for Ehrenfeucht-\Fraisse games on
  word structures directly imply that
  $\Lin{0}{z}{z}(w_1\sigma) \equiv_z \Lin{0}{z}{z}(w_2\sigma)$. But
  this directly implies that the atomic types of $w_1\sigma$ in 
  $\Lin{1}{z}{z}(w_1\sigma)$ and of $w_2\sigma$ in 
  $\Lin{1}{z}{z}(w_2\sigma)$ coincide. If $n\geq 1$, we can apply the
  same standard argument and obtain that
  $\Lin{1}{z}{z}(w_1\sigma) \equiv_z \Lin{1}{z}{z}(w_2\sigma)$.
  By induction one concludes that
  $\Lin{n}{z}{z}(w_1\sigma) \equiv_z \Lin{n}{z}{z}(w_2\sigma)$, i.e., 
  $w_1\sigma \wordequiv{n}{z} w_2\sigma$. 
\end{proof}

In terms of stack operations, the previous lemma can be seen as a
compatibility result of $\wordequiv{n}{z}$ with the $\Push{\sigma}$
operation. Next, we lift the equivalences from words to $2$-stacks in such
a way that the new equivalence relations are compatible with all stack
operations. 
We compare the
stacks word-wise beginning with the topmost word, then the word below
the topmost one, etc. up to some threshold $m$. 

\begin{definition}
  Let $s,s'$ be stacks. We write $s \stackequiv{n}{z}{m} s'$ if
  either
  \begin{align*}
    \lvert s \rvert >m, \lvert s' \rvert > m, \text{ and }
    \TOP{2}\left(\Pop{2}^i(s)\right)\wordequiv{n}{z}
    \TOP{2}\left(\Pop{2}^i(s')\right)   \text{ for all }0\leq i \leq
    m, \text{ or}\\
    l:=\lvert s \rvert = \lvert s' \rvert \leq m, \text{ and }
    \TOP{2}\left(\Pop{2}^i(s)\right)\wordequiv{n}{z}
    \TOP{2}\left(\Pop{2}^i(s')\right)   \text{ for all }0\leq i < l
    .
  \end{align*}
\end{definition}

\begin{proposition} \label{Prop:CompatibilityStackOpTypeq}
  Let $z\geq 2$ and let $s_1, s_2$ be stacks such that
  $s_1 \stackequiv{n}{z}{m} s_2$. 
  Then $\Push{\sigma}(s_1)\stackequiv{n}{z}{m} \Push{\sigma}(s_2)$,
  $\Pop{1}(s_1)\stackequiv{n-1}{z}{m}\Pop{1}(s_2)$,
  $\Clone{2}(s_1) \stackequiv{n}{z}{m+1} \Clone{2}(s_2)$,
  and $\Pop{2}(s_1) \stackequiv{n}{z}{m-1} \Pop{2}(s_2)$. 
\end{proposition}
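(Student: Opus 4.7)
My plan is to treat the four stack operations separately, but to exploit a common observation: each of the four operations only alters a single word in the list decomposition of the stack. More precisely, if $s = s_1 : s_2 : \ldots : s_k$, then $\Push{\sigma}(s)$ and $\Pop{1}(s)$ change only $s_k$, while $\Clone{2}(s)$ appends a copy of $s_k$ and $\Pop{2}(s)$ deletes $s_k$. Therefore, the value of $\TOP{2}(\Pop{2}^i(\op(s)))$ either equals the topmost word of some $\Pop{2}^j(s)$ (for $j$ differing from $i$ by a fixed shift depending on $\op$), or it equals $\op'(\TOP{2}(s))$ for an induced word-level operation $\op'$. Using this, the four equivalences required all reduce to combining the hypothesis $s_1 \stackequiv{n}{z}{m} s_2$ with the appropriate behaviour of $\wordequiv{n}{z}$ on the topmost word.

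For $\Push{\sigma}$, the topmost word of $\Push{\sigma}(s_i)$ is $\TOP{2}(s_i) \sigma$ and all deeper words are unchanged; the width is preserved, so the threshold $m$ condition is trivially inherited. The topmost words remain $\wordequiv{n}{z}$-equivalent by Lemma \ref{LemmaTypeConcatenation} (right congruence), while the deeper words already witness the hypothesis. For $\Clone{2}$ and $\Pop{2}$, the word content is essentially the same modulo a shift: $\TOP{2}(\Pop{2}^i(\Clone{2}(s_j))) = \TOP{2}(\Pop{2}^{i-1}(s_j))$ for $i \geq 1$ (and coincides with $\TOP{2}(s_j)$ for $i = 0$), and $\TOP{2}(\Pop{2}^i(\Pop{2}(s_j))) = \TOP{2}(\Pop{2}^{i+1}(s_j))$. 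The threshold bookkeeping then matches exactly the shift $m \mapsto m+1$ and $m \mapsto m-1$, and I would verify the width conditions in both subcases ($\lvert s_j \rvert > m$ vs.\ $\lvert s_1 \rvert = \lvert s_2 \rvert \leq m$) by a short case distinction.

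The nontrivial case is $\Pop{1}$. Here the topmost word of $\Pop{1}(s_i)$ is $\Pop{1}(\TOP{2}(s_i))$, and I need to drop the type level from $n$ to $n-1$, i.e.\ to show $\Pop{1}(\TOP{2}(s_1)) \wordequiv{n-1}{z} \Pop{1}(\TOP{2}(s_2))$. I would argue this by unpacking the definition of $\Lin{n}{z}{z}$: the color $T_j$ at a position $i$ encodes $\Typ{n-1}{z}{z}$ of the prefix $w_{-i}$. In particular, the color at position $1$ of $\Lin{n}{z}{z}(\TOP{2}(s_i))$ encodes the $\Typ{n-1}{z}{z}$ type of $\Pop{1}(\TOP{2}(s_i))$, and since $z \geq 2$ the formula $\exists x\, \exists y.\, \mathrm{first}(x) \wedge \mathrm{succ}(x,y) \wedge T_j(y)$ has quantifier rank $2 \leq z$ and is therefore preserved under $\Typ{n}{z}{z}$-equivalence. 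Hence the two $\Pop{1}$-shrunk words carry the same $\Typ{n-1}{z}{z}$-type. (If $\TOP{2}(s_i)$ has length $1$ then $\Pop{1}$ is undefined on both sides; since the lengths of the topmost words are first-order definable from $\Lin{n}{z}{z}$ up to threshold $z \geq 2$, these two cases are synchronised.) The deeper words of $\Pop{1}(s_i)$ are unchanged, and $\wordequiv{n}{z}$ refines $\wordequiv{n-1}{z}$ because $\Lin{n}{z}{z}$ is an expansion of $\Lin{n-1}{z}{z}$, so the hypothesis transfers to the weaker relation at the required threshold.

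The main obstacle I expect is precisely the clean formulation of the $\Pop{1}$ case: one must be careful that the coloring of position $1$ in $\Lin{n}{z}{z}$ really gives the $(n-1)$-type of the $\Pop{1}$-shrunk word, and that $z \geq 2$ suffices to pick out that position in the Ehrenfeucht-\Fraisse game. Once this is in place, the remaining cases are straightforward bookkeeping with widths and shifts.
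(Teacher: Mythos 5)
Your proposal is correct and follows essentially the same route as the paper's (much terser) proof: $\Push{\sigma}$ via the right-congruence Lemma \ref{LemmaTypeConcatenation}, $\Clone{2}$ and $\Pop{2}$ by index shifting, and $\Pop{1}$ by observing that the colour at position $1$ of $\Lin{n}{z}{z}(\TOP{2}(s_i))$ encodes $\Typ{n-1}{z}{z}$ of the popped word and that this position is definable with quantifier rank $2\leq z$. Your extra bookkeeping on widths and the undefined-$\Pop{1}$ edge case goes slightly beyond what the paper records but matches its intent exactly.
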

\begin{proof}
  Assume that $\op=\Pop{1}$.
  Quantifier rank $z$ suffices to define the second element of a
  word structure. Hence, $w\wordequiv{n}{z} w'$ implies that 
  $\Typ{n-1}{z}{z}(w_{-1}) =   \Typ{n-1}{z}{z}(w'_{-1})$. But this implies
  \mbox{$w_{-1} \wordequiv{n-1}{z} w'_{-1}$.}

  For 
  $\op=\Push{\sigma}$ we use Lemma \ref{LemmaTypeConcatenation}. 
  For $\Clone{2}$ and $\Pop{2}$, the claim is trivial.  
\end{proof}

The previous proposition shows that the equivalence relations on
stacks are compatible with the stack operations. 
Recall that successive relevant ancestors of a given run $\rho$ are
runs $\rho_1 \prec \rho_2 \preceq \rho$ such that 
$\rho_1\trans{}\rho_2$ 
or $\rho_1\plusedge \rho_2$ 
(cf. Proposition \ref{Prop:NextRelAnc}). 
In the next section, we are concerned with the construction of a
short run $\hat \rho$ such that its relevant ancestors are isomorphic
to those of $\rho$. A necessary condition for a run $\hat\rho$ to be
short is that it only passes small stacks. We
construct $\hat\rho$ using the following construction. Let 
$\rho_0 \prec \rho_1 \prec \rho_2 \ldots \prec \rho$ be the set of
relevant ancestors of $\rho$. 
We then first define a run $\hat\rho_0$ that ends in some small stack
that is equivalent to the last stack of $\rho_0$. Then, we iterate the
following construction. If $\rho_{i+1}$ extends $\rho_i$ by a single
transition, then we define $\hat\rho_{i+1}$ to be the extension of
$\hat\rho_{i}$ by the same transition. Due to the previous proposition
this preserves equivalence of the topmost stacks of $\rho_{i}$ and
$\hat\rho_{i}$. Otherwise, $\rho_{i+1}$ extends $\rho_i$ by some run
that creates a new word $w_{i+1}$ on top of the last stack of
$\rho_i$. Then we want to construct a short run that creates a new
word $w_{i+1}'$ on top of the last stack of $\hat\rho_i$ such that
$w_{i+1}$ and $w_{i+1}'$ are equivalent and $w_{i+1}'$ is small. Then
we define
$\hat\rho_{i+1}$ to be $\hat\rho_i$ extended by this run. 

Finally, this procedure defines a run $\hat\rho$ that corresponds
to $\rho$ in the sense that the relevant ancestors of the two runs are
isomorphic but $\hat\rho$ is a short run.

In the following, we prepare this construction. We 
show that for any run $\rho_0$ there is a run $\hat\rho_0$ that ends
in some small stack that is equivalent to the last stack of $\rho_0$. 
This is done in Lemma \ref{Lem:RunsendInShorStacks}. 
Furthermore, we show that for runs $\rho_i$ and $\hat\rho_i$ that end
in equivalent stacks, any run that extends the last stack of $\rho_i$
by some word $w$ can be transferred into a run that extends $\hat\rho_i$ by
some small word that is equivalent to $w$. This is shown in
Proposition \ref{Prop:ConstructOneStep}. 

The proofs of Lemma \ref{Lem:RunsendInShorStacks} and 
Proposition \ref{Prop:ConstructOneStep}
are based on the property that 
prefixes of equivalent stacks share the same number of loops and
returns for each pair of initial and final states. 
Recall that our analysis of generalised milestones showed that the
existence of loops with certain initial and final states has a crucial
influence on the question whether runs between certain stacks exist. 
We first define functions that are used to define what a small stack
is. Afterwards, we show that any run to some stack can be replaced by
some run to a short equivalent stack. 
\begin{definition} \label{Def:BoundTopWordHeightandWidht}
  Let $\mathcal{N}=(Q,\Sigma,\Delta,q_0)$ be a $2$-PS. 
  Set $\FuncBoundTopWord(n,z) = \lvert Q \rvert \cdot 
  \lvert \nicefrac{\Sigma^*}{\wordequiv{n}{z}} \rvert+1$,
  where $\lvert \nicefrac{\Sigma^*}{\wordequiv{n}{z}} \rvert$ is the
  number of equivalence classes of $\wordequiv{n}{z}$. 
  Furthermore, set
  $\ConstBoundHeightWord :=\lvert
  \nicefrac{\Sigma^*}{\wordequiv{0}{2}} \rvert 
  \cdot \lvert Q \rvert^2$ and    
  $\FuncBoundWidthWord(n):=\lvert Q \rvert \cdot
  (\lvert\Sigma\rvert+1)^n$. 
\end{definition}

\begin{remark}
  $\FuncBoundWidthWord(n)$ is an upper bound for the number of pairs
  of states and words of length up to $n$. Note that
  $\FuncBoundTopWord, \ConstBoundHeightWord$ and $\FuncBoundWidthWord$
  computably depend on  $\mathcal{N}$.
\end{remark}

\begin{lemma} \label{Lem:RunsendInShorStacks}
  Let $n\in\N$, $z\geq 2$, and let $\rho$ be a run from the initial
  configuration  
  to some configuration $(q,s)$. 
  There is 
  a run $\rho'$ starting in the initial configuration such that 
  \begin{align*}
    &  \lvert \TOP{2}(\rho) \rvert - \FuncBoundTopWord(n,z) \leq
    \lvert \TOP{2}(\rho') \rvert \leq \lvert \TOP{2}(\rho)\rvert, \\ 
    &\height(\rho') \leq \lvert \TOP{2}(\rho')\rvert + 
    \ConstBoundHeightWord,\\
    &\lvert \rho' \rvert \leq \FuncBoundWidthWord(\height(\rho')) \text{
      and}\\
    & \TOP{2}(\rho) \wordequiv{n}{z} \TOP{2}(\rho'). 
  \end{align*}
  Furthermore, if $\lvert\TOP{2}(\rho)\rvert >
  \FuncBoundTopWord(n,z)$, then $\lvert\TOP{2}(\rho')\rvert < \lvert
  \TOP{2}(\rho) \rvert$. 
\end{lemma}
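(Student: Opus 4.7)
The plan is to apply Carayol's milestone decomposition (Lemma \ref{Lemma:Carayol05Milestones}) and iteratively contract $\rho$ via pigeonhole, invoking the right-congruence of $\wordequiv{n}{z}$ (Lemma \ref{LemmaTypeConcatenation}) together with the stack-replacement machinery (Lemma \ref{Lem:BlumensathHOLevel2} and Definition \ref{Def:StackreplacementinMilestones}) to keep the run valid and the final topmost word in its $\wordequiv{n}{z}$-class. Fix the minimal sequence $\op_1,\dots,\op_n$ building $s$ with generalised milestones $m_0=\bot_2,\dots,m_n=s$ and the induced decomposition $\rho=\lambda_0\circ\rho_1\circ\lambda_1\circ\cdots\circ\rho_n\circ\lambda_n$, where $\rho_i$ performs $\op_i$ and $\lambda_i$ is a loop at $m_i$. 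I proceed in three shrinking rounds.

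The first round shrinks the topmost word $w=\TOP{2}(s)$. The letters of $w$ are pushed in the terminal segment of milestones after the last $\Clone{2}$, giving a sequence of at least $\lvert w\rvert-O(1)$ milestones whose topmost coordinate is a prefix of $w$. If $\lvert w\rvert > \FuncBoundTopWord(n,z)$, pigeonhole on the pair (state at the milestone, $\wordequiv{n}{z}$-class of its topmost word) yields two such milestones $m_p, m_q$ with $p<q$ sharing the same pair. I delete the segment of $\rho$ strictly between them and re-execute the remaining push-suffix from $m_p$: Definition \ref{Def:StackreplacementinMilestones} allows me to replace each subsequent loop or return on the now-shorter prefix, since the hypotheses $\LoopFunc{1}(w_p)=\LoopFunc{1}(w_q)$ and $\ReturnFunc{1}(w_p)=\ReturnFunc{1}(w_q)$ are encoded in $w_p\wordequiv{n}{z} w_q$. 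Right-congruence then ensures the new topmost word $w'$ is $\wordequiv{n}{z}$-equivalent to $w$ and strictly shorter. Iterating gives $\TOP{2}(\rho')\wordequiv{n}{z}\TOP{2}(\rho)$, the two-sided length inequality, and the strict decrease when $\lvert w\rvert > \FuncBoundTopWord(n,z)$.

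Rounds two and three apply the same template at the outer $2$-stack level. For the height bound, I isolate stretches over which some $2$-stack prefix $t$ persists (using Lemma \ref{Lem:prefixedstartstopdecomposition}) and apply pigeonhole over the at most $\ConstBoundHeightWord=\lvert\nicefrac{\Sigma^*}{\wordequiv{0}{2}}\rvert\cdot\lvert Q\rvert^2$ possibilities of ($\wordequiv{0}{2}$-class of topmost word, pair of initial/final states). Any repetition yields a redundant sub-piece that Definition \ref{Def:StackreplacementinMilestones} lets me excise while preserving validity, giving $\height(\rho')\leq\lvert\TOP{2}(\rho')\rvert+\ConstBoundHeightWord$. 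For the width, once the height is controlled, the $\Clone{2}$-milestones of the Carayol sequence carry at most $\FuncBoundWidthWord(\height(\rho'))$ distinct (state, topmost word) labels, and any repetition can be contracted by cutting between two equal labels.

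The main obstacle is the first round: naive concatenation of prefix and suffix after cutting would break the subsequent loops and returns, which were tuned to the original, longer topmost word. The fix is precisely Definition \ref{Def:StackreplacementinMilestones}: the matching-state, matching-$\wordequiv{n}{z}$-class choice of $m_p, m_q$ supplies the $\LoopFunc{1}$- and $\ReturnFunc{1}$-equality hypothesis needed to replace every affected loop or return by a valid one on the shortened prefix, and the right-congruence lemma propagates $\wordequiv{n}{z}$-equivalence through the re-executed push-suffix. The same mechanism handles the analogous subtleties in Rounds two and three.
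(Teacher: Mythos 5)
Your proposal follows the paper's proof essentially step for step: the same three shrinking rounds (first the topmost word, then the height, then the width), each driven by a pigeonhole on pairs of states and $\wordequiv{n}{z}$- resp.\ $\wordequiv{0}{2}$-classes attached to milestones, combined with the extended prefix replacement of Definition \ref{Def:StackreplacementinMilestones} and the right-congruence of $\wordequiv{n}{z}$ (Lemma \ref{LemmaTypeConcatenation}). One small caution on your first round: a \emph{single} cut between the two coinciding milestones already yields every stated property, whereas genuinely iterating the topmost-word cut until the word becomes short would violate the lower bound $\lvert \TOP{2}(\rho)\rvert - \FuncBoundTopWord(n,z) \leq \lvert \TOP{2}(\rho')\rvert$ — the two-sided bound and the ``furthermore'' clause exist precisely so that callers of the lemma can iterate it externally, so the iteration you invoke must stop after one application.
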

\begin{figure}
  \centering
  $
  \begin{xy}
    \xymatrix@=0.2mm{
      &      &      &  f   &   f \\
      &  b   &   d  &  e   &   e \\
      &  a   &   a  &  a   &   a \\
      s= & \bot & \bot & \bot & \bot
    }
  \end{xy}
  $\hskip 3mm
  $
  \begin{xy}
    \xymatrix@=0.2mm{
      &\textcolor{white}{ f}\\
      &\textcolor{white}{ d}\\
      &\textcolor{white}{ a}\\
      m_1= & \bot &
    }
  \end{xy}
  $\hskip 3mm
  $
  \begin{xy}
    \xymatrix@=0.2mm{
      \textcolor{white}{ f}\\
      &  b   &   d  &   \\
      &  a   &   a  &  a    \\
      m_2= & \bot & \bot & \bot 
    }
  \end{xy}
  $\hskip 3mm
  $
  \begin{xy}
    \xymatrix@=0.2mm{
      &      &      &  \textcolor{white}{ f}   \\
      &  b   &   d  &  e    \\
      &  a   &   a  &  a    \\
      m_3= & \bot & \bot & \bot 
    }
  \end{xy}
  $
  \caption{Illustration for the construction in the first step of the
    proof of Lemma 
    \ref{Lem:RunsendInShorStacks}.} 
  \label{fig:LemmaPumpingTopExample}
\end{figure}
We prove this lemma in three steps.
\begin{enumerate}
\item For each run $\rho$ with long topmost word, we generate a run $\rho'$
  with equivalent but smaller topmost word.
\item For each run $\rho$, we generate a run $\rho'$ 
  such that $\TOP{2}(\rho)=\TOP{2}(\rho')$ and the height of $\rho'$
  is bounded by  $\lvert \TOP{2}(\rho)\rvert+ \ConstBoundHeightWord$.
\item For each run $\rho$, we generate a run $\rho'$ such that 
  $\TOP{2}(\rho)=\TOP{2}(\rho')$, the height of $\rho'$
  is bounded in terms of $\height(\rho)$ and the width of $\rho'$ is
  bounded in terms of its height.
\end{enumerate}

\begin{proof}[of step 1]
  Let $\rho$ be some run with \mbox{$\lvert \TOP{2}(\rho)\rvert >
  \FuncBoundTopWord(n,z)$.} 
  Let \mbox{$(q,s):=\rho(\length(\rho))$} be the final configuration of
  $\rho$. For each $k\leq \FuncBoundTopWord(n,z)$, there is a
  maximal milestone $m_k\in\Milestones(s)$ with
  $\lvert\TOP{2}(m_k)\rvert = k$. 
  Figure \ref{fig:LemmaPumpingTopExample} illustrates this
  definition. 
  Let $w_k:=\TOP{2}(m_k)$ and let $\rho_k
  \preceq \rho$ be the largest initial segment of $\rho$ that ends in
  $m_k$. Note that $m_k\prefixeq m_{k'} \prefixeq s$ for all $k\leq k'
  \leq \FuncBoundTopWord(n,z)$  by the maximality of $m_k$ and
  $m_{k'}$.  
  
  Then there are $i < j \leq \FuncBoundTopWord(n,z)$ such that
  $\TOP{2}(\rho_i) \wordequiv{n}{z} \TOP{2}(\rho_j)$ and the
  final states of $\rho_i$ and $\rho_j$ agree. 
  
  Due to the maximality of $\rho_j$, no substack of $\Pop{2}(m_j)$ is
  visited by $\rho$ after $k:=\length(\rho_j)$. 
  Thus, the run
  $\pi:=(\rho{\restriction}_{[k,\length(\rho)]})[m_j/m_i]$ is
  well-defined   (cf. Definition
  \ref{Def:StackreplacementinMilestones}). Note that
  $\pi$ starts in $(q',m_i)$ for $q'\in Q$ the final
  state of $\rho_i$. Thus, we can set $\hat\rho:=\rho_i\circ \pi$. 
  Since $w_i\wordequiv{n}{z} w_j$ and since 
  $\wordequiv{n}{z}$ is a right congruence, it
  is clear that 
  $\TOP{2}(\hat\rho)\wordequiv{n}{z}\TOP{2}(\rho)$.  
  Since $0< \lvert w_j \rvert - \lvert w_i \rvert <
  \FuncBoundTopWord(n,z)$, it also follows directly that
  \begin{align*}
    \lvert\TOP{2}(\rho)\rvert - \FuncBoundTopWord(n,z) \leq \lvert
    \TOP{2}(\hat\rho) \rvert <  \lvert \TOP{2}(\rho)\rvert. \qed
  \end{align*}
\end{proof}

\begin{figure}
  \centering
  $
  \begin{xy}
    \xymatrix@=0.2mm{
               &      &   d  &  f   &   \\
               &  b   &   d  &  d   &   d \\
               &  a   &   a  &  a   &   a &   a\\
      s=m_2' = & \bot & \bot & \bot & \bot & \bot
    }
  \end{xy}
  $\hskip 3mm
  $
  \begin{xy}
    \xymatrix@=0.2mm{
             &         &   d  &  f   \\
             &  b      &   d  &  d    \\
             &  a      &   a  &  a   \\
      m=m_4= & \bot & \bot & \bot
    }
  \end{xy}
  $\\
  $
  \begin{xy}
    \xymatrix@=0.2mm{
           &      &   d  &        \\
           &  b   &   d  &  d       \\
           &  a   &   a  &  a     \\
      m_3= & \bot & \bot & \bot 
    }
  \end{xy}
  $\hskip 3mm
  $
  \begin{xy}
    \xymatrix@=0.2mm{
         & \textcolor{white}{f}   &      \\
         &  b   &    \\
         &  a   &   a  \\
      m_2= & \bot & \bot 
    }
  \end{xy}
  $\hskip 3mm
  $
  \begin{xy}
    \xymatrix@=0.2mm{
            &      &   d  &  f   &  f \\
            &  b   &   d  &  d   &   d \\
            &  a   &   a  &  a   &   a \\
      m'_4= & \bot & \bot & \bot & \bot
    }
  \end{xy}
  $\hskip 3mm
  $
  \begin{xy}
    \xymatrix@=0.2mm{
            &      &   d  &  f   &   \\
            &  b   &   d  &  d   &   d \\
            &  a   &   a  &  a   &   a \\
      m_3'= & \bot & \bot & \bot & \bot
    }
  \end{xy}
  $\hskip 3mm
  \caption{Illustration for the construction in the second step of
    the proof of Lemma
    \ref{Lem:RunsendInShorStacks}.}
  \label{fig:LemmaPumpingHeightExample}
\end{figure}

\begin{proof}[of step 2]
  The proof is by induction on the
  number of words in 
  the last stack of $\rho$ that have length $h:=\height(\rho)$. 
  Assume that $\rho$ is some run such that
  \begin{align*}
    \height(\rho) > \lvert \TOP{2}(\rho)\rvert +
    \ConstBoundHeightWord.    
  \end{align*}
  In the following, we define several generalised milestones of the final
  stack $s$ of 
  $\rho$. An illustration of  these definitions can be found in 
  Figure \ref{fig:LemmaPumpingHeightExample}.

  Let $m\in \Milestones(s)$ be a milestone of the last stack of
  $\rho$ such that
  $\lvert \TOP{2}(m)\rvert = h$. 
  For each $\lvert \TOP{2}(\rho) \rvert \leq i \leq h$ let
  $m_i\in\Milestones(m)$ be the maximal milestone of $m$ with $\lvert
  \TOP{2}(m_i)\rvert = i$. 
  Let $n_i$ be maximal such that $\rho(n_i)= (q', m_i)$ for some $q'\in
  Q$. 
  Let $m'_i\in \genMilestones(s)\setminus \genMilestones(m)$ be the
  minimal generalised milestone after $m$ such that
  \mbox{$\TOP{2}(m'_i) = \TOP{2}(m_i)$.} 
  Let $n'_i$ be maximal with $\rho(n'_i) = (q', m'_i)$ for some $q'\in
  Q$. 

  There are 
  $\lvert \TOP{2}(\rho) \rvert \leq k < l \leq \height(\rho)$
  satisfying the following conditions.
  \begin{enumerate}
  \item There is a $q\in Q$ such that $\rho(n_k)=(q,m_k)$ and
    $\rho(n_l) = (q, m_l)$.
  \item There is a $q'\in Q$ such that $\rho(n'_k)= (q', m'_k)$
    and $\rho(n'_l) = (q', m'_l)$.
  \item $\TOP{2}(m_k)  \wordequiv{0}{2} \TOP{2}(m_l)$ (note that this
    implies that 
    \mbox{$\LoopFunc{1}(m_k)=\LoopFunc{1}(m_l)$} and
    \mbox{$\ReturnFunc{1}(m_k)= \ReturnFunc{1}(m_l)$}). 
  \end{enumerate}
  By definition, we have $m_l\prefixeq m_l'$. 
  Thus, the run $\pi_1:= (\rho{\restriction}_{[n_l,n'_l]})[m_l/m_k]$ is
  well defined (cf. Definition
  \ref{Def:StackreplacementinMilestones}).
  Note that $\pi_1$ starts in 
  $(q,m_k)$ and ends in $(q',\hat s)$ for $\hat s:=m'_l[m_l/m_k])$.
  Moreover, 
  $\TOP{2}(\pi_1) = \TOP{2}(m'_k) = \TOP{2}(\rho(n'_k))$. 
  Furthermore,
  $\rho{\restriction}_{[n'_k,\length(\rho)]}$ never looks below the
  topmost word of $m'_k$ because $n'_k$ is the maximal node where
  the generalised milestone $m'_k$ is visited. 
  Thus, $(\Pop{2}(m'_k):\bot) \prefixeq
  \rho{\restriction}_{[n'_k,\length(\rho)]}$ whence
  \begin{align*}
    \pi_2:=\rho{\restriction}_{[n'_k, \length(\rho)]}[\Pop{2}(m'_k):\bot/
    \Pop{2}(\hat s):\bot]    
  \end{align*}
  is well defined. It starts in the last stack of $\pi_1$. 
  Now, we define the run 
  $\hat\rho:= \rho{\restriction}_{[0,n_k]} \circ \pi_1 \circ \pi_2$.     
  Either $\height(\hat\rho) < \height(\rho)$ and we are done
  or there are less words of height
  $\height(\rho)$ in the last stack of $\hat\rho$ than in the last
  stack of $\rho$ and we conclude by induction. 
\end{proof}
\begin{proof}[of step 3]
  Assume that $\rho$ is a run with 
  $\length(\rho)> \FuncBoundWidthWord(\height(\rho))$.
  We denote by $n_i$ the maximal position in $\rho$ such that the stack
  at $\rho(n_i)$ is $\Pop{2}^i(\rho)$ for each 
  \mbox{$0\leq i \leq \lvert \rho \rvert$.}
  There are less than
  $\frac{\FuncBoundWidthWord(\height(\rho))}{\lvert Q 
    \rvert}$ many words of length up to $\height(\rho)$. 
  Thus, there are $i < j$  such that
  \begin{enumerate}
  \item there is a word $w$ with 
    $\TOP{2}\left(\Pop{2}^i(\rho)\right) =
      \TOP{2}\left(\Pop{2}^j(\rho)\right)=w$, and 
  \item $\rho(n_i)=\left(q,\Pop{2}^i(\rho)\right)$ and
    $\rho(n_j)=\left(q,\Pop{2}^j(\rho)\right)$ for some $q\in Q$. 
  \end{enumerate}
  Now, let $s_i:=\Pop{2}^{i+1}(\rho)$ and $s_j:=\Pop{2}^{j+1}(\rho)$. 
  There is a unique stack $s$ such that
  \mbox{$\rho(\length(\rho)) = (\hat q, s_i:s)$.} 
  The run $\rho{\restriction}_{[n_i, \length(\rho)]}$ (from $s_i:w$
  to $s_i:s$) never visits $s_i$. 
  Thus,
  \begin{align*}
    \hat\rho_1:=\rho{\restriction}_{[n_i, \length(\rho)]}[s_i:\bot/s_j:\bot]    
  \end{align*}
  is a run from $s_j:w$ to $s_j:s$. 
  The composition $\hat\rho:=\rho{\restriction}_{[0,n_j]} \circ \hat\rho_1$
  satisfies the claim. 
\end{proof}

The previous corollary deals with the reachability of some stack from the
initial configuration. The following proposition is concerned with the
extension of a given stack by just one word. Recall that such a run
corresponds to a $\plusedge$ edge. We first define the
function that is used to bound the size of the new word. Recall that
the equivalence relation $\wordequiv{n}{z}$ depends on the choice of
the fixed $2$-PS $\mathcal{N}=(Q,\Sigma,\Delta, q_0)$.
$\left\lvert \nicefrac{\Sigma^*}{\wordequiv{n}{z}}\right\rvert$ 
denotes the number of equivalence classes of $\wordequiv{n}{z}$.
\begin{definition}\label{Def:BH1Definition}
  Set $\BoundHeightOnestepConstructionSimultanious(a,b,c,d):= 1+b+ a(
  \lvert 
  Q \rvert \lvert \nicefrac{\Sigma^*}{\wordequiv{c}{d}} \rvert)$.
\end{definition}

Before we state the proposition concerning the compatibility of
$\wordequiv{n}{z}$ with $\plusedge$ edges, we
explain its meaning. The 
proposition says that given two equivalent words $w$ and $\hat w$ and a run
$\rho$ from $(q, s:w)$ to $(q', s:w:w')$ that does not pass any substack
of $s:w$, then, for each stack $\hat s:\hat w$, we find a run $\hat
\rho$ from  $(q, \hat s: \hat w)$ to $(q', \hat s: \hat w: \hat w')$
for some short word $\hat w'$ that is equivalent to $w'$. Furthermore, 
this transfer of runs works simultaneously on a tuple of such runs,
i.e., given $m$ runs starting at $s:w$ of the form described above, we
find $m$ corresponding runs starting at $\hat s :\hat w$. 
This simultaneous transfer becomes important when we search 
an isomorphic copy of the relevant ancestors of 
several runs. In this case the simultaneous transfer allows us to copy
the relevant ancestors of a certain run while avoiding  an intersection 
with the relevant ancestors of other given runs.

\begin{proposition} \label{Prop:ConstructOneStep}
  Let 
  $n,z,m\in \N$ such that $n\geq 1$, $z> m$, and $z\geq
  2$. 
  Let \mbox{$c=(q,s:w)$}, \mbox{$\hat c=(q,\hat s:\hat w)$} be
  configurations such that 
  $w \wordequiv{n}{z} \hat w$. 
  Let $\rho_1, \dots, \rho_m$ be pairwise distinct runs such that for
  each $i$, $\lvert \rho_i(j) \rvert > \lvert s:w \rvert$ for all
  $j\geq 1$ and such that $\rho_i$
  starts at $c$ and ends in $(q_i, s:w:w_i)$. 
  Analogously, let $\hat \rho_1, \dots, \hat \rho_{m-1}$ be pairwise
  distinct runs such 
  that each $\hat \rho_i$ starts at $\hat c$ and ends in $(q_i, \hat s:\hat
  w:\hat w_i)$ and 
  $\lvert \hat \rho_i(j) \rvert > \lvert \hat s:\hat w \rvert$ for all
  $j\geq 1$. 
  If 
  \begin{align*}
  &w_i\wordequiv{n-1}{z} \hat w_i \text{ for all }1\leq i
  \leq m-1,   
  \end{align*}
  then there is some run $\hat \rho_m$ from $\hat c$ to $(q_0,\hat
  s:\hat w:\hat w_m)$ such
  that 
  \begin{align*}
    & w_m \wordequiv{n-1}{z} \hat w_m, \\ 
    &\hat \rho_m \text{ is distinct from each } \hat\rho_i \text{ for
    } 1\leq i < 
    m \text{, and}\\
    &\lvert \hat w_m\rvert \leq 
    \BoundHeightOnestepConstructionSimultanious(m,\lvert \hat w \rvert, n, z).
  \end{align*}
\end{proposition}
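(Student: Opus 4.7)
My plan has two stages. First, I would construct some run $\tilde\rho_m$ from $\hat c$ ending in $(q_0, \hat s : \hat w : \tilde w_m)$ with $w_m \wordequiv{n-1}{z} \tilde w_m$. Second, I would shrink $\tilde w_m$ to satisfy the length bound while ensuring distinctness from the given $\hat\rho_1, \dots, \hat\rho_{m-1}$.

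For existence, the strategy is to replicate $\rho_m$'s Carayol decomposition (Lemma \ref{Lemma:Carayol05Milestones}) on the $\hat c$ side. Above $s:w$, the run $\rho_m$ factors through the generalised milestones of $s:w:w_m$ lying above $s:w$---the initial $\Clone{2}$, a monotone $\Pop{1}$-sequence descending from $w$ to $w \sqcap w_m$, and a monotone $\Push{\sigma}$-sequence ascending to $w_m$, all interleaved with loops of the milestones encountered. Since $w \wordequiv{n}{z} \hat w$, the expanded word model of $\hat w$ colours each prefix by its $\wordequiv{n-1}{z}$-type; Duplicator's winning strategy lets me select a prefix $\hat v$ of $\hat w$ with $\hat v \wordequiv{n-1}{z} w \sqcap w_m$. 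Right congruence (Lemma \ref{LemmaTypeConcatenation}) then gives $w_m \wordequiv{n-1}{z} \tilde w_m$ where $\tilde w_m$ is obtained by appending the suffix of $w_m$ beyond $w \sqcap w_m$ on top of $\hat v$. The actual run $\tilde\rho_m$ is assembled from the analogous milestone sequence of $\hat s : \hat w : \tilde w_m$ above $\hat s : \hat w$, using that matching loop/return counts exist at each milestone because $\LoopFunc{z}$ and $\ReturnFunc{z}$ depend only on the $\wordequiv{0}{z}$-type of the topmost word (Proposition \ref{Prop:InductiveComputability}), a relation refined by $\wordequiv{n-1}{z}$.

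For shortening I follow the template of step 1 of the proof of Lemma \ref{Lem:RunsendInShorStacks}. Set $K := \lvert Q \rvert \cdot \lvert \nicefrac{\Sigma^*}{\wordequiv{n}{z}} \rvert$. Suppose $\lvert \tilde w_m \rvert > 1 + \lvert \hat w \rvert + mK$. For each height $\lvert \hat w \rvert < l \leq \lvert \tilde w_m \rvert$ I consider the maximal milestone of the final stack whose top has length $l$, together with the state of $\tilde\rho_m$ at its last visit. Pigeonhole over the $mK+2$ such heights and $K$ possible (state, $\wordequiv{n}{z}$-type)-pairs yields heights $l_0 < l_1 < \cdots < l_m$ sharing a common pair. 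For each $1 \leq j \leq m$ the milestone replacement of Definition \ref{Def:StackreplacementinMilestones} substituting the milestone at $l_j$ by the one at $l_0$ applies (since $\wordequiv{n}{z}$ refines $\LoopFunc{1}$ and $\ReturnFunc{1}$), yielding a strictly shorter run with final top word still $\wordequiv{n-1}{z}$-equivalent to $w_m$ by right congruence. The $m$ resulting candidate runs have pairwise distinct top-word lengths, hence are pairwise distinct as runs; at most $m-1$ of them can coincide with the forbidden $\hat\rho_i$, so at least one is valid. Iterating pushes the length below $\BoundHeightOnestepConstructionSimultanious(m, \lvert \hat w \rvert, n, z)$.

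The main obstacle is the existence stage: the structure $\Lin{n}{z}{z}(w)$ only directly describes $w$ and its prefixes, whereas $\rho_m$ builds an entirely new top word $w_m$. Bridging this gap relies on the insight that the $\Push{\sigma}$ and $\Pop{1}$ transitions out of a stack depend only on its top letter and state---information captured already at level $0$---so the construction of the climbing phase on the $\hat c$ side reduces to the choice of a suitable starting prefix of $\hat w$, delivered by Duplicator's winning strategy in the game on $\Lin{n}{z}{z}(w)$ and $\Lin{n}{z}{z}(\hat w)$.
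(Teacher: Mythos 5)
Your two-stage architecture (transfer a run, then shrink it) and your shrinking argument closely mirror the paper's: the descent to $w\sqcap w_m$ is handled there by Lemma \ref{Lemma:TransferRunsToPrefixes}, the ascent and the pigeonhole shortening by Lemma \ref{Lem:BuildWordsShort}. The genuine gap is in the distinctness requirement. You transfer only $\rho_m$, obtaining a single run $\tilde\rho_m$, and your only mechanism for avoiding the forbidden runs $\hat\rho_1,\dots,\hat\rho_{m-1}$ is the family of $m$ pairwise distinct candidates produced by the shrinking step. That mechanism is unavailable precisely when $\lvert \tilde w_m\rvert \le 1+\lvert\hat w\rvert+mK$: then no shrinking is performed, you have exactly one candidate, and nothing prevents $\tilde\rho_m$ from coinciding with one of the $\hat\rho_i$ (already for $m=2$ with $q_1=q_2$ and $w_1\wordequiv{n-1}{z}w_2$, your construction may simply return the given $\hat\rho_1$). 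Producing a genuinely new run in that situation requires knowing that the $\hat c$-side admits at least $m$ pairwise distinct runs of the required shape, and that is exactly what the threshold-$z$ counting with $z>m$ is for. The paper decomposes each $\rho_i$ as $\rho_i^0\circ\rho_i^1\circ\rho_i^2$ (clone, descent to the last visit of $w\sqcap w_i$, ascent), transfers each part multiplicity-preservingly using the $S^j_{q,q'}$ and high-loop counts, proves the composed transfer is injective, and then picks one of the $m$ resulting pairwise distinct runs that avoids the $m-1$ forbidden ones. Your proof never uses $z>m$ for this purpose, which signals that the counting content of the hypothesis has been lost.

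A secondary, repairable imprecision in your existence stage: you assemble the descending phase from local loop counts at the intermediate prefixes, but the existence of a descent from $(q',\hat w)$ to $(q'',\hat v)$ with prescribed endpoint states is not a product of local loop existences --- one needs a consistent chain of intermediate states. This is exactly what the predicates $S^j_{q,q'}$ (number of runs from the top of the word down to each prefix) encode, and Duplicator's strategy preserves them when selecting $\hat v$; so the step goes through, but only if you invoke those predicates explicitly rather than the milestone/loop decomposition alone.
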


We prepare the proof of this proposition with the following lemmas.

\begin{lemma} \label{Lemma:TransferRunsToPrefixes}
  Let $z,m,n\in\N$ such that $z\geq 2$ and $z > m$. 
  Let $w,w'$ be words and let
  $\rho_1, \rho_2, \dots, \rho_m$ be pairwise distinct runs such that
  $\rho_i$ starts in $(q_i,w)$ and ends in $(\hat q_i,v_i)$ for some
  prefix $v_i\leq w$. 
  If $ w \wordequiv{n+1}{z} w'$, then there are 
  prefixes $v_1', v_2', \ldots, v_m'$ of $w'$ such that $v_i \wordequiv{n}{z}
  v'_i$ for all $1\leq i \leq m$ and there are
  pairwise distinct runs 
  $\rho_1', \rho_2', \dots, \rho_m'$ such that $\rho'_i$ 
  starts in $(q_i,w')$, ends in $(\hat q_i,v'_i)$.

  Furthermore,  $v_i=w$ if and only if 
  $v'_i = w'$ and  
  $v_i<w$  implies that there is a letter $a_i$ and words $u_i,
  u_i'$ such that $w=v_i a_i u_i$ and $w'=v_i' a_i u_i'$. 
\end{lemma}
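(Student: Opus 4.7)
The proof is an Ehrenfeucht--\Fraisse game argument on the expanded word models $\Lin{n+1}{z}{z}(w)$ and $\Lin{n+1}{z}{z}(w')$. By hypothesis $w \wordequiv{n+1}{z} w'$, so Duplicator has a winning strategy in the $z$-round EF game on these two structures. Recall that position $i$ of the domain corresponds to the prefix $w_{-i}$; in particular $v_j$ corresponds to the position $i_j := \lvert w \rvert - \lvert v_j \rvert$, and $v_j = w$ iff $i_j = 0$.

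The plan is to have Spoiler play, in $m$ successive rounds, the positions $i_1, \ldots, i_m$; Duplicator's winning strategy yields answers $i_1', \ldots, i_m'$. Set $v_j' := w'_{-i_j'}$. After these $m$ rounds the residual position is still winning for Duplicator for the remaining $z-m$ rounds, so the parameterised structures $(\Lin{n+1}{z}{z}(w), i_1, \ldots, i_m)$ and $(\Lin{n+1}{z}{z}(w'), i_1', \ldots, i_m')$ are $\FO{z-m}$-equivalent; since $z > m$, they are in particular $\FO{1}$-equivalent.

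From atomic-type preservation one immediately reads off: equality of positions ($v_j = v_k \Leftrightarrow v_j' = v_k'$); agreement of the $T$-predicates, which colour each position by the $\wordequiv{n}{z}$-type of the corresponding prefix, yielding $v_j \wordequiv{n}{z} v_j'$; and agreement of the predicates $S^k_{q_j, \hat q_j}$, which encode (up to threshold $z$) the number of runs from $(q_j, w)$ to $(\hat q_j, v_j)$, yielding matching counts in $w'$. From the $\FO{1}$-equivalence of the parameterised structures one additionally transfers the boundary condition $v_j = w \Leftrightarrow v_j' = w'$ via the formula $\neg \exists y\, \mathrm{succ}(y, x_j)$, and for $v_j < w$ the ``letter immediately after $v_j$'' information via the single-quantifier formula $\exists y\, (\mathrm{succ}(y, x_j) \wedge P_{a_j}(y))$; this supplies the decomposition $w' = v_j' a_j u_j'$ with the same letter $a_j$ as in $w$. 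Finally, to produce $m$ pairwise distinct runs $\rho_1', \ldots, \rho_m'$, partition $\rho_1, \ldots, \rho_m$ by their triple $(q_j, \hat q_j, v_j)$: since $z > m$ the relevant counts lie strictly below threshold, so the exact counts agree and one may choose, in each class, the required number of pairwise distinct runs in $w'$ with matching triple $(q_j, \hat q_j, v_j')$.

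\textbf{Main obstacle.} The delicate point is to obtain the ``letter immediately after $v_j$'' condition without having Spoiler explicitly play all the predecessors $i_j - 1$ (which would consume up to $2m$ Spoiler moves, exceeding the budget when $z$ is only slightly larger than $m$). The key trick is that after $m$ played rounds at least one round of slack remains, giving $\FO{1}$-equivalence of the parameterised structures; the letter information is then carried for free by the single-quantifier formula $\exists y\, (\mathrm{succ}(y, x_j) \wedge P_{a_j}(y))$, while the atomic $T$- and $S$-predicates already encode the $\wordequiv{n}{z}$-type and the run-counts. This is precisely why the hypothesis $z > m$ suffices.
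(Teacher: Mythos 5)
Your proposal is correct and follows essentially the same route as the paper's proof: an Ehrenfeucht--\Fraisse game on $\Lin{n+1}{z}{z}(w)$ and $\Lin{n+1}{z}{z}(w')$ in which Spoiler plays the positions of the $v_i$, with the slack $z>m$ spent on one extra round to transfer the neighbour/letter and boundary information, the $T$-predicates giving $v_i \wordequiv{n}{z} v_i'$, and the $S^k$-predicates giving the required multiplicities of runs. The only (harmless) imprecision is the phrase ``the exact counts agree'': since the true number of runs may exceed the threshold $z$, what the $S^k$-predicates actually transfer is that at least $k$ pairwise distinct runs exist on the $w'$ side for each class of size $k\leq m<z$, which is all that is needed.
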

\begin{proof}
  Without loss of generality, assume that $q_i=q_j=q$ and 
  $\hat q_i= \hat q_j = \hat q$ for all $1\leq
  i,j\leq m$. 
  Since $\Lin{n+1}{z}{z}(w) \simeq_z \Lin{n+1}{z}{z}(w')$, a winning
  strategy in the Ehrenfeucht-\Fraisse game induces words $v_1', v_2',
  \dots, v_m'$ such that 
  $(v_1, v_2, \dots, v_m) \mapsto (v_1', v_2', \dots, v_m')$ is a
  partial isomorphism.  
  Thus, $v_i'=v_j'$ iff $v_i=v_j$. 
  Since $z>m$, Duplicator can maintain this
  partial isomorphism for at least one more round of the game.
  Therefore, the labels of the direct neighbours of  $v_i'$ agree with
  the labels of the direct neighbours of $v_i$ which especially
  implies  $v_i'=w'$ iff $v_i=w$. 
  Furthermore, if $v_{i_1}=v_{i_2}=\dots=v_{i_k}$, then 
  $\rho_{i_1}, \rho_{i_2}, \dots, \rho_{i_k}$ witness that $v_{i_1}$
  is coloured by $S^k_{(q,\hat q)}$ in $\Lin{n+1}{z}{z}(w)$. 
  Hence, $v_{i_1}'$ is coloured by $S^k_{(q,\hat q)}$ in
  $\Lin{n+1}{z}{z}(w')$. Hence, there are $k$ pairwise distinct runs
  $\rho_{i_1}', \rho_{i_2}', \dots, \rho_{i_k}'$ from $(q,w')$ to
  $(\hat q, v_{i_k}')$.
  Since, $v_i$ and $v_i'$ are labelled by the same
  $\wordequiv{n}{z}$-type, the claim follows immediately. 
\end{proof}

This lemma provides the transfer of runs from some stack  $s:w$ to
stacks $s:v_i$ with $v_i\leq w$ to another starting stack $s':w'$ if 
$w$ and $w'$ are equivalent words. 
We still need to investigate runs in the other direction. 
We provide a transfer property for runs from some word $w$ to 
extensions $wv_1, wv_2, \dots, wv_m$.

\begin{lemma}  \label{Lem:BuildWordsShort}
  Let $z,m,n\in\N$ such that $z\geq 2$ and $z > m$. 
  Let $\rho_1, \rho_2, \ldots, \rho_m$ be pairwise distinct runs such that
  for each
  $1\leq i \leq m$ the run $\rho_i$ starts in $(q_i, w)$, ends in
  $(q_i,wv_i)$ and never visits 
  $w$ after its initial configuration. Furthermore, let $w'$ be some
  word such that 
  $w\wordequiv{n}{z} w'$.  
  There are words $v_1', v_2', \dots, v_m'$ such that
  $\lvert v_i' \rvert \leq 1+
  m\cdot \lvert Q \rvert \cdot \left\lvert
    \nicefrac{\Sigma^*}{\wordequiv{n}{z}}\right\rvert$, 
  the first letter of $v_i$ and $v_i'$ agree (or
  $v_i=v_i'=\varepsilon$), 
  $wv_i \wordequiv{n}{z} w'v_i'$, and there are pairwise
  distinct runs $\rho_1', \rho_2' \dots, \rho_m'$ such that each run
  $\rho_i'$ starts in $w'$, ends in $w'v_i'$, and never visits $w'$
  after its initial configuration.  
\end{lemma}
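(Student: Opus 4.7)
The plan has two stages: first, for each $i$, shorten $\rho_i$ to a run $\tilde \rho_i$ from $(q_i, w)$ to $(q_i, w\tilde v_i)$ satisfying $\lvert \tilde v_i \rvert \leq 1 + m\cdot \lvert Q\rvert\cdot \lvert \nicefrac{\Sigma^*}{\wordequiv{n}{z}}\rvert$ and $w\tilde v_i \wordequiv{n}{z} wv_i$; then transport each $\tilde \rho_i$ to a run $\rho_i'$ starting at $(q_i, w')$ by replaying the same sequence of transitions, invoking the right congruence of $\wordequiv{n}{z}$ from Lemma \ref{LemmaTypeConcatenation}. The case $v_i = \varepsilon$ (which forces $\rho_i$ to be a single configuration) is trivial: take $v_i' := \varepsilon$ and $\rho_i'$ single-configuration.

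For the shortening, fix $\rho_i$ with $v_i \neq \varepsilon$. For $0 \leq h \leq \lvert v_i \rvert$, let $j_h$ be the last position of $\rho_i$ where the stack has length $\lvert w \rvert + h$. Because $j_h$ is the last such position and $\rho_i$ ends in $wv_i$, the stack at $j_h$ must equal the prefix of $wv_i$ of length $\lvert w\rvert + h$; write $p_h$ for its state, so $p_0 = p_{\lvert v_i\rvert} = q_i$. Strictly after $j_{h'}$ the stack stays longer than $\lvert w\rvert + h'$, so $\rho_i{\restriction}_{[j_{h'},\length(\rho_i)]}$ is a high run whose transitions depend only on states and top symbols, not on the substack below. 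Setting $\lvert X\rvert := \lvert Q\rvert \cdot \lvert \nicefrac{\Sigma^*}{\wordequiv{n}{z}}\rvert$, the pairs $(p_h, [\text{stack at }j_h]_{\wordequiv{n}{z}})$ live in a set of size $\lvert X\rvert$. Hence if $\lvert v_i\rvert > 1 + m\lvert X\rvert$, pigeonhole restricted to $h \geq 1$ (so that splicing preserves the first letter of $v_i$) yields indices $1 \leq h_a < h_b \leq \lvert v_i\rvert$ with identical pairs. I splice $\rho_i$ by concatenating $\rho_i{\restriction}_{[0,j_{h_a}]}$ with the transitions of $\rho_i{\restriction}_{[j_{h_b},\length(\rho_i)]}$ replayed from $(p_{h_a}, \text{stack at }j_{h_a})$. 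The matching class forces top symbols to agree (as $z \geq 2$) and the tail is a high run, so the replay is valid; the resulting $\tilde v_i$ is $v_i$ with the segment between positions $h_a$ and $h_b$ excised, and right congruence applied to the equivalence of the two $j_h$-stacks yields $w\tilde v_i \wordequiv{n}{z} wv_i$. Iterating brings $\lvert \tilde v_i\rvert$ within the claimed bound.

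For the translation, $\tilde \rho_i$ is itself a high run above $w$, so its transitions are insensitive to any substack below $w$. The hypothesis $w \wordequiv{n}{z} w'$ (with $z \geq 2$) forces $\TOP{1}(w) = \TOP{1}(w')$, hence the identical transitions replayed from $(q_i, w')$ produce a valid run $\rho_i'$ from $(q_i, w')$ to $(q_i, w'\tilde v_i)$ that never revisits $w'$. Setting $v_i' := \tilde v_i$, right congruence delivers $w'v_i' \wordequiv{n}{z} w\tilde v_i \wordequiv{n}{z} wv_i$, and the first-letter agreement for $v_i, v_i'$ is inherited from the $h_a \geq 1$ constraint used in splicing.

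The main obstacle is guaranteeing pairwise distinctness of $\rho_1',\ldots,\rho_m'$: naive shortening could merge runs that originally differ only inside the excised segment. This is precisely where the factor $m$ in the bound matters. When $\lvert v_i\rvert > 1 + m\lvert X\rvert$, some class receives more than $m+1$ trajectory points, giving enough freedom to select splice indices for each $\rho_i$ so that at least one transition distinguishing $\rho_i$ from every other $\rho_j$ survives in $\tilde \rho_i$; the hypothesis $z > m$ is what guarantees the counting threshold in $\wordequiv{n}{z}$ is fine enough to track $m$ simultaneous runs. This is in the spirit of the iterative local-replacement argument of Corollary \ref{Cor:GlobalBoundRun2}. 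Because the $w\to w'$ translation is a faithful replay of transitions, distinctness of the $\tilde \rho_i$ transfers directly to distinctness of the $\rho_i'$.
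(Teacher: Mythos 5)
There is a genuine gap, and it sits at the heart of both your shortening and your transfer steps: you treat the runs $\rho_i$ as if they were level-$1$ runs whose transitions ``depend only on states and top symbols, not on the substack below.'' In this lemma the configurations $(q_i,w)$ stand for $2$-stacks $s:w$ (cf.\ the convention after Definition \ref{Def:ReturnFunc}), and a run from $s:w$ to $s:wv_i$ that never revisits $s:w$ may still perform $\Clone{2}$ operations. A clone copies the entire current topmost word --- including all of $w$ --- into a new stack entry, and the run may then pop inside that copy and branch on letters of $w$. Consequently the transition sequence of $\rho_i$ is \emph{not} oblivious to the content of $w$, and ``replaying the identical transitions from $(q_i,w')$'' is simply not a valid run in general. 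The same objection defeats your splicing step: the tail $\rho_i{\restriction}_{[j_{h_b},\length(\rho_i)]}$ is not a syntactic object you can re-execute from the stack at $j_{h_a}$, because its excursions above width $\lvert s\rvert+2$ carry and inspect the excised letters. This is exactly why the paper does not replay transitions but instead decomposes each $\rho_i$ into single push steps interleaved with \emph{high loops}, and transfers each high loop of $wv$ to a high loop of $w'v$ using the counting identity $\HighLoopFunc{z}(wv)=\HighLoopFunc{z}(w'v)$, which follows from $w\wordequiv{n}{z}w'$ via Proposition \ref{Prop:InductiveComputability}; the threshold $z>m$ is what lets one pick $m$ pairwise distinct such high loops on the $w'$ side, not (as you use it) a refinement of the pigeonhole on splice positions.

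A secondary, smaller issue: your distinctness bookkeeping is asserted rather than argued. Saying that a class receiving more than $m+1$ trajectory points gives ``enough freedom to select splice indices \dots so that at least one distinguishing transition survives'' is not a proof; the paper instead fixes $m+1$ equivalent positions $u_0<\dots<u_m$ on a \emph{single} overlong run, observes that each of the resulting spliced runs $\rho_1^{i,j}$ is a legitimate shorter run ending in a $\wordequiv{n}{z}$-equivalent word (here the splice is done via the prefix-replacement machinery of Definition \ref{Def:StackreplacementinMilestones}, which is where equality of loop and return counts is actually needed), and then argues by counting that at least one choice of $(i,j)$ avoids collision with the other $m-1$ already-constructed runs. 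Your first-letter preservation and the order of the two stages (shorten first versus transfer first) are not the problem; the missing ingredient is a mechanism --- high-loop counting plus prefix replacement --- that replaces verbatim replay for runs that can clone.
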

\begin{proof}
  For each run $\rho_i$, there is a decomposition
  $\rho_i= \pi_n \circ \lambda_n \circ
  \dots \circ \pi_2 \circ \lambda_2 \circ \pi_1 \circ\lambda_1$ where
  the $\lambda_i$ are high loops and each $\pi_i$ is a run of length
  $1$ that performs a push operation. 
  Since the
  $\wordequiv{n}{z}$ type of a word $w$ determines $\ReturnFunc{z}(w)$
  and $\HighLoopFunc{z}(w)$, we conclude with Proposition
  \ref{Prop:InductiveComputability}  
  that $\HighLoopFunc{z}(wv)=\HighLoopFunc{z}(w'v)$ for all words
  $v\in\Sigma^*$. 
  Thus, 
  there is a run $\rho_i'= \pi_n \circ \lambda_n' \circ \dots \circ
  \pi_2 \circ \lambda_2' \circ \pi_1 \circ \lambda_1'$ where the
  $\lambda_i'$ are high loops such that the runs $\rho_1',
  \rho_2', \dots, \rho_m'$ are pairwise distinct. 
  Note that $\rho_i'$ ends with stack $w'v_i$ and
  $w'v_i \wordequiv{n}{z} wv_i$ because $\wordequiv{n}{z}$ is a right
  congruence. 
  
  If $\lvert v_i \rvert \leq 1+
  m\cdot \lvert Q \rvert \cdot \left\lvert
    \nicefrac{\Sigma^*}{\wordequiv{n}{z}}\right\rvert$
  for all $1\leq i \leq m$ we are done. 
  Otherwise we continue with the following construction.
  Without loss of generality assume that 
  $\lvert v_1\rvert >
  1+
  m\cdot \lvert Q \rvert \cdot \left\lvert
    \nicefrac{\Sigma^*}{\wordequiv{n}{z}}\right\rvert$. Then we find
  nonempty prefixes 
  $u_0 < u_1 < u_2 < \dots < u_m$ such that for all $0\leq i < j \leq
  m$ 
  \begin{enumerate}
  \item 
    $\rho_1'$ passes $w'u_i$ and $w'u_j$ in the same state
    $\hat q\in Q$ for the last time,
  \item $w'u_i \wordequiv{n}{z} wu_j$, and
  \item $1\leq \lvert u_i \rvert < \lvert u_j \rvert \leq 
    1+
    m\cdot \lvert Q \rvert \cdot \left\lvert
    \nicefrac{\Sigma^*}{\wordequiv{n}{z}}\right\rvert$.
  \end{enumerate}
  Let $n_i$ be the maximal position in $\rho_1'$ such that
  $\rho_1'(n_i)=(\hat q, w'u_i)$. 
  For each $0\leq i < j \leq m$, we can define the run
  $\rho_1^{i,j}:=\rho_1'{\restriction}_{[0,n_i]} \circ
  \rho_1'{\restriction}_{[n_j, \length(\hat\rho_1)]}[wu_j/wu_i]$ which ends in
  the stack $wv_1[wu_j/wu_i]$. Since $\wordequiv{n}{z}$ is a right
  congruence, this stack is equivalent to $wv_1$. Furthermore, it is
  shorter than $wv_1$. By pigeonhole principle, there are 
  $0\leq i < j \leq m$ such that $\rho_1^{i,j}$ is distinct from
  $\rho'_2, \rho'_3, \dots, \rho'_m$. Now, we replace $\rho_1'$ by
  $\rho_1^{i,j}$. 
  
  Repetition of this argument yields the claim. 
\end{proof}

For the proof of Proposition \ref{Prop:ConstructOneStep}, we now
compose the previous lemmas. 
Recall that the proposition says the following: given
$m$ runs $\rho_1, \dots, \rho_m$ starting in some stack $s$ that only
add one word to $s$
and given a  stack $\hat s$ whose topmost word is
$\wordequiv{n}{z}$-equivalent to the 
word on top of $s$, we can transfer the runs $\rho_1, \dots,
\rho_m$ to runs $\rho_1', \dots, \rho_m'$ that start at $\hat s$ such
that $\rho_i'$ extends $\hat s$ by one word 
that is $\wordequiv{n-1}{z}$-equivalent to the word 
created by $\rho_i$.

\begin{proof}[of Proposition \ref{Prop:ConstructOneStep}]
  Let $\rho_1, \rho_2, \dots, \rho_m$ and $\hat\rho_1, \hat\rho_2,
  \dots, \hat\rho_{m-1}$ be runs as required in the proposition. 
  Assume that $w_i \wordequiv{n-1}{z} w_j$ 
  and that all runs $\rho_i$ end in the same state, i.e., $q_i=q_j$,
  for all $1\leq i \leq j \leq m$. Later we deal with the other cases.
 
  We decompose each run $\rho_i$ as follows. 
  Let $w_i' :=w\sqcap w_i$. Then 
  $\rho_i=\rho^0_i\circ \rho^1_i\circ\rho^2_i$ where  
  $\rho^0_i$ is a run of length $1$ that performs exactly one $\Clone{2}$
  operation, and $\rho^1_i$ is the run from $s:w:w$ to the last
  occurrence of $s:w:w_i'$.  

  Due to $\TOP{1}(w)=\TOP{1}(\hat w)$, there are runs ${\hat\rho^0}_i$
  from $\hat c$ to
  $\hat s:\hat w:\hat w$ performing only one clone operation and ending in the
  same state as $\rho^0_i$. 

  By Lemma \ref{Lemma:TransferRunsToPrefixes}, we can transfer
  the $\rho^1_i$ to runs ${\hat\rho^1}_i$ starting at $(q,\hat s:\hat
  w:\hat w)$ and ending
  at $\hat s:\hat w:\hat u_i$ such that $\hat u_i\leq \hat w$ and
  $w\sqcap w_i \wordequiv{n-1}{z} \hat u_i$. 
  The lemma allows us to enforce that ${\hat \rho}^1_i={\hat\rho}^1_j$ iff
  $\rho^1_i=\rho^1_j$. 
  
  Let $v_i$ be the word such that $w_i= (w\sqcap w_i) \circ v_i$.
  We use Lemma
  \ref{Lem:BuildWordsShort} 
  and find words 
  $\hat v_1, \dots, \hat v_m$  and runs
  ${\hat\rho^2}_1, \dots,
  {\hat\rho^2}_m$ such that 
  ${\hat\rho^2}_i$ is a run from \mbox{$\hat s:\hat w:\hat u_i$} to
  \mbox{$(q_i,\hat  
    s:\hat w:\hat u_i\hat v_i)$} which visits $\hat s: \hat w: \hat u_i$
  only in its initial configuration such that $\hat u_i \hat v_i
  \wordequiv{n-1}{z} w_i$ and such that
  $\hat u_i\hat v_i$ has length bounded by 
  \begin{align*}
    \BoundHeightOnestepConstructionSimultanious(m,\lvert \hat w\rvert,
    n, z) = \lvert \hat w \rvert +
    1+ m\cdot \lvert Q \rvert \cdot \left\lvert
    \nicefrac{\Sigma^*}{\wordequiv{n}{z}}\right\rvert.    
  \end{align*}
  Furthermore,  
  $\hat\rho^2_i$ and $\hat\rho^2_j$ coincide if and only
  if $\rho^2_i$ and $\rho^2_j$ coincide.
  we claim that the runs
  \begin{align*}
    {\hat\rho^0}_1 \circ {\hat\rho^1}_1 \circ {\hat\rho^2}_1,
    {\hat\rho^0}_2 \circ {\hat\rho^1}_2 \circ {\hat\rho^2}_2, \dots, 
    {\hat\rho^0}_m \circ {\hat\rho^1}_m \circ {\hat \rho^2}_m    
  \end{align*}
  are
  pairwise distinct. 
  First of all we show that $\hat u_i = \hat w \sqcap \hat u_i \hat v_i$:
   Due to the last part of Lemma
  \ref{Lemma:TransferRunsToPrefixes}, there is a letter 
  $a_i$ such that $w=u_i a_i x_i$ for some word $x_i$ and 
  $\hat w = \hat u_i a_i \hat x_i$ for some word $\hat x_i$. 
  Furthermore, $v_i$ and $\hat v_i$ start with the same letter. 
  Due to  $u_i= w\sqcap u_i v_i$, this letter cannot be $a_i$ whence
  $\hat u_i = \hat w \sqcap \hat u_i \hat v_i$.
  
  Heading for a contradiction, assume that  
  ${\hat\rho^0}_i \circ {\hat\rho^1}_i \circ {\hat \rho^2}_i 
  = {\hat\rho^0}_j \circ {\hat\rho^1}_j \circ {\hat \rho^2}_j$.
  Since $\hat\rho^0_i$ and $\hat\rho^0_j$ have both length $1$, this
  implies that ${\hat\rho^0}_i={\hat\rho^0}_j$. 
  Furthermore, we have seen that ${\hat\rho^1}_i$ ends in the last
  occurrence of the greatest
  common prefix $\hat w \sqcap \hat u_i \hat v_i = \hat u_i$. Hence,
  the two runs can only 
  coincide if $\hat u_i = \hat u_j$. 
  But then ${\hat\rho^1}_i = {\hat\rho^1}_j$ because both parts end in
  the last occurrence of a stack with topmost word $\hat u_i$. 
  But this would also imply that ${\hat\rho^2}_i = {\hat\rho^2}_j$. 
  By construction of the three parts, this would imply that
  $\rho^0_i=\rho^0_j$, $\rho^1_i=\rho^1_j$,
  and $\rho^2_i=\rho^2_j$. But this contradicts the
  assumption that $\rho^0_i\circ \rho^1_i\circ\rho^2_i = \rho_i\neq 
  \rho_j = \rho^0_j\circ \rho^1_j\circ\rho^2_j$. 

  Since the runs are all distinct, there is some $j$ such that
  ${\hat\rho^0}_j\circ{\hat\rho^1}_j\circ{\hat\rho^2}_j$ does not
  coincide with any 
  of the $\hat\rho_i$ for \mbox{$1\leq i \leq m-1$}. 
  Note that $\hat\rho_m:={\hat\rho^0}_j \circ {\hat\rho^1}_j \circ
  {\hat\rho^2}_j$ satisfies the claim of the proposition.

  Now, we come to the case that the runs end in configurations with
  different states or different $\wordequiv{n-1}{z}$-types of their
  topmost words. 
  In this case, we just concentrate on those $\rho_i$ which end in the
  same state as 
  $\rho_m$ and with a topmost word of the same
  type as $w_m$. This is sufficient because some run $\rho$ can only
  coincide  with $\hat\rho_i$ if both runs end up in the same state
  and in stacks whose
  topmost words have the same type.  
\end{proof}

\section{Dynamic Small-Witness Property} 
\label{sec:EquivalenceonTuples}
In this section, we define a family of  equivalence relations
on  tuples  in $2$-NPT. 
The equivalence class of a tuple $\rho_1, \dots, \rho_m$ with respect to
one of these relations is the isomorphism type of the 
substructure induced by the relevant $l$-ancestors of $\rho_1, \dots,
\rho_m$ extended by some information for
preserving this isomorphism during an Ehrenfeucht-\Fraisse
game. Recall that such a game 
ends in a winning position for Duplicator if
the relevant $1$-ancestors of the elements that were chosen in the two
structures are isomorphic (cf. Lemma \ref{LemmaRelAnclocalIso}).

We then show how to construct small representatives for each
equivalence class. 
As explained in Section \ref{sec:EFGame}, this result can be turned
into an \FO{} model checking algorithm on the class of $2$-NPT. 

\begin{definition} \label{def:RAequivalence}
  Let $\bar \rho=(\rho_1, \rho_2, \dots, \rho_m)$ be runs of a 
  $2$-PS $\mathcal{N}$ and let 
  $\mathfrak{N}:=\HONPT(\mathcal{N})$.
  Let $l, n_1,n_2,z\in \N$. We define the following relations
  on $\RelAnc{l}{\bar\rho}$. 
  \begin{enumerate}
  \item For $k\leq l$ and $\rho\in \bar\rho$, let $P^k_\rho:=\{\pi \in
    \RelAnc{l}{\bar\rho}: \pi \in \RelAnc{k}{\rho}\}$.
  \item 
    Let $\stackequivTyp{n_2}{z}{n_1}$ be the function 
    that maps a run $\pi$ to the $\stackequiv{n_2}{z}{n_1}$-equivalence
    class of the last stack of $\pi$. 
  \end{enumerate}
  We write $\AncestorClass{l}{n_1}{n_2}{z}(\bar\rho)$ for the following
  expansion of the relevant ancestors of $\bar\rho$:
  \begin{align*}
    \AncestorClass{l}{n_1}{n_2}{z}(\bar\rho):=
    (\mathfrak{N}{\restriction}_{\RelAnc{l}{\bar\rho}},  
    (\trans{\delta})_{\delta\in\Delta},
    \jumpedge,
    \plusedge,
    \stackequivTyp{n_2}{z}{n_1},
    (P^k_{\rho_j})_{k\leq l, 1\leq j \leq m}).
  \end{align*}
  For tuples of runs $\bar\rho=(\rho_1, \dots, \rho_m)$ and 
  $\bar\rho'=(\rho_1', \dots, \rho_m')$ we set 
  $\bar\rho \RelAncequiv{l}{n_2}{z}{n_1} \bar\rho'$ if
  \begin{align*}
    \AncestorClass{l}{n_1}{n_2}{z}(\bar\rho) \simeq
    \AncestorClass{l}{n_1}{n_2}{z}(\bar\rho').
  \end{align*}
\end{definition}
\begin{remark}
  \begin{itemize}
  \item   If $\bar\rho \RelAncequiv{l}{n_2}{z}{n_1} \bar\rho'$ then there is a
    unique isomorphism $\varphi:
    \AncestorClass{l}{n_1}{n_2}{z}(\bar\rho) \simeq 
    \AncestorClass{l}{n_1}{n_2}{z}(\bar\rho')$ witnessing this
    equivalence:
    due to the predicate $P^0_j$, $\rho_j$ is mapped to
    $\rho_j'$ for all $1\leq j \leq m$. Due to the predicate $P^l_j$, the
    relevant ancestors of $\rho_j$ are mapped to the relevant ancestors
    of $\rho'_j$. Finally, $\varphi$ must preserve the order of the
    relevant ancestors of $\rho_j$ because they form a chain with
    respect to $\trans{} \cup \plusedge$ 
    (cf. Proposition \ref{Prop:NextRelAnc}).
  \item 
    Due to Lemma \ref{LemmaRelAnclocalIso}, it is clear that
    $\bar\rho \RelAncequiv{l}{n_2}{z}{n_1} \bar\rho'$ implies that there
    is a partial 
    isomorphism mapping $\rho_i\mapsto \rho_i'$ for all $1\leq i \leq m$. 
  \end{itemize}
\end{remark}

Since equivalent relevant ancestors induce partial isomorphisms, a
strategy that preserves the equivalence between relevant ancestors is
winning for Duplicator in the Ehrenfeucht-\Fraisse-game.

Given a $2$-PS $\mathcal{N}$, set $\mathfrak{N}:=NPT(\mathcal{N})$. 
We show that there is a strategy in the
Ehrenfeucht-\Fraisse game on 
$\mathfrak{N},\bar\rho$ and $\mathfrak{N},\bar\rho'$ 
in which Duplicator can always
choose small elements compared to the size of the elements chosen so
far in the structure where he has to choose. Furthermore, this
strategy will 
preserve equivalence of the relevant ancestors in the following
sense. 
Let 
$\bar\rho , \bar\rho'\subseteq 
\mathfrak{N}$ be the  $n$-tuples chosen in the previous
rounds of the game. Assume that Duplicator managed to maintain the
relevant ancestors of these tuples equivalent, i.e., it holds that 
$\bar\rho\RelAncequiv{l}{n}{z}{k} \bar\rho'$.
Now, Duplicator's strategy enforces that these tuples are extended by
runs $\pi$ and $\pi'$ satisfying the following.
There are numbers $k_i,l_i, n_i$ such that
$\bar\rho,\pi \RelAncequiv{l_i}{n_i}{z}{k_i} \bar\rho', \pi'$ and
furthermore, the size of the run chosen by Duplicator is small
compared to the elements chosen so far.
Before we state the exact claim, we define some functions that provide
bounds for Duplicator's choices. 

\begin{definition}\label{Def:BoundingFunctions}
  Let $\mathcal{N}$ be a $2$-PS and let $a,b\in\N$. 
  We define  the functions 
  \begin{align*}
    &\BoundHeight: \N^5 \to \N, 
    &&\BoundWidth: \N^5 \to \N, \text{ and }
    && \BoundRunLength: \N^5 \to \N
  \end{align*} 
  by induction on the first parameter.
  We set
  \begin{align*}
    \BoundHeight(0, x_2, x_3, x_4, x_5)=
    \BoundWidth(0, x_2, x_3, x_4, x_5)=    
    \BoundRunLength(0,x_2, x_3, x_4, x_5)=0\text{ for all } x_2, x_3,
    x_4, x_5\in \N.
  \end{align*}
  For the inductive step,  let
  $\bar x_{n+1}:=(n+1,z,l',n_1',n_2')\in\N^5$ be arbitrary.
  We set $l:=4l'+5$, $n_1:=n_1'+2(l'+1)+1$, $n_2:=n_2'+4^{l'+1}+1$,
  and $\bar x_n:=(n,z,l,n_1,n_2)$.
  We define auxiliary
  values $H_i^{\text{loc}}$ for $1\leq i \leq 4^{l'+1}$ and
  $H_i^{\text{glob}}$ for $1\leq i \leq n_1' + 4^{l'}$. 
  Recall that we introduced
  $\FuncBoundTopWord,
  \FuncBoundWidthWord$ and
  $\ConstBoundHeightWord $  in Definition 
  \ref{Def:BoundTopWordHeightandWidht}
  and $\BoundHeightOnestepConstructionSimultanious$  in
  Definition \ref{Def:BH1Definition}. Set
  \begin{align*}
    &H_1^{\text{loc}}:= 
    \BoundHeightOnestepConstructionSimultanious( n\cdot4^{4l'+3}, 
    \BoundHeight(\bar x_{n}), n_2-1, z), \\
    &H_{i+1}^{\text{loc}}:=
    \BoundHeightOnestepConstructionSimultanious(1, H_i^{\text{loc}},
    n_1-{i+1}, z), \\
    &H_1^{\text{glob}}:= \BoundHeight(\bar x_{n}) +
    \ConstBoundHeightWord + \FuncBoundTopWord(n_2'+n_1'+4^{l'+1}-1, z)
    ,\text{ and}\\
    &H_{i+1}^{\text{glob}}:=
    \BoundHeightOnestepConstructionSimultanious(1, H_i^{\text{glob}}, 
    n'_2+n'_1+4^{l'+1}-i, z).
  \end{align*}
  Now we set
  \begin{align*}
    &\BoundHeight(\bar x_{n+1}):=\max\left\{ H_{4^{l'+1}}^{\text{loc}},
      H_{n_1'+4^{l'}}^{\text{glob}}\right\},\\
    &\BoundWidth(\bar x_{n+1}) := \BoundWidth(\bar x_{n}) +
      \FuncBoundWidthWord(H_1^{\text{glob}})  + n_1' + 2(l' +1),
      \text{ and}\\ 
    &\BoundRunLength(\bar x_{n+1}):=
    \BoundRunLength(\bar x_{n}) + 
    (4^{l'+1}+1) \BoundHeight(\bar x_{n+1}) 
    \cdot \BoundWidth(\bar x_{n+1})
    \cdot 
    (1+\FuncBoundLoopLength{\mathcal{N}}{z}(\BoundHeight( \bar x_{n+1}))).
  \end{align*}
  where $\FuncBoundLoopLength{\mathcal{N}}{z}$ is the function from
  Proposition \ref{Prop:FuncBoundLoopLengthLemma} that bounds the
  length of short loops. 
\end{definition}
\begin{remark}
  Since   $\BoundHeightOnestepConstructionSimultanious$, 
  $\FuncBoundTopWord$,  
  $\FuncBoundWidthWord$, 
  $\ConstBoundHeightWord$, and 
  $\FuncBoundLoopLength{}{}$  depend computably on 
  $\mathcal{N}$,  the functions $\BoundHeight, \BoundWidth$ and
  $\BoundRunLength$ 
  also  depend computably on $\mathcal{N}$. 
\end{remark}

\begin{proposition} \label{Prop:2NPT-Strategy}
  Let $\mathcal{N}$ be a $2$-PS. Set $\mathfrak{N}:=\HONPT(\mathcal{N})$.  
  Let $n,z,n_1',n_2',l'\in\N$,  
  \mbox{$l := 4l'+5$}, $n_1 := n_1'+2(l'+1)+1,$ and
  $n_2 := n_2' + 4^{l'+1}+1$ such that $z\geq 2$ and $z> n\cdot 4^l$.
  Furthermore, let $\bar\rho$ and $\bar\rho'$ be $n$-tuples of runs of
  $\mathfrak{N}$ such that  
  \begin{enumerate}
  \item $\bar\rho \RelAncequiv{l}{n_2}{z}{n_1} \bar\rho'$, and
  \item $\length(\pi)\leq\BoundRunLength(n,z,l,n_1,n_2)$
    for all $\pi\in\RelAnc{l}{\bar\rho'}$,
  \item $\height(\pi) \leq  \BoundHeight(n,z,l,n_1,n_2)$ for all
    $\pi\in\RelAnc{l}{\bar\rho'}$, and
  \item $\lvert \pi \rvert \leq \BoundWidth(n,z,l,n_1,n_2)$ for all
    $\pi\in\RelAnc{l}{\bar\rho'}$. 
  \end{enumerate}
  For each $\rho\in \mathfrak{N}$ there is some $\rho'\in
  \mathfrak{N}$ such that 
  \begin{enumerate}
  \item $\bar\rho, \rho\RelAncequiv{l'}{n_2'}{z}{n_1'} \bar\rho', \rho'$,
  \item $\length(\pi)\leq\BoundRunLength(n+1,z,l',n_1',n_2')$
    for all $\pi\in\RelAnc{l'}{\bar\rho',\rho'}$,
  \item $\height(\pi) \leq  \BoundHeight(n+1,z,l',n_1',n_2')$ for all
    $\pi\in\RelAnc{l}{\bar\rho',\rho'}$, and
  \item $\lvert \pi \rvert \leq \BoundWidth(n+1,z,l',n_1',n_2')$ for all
    $\pi\in\RelAnc{l}{\bar\rho',\rho'}$. 
  \end{enumerate}
\end{proposition}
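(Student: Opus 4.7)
The strategy is to build $\rho'$ inductively along the chain $\RelAnc{l'}{\rho} = \{\pi_0 \prec \pi_1 \prec \dots \prec \pi_k = \rho\}$ of length $k \leq 4^{l'}$, where by Proposition~\ref{Prop:NextRelAnc} each step $\pi_i \to \pi_{i+1}$ is either a single transition $\trans{\delta}$ or a $\plusedge$-edge. The key dichotomy is whether $\RelAnc{l'+1}{\rho}$ touches $\RelAnc{l'+1}{\bar\rho}$. In the touching case, Corollary~\ref{CorTouch} together with the choice $l = 4l'+5$ yields a maximal index $j$ such that $\pi_0, \ldots, \pi_j$ all lie in $\RelAnc{l}{\bar\rho}$; for these I set $\pi_i' := \varphi(\pi_i)$, where $\varphi$ is the isomorphism witnessing $\bar\rho \RelAncequiv{l}{n_2}{z}{n_1} \bar\rho'$. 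In the disjoint case, Lemma~\ref{Lem:RunsendInShorStacks} produces a short $\pi_0'$ whose last stack is $\stackequiv{n_2' + 4^{l'+1}}{z}{n_1' + 4^{l'}}$-equivalent to that of $\pi_0$; the excess equivalence depth $n_1 - n_1' = 2(l'+1)+1$ and $n_2 - n_2' = 4^{l'+1}+1$ over the target values is precisely what makes this possible.

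The inductive extension from $\pi_i'$ to $\pi_{i+1}'$ splits by edge type. If $\pi_i \trans{\delta} \pi_{i+1}$, I apply the same transition on the $\bar\rho'$-side: this is legal because $\TOP{1}$-symbols agree, and by Proposition~\ref{Prop:CompatibilityStackOpTypeq} the required stack-equivalence degrades by at most one in the relevant parameter. If $\pi_i \plusedge \pi_{i+1}$ --- i.e.\ $\pi_{i+1}$ extends $\pi_i$ by a run that creates a fresh word on top without revisiting the base stack --- I apply the simultaneous version of Proposition~\ref{Prop:ConstructOneStep}: the runs $\hat\rho_1, \ldots, \hat\rho_{m-1}$ to be avoided are the already-constructed $\plusedge$-successors of $\pi_i'$ among $\RelAnc{l}{\bar\rho'}$ together with the new partial chain, their total number bounded by $n \cdot 4^{4l'+3}$. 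The proposition returns a run creating a word of length at most $\BoundHeightOnestepConstructionSimultanious$ applied to the current parameters; iterating this along the chain gives exactly the recurrences $H_i^{\mathrm{loc}}$ and $H_i^{\mathrm{glob}}$ of Definition~\ref{Def:BoundingFunctions}.

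Once the chain $\pi_0' \prec \dots \prec \pi_k' = \rho'$ has been assembled, one verifies --- using Lemma~\ref{LemmaMinRelAnc} and the width control enforced by the construction --- that $\RelAnc{l'}{\rho'}$ coincides with the constructed chain and that the extension $\varphi \cup (\pi_i \mapsto \pi_i')$ is an isomorphism of $\AncestorClass{l'}{n_1'}{n_2'}{z}(\bar\rho, \rho)$ onto $\AncestorClass{l'}{n_1'}{n_2'}{z}(\bar\rho', \rho')$. Preservation of the edge relations is immediate from the construction and Lemma~\ref{LemmaRelAnclocalIso}; preservation of the predicates $P^k_\rho$ follows from the maximality of $j$ together with Lemma~\ref{LemmaRelAncComposition}; and preservation of the stack-type labels follows from Propositions~\ref{Prop:CompatibilityStackOpTypeq} and~\ref{Prop:ConstructOneStep}. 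The length bound $\BoundRunLength(n+1, z, l', n_1', n_2')$ is finally obtained by applying Corollaries~\ref{Cor:GlobalBoundRun} and~\ref{Cor:GlobalBoundRun2} to each of the at most $4^{l'+1}+1$ segments of $\rho'$ between consecutive elements of $\RelAnc{l'}{\bar\rho', \rho'}$, shrinking each one to length at most $2 \cdot \BoundHeight \cdot \BoundWidth \cdot (1 + \FuncBoundLoopLength{\mathcal{N}}{z}(\BoundHeight))$ without disturbing distinctness from the existing runs.

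The main obstacle is the delicate bookkeeping of the equivalence parameters. Each of the up to $4^{l'+1}$ successive $\plusedge$-extensions blows up the height by a factor drawn from $\BoundHeightOnestepConstructionSimultanious$, while each implicit $\Pop{1}$ along the chain consumes one unit of the $n_2$-equivalence depth and each $\Pop{2}$ consumes one unit of the $n_1$-width parameter. The recurrences in Definition~\ref{Def:BoundingFunctions} are precisely calibrated so that these budgets run out in parallel and the surviving parameters land exactly on $(n_1', n_2', l')$, with enough slack to absorb the error terms from Corollaries~\ref{CorTouch} and~\ref{CorDistRelAnc}.
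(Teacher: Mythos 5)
Your overall architecture matches the paper's: decompose $\RelAnc{l'}{\rho}$ into a chain via Proposition~\ref{Prop:NextRelAnc}, split into a touching and a disjoint case, transfer the chain step by step using Propositions~\ref{Prop:CompatibilityStackOpTypeq} and~\ref{Prop:ConstructOneStep}, certify the isomorphism via Lemma~\ref{LemmaConstructedRelAncEquiv}-style reasoning, and shrink lengths at the end with Corollaries~\ref{Cor:GlobalBoundRun} and~\ref{Cor:GlobalBoundRun2}. The touching case is handled essentially as in the paper's Lemma~\ref{LemmaLocalStep}.

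There is, however, a genuine gap in your disjoint case. You take $\RelAnc{l'+1}{\rho}\cap\RelAnc{l'+1}{\bar\rho}=\emptyset$, build a fresh short $\pi_0'$ from Lemma~\ref{Lem:RunsendInShorStacks}, and grow the chain from there --- but nothing in your construction guarantees that the resulting chain is disjoint from $\RelAnc{l'+1}{\bar\rho'}$. This is not cosmetic: if $\pi_0'$ lands in $\RelAnc{3l'+3}{\bar\rho'}$ (or, worse, if some $\pi_i'$ coincides with or is adjacent to an element of $\RelAnc{l'}{\bar\rho'}$), then $\AncestorClass{l'}{n_1'}{n_2'}{z}(\bar\rho',\rho')$ acquires intersections or edges between the $P$-classes of $\rho'$ and of $\bar\rho'$ that have no counterpart on the $(\bar\rho,\rho)$ side, and the claimed isomorphism simply does not exist. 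By Corollary~\ref{CorRelAncDistBound} the whole problem localises at $\pi_0'$: one must ensure $\pi_0'\notin\RelAnc{3l'+3}{\bar\rho'}$. Lemma~\ref{Lem:RunsendInShorStacks} gives no control over this, and your $\plusedge$-avoidance via Proposition~\ref{Prop:ConstructOneStep} only kicks in from the second chain element on (and, at a fresh branch point not in the image of $\varphi$, the precondition of that proposition --- matching runs on the $\rho$-side for each run to be avoided --- is not available via $\varphi^{-1}$). The paper needs two dedicated devices here that your plan omits: the pull-back iteration of Lemma~\ref{LemmaGlobalStep1}, which handles the possibility that $\rho$ is far from $\bar\rho$ yet close to $\bar\rho'$ by repeatedly applying $\varphi_l^{-1}$ until a fixed point forces disjointness (or reduces to Corollary~\ref{Cor:AncestorConstruction}), and the ``either unchanged or too large to be a relevant ancestor of $\bar\rho'$'' invariant maintained throughout Lemma~\ref{LemmaGlobalStep2}. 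Without an argument of this kind your final verification step (``one verifies that the extension of $\varphi$ is an isomorphism'') cannot go through in the disjoint case.
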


This proposition can be reformulated as a finitary constraint for
Duplicator's strategy in the Ehrenfeucht-\Fraisse game on every
$2$-NPT. 
This yields an \FO{} model checking algorithm on $2$-NPT. Before we
present this application of the proposition in Section 
\ref{sec:FODecidability}, we prove this proposition.  
For this purpose we split the claim into several pieces. 
The proposition asserts bounds on the length of the runs and on the
sizes of the final stacks of the relevant ancestors. As the first step
we prove that Duplicator has a strategy that chooses runs with
small final stacks. This result relies mainly on
Propositions \ref{Prop:CompatibilityStackOpTypeq} and
\ref{Prop:ConstructOneStep}. These results allow us to
construct equivalent relevant 
ancestor sets that only contain runs ending in small stacks.
Afterwards, we apply Corollaries \ref{Cor:GlobalBoundRun} and
\ref{Cor:GlobalBoundRun2} in order to shrink the length of the runs
involved. 


\subsection{Construction of Isomorphic Relevant Ancestors}

Before we prove that Duplicator can choose short runs, 
we state some 
auxiliary lemmas concerning the construction of isomorphic relevant
ancestors. The following lemma gives a sufficient criterion 
for the equivalence of the relevant ancestors of two runs. Afterwards,
we show that for each run $\rho$ we can 
construct a second run $\rho'$ satisfying this criterion.

\begin{lemma} \label{LemmaConstructedRelAncEquiv}
  Let $\rho_0 \prec \rho_1 \prec \dots \prec \rho_m=\rho$ be runs such that 
  $\RelAnc{l}{\rho}=\{\rho_i:0\leq i \leq m\}$.
  If 
  $\hat\rho_0\prec \hat\rho_1 \prec \dots \prec \hat\rho_m $ are
  runs such that 
  \begin{itemize}
  \item the final states of $\rho_i$ and $\hat\rho_i$ coincide,
  \item $\rho_0=\Pop{2}^l(\rho_m)$ or $\lvert \rho_0 \rvert = \lvert
    \hat \rho_0 \rvert=1$, 
  \item $\rho_0 \stackequiv{n_2}{z}{n_1} \hat\rho_0$, and
  \item $\rho_i \mathrel{*} \rho_{i+1}$ iff $\hat\rho_i \mathrel{*}
    \hat\rho_{i+1}$ for 
    all $1\leq i < m$ and $*\in\{\plusedge\}\cup\{
    \trans{\delta}: \delta\in\Delta\}$,
  \end{itemize}
  then
  \begin{align*}
    \RelAnc{l}{\hat\rho_m}=\{\hat\rho_i:0\leq i \leq m \}.     
  \end{align*}
  If additionally $\TOP{2}(\rho_i) \wordequiv{n_2-i}{z}
  \TOP{2}(\hat\rho_i)$ for all $0<i \leq m$, then 
  \begin{align*}
    \hat\rho_m \RelAncequiv{l}{n_2-4^l}{z}{n_1} \rho_m.
  \end{align*}
\end{lemma}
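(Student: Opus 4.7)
The lemma has two claims: a structural identity for $\RelAnc{l}{\hat\rho_m}$ and an $\RelAncequiv{l}{n_2-4^l}{z}{n_1}$-equivalence of $\rho_m$ and $\hat\rho_m$.

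For the first claim I would argue via width-profile preservation. The width change along $\trans{\delta}$ depends only on $\delta$, and $\plusedge$ always increases the width by exactly one (cf.\ Remark \ref{rem:CharacterisePlusOneRelation}). Hence the hypothesis $\rho_i \mathrel{*} \rho_{i+1}$ iff $\hat\rho_i \mathrel{*} \hat\rho_{i+1}$ for $* \in \{\plusedge\} \cup \{\trans{\delta} : \delta\in\Delta\}$ forces $\lvert\hat\rho_i\rvert - \lvert\hat\rho_0\rvert = \lvert\rho_i\rvert - \lvert\rho_0\rvert$ for all $i$. Combined with the base-case disjunction, this gives either $\lvert\hat\rho_0\rvert = 1$ with $\lvert\hat\rho_m\rvert \leq l$, or $\lvert\hat\rho_0\rvert = \lvert\hat\rho_m\rvert - l$ with $\lvert\hat\rho_0\rvert$ strictly minimal in the chain. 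Lemma \ref{LemmaMinRelAnc} then identifies $\hat\rho_0$ as the minimum of $\RelAnc{l}{\hat\rho_m}$, and iterated application of Proposition \ref{Prop:NextRelAnc} recovers the rest of the chain, yielding $\RelAnc{l}{\hat\rho_m} = \{\hat\rho_0, \ldots, \hat\rho_m\}$.

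The technical core of the proof is the preservation of the stack-type label needed for the second claim. I would prove by induction on $i$ the invariant
\[
\rho_i \stackequiv{n_2 - i}{z}{n_1 + (\lvert\rho_i\rvert - \lvert\rho_0\rvert)} \hat\rho_i.
\]
The base case is the hypothesis $\rho_0 \stackequiv{n_2}{z}{n_1} \hat\rho_0$. For $\rho_i \trans{\delta} \rho_{i+1}$ I would apply Proposition \ref{Prop:CompatibilityStackOpTypeq}: only $\Pop{1}$ costs a unit of word-parameter, while $\Clone{2}$ and $\Pop{2}$ shift the depth parameter in line with the width change. For $\rho_i \plusedge \rho_{i+1}$, Remark \ref{rem:CharacterisePlusOneRelation} gives $\Pop{2}(\rho_{i+1}) = \rho_i$ as stacks, so the inductive hypothesis governs the stack below the new topmost word, and the extra assumption $\TOP{2}(\rho_{i+1}) \wordequiv{n_2-(i+1)}{z} \TOP{2}(\hat\rho_{i+1})$ handles the new topmost word. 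By Lemma \ref{LemmaMinRelAnc}, $\lvert\rho_0\rvert \leq \lvert\rho_i\rvert$ throughout the chain, so the depth parameter stays $\geq n_1$. At $i=m$ the bound $m \leq \lvert\RelAnc{l}{\rho_m}\rvert \leq 4^l$ yields the required $\stackequiv{n_2-4^l}{z}{n_1}$-equivalence.

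Finally I would verify that $\varphi\colon \rho_i \mapsto \hat\rho_i$ is an isomorphism of $\AncestorClass{l}{n_1}{n_2-4^l}{z}$-structures. The stack-type function is preserved by the invariant just established. The $\trans{\delta}$- and $\plusedge$-edges transfer by direct hypothesis, and each predicate $P^k_{\rho_j}$ is preserved since membership in $\RelAnc{k}{\rho_j}$ is read off the combinatorial chain alone. For a jump edge $\rho_i \jumpedge \rho_j$, width-preservation of $\jumpedge$ together with Proposition \ref{Prop:NextRelAnc} forces the successor of $\rho_i$ in the chain to begin with a $\Clone{2}$ and forces $\rho_{j-1} \trans{\Pop{2}} \rho_j$; by hypothesis the analogous pattern appears in the $\hat\rho$-chain, and the stack-type equivalence already established guarantees that the closing $\Pop{2}$ in the $\hat\rho$-run lands on $\hat\rho_i$'s stack, giving $\hat\rho_i \jumpedge \hat\rho_j$. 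The main obstacle is the $\plusedge$-step bookkeeping: one must simultaneously track that $\plusedge$ increases the width (shifting the depth parameter favourably), replaces the topmost word (consuming one unit of word-parameter supplied by the extra hypothesis), and leaves the deeper stack unchanged (absorbed by the inductive hypothesis), together with the subsequent jump-edge preservation relying on the resulting stack-type equivalence.
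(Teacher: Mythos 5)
Your second half is essentially the paper's own argument: the invariant $\rho_i \stackequiv{n_2-i}{z}{n_1+\lvert\rho_i\rvert-\lvert\rho_0\rvert} \hat\rho_i$ proved by induction via Proposition \ref{Prop:CompatibilityStackOpTypeq}, closed off using $\lvert\rho_0\rvert\leq\lvert\rho_i\rvert$ and $m\leq 4^l$, is exactly what the paper does, and so are your width-profile preservation and the identification of $\hat\rho_0$ as the minimum of $\RelAnc{l}{\hat\rho_m}$ via Lemma \ref{LemmaMinRelAnc}. The gap sits in the sentence ``iterated application of Proposition \ref{Prop:NextRelAnc} recovers the rest of the chain''. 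That proposition runs in the wrong direction: it says that consecutive elements of $\RelAnc{l}{\hat\rho_m}$, \emph{once that set is known}, are joined by $\trans{}$ or $\plusedge$; it does not let you conclude that a given chain of $\trans{}$- and $\plusedge$-edges from the minimum up to $\hat\rho_m$ \emph{is} the relevant ancestor set. Such chains are not unique (a $\plusedge$-step can be subdivided, and a run may have several $\plusedge$-predecessors along one branch), and $\RelAnc{l}{\hat\rho_m}$ is generated backwards from $\hat\rho_m$ with a level budget, not forwards from its minimum.

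Concretely, two things remain to be proved and neither follows from what you have established. First, if $\hat\rho_k\plusedge\hat\rho_{k+1}$ via a run of length greater than one, the immediate predecessor $\hat\pi$ of $\hat\rho_{k+1}$ satisfies $\hat\pi\trans{}\hat\rho_{k+1}$ and therefore enters $\RelAnc{j+1}{\hat\rho_m}$ as soon as $\hat\rho_{k+1}\in\RelAnc{j}{\hat\rho_m}$; to keep $\hat\pi$ out of $\RelAnc{l}{\hat\rho_m}$ you must know that $\hat\rho_{k+1}$ enters the relevant ancestors at the \emph{same level} as $\rho_{k+1}$ does for $\rho$ (the hypothesis $\RelAnc{l}{\rho}=\{\rho_i\}$ forces $\rho_{k+1}\notin\RelAnc{l-1}{\rho}$ in this situation, and this fact has to be transferred). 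Second, the level at which each $\hat\rho_i$ enters depends also on the $\jumpedge$-edges among the $\hat\rho_i$, so the jump-edge correspondence must be established \emph{before} the set identity, not afterwards as part of the isomorphism check where you place it. The paper handles both points by a downward induction carrying the invariant that $\rho_i\in\RelAnc{k}{\rho}$ iff $\hat\rho_i\in\RelAnc{k}{\hat\rho_m}$ for all $k\leq l$, having first derived the jump-edge correspondence purely from the width profile (every run strictly between $\hat\rho_k$ and $\hat\rho_{k+1}$ has width at least $\lvert\hat\rho_{k+1}\rvert$, so it can serve as neither a $\jumpedge$- nor a $\plusedge$-predecessor of any later chain element). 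Your ingredients suffice to run that induction, but the induction itself is the substance of the first claim and is missing from the proposal.
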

\begin{proof}
  First, we show that for all $0 \leq i < j \leq m$, 
  the following statements are true:
  \begin{align}
    \rho_i \trans{\delta} \rho_j &\text{ iff } 
    \hat\rho_i \trans{\delta} \hat\rho_j,  \label{nextbyop} \\
    \rho_i \jumpedge  \rho_j &\text{ iff } 
    \hat\rho_i \jumpedge \hat\rho_j, \text{ and}\\
    \rho_i \plusedge \rho_j &\text{ iff } 
    \hat\rho_i \plusedge \hat\rho_j. \label{nextbyplusone}
  \end{align}
  Note that $\rho_i\trans{\delta} \rho_j$ implies $j=i+1$.
  Analogously, $\hat\rho_i\trans{\delta} \hat\rho_j$ implies $j=i+1$.
  Thus, (\ref{nextbyop}) is true by definition of the sequences. 

  For the other parts, 
  it is straightforward to see that 
  $\lvert \rho_k \rvert - \lvert \rho_j \rvert = 
  \lvert  \hat\rho_k \rvert - \lvert \hat\rho_j \rvert$ for all 
  $0\leq j\leq k  \leq m$: for $k=j$ the claim holds trivially. For
  the induction step from $j$ to $j+1$, the claim follows from the
  assumption that  
  $\rho_j * \rho_{j+1}$ if and only if $\hat\rho_j * \hat\rho_{j+1}$
  for all
  $*\in\{\plusedge\}\cup
  \{\trans{\delta}:\delta\in\Delta\}$. 

  Furthermore, assume that there is some $\hat\pi$ such that
  $\hat\rho_k \prec \hat\pi \prec \hat\rho_{k+1}$. Then it cannot be
  the case that  
  $\hat\rho_k \trans{\delta} \hat\rho_{k+1}$. 
  This implies that $\rho_k \plusedge
  \rho_{k+1}$. By definition, it follows that
  $\hat\rho_k \plusedge \hat\rho_{k+1}$. 
  We conclude directly that $\lvert\hat\pi \rvert \geq \lvert
  \hat\rho_{k+1} \rvert > \lvert \hat \rho_k \rvert$. 
  Thus, 
  \begin{align*}
    &\rho_j \jumpedge \rho_k\text{ iff}\\
    &\lvert \rho_j \rvert = \lvert \rho_k \rvert\text{ and }
    \lvert \pi \rvert > \lvert \rho_j \rvert\text{ for all }\rho_j
    \prec \pi \prec \rho_k \text{ iff}\\    
    &  \lvert \hat\rho_j \rvert = \lvert \hat\rho_k \rvert\text{ and }
    \lvert \hat\pi \rvert > \lvert \hat\rho_j \rvert\text{ for all }\hat\rho_j
    \prec \hat\pi \prec \hat\rho_k \text{ iff}\\
    &\hat\rho_j \jumpedge \hat\rho_k.
  \end{align*}
  Analogously, one obtains (\ref{nextbyplusone}). 

  We now show by induction that
  $\RelAnc{l}{\hat\rho_m}=\{\hat\rho_i:0\leq i \leq 
  m\}$. 
  Note that trivially
  \begin{align*}
    \RelAnc{l}{\hat\rho_m} \cap \{\pi: \hat\rho_{m} \preceq \pi\} = 
    \{\hat\rho_m\}    
  \end{align*}
  holds. Assume that  there is some $0 \leq m_0 \leq m$ such that    
  \begin{align*}
    &\RelAnc{l}{\hat\rho_m} \cap \{\pi: \hat\rho_{m_0} \preceq \pi\} = 
    \{\hat\rho_i: m_0\leq i \leq m\} \text{ and}\\
    &\rho_i\in\RelAnc{k}{\rho}\text{ iff }
    \hat\rho_i\in\RelAnc{k}{\hat\rho_m}\text{ for all }  k\leq l\text{
      and } i\geq m_0.
  \end{align*}
  We distinguish the following cases. 
  \begin{itemize}
  \item 
    If $\rho_{m_0-1} \trans{\delta}
    \rho_{m_0}$ for some transition $\delta$ then $\hat\rho_{m_0-1}
    \trans{\delta} \hat\rho_{m_0}$ due to (\ref{nextbyop}). 
    Thus, there are no runs $\rho_{m_0-1} \prec \pi \prec
    \rho_{m_0}$. Hence, we only have to show that
    \mbox{$\rho_{m_0-1}\in\RelAnc{k}{\rho_m}$} if and only if
    $\hat\rho_{m_0-1}\in\RelAnc{k}{\hat\rho_m}$ for all $k\leq l$. 

    If $\rho_{m_0-1}\in\RelAnc{k}{\rho_m}$, then  there
    is some $j\geq m_0$ such that
    \mbox{$\rho_{j}\in\RelAnc{k-1}{\rho_m}$} and  
    $\rho_{m_0-1}$ is connected to $\rho_j$ via some edge. 
    But then \mbox{$\hat\rho_j\in\RelAnc{k-1}{\hat\rho_m}$} and 
    $\hat\rho_{m_0-1}$ is connected with $\hat\rho_j$ via the same
    sort of edge. Thus,
    \mbox{$\hat\rho_{m_0-1}\in\RelAnc{k}{\hat\rho_m}$}. 

    The other direction is completely analogous. 
  \item 
    Otherwise, assume that there is some $\rho_{m_0-1} \prec \pi
    \prec \rho_{m_0}$. Since its direct predecessor is not in
    $\RelAnc{l}{\rho_m}$, $\rho_{m_0}\notin\RelAnc{l-1}{\rho}$. 
    Thus, $\hat\rho_{m_0}\notin\RelAnc{l-1}{\hat\rho}$. 
    By
    construction, 
    $\hat\rho_{m_0-1} \plusedge \hat\rho_{m_0}$.
    Thus, $\lvert \hat \pi \rvert \geq \lvert \hat\rho_{m_0} \rvert$ for
    all $\hat\rho_{m_0-1} \prec \hat \pi \prec \hat\rho_{m_0}$.
    This implies that 
    $\pi \not\jumpedge \hat\rho_i$ and
    $\pi\not\plusedge \hat\rho_i$
    for all $m_0 < i \leq m$. This shows that
    $\pi\notin\RelAnc{l}{\hat\rho_m}$. 
    
    We obtain that 
    $\hat\rho_{m_0-1}\in\RelAnc{k}{\hat\rho_m}$ iff 
    $\rho_{m_0-1}\in\RelAnc{k}{\rho_m}$  for all $k\leq l$ analogously
    to the previous case. 
  \end{itemize}
  Up to now, we have shown that 
  $\RelAnc{l}{\hat\rho_m} \cap \{\pi: \hat\rho_0 \preceq \pi \} = 
  \{\hat\rho_i: 0 \leq i \leq m\}$.  
  In order to prove 
  $\RelAnc{l}{\hat\rho_m} = 
  \{\hat\rho_i: 0 \leq i \leq m\}$, we have to show that $\hat\rho_0$
  is the minimal element of $\RelAnc{l}{\hat\rho_m}$. 
  
  There are the following cases
  \begin{enumerate}
  \item $\rho_0=\Pop{2}^l(\rho_m)$. In this case, we conclude that
    $\hat\rho_0 = \Pop{2}^l(\hat\rho_m)$ by construction. But 
    Lemma \ref{LemmaMinRelAnc} then implies that $\hat\rho_0$ is the
    minimal element of $\RelAnc{l}{\hat\rho_m}$.
  \item $\lvert \rho_0 \rvert = \lvert \hat\rho_0 \rvert = 1$.
    Note that $\rho_0\notin\RelAnc{l-1}{\rho_m}$
    because $\rho_0$ is minimal in $\RelAnc{l}{\rho_m}$. Thus, we know
    that $\hat\rho_0\notin\RelAnc{l-1}{\hat\rho_m}$. 
    
    Heading for a contradiction, assume that there is some
    $\hat\pi\in\RelAnc{l}{\hat\rho_m}$ with 
    $\hat\pi\prec\hat\rho_0$. 
    We conclude that $\hat\pi\jumpedge
    \hat\rho_k$ or $\hat\pi\plusedge\hat\rho_k$ for
    some $\hat\rho_k\in\RelAnc{l-1}{\hat\rho_m}$. But this implies
    that $\lvert \hat\pi \rvert < \lvert \hat\rho_0 \rvert =1$. Since
    there are no stacks of width $0$, this is a contradiction.
    
    Thus, there is no $\hat\pi\in\RelAnc{l}{\hat\rho_m}$ that is a
    proper prefix of $\hat\rho_0$.
  \end{enumerate}
  We conclude that $\RelAnc{l}{\hat\rho_m}=\{\hat\rho_i: 0\leq i \leq
  m\}$. 
  
  Let us turn to the second part of the lemma. Assume that
  $\TOP{2}(\rho_i)\wordequiv{n_2-i}{z} \TOP{2}(\hat\rho_i)$ for all 
  \mbox{$0\leq i \leq m$}. Since $\hat\rho_i$ and $\hat\rho_{i+1}$
  differ in at most one 
  word, a straightforward induction shows that $\rho_i
  \stackequiv{n_2-i}{z}{n_1-\lvert\rho_0\rvert + \lvert \rho_i \rvert}
  \hat\rho_i$ (cf. Proposition
  \ref{Prop:CompatibilityStackOpTypeq}). But this implies $\hat\rho_m 
  \RelAncequiv{l}{n_2-4^l}{z}{n_1} \rho_m$ because 
  \mbox{$\lvert \rho_0 \rvert \leq \lvert \rho_i \rvert$}  as we have
  seen in Lemma \ref{LemmaMinRelAnc}. 
\end{proof}

The previous lemma gives us a sufficient condition for the equivalence
of relevant ancestors of two elements. Now, we show how to
construct such a chain of relevant ancestors. 

\begin{lemma} \label{LemmaAncestorConstruction}
  Let $l,n_1, n_2, m,z\in\N$ such that $n_2\geq 4^l$ and $z\geq 2$. 
  Let 
  \begin{align*}
    &\rho_0 \prec \rho_1 \prec \dots \prec \rho_m=\rho\text{ be runs
      such that}\\ 
    &\RelAnc{l}{\rho}\cap \{\pi: \rho_0\preceq \pi\preceq \rho\} =
    \{\rho_i: 0\leq i \leq m\}.      
  \end{align*}
  Let $\hat\rho_0$ be a run such that
  $\rho_0 \stackequiv{n_2}{z}{n_1} \hat\rho_0$ and such that both runs
  end in the same state.
  Then we can effectively construct runs 
  \begin{align*}
    \hat\rho_0\prec \hat\rho_1\prec \dots \prec \hat\rho_m=:\hat\rho    
  \end{align*}
  such that 
  \begin{itemize}
  \item the final states of $\rho_i$ and $\hat\rho_i$ coincide for all $0\leq
  i \leq m$,
  \item $\rho_i\trans{\delta} \rho_{i+1}$ iff
    $\hat\rho_i\trans{\delta} \hat\rho_{i+1}$ and
    $\rho_i \plusedge \rho_{i+1}$ iff
    $\hat\rho_i\plusedge \hat\rho_{i+1}$ 
    for all $0\leq
    i < m$, and
  \item $\TOP{2}(\rho_i) \wordequiv{n_2- i}{z} \TOP{2}(\hat\rho_i)$
    for all $0\leq
    i \leq m$.
  \end{itemize}
\end{lemma}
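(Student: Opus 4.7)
I will induct on $i$, constructing $\hat\rho_{i+1}$ from $\hat\rho_i$ and maintaining at each stage the strengthened invariant
\[ \rho_i \stackequiv{n_2-i}{z}{n_1-\lvert\rho_0\rvert+\lvert\rho_i\rvert} \hat\rho_i \]
together with agreement of the final states of $\rho_i$ and $\hat\rho_i$. This invariant refines the claimed $\TOP{2}$-equivalence and carries exactly the depth of stack equivalence needed to survive $\Pop{2}$-transitions. The base case $i=0$ is the hypothesis. For the inductive step, Proposition \ref{Prop:NextRelAnc} guarantees that consecutive relevant ancestors $\rho_i \prec \rho_{i+1}$ are linked either by a single transition $\trans{\delta}$ for some $\delta \in \Delta$, or by $\plusedge$; I treat the two cases separately.

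In the transition case the invariant forces $\TOP{1}(\hat\rho_i)=\TOP{1}(\rho_i)$ (equivalent topmost words share their last letter through the predicates $P_\sigma$), and the final states of $\rho_i$ and $\hat\rho_i$ coincide, so $\delta$ is also applicable to $\hat\rho_i$ and yields the unique successor $\hat\rho_{i+1}$ with the correct final state. Proposition \ref{Prop:CompatibilityStackOpTypeq} propagates the invariant: $\Pop{1}$ reduces the $\wordequiv$-level by one (matching $n_2-i\to n_2-(i+1)$), $\Clone{2}$ and $\Pop{2}$ only adjust the depth index consistently with $\lvert\rho_{i+1}\rvert-\lvert\rho_i\rvert=\pm 1$, and $\Push{\sigma}$ leaves both indices fixed. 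For $\Pop{2}$ one needs the current depth to be at least $1$; this holds because applicability of $\Pop{2}$ forces $\lvert\rho_i\rvert\geq 2$ while Lemma \ref{LemmaMinRelAnc} forbids $\lvert\rho_{i+1}\rvert<\lvert\rho_0\rvert$, hence $\lvert\rho_i\rvert-\lvert\rho_0\rvert\geq 1$.

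In the $\plusedge$ case, $\rho_{i+1}$ extends $\rho_i$ by a single run that begins with a $\Clone{2}$, creates exactly one new word on top, and avoids every proper substack of the final stack of $\rho_i$. I will apply Proposition \ref{Prop:ConstructOneStep} with parameter $m=1$ to this extension, using $\TOP{2}(\rho_i)\wordequiv{n_2-i}{z}\TOP{2}(\hat\rho_i)$ from the invariant as input. The precondition $n_2-i\geq 1$ holds because $n_2\geq 4^l$ and $i\leq m\leq\lvert\RelAnc{l}{\rho}\rvert-1\leq 4^l-1$. The proposition returns an extension $\hat\rho_{i+1}$ of $\hat\rho_i$ of exactly the same shape (clone followed by creation of one fresh word $\hat w_{i+1}$), with $\TOP{2}(\rho_{i+1})\wordequiv{n_2-(i+1)}{z}\TOP{2}(\hat\rho_{i+1})$ and the correct final state; since the stacks below the new top word are untouched on both sides, the invariant carries over at the new width. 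The required biconditionals ``$\rho_i\trans{\delta}\rho_{i+1}$ iff $\hat\rho_i\trans{\delta}\hat\rho_{i+1}$'' and ``$\rho_i\plusedge\rho_{i+1}$ iff $\hat\rho_i\plusedge\hat\rho_{i+1}$'' are then built into the case-wise construction.

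\textbf{Main obstacle.} Keeping the two parameters of the strengthened invariant balanced: the $\wordequiv$-level $n_2-i$ must remain positive through all $m+1$ steps — this is exactly why the hypothesis $n_2\geq 4^l$ is tight, the bound $\lvert\RelAnc{l}{\rho}\rvert\leq 4^l$ governing the length of the ancestor chain — and the depth index $n_1-\lvert\rho_0\rvert+\lvert\rho_i\rvert$ must not collapse when $\Pop{2}$ is invoked. Both conditions reduce to elementary width and cardinality bounds on relevant ancestors established in Section \ref{sec:RelevantAncestors}, but the argument requires carrying the full stack-equivalence invariant rather than merely the topmost-word equivalence asserted by the lemma.
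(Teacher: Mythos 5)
Your proof is correct and follows essentially the same route as the paper's: induction along the chain using Proposition \ref{Prop:NextRelAnc} to split into the single-transition case (handled via Proposition \ref{Prop:CompatibilityStackOpTypeq}) and the $\plusedge$ case (handled via Proposition \ref{Prop:ConstructOneStep}). Your strengthened invariant $\rho_i \stackequiv{n_2-i}{z}{n_1-\lvert\rho_0\rvert+\lvert\rho_i\rvert} \hat\rho_i$ is a sound refinement — the paper only records the topmost-word equivalence here but derives exactly this stack-equivalence by "a straightforward induction" in the proof of Lemma \ref{LemmaConstructedRelAncEquiv}, and it is indeed what is needed to push the $\TOP{2}$-equivalence through a $\Pop{2}$ transition.
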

\begin{proof}
 Assume that we have constructed 
 \begin{align*}
   \hat\rho_0 \prec \hat\rho_1\prec \dots \prec
   \hat\rho_{m_0},   
 \end{align*}
 for some $m_0<m$ such that for all $0 \leq i\leq
  m_0$
  \begin{enumerate}
  \item the final states of $\rho_i$ and $\hat\rho_i$
    coincide, \label{condition_States} 
  \item $\rho_i\trans{\delta} \rho_{i+1}$ iff \label{TEST_three}
    $\hat\rho_i\trans{\delta} \hat\rho_{i+1}$ and
    $\rho_i \plusedge \rho_{i+1}$ iff
    $\hat\rho_i\plusedge \hat\rho_{i+1}$ 
    (note that $\rho_i \trans{} \rho_{i+1}$ or $\rho_i
    \plusedge \rho_{i+1}$ hold 
    due to Proposition \ref{Prop:NextRelAnc}), and
  \item \label{COndition_Wordequivalence}
    $\TOP{2}(\rho_i) \wordequiv{n_2-i}{z} \hat\rho_i$.
  \end{enumerate}
  We extend this chain by a new element $\rho'_{m_0+1}$ such that
  all these conditions are again satisfied. We distinguish two cases.
  
  First, assume that $\rho_{m_0} \trans{\delta} \rho_{m_0+1}$. 
  Since $\rho_{m_0} \wordequiv{n_2-m_0}{z} \hat\rho_{m_0}$, 
  $\TOP{1}(\rho_{m_0}) = \TOP{1}(\hat\rho_{m_0})$. Due to Condition 
  \ref{condition_States}, their final states also coincide. 
  Hence, there is a
  $\hat\rho_{m_0+1}$ such that
  $\hat\rho_{m_0}\trans{\delta} \hat\rho_{m_0+1}$. 
  Due to Proposition  
  \ref{Prop:CompatibilityStackOpTypeq},
  $\hat\rho_{m_0+1}$ satisfies Condition 
  (\ref{COndition_Wordequivalence}).
  
  Now, consider the case $\rho_{m_0} \plusedge
  \rho_{m_0+1}$.
  The run from $\rho_{m_0}$ to $\rho_{m_0+1}$ starts from
  some stack $s$ and ends in some stack $s:w$ for $w$ some word, the
  first operation is a clone and then $s$ is never reached
  again. Hence, we can use Proposition \ref{Prop:ConstructOneStep}
  in order to find some appropriate $\hat\rho_{m_0+1}$ that satisfies
  Condition (\ref{COndition_Wordequivalence}).
\end{proof}

The previous lemmas give us the possibility to construct an isomorphic
copy of the 
relevant ancestors of a single run $\rho$. In our proofs, we want to
construct such a copy while avoiding relevant ancestors of certain
other runs. Using the full power of Proposition \ref{Prop:ConstructOneStep}
we obtain the following stronger version of the lemma. 


\begin{corollary} \label{Cor:AncestorConstruction}
  Let $l,n_1, n_2, m,z\in\N$ be numbers such that $z> m \cdot 4^l$ and
  $n_2\geq 4^l$.
  Let $\bar\rho$ and $\bar\rho'$ be $m$-tuples such that 
  $\bar\rho \RelAncequiv{l}{n_2}{z}{n_1} \bar\rho'$ and $\varphi_l$ is an
  isomorphism witnessing this equivalence. 
  Furthermore, let $\rho_0 \prec \rho_1 \prec \dots \prec \rho_m$ be
  runs such that for each $i<m$ we have  $\rho_i \trans{}\rho_{i+1}$ or
  $\rho_i \plusedge \rho_{i+1}$. 
  
  If  $\rho_0\in\RelAnc{l}{\bar \rho}$,
  and if $\rho_1\notin
  \RelAnc{l}{\bar\rho}$   
  then we can construct $\hat\rho_0:=\varphi_l(\rho_0) \prec
  \hat\rho_1 \prec
  \hat\rho_2\prec \dots \prec \hat\rho_m$ 
  satisfying the conditions from the previous 
  lemma but additionally with the property that
  $\hat\rho_1\notin\RelAnc{l}{\bar\rho'}$. 
\end{corollary}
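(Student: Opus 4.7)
The plan is to refine the inductive construction in the proof of Lemma \ref{LemmaAncestorConstruction}, modifying only its very first step so that $\hat\rho_1 \notin \RelAnc{l}{\bar\rho'}$; after that, the remaining steps of that lemma apply verbatim to the chain $\rho_1 \prec \rho_2 \prec \dots \prec \rho_m$ starting from $\hat\rho_1$. By Proposition \ref{Prop:NextRelAnc}, either $\rho_0 \trans{\delta} \rho_1$ for some $\delta \in \Delta$, or $\rho_0 \plusedge \rho_1$, and I will treat these two cases separately.

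The $\trans{\delta}$-case will be automatic. The construction of Lemma \ref{LemmaAncestorConstruction} forces $\hat\rho_1$ to be the unique $\trans{\delta}$-successor of $\hat\rho_0 = \varphi_l(\rho_0)$. If this $\hat\rho_1$ lay in $\RelAnc{l}{\bar\rho'}$, then $\varphi_l^{-1}(\hat\rho_1) \in \RelAnc{l}{\bar\rho}$ would be a $\trans{\delta}$-successor of $\rho_0$, and by functionality of $\trans{\delta}$ this successor would have to be $\rho_1$, contradicting $\rho_1 \notin \RelAnc{l}{\bar\rho}$.

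For the $\plusedge$-case I will exploit the simultaneity already built into Proposition \ref{Prop:ConstructOneStep}. Define $F := \{\pi' \in \RelAnc{l}{\bar\rho'} : \hat\rho_0 \plusedge \pi'\}$ and $G := \varphi_l^{-1}(F) \subseteq \RelAnc{l}{\bar\rho}$. Since $\varphi_l$ preserves $\plusedge$, every element of $G$ is a genuine run extending $\rho_0$ by a clone followed by a path that avoids the starting stack, hence admissible as a source run. I will feed $\rho_1$ together with $G$ as the pairwise distinct source runs, and the elements of $F$ (matched via $\varphi_l$) as the target runs; the word-equivalence preconditions $w_i \wordequiv{n_2-1}{z} \hat w_i$ demanded by the proposition are inherited from the preservation of $\stackequivTyp{n_2}{z}{n_1}$-labels by $\varphi_l$. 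Proposition \ref{Prop:ConstructOneStep} then produces $\hat\rho_1$ extending $\hat\rho_0$ by a $\plusedge$-edge, distinct from every element of $F$ and with $\TOP{2}(\hat\rho_1) \wordequiv{n_2-1}{z} \TOP{2}(\rho_1)$; hence $\hat\rho_1 \notin \RelAnc{l}{\bar\rho'}$, as required.

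The main obstacle will be the quantitative bookkeeping for invoking Proposition \ref{Prop:ConstructOneStep} at this critical step: I must check that $|F| \leq |\RelAnc{l}{\bar\rho'}| \leq m \cdot 4^l$ and hence, from the standing hypothesis $z > m \cdot 4^l$, that $z$ dominates the proposition's size parameter $|G| + 1$, while the condition $n_2 \geq 4^l \geq 1$ supplies the minimal level demanded by the proposition so that the target equivalence can be produced at level $n_2 - 1$. Once $\hat\rho_1$ has been fixed, nothing further is required: Lemma \ref{LemmaAncestorConstruction} applied verbatim to $\rho_1 \prec \rho_2 \prec \dots \prec \rho_m$, starting from $\hat\rho_1$, yields the remaining $\hat\rho_2 \prec \dots \prec \hat\rho_m$ with all the properties listed in the previous lemma.
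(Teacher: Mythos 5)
Your proposal is correct and follows essentially the same route as the paper's proof: a case split on whether $\rho_0\trans{}\rho_1$ or $\rho_0\plusedge\rho_1$, with the first case handled by functionality of $\trans{\delta}$ together with the fact that $\varphi_l^{-1}$ preserves edges, and the second by invoking the simultaneous-transfer feature of Proposition \ref{Prop:ConstructOneStep} against the set of $\plusedge$-successors of $\hat\rho_0$ lying in $\RelAnc{l}{\bar\rho'}$, using $z>\lvert\RelAnc{l}{\bar\rho'}\rvert$. Your write-up is in fact somewhat more explicit than the paper's (which compresses the $\plusedge$ case into two sentences), but the ideas coincide.
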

\begin{proof}
  We distinguish two cases. 
  \begin{enumerate}
  \item Assume that $\rho_0 \trans{} \rho_1$. 
    Due to the equivalence of $\rho_0$ and $\hat\rho_0$, we can apply
    the transition connecting $\rho_0$ with $\rho_1$ to $\hat\rho_0$
    and obtain a run $\hat\rho_1$. We have to prove that
    $\hat\rho_1\notin\RelAnc{l}{\bar\rho'}$.
    
    Heading for a contradiction assume that
    $\hat\rho_1\in\RelAnc{l}{\bar\rho'}$. Then $\varphi^{-1}_l$
    preserves the edge between $\hat\rho_0$ and $\hat\rho_1$, i.e., 
    $\rho_0=\varphi^{-1}_l(\hat\rho_0) \trans{}
    \varphi^{-1}_l(\hat\rho_1)$. But this implies that
    $\varphi^{-1}_l(\hat\rho_1)=\rho_1$ which contradicts the
    assumption that $\rho_1\notin\RelAnc{l}{\bar\rho}$. 
  \item 
    Assume that $\rho_0 \plusedge \rho_1$. 
    Up to threshold $z$, for each $\hat\pi$ such that
    $\hat\rho_0\plusedge \hat\pi$ and
    \mbox{$\hat\pi\in\RelAnc{l}{\bar\rho'}$} there is  a run
    $\rho_0 \plusedge \varphi_l^{-1}(\hat\pi)$. Since 
    $\rho_1\notin \RelAnc{l}{\bar\rho}$, we find another run
    $\hat\rho_1$ that satisfies 
    the conditions of 
    the previous lemma and
    $\hat\rho_1\notin \RelAnc{l}{\bar\rho'}$. This is due to the fact that
    Proposition \ref{Prop:ConstructOneStep} allows us to transfer up to
    $z>\lvert \RelAnc{l}{\bar\rho'} \rvert$ many runs simultaneously. \qed
  \end{enumerate}
\end{proof}

\subsection{Construction of Small Equivalent Stacks}

In this section, we prove that Duplicator has a strategy that
preserves the isomorphism type of the relevant ancestors while
choosing runs whose relevant ancestors end in small stacks.  
Later, we show how to bound the length of such
runs. 

The analysis of this strategy decomposes into the local and the
global case. 
We say Spoiler makes a local move if he chooses a new element such
that one of its relevant ancestors is an
relevant ancestor of the elements chosen so far. 
We say Spoiler makes a global move if he  chooses an element such that
its set of  relevant ancestors does not intersect with the
set of relevant ancestors of the elements chosen so far. 

We first head for the result that Duplicator can manage the
local case 
in such a way that he chooses an element such that all its relevant
ancestors end in small stacks. Then we show that Duplicator can manage
the global case analogously. 

\begin{lemma}   \label{LemmaLocalStep}
  Let $n,z,l',n_1',n_2'\in\N$ be numbers such that 
  $z\geq 2$, $n_1'>0$, $n_2'>0$, $l := 4l'+5$,  $z > n \cdot 4^l$, 
  \mbox{$n_1 := n_1' + 2 (l'+1)+1$}, and
   $n_2 := n_2' + 4^{l'+1}+1$.
  
  Let $\bar\rho, \bar\rho'$ be $n$-tuples of runs such that
  \mbox{$\bar\rho \RelAncequiv{l}{n_2}{z}{n_1} \bar\rho'$} and such
  that
  $\height(\pi)\leq \BoundHeight(n, z, l, n_1, n_2)$ and
  $\lvert \pi \rvert \leq \BoundWidth(n,z,l,n_1,n_2)$
  for all $\pi\in\RelAnc{l}{\bar\rho'}$. 
  Furthermore, let $\rho$ be some run such that 
  $\RelAnc{l'+1}{\rho}\cap \RelAnc{l'+1}{\bar\rho}\neq\emptyset$.
  Then there is some run $\rho'$ such that
 \begin{align*}
    &\height(\rho')\leq \BoundHeight(n+1, z, l', n_1', n_2'),
    &\lvert \rho'\rvert \leq  \BoundWidth(n+1, z, l', n_1', n_2')
    \text{, and}
    &&
    (\bar\rho,\rho) \RelAncequiv{l'}{n_2'}{z}{n_1'} (\bar\rho', \rho').
  \end{align*}
\end{lemma}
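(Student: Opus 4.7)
The plan is to locate a common element in the intersection $\RelAnc{l'+1}{\rho} \cap \RelAnc{l'+1}{\bar\rho}$, use it as a ``gluing point'' that anchors the image of $\rho$ via the isomorphism $\varphi$ witnessing $\bar\rho \RelAncequiv{l}{n_2}{z}{n_1} \bar\rho'$, and then extend along a chain of relevant $(l'+1)$-ancestors using the one-step word-transfer of Proposition \ref{Prop:ConstructOneStep} packaged in Corollary \ref{Cor:AncestorConstruction}. Precisely, let $\rho_0$ be the $\preceq$-maximum of $\RelAnc{l'+1}{\rho} \cap \RelAnc{l}{\bar\rho}$, which is well defined because $l'+1 \le l = 4l'+5$, all its elements are ancestors of $\rho$, and by hypothesis the set is nonempty. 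Maximality guarantees that every $\pi \in \RelAnc{l'+1}{\rho}$ strictly above $\rho_0$ satisfies $\pi \notin \RelAnc{l}{\bar\rho}$, so at each chain step Corollary \ref{Cor:AncestorConstruction} can be invoked with its ``avoid $\RelAnc{l}{\bar\rho'}$'' guarantee. Dually, Corollary \ref{CorRelAncDistBound} applied to any witness of the original $\RelAnc{l'+1}{\rho} \cap \RelAnc{l'+1}{\bar\rho}$ intersection, together with $3(l'+1) \le l$, shows that elements of $\RelAnc{l'}{\rho}$ weakly below that witness, and hence (via Lemma \ref{LemmaRelAncComposition}) weakly below $\rho_0$, all lie in $\RelAnc{l}{\bar\rho}$ and are in the domain of $\varphi$.

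Enumerate the chain $\rho_0 = \pi_0 \prec \pi_1 \prec \dots \prec \pi_m = \rho$ of elements of $\RelAnc{l'+1}{\rho}$ weakly above $\rho_0$ (so $m \le 4^{l'+1}$), with consecutive pairs connected by a single transition or by $\plusedge$ (Proposition \ref{Prop:NextRelAnc}). Setting $\hat\rho_0 := \varphi(\rho_0)$, which ends in the same state as $\rho_0$ and has last stack $\stackequiv{n_2}{z}{n_1}$-equivalent, iterate Corollary \ref{Cor:AncestorConstruction} to produce $\hat\rho_0 \prec \hat\pi_1 \prec \dots \prec \hat\pi_m =: \rho'$ such that at each step the edge type is preserved, $\TOP{2}(\pi_i) \wordequiv{n_2-i}{z} \TOP{2}(\hat\pi_i)$ (by Proposition \ref{Prop:CompatibilityStackOpTypeq}), and $\hat\pi_i \notin \RelAnc{l}{\bar\rho'}$ for $i \ge 1$. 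Avoiding the at most $\lvert \RelAnc{l}{\bar\rho'} \rvert \le n \cdot 4^l$ existing relevant ancestors at each step is permitted by the threshold $z > n \cdot 4^l$ via the simultaneous-transfer part of Proposition \ref{Prop:ConstructOneStep}. The height of $\rho'$ is then controlled by iteratively applying $\BoundHeightOnestepConstructionSimultanious$, exactly matching the $H_i^{\mathrm{loc}}$ recursion in Definition \ref{Def:BoundingFunctions} and yielding $\height(\rho') \le \BoundHeight(n+1, z, l', n_1', n_2')$; the width grows by at most one per step, so $\lvert \rho' \rvert \le \BoundWidth(\bar x_n) + 4^{l'+1} \le \BoundWidth(n+1, z, l', n_1', n_2')$.

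Finally, assemble the map $\psi : \RelAnc{l'}{\bar\rho, \rho} \to \RelAnc{l'}{\bar\rho', \rho'}$ by $\psi := \varphi$ on $\RelAnc{l}{\bar\rho}$ and $\psi(\pi_i) := \hat\pi_i$ for $i \ge 1$. The two rules agree on $\pi_0 = \rho_0$, cover the whole domain by the two structural properties of $\rho_0$, and have disjoint images (by maximality of $\rho_0$ and the avoidance guarantee). Preservation of edges, of the predicates $P^k_{\rho_j}$ for $k \le l'$, and of the stack type $\stackequivTyp{n_2'}{z}{n_1'}$ follows by combining $\varphi$'s properties at level $l$, the edge preservation of the chain construction, Proposition \ref{Prop:CompatibilityStackOpTypeq} yielding $\pi_i \stackequiv{n_2-4^{l'+1}}{z}{n_1-2(l'+1)} \hat\pi_i$ (which descends to the required level via $n_2 - 4^{l'+1} \ge n_2'$ and $n_1 - 2(l'+1) \ge n_1'$), and a symmetric application of Lemma \ref{LemmaConstructedRelAncEquiv}, which gives $\RelAnc{l'+1}{\rho'} \cap \{\pi : \hat\rho_0 \preceq \pi\} = \{\hat\pi_i\}$ and rules out spurious image-side ancestors (the analogous statement for ancestors of $\hat\rho_0$ coming again from Corollary \ref{CorRelAncDistBound}). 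The main technical difficulty is coordinating at every step (i) the progressive decrease of the word-equivalence level, (ii) avoidance of $\RelAnc{l}{\bar\rho'}$, and (iii) the topmost-word growth---the parameter choices $l=4l'+5$, $n_1 = n_1' + 2(l'+1)+1$, $n_2 = n_2' + 4^{l'+1}+1$, and the recursive definition of $H_i^{\mathrm{loc}}$ are exactly calibrated to make all three demands fit.
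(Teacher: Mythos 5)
Your overall strategy is the paper's: pick an anchor $\rho_0$ in the chain $\RelAnc{l'+1}{\rho}$, map everything below it by the isomorphism $\varphi$ witnessing $\bar\rho \RelAncequiv{l}{n_2}{z}{n_1} \bar\rho'$, rebuild the part above it step by step via Proposition \ref{Prop:ConstructOneStep} and Corollary \ref{Cor:AncestorConstruction}, and glue. But your choice of anchor creates a genuine gap. You take $\rho_0$ to be the $\preceq$-maximum of $\RelAnc{l'+1}{\rho}\cap\RelAnc{l}{\bar\rho}$ and assert that every element of $\RelAnc{l'+1}{\rho}$ weakly below $\rho_0$ lies in $\RelAnc{l}{\bar\rho}$, so that $\varphi$ covers the lower part of the domain of your map $\psi$. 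The cited results do not give this. Corollary \ref{CorRelAncDistBound} only covers elements below a witness $w\in\RelAnc{l'+1}{\rho}\cap\RelAnc{l'+1}{\bar\rho}$, and your $\rho_0$ may lie strictly above every such witness. Passing through $\rho_0$ with Lemma \ref{LemmaRelAncComposition} gives, for $\pi\in\RelAnc{l'+1}{\rho}$ with $\pi\preceq\rho_0$, only $\pi\in\RelAnc{2l'+2}{\rho_0}\subseteq\RelAnc{l+2l'+2}{\bar\rho}=\RelAnc{6l'+7}{\bar\rho}$, which overshoots $l=4l'+5$. Moreover the claim is not true in general: membership in $\RelAnc{k}{\sigma}$ is not monotone along the $\preceq$-chain (the $\plusedge$-predecessor of $\sigma$ is already in $\RelAnc{1}{\sigma}$ yet may lie far before runs that enter only $\RelAnc{2}{\sigma}$), so between the witness $w$ and your $\rho_0$ the chain $\RelAnc{l'+1}{\rho}$ can contain runs outside $\RelAnc{l}{\bar\rho}$; if such a run belongs to $\RelAnc{l'}{\rho}$, your $\psi$ is simply undefined on it.

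The paper avoids this by defining the anchor differently: $\rho_0$ is the maximal element of $\RelAnc{l'+1}{\rho}$ such that the \emph{entire initial segment} $\RelAnc{l'+1}{\rho}\cap\{\pi:\pi\preceq\rho_0\}$ is contained in $\RelAnc{4l'+3}{\bar\rho}$ (nonempty by Corollary \ref{CorRelAncDistBound} applied to a witness of the hypothesis). This makes ``everything below $\rho_0$ lies in the domain of $\varphi_l$'' true by construction, while still yielding $\rho_1\notin\RelAnc{4l'+3}{\bar\rho}$ for the avoidance step. Note also the level $4l'+3$ rather than $l$: that two-level slack is what makes the final case analysis go through (it shows the glued map preserves edges in both directions, using $\rho_1'\notin\RelAnc{4l'+3}{\bar\rho'}$ together with Corollary \ref{CorRelAncDistBound} at level $3l'+3$); with your anchor and avoidance at level $l$, that verification, which you leave implicit, would have to be reworked. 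The remainder of your argument --- the single application of Corollary \ref{Cor:AncestorConstruction}, the $H_i^{\mathrm{loc}}$ height recursion, and the width bound --- does match the paper.
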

\begin{proof}
  Let $\varphi_l: \AncestorClass{l}{n_1}{n_2}{z}(\bar\rho) \simeq 
  \AncestorClass{l}{n_1}{n_2}{z}(\bar\rho')$ denote the isomorphism
  that witnesses \mbox{$\bar\rho \RelAncequiv{l}{n_2}{z}{n_1} \bar\rho'$}.
  Let $\rho_0\in\RelAnc{l'+1}{\rho}$ be maximal such that 
  \begin{align*}
    \RelAnc{l'+1}{\rho}\cap\{\pi : \pi \preceq \rho_0\}\subseteq
    \RelAnc{4l'+3}{\bar\rho}\subseteq\RelAnc{l}{\bar\rho}.    
  \end{align*}
  There are  numbers  $m_0 \leq 0 \leq m_1$ and runs 
  \begin{align*}
    \rho_{m_0} \prec \rho_{m_0+1} \prec
    \dots \prec \rho_0 \prec \rho_1 \prec \dots \prec \rho_{m_1}
  \end{align*}
  such that  
  $\RelAnc{l'+1}{\rho} = \{\rho_i: m_0 \leq i \leq m_1\}$. 
  We
  set $\rho_i':=\varphi_l(\rho_i)$ for all $m_0 \leq i\leq 0$. 
  Note that $\height(\rho_0')\leq \BoundHeight(n,z,l,n_1,n_2)$ and 
  \mbox{$\lvert \rho_0' \rvert \leq \BoundWidth(n,z,l,n_1,n_2)$.}
  
  Next, we
  construct $\rho'_1, \dots, \rho'_{m_1}$  such that
  $\rho':=\rho'_{m_1}$ has relevant ancestors isomorphic to those of $\rho$.
  We first define $\rho_1'$ such that
  \begin{enumerate}
  \item the final state of $\rho'_1$ and $\rho_1$ coincide,
  \item $\TOP{2}(\rho_1') \wordequiv{n_2-1}{z} \TOP{2}(\hat\rho_1)$,
  \item $\rho'_0\trans{\delta} \rho'_1$ iff $\rho_0\trans{\delta} \rho_1$,
  \item $\rho'_0\plusedge \rho'_1$ iff
    $\rho_0 \plusedge \rho_1$,
  \item $\rho'_1\notin \RelAnc{4l'+3}{\bar\rho'}$, and
  \item $\height(\rho_1') \leq 
    \BoundHeightOnestepConstructionSimultanious( n\cdot4^{4l'+3}, 
    \height(\rho_0'), n_2-1, z) \leq H_1^{\text{loc}}$ (cf. Definition
    \ref{Def:BoundingFunctions}).
  \end{enumerate}
  If $\rho_0\trans{\delta}\rho_1$, this construction is trivial. Note
  that there is an element $\rho_0'\trans{\delta}\rho_1'$ and
  \mbox{$\rho_1'\notin\RelAnc{4l'+3}{\bar\rho'}$} (otherwise,
  $\rho_1=\varphi_l^{-1}(\rho_1)\in\RelAnc{4l'+3}{\bar\rho'}$
  contradicting the maximality of $\rho_0$). 
  If $\rho_0\plusedge \rho_1$, we just apply
  Proposition \ref{Prop:ConstructOneStep}. 
  
  Now, we continue constructing  
  $\rho_2', \dots, \rho'_{m_1}=:\rho'$ such that 
  \begin{enumerate}
  \item the final states of $\rho_i$ and $\rho'_i$ coincide 
    for all \mbox{$2\leq i \leq m_1$}, 
  \item $\TOP{2}(\rho_i') \wordequiv{n_2-i}{z} \TOP{2}(\hat\rho_i)$,
  \item for all
    \mbox{$2\leq i < m_1$},
    $\rho'_i\trans{\delta} \rho'_{i+1}$  if
    $\rho_i\trans{\delta} \rho_{i+1}$; in this case
    $\height(\rho'_{i+1})\leq \height(\rho'_i)+1$,
  \item for all
    \mbox{$0\leq i < m_1$},
    $\rho'_i\plusedge \rho'_{i+1}$ if
    $\rho_i \plusedge \rho_{i+1}$; in this case 
    the use of Proposition \ref{Prop:ConstructOneStep} ensures that
    $\height(\rho'_{i+1})\leq
    \BoundHeightOnestepConstructionSimultanious^{n,z}_{l',n_2}(1, 
    \height(\rho'_i), n_2-i, z)$
  \end{enumerate}
  By definition, it is clear that conditions 1--3 hold also for all
  $m_0\leq i < 0$. Using Lemma
  \ref{LemmaConstructedRelAncEquiv}, we obtain that 
  $\rho \RelAncequiv{l'+1}{n_2'}{z}{n_1'} \rho'$. 
  A simple induction shows that
  $\height(\rho_i')\leq H_i^{\text{loc}}$ whence
  $\height(\rho')\leq
  \BoundHeight(n+1, z, l', n_1', n_2')$.
  Furthermore, due to Lemma \ref{CorDistRelAnc} $\lvert \rho_i \rvert - \lvert
  \rho_0 \rvert \leq 2(l'+1)$ whence
  $\lvert \rho_i'\rvert \leq \lvert \rho_0'\rvert + 2(l'+1) \leq \BoundWidth(n+1, z, l',
  n_1', n_2')$.

  We still have to show that the isomorphism between
  $\RelAnc{l}{\bar\rho}$ and $\RelAnc{l}{\bar\rho'}$ and the
  isomorphism between 
  $\RelAnc{l'}{\rho}$ and $\RelAnc{l'}{\rho'}$ are compatible in the sense
  that they induce an isomorphism between
  $\RelAnc{l'}{\bar\rho,\rho}$ and $\RelAnc{l'}{\bar\rho',\rho'}$. 
  The only possible candidate is 
  \begin{align*}
    \varphi_{l'}:\RelAnc{l'}{\bar\rho,\rho} &\to
    \RelAnc{l'}{\bar\rho',\rho'}\\ 
    \pi &\mapsto
    \begin{cases}
      \rho'_i &\text{for }  \pi=\rho_i, m_0\leq i\leq m_1\\
      \varphi_l(\pi) & \text{for } \pi\in\RelAnc{l'+1}{\bar\rho}.
    \end{cases}
  \end{align*}
  In order to see that this is a well-defined function, we have to
  show that if $\rho_i\in\RelAnc{l'+1}{\bar\rho}$
  then $\rho_i'=\varphi_l(\rho_i)$ for each $m_0\leq i \leq m_1$.
  Note that
  $\rho_i\in\RelAnc{l'+1}{\bar\rho}\cap\RelAnc{l'+1}{\rho}$ implies
   that $\pi\in\RelAnc{3l'+3}{\bar\rho}$ for all
  $\pi\in\RelAnc{l'+1}{\rho}$ with $\pi\preceq \rho_i$
  (cf.\ Corollary \ref{CorRelAncDistBound}). But then by definition
  $i\leq 0$ and $\rho'_i = \varphi_l(\rho_i)$. 

  We claim that $\varphi_{l'}$ is an isomorphism. 
  Since we composed $\varphi_{l'}$ of existing isomorphisms
  $\RelAnc{l'}{\bar\rho}\simeq \RelAnc{l'}{\bar\rho'}$ and
  $\RelAnc{l'}{\rho}\simeq \RelAnc{l'}{\rho'}$, respectively, we only have
  to consider the following question: 
  let $\pi\in \RelAnc{l'}{\bar\rho}$ and
  $\hat\pi\in\RelAnc{l'}{\rho}$;
  does  $\varphi_{l'}$  preserve  edges
  between $\pi$ and $\hat\pi$  and does $\varphi_{l'}^{-1}$
  preserve edges between the images of $\pi$ and $\hat\pi$?
  In other words, we have to show that for each
  \begin{align*}
    *\in\{\jumpedge, \jumpleftedge,
    \plusedge, \plusleftedge\}&\cup\{
    \trans{\delta}:
    \delta\in\Delta\}\cup\{\invtrans{\delta}:\delta\in\Delta\},\text{
      we have }
    \pi * \hat\pi \text{ iff } \varphi_{l'}(\pi) *
    \varphi_{l'}(\hat\pi).    
  \end{align*}
   The following case distinction treats all these cases. 
  \begin{itemize}
  \item   Assume that there is some 
    $*\in\{\jumpedge, 
    \plusedge\}\cup
    \{\trans{\delta}:\delta\in\Delta\}$ such that  
    $\pi * \hat\pi$. Then 
    \mbox{$\pi\in\RelAnc{l'+1}{\rho}$}. Thus, there are $m_0 \leq i < j \leq
    m_1$ such that $\pi=\rho_i$ and $\hat\pi=\rho_j$. We have already seen
    that then $\varphi_{l'}(\pi) = \rho'_i$ and
    $\varphi_{l'}(\hat\pi)=\rho'_j$ and 
    these elements are connected by an edge of the same type due to
    the construction of $\rho'_i$ and $\rho'_j$. 
  \item Assume that there is some 
    $*\in\{\jumpleftedge, 
    \plusleftedge\} \cup
    \{\invtrans{\delta}:\delta\in\Delta\}$ such that
    $\pi * \hat\pi$. Then 
    $\hat\pi\in\RelAnc{l'+1}{\bar\rho}$ whence $\varphi_{l'}$
    coincides with the 
    isomorphism $\varphi_l$ on $\pi$ and $\hat\pi$. But $\varphi_l$
    preserves edges whence $\pi*\hat\pi$  implies
    $\varphi_{l'}(\pi) *  \varphi_{l'}(\hat\pi)$.  
  \item Assume that there is some
    $*\in\{\jumpedge, 
    \plusedge\}\cup
    \{\trans{\delta}:\delta\in\Delta\}$ such that  
    $\varphi_{l'}(\pi) * \varphi_{l'}(\hat\pi)$. 
    By definition,
    $\varphi_{l'}(\hat\pi)\in\RelAnc{l'}{\rho'}$ 
    whence \mbox{$\varphi_{l'}(\hat\pi)=\rho'_j$} for some $m_0\leq j \leq
    m_1$. Thus,  \mbox{$\varphi_{l'}(\pi)\in\RelAnc{l'+1}{\rho'}$} whence
    \mbox{$\varphi_{l'}(\pi)=\rho'_i$} for some $m_0\leq i <j$. 
    We claim that \mbox{$\pi = \rho_i$}. Note that due to Corollary
    \ref{CorRelAncDistBound} for all $m_0\leq k \leq i$ we have
    $\rho'_k\in\RelAnc{3l'+3}{\varphi_{l'}(\pi)}$. 
    Since $\varphi_{l'}(\pi)=\varphi_l(\pi)\in\RelAnc{l'}{\bar\rho'}$, 
    we conclude that $\rho'_k\in\RelAnc{4l'+3}{\bar\rho'}$ for all
    \mbox{$m_0\leq k \leq i$}. By
    construction,  
    this implies $i\leq 0$ and
    $\varphi_{l'}(\pi) = \rho_i'=\varphi_l(\rho_i)$.
    Furthermore, since $\pi\in\RelAnc{l'}{\bar\rho}$, 
    $\varphi_{l'}(\pi)=\varphi_l(\pi)$. 
    Since $\varphi_l$ is an isomorphism, it follows that 
    $\pi=\rho_i$.  
    But this implies that there is an edge from $\pi=\rho_i$ to
    $\hat\pi=\rho_j$.
  \item 
    Assume that there is some 
    $*\in\{\jumpleftedge, 
    \plusleftedge\} \cup
    \{\invtrans{\delta}:\delta\in\Delta\}$ such that
    $\varphi_{l'}(\pi)*\varphi_{l'}(\hat\pi)$. This implies
    \begin{align}
     \label{Ancestorinstersection}
     \varphi_{l'}(\hat\pi)\in\RelAnc{l'+1}{\bar\rho'}\cap\RelAnc{l'+1}{\rho}.  
    \end{align}
    By definition, 
    $\hat\pi=\rho_j$ and 
    $\varphi_{l'}(\hat\pi)=\rho'_j$ for some $m_0\leq j \leq m_1$. Due
    to (\ref{Ancestorinstersection}), 
    \mbox{$\rho'_i\in\RelAnc{4l'+3}{\bar\rho'}$} for all $m_0 \leq i \leq j$. 
    Since $\rho_1'\notin\RelAnc{4l'+3}{\bar\rho'}$,  $j\leq 0$. Thus, 
    $\rho_j\in\RelAnc{4l'+3}{\bar\rho}$ and 
    $\varphi_{l'}(\hat\pi)=\varphi_l(\hat\pi)$. Since
    $\varphi_l$ preserves the relevant ancestors of $\bar\rho$ level
    by level, we 
    obtain that $\hat\pi\in\RelAnc{l'+1}{\bar\rho}$. Since
    $\pi\in\RelAnc{l'+1}{\bar\rho}$, we obtain that 
    $\varphi_{l'}(\pi) = \varphi_l(\pi)$ and $\varphi_{l'}(\hat\pi) =
    \varphi_l(\hat\pi)$. Since
    $\varphi_l$ is an isomorphism, we conclude that $\pi*\hat\pi$ 
  \end{itemize}
  Thus, we have shown that $\varphi_{l'}$ is an isomorphism witnessing 
  $\bar\rho,\rho \RelAncequiv{l'}{n_2'}{z}{n_1'} \bar\rho', \rho'$. 
\end{proof}

The previous lemma shows that Duplicator can respond to local moves 
in such a way that she preserves 
isomorphisms of  relevant  ancestors while choosing small stacks.
In the following we deal with global moves of Spoiler. 
We present a strategy for Duplicator that answers a global move
by choosing a run with the following property. 
Duplicator chooses a run such that the isomorphism of relevant
ancestors is preserved and such that all relevant ancestors of
Duplicator's choice end in small stacks. 
We split this proof into two lemmas. 
First, we address
the problem that Spoiler may choose an element  far away from
$\bar\rho$ but close to 
$\bar\rho'$. In this situation, Duplicator has to find a run that has
isomorphic 
relevant ancestors  and which is far away from $\bar\rho'$. Afterwards,
we show that Duplicator can even choose such an element whose relevant
ancestors all end in small stacks. 

\begin{lemma} \label{LemmaGlobalStep1}
  Let $n, l',  n_1',  n_2'\in\N$ be numbers. We set
  $l:=4l'+5$, 
  $n_1 := n_1' + 2 (l'+1)+1$, and $n_2 := n_2' + 4^{l'+1}+1$.
  Let $z\in\N$ satisfy $z > n \cdot 4^l$ and
  $z\geq 2$. 
  
  Let $\bar\rho$ and $\bar\rho'$ be $n$-tuples of runs such that 
  \mbox{$\bar\rho \RelAncequiv{l}{n_2}{z}{n_1}
    \bar\rho'$}. Furthermore, let $\rho$ be a run such 
  that $\RelAnc{l'+1}{\bar\rho} \cap \RelAnc{l'+1}{\rho} = \emptyset$. Then
  there is some run 
  $\rho'$ such that $\bar\rho,\rho \RelAncequiv{l'}{n_2'}{z}{n_1'}
  \bar\rho',\rho'$. 
\end{lemma}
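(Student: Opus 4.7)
The strategy is to place a fresh isomorphic copy of $\RelAnc{l'+1}{\rho}$ in $\mathfrak{N}$ at a location that is disjoint from $\RelAnc{l'+1}{\bar\rho'}$. The hypothesis $\RelAnc{l'+1}{\bar\rho}\cap\RelAnc{l'+1}{\rho}=\emptyset$ together with the lemma preceding Corollary~\ref{CorTouch} already forces $\RelAnc{l'}{\bar\rho}$ and $\RelAnc{l'}{\rho}$ not to touch on the unprimed side; what remains is to arrange the same disjointness on the primed side.

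First I would use Proposition~\ref{Prop:NextRelAnc} to enumerate
\[\RelAnc{l'+1}{\rho}=\{\rho_0\prec\rho_1\prec\cdots\prec\rho_m=\rho\},\]
with consecutive pairs connected by $\trans{}$ or $\plusedge$. By Lemma~\ref{LemmaMinRelAnc} the seed $\rho_0$ is either of width $1$ (and then $\lvert\rho\rvert\le l'+1$) or equals $\Pop{2}^{l'+1}(\rho)$ and is strictly narrower than every other element of $\RelAnc{l'+1}{\rho}$; in either case $m\le 4^{l'+1}$.

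Next I would seed the construction with a run $\rho_0'$ that (i) ends in the same state as $\rho_0$, (ii) satisfies $\rho_0'\stackequiv{n_2}{z}{n_1}\rho_0$, and (iii) avoids $\RelAnc{l}{\bar\rho'}$. Since $\stackequiv{n_2}{z}{n_1}$ has only finitely many classes but the shrinking arguments of Section~\ref{sec:TheoryRuns} (in particular Lemma~\ref{Lem:RunsendInShorStacks}) show that each reachable class is realised by infinitely many distinct runs, and since $\lvert\RelAnc{l}{\bar\rho'}\rvert\le n\cdot 4^l<z$, such a $\rho_0'$ exists; formally one retreats to a common initial segment and invokes Proposition~\ref{Prop:ConstructOneStep} (iterated through the finitely many $\plusedge$-steps needed) with the runs of $\RelAnc{l}{\bar\rho'}$ listed as forbidden targets, exploiting the simultaneous-transfer strength of that proposition up to threshold $z$.

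I would then extend the chain exactly as in Lemma~\ref{LemmaAncestorConstruction}, producing $\rho_1',\ldots,\rho_m'=:\rho'$. At a $\trans{\delta}$-step, apply $\delta$ to $\rho_i'$; Proposition~\ref{Prop:CompatibilityStackOpTypeq} preserves topmost-word equivalence up to a decrement of one. At a $\plusedge$-step, invoke Proposition~\ref{Prop:ConstructOneStep} with a forbidden list containing those $\plusedge$-extensions of $\rho_i'$ that lie in $\RelAnc{l}{\bar\rho'}$. Because this list has size at most $\lvert\RelAnc{l}{\bar\rho'}\rvert<z$, the proposition yields an additional extension $\rho_{i+1}'\notin\RelAnc{l}{\bar\rho'}$ with $\TOP{2}(\rho_{i+1}')\wordequiv{n_2-i-1}{z}\TOP{2}(\rho_{i+1})$.

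Finally, Lemma~\ref{LemmaConstructedRelAncEquiv} gives $\RelAnc{l'+1}{\rho'}=\{\rho_0',\ldots,\rho_m'\}$ and an isomorphism $\psi\colon\AncestorClass{l'+1}{n_1}{n_2-4^{l'+1}}{z}(\rho)\simeq\AncestorClass{l'+1}{n_1}{n_2-4^{l'+1}}{z}(\rho')$; note $n_2-4^{l'+1}\ge n_2'$ and $n_1\ge n_1'$ by the parameter choices. Since every $\rho_i'$ lies outside $\RelAnc{l}{\bar\rho'}\supseteq\RelAnc{l'+1}{\bar\rho'}$, the contrapositive of Corollary~\ref{CorTouch} shows that $\RelAnc{l'}{\bar\rho'}$ and $\RelAnc{l'}{\rho'}$ do not touch. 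Hence $\varphi_l{\restriction}_{\RelAnc{l'}{\bar\rho}}\cup\psi{\restriction}_{\RelAnc{l'}{\rho}}$ is a well-defined isomorphism witnessing $\bar\rho,\rho\RelAncequiv{l'}{n_2'}{z}{n_1'}\bar\rho',\rho'$.

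The hard part is the seeding step: exhibiting a reachable run of possibly large width whose final stack lies in a prescribed $\stackequiv{n_2}{z}{n_1}$-class while avoiding an a priori large forbidden set. The key to making this work is that $z$ has been chosen strictly above $\lvert\RelAnc{l}{\bar\rho'}\rvert$, so the $z$-fold simultaneous capacity of Proposition~\ref{Prop:ConstructOneStep} dominates the obstruction and an admissible seed is always available.
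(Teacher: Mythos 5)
Your skeleton --- enumerating $\RelAnc{l'+1}{\rho}$ as a chain via Proposition~\ref{Prop:NextRelAnc}, mirroring it step by step with Propositions~\ref{Prop:CompatibilityStackOpTypeq} and \ref{Prop:ConstructOneStep}, closing with Lemma~\ref{LemmaConstructedRelAncEquiv}, and gluing to $\varphi_l$ via the non-touching criterion --- coincides with the paper's machinery. The gap is the seeding step, and it is not a technicality: it is the whole content of the lemma. You assert that a seed $\rho_0'$ with $\rho_0'\stackequiv{n_2}{z}{n_1}\rho_0$ that avoids $\RelAnc{l}{\bar\rho'}$ always exists because ``each reachable class is realised by infinitely many distinct runs''. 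That is false: Lemma~\ref{Lem:RunsendInShorStacks} produces one short run per class and only controls $\TOP{2}$, not the full $\stackequiv{n_2}{z}{n_1}$-class, and a $2$-PS may generate a finite NPT in which every run ending in the class of $\rho_0$ is an initial segment of a component of $\bar\rho'$. Your fallback, invoking the ``$z$-fold simultaneous capacity'' of Proposition~\ref{Prop:ConstructOneStep} with $\RelAnc{l}{\bar\rho'}$ as forbidden targets, misreads that proposition: to exclude $m-1$ runs on the primed side it requires $m-1$ \emph{matching} runs on the unprimed side (same final states, $\wordequiv{n-1}{z}$-equivalent created words, all distinct from the run being transferred). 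Such counterparts are available precisely when the anchor is $\varphi_l(\pi)$ for some $\pi\in\RelAnc{l}{\bar\rho}$, so that the forbidden extensions pull back through $\varphi_l^{-1}$ to extensions of $\pi$ distinct from the next chain element; that is exactly Corollary~\ref{Cor:AncestorConstruction}, and it is unavailable at an arbitrary fresh anchor. If the relevant state and word-class admit a unique $\plusedge$-extension, no choice of threshold $z$ lets you avoid it.

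The paper's proof confronts exactly the case your plan skips. When $\RelAnc{l'+1}{\rho}$ meets $\RelAnc{l'+1}{\bar\rho'}$, it pulls the intersection back through $\varphi_l^{-1}$ into $\RelAnc{l}{\bar\rho}$, re-anchors the mirrored chain there, and iterates; a finiteness-and-cycle argument (if $\pi^m_0=\pi^i_0$ with $i>0$, applying $\varphi_l$ would contradict the minimality of $m$) shows that the process either terminates with a disjoint copy or collapses to the situation where the initial segment of the chain lies in $\RelAnc{3l'+3}{\bar\rho}$, at which point Corollary~\ref{Cor:AncestorConstruction} finally supplies the avoidance. None of this can be replaced by ``place a fresh copy somewhere disjoint'', because a disjoint location need not exist. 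To repair your argument you would need either the pull-back and termination argument, or an independent proof that a fresh seed in the prescribed $\stackequiv{n_2}{z}{n_1}$-class and outside the forbidden set always exists; neither is supplied.
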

\begin{proof}
  Let  $\varphi_l$ be the isomorphism witnessing
  $\RelAnc{l}{\bar\rho} \RelAncequiv{l}{n_2}{z}{n_1}
  \RelAnc{l}{\bar\rho'}$.  
  If 
  $\RelAnc{l'+1}{\bar\rho'} \cap \RelAnc{l'+1}{\rho} = \emptyset$,
  we can
  set $\rho':=\rho$ and we are done. 
  Otherwise, let $\pi^0_0 \prec \pi^0_1 \prec \dots \prec \pi^0_{n_0}$
  be an enumeration of all elements of 
  \mbox{$\RelAnc{l'+1}{\bar\rho'} \cap \RelAnc{l'+1}{\rho}$}. 
  Due to Corollary \ref{CorTouch}, 
  \mbox{$\RelAnc{l'+1}{\rho} \cap\{ \pi:\pi\preceq \pi^0_{n_0}\} \subseteq 
  \RelAnc{3l'+3}{\bar\rho'}$}. Since $l> 3l'+3$, we can set
  $\pi^1_i:=\varphi_{l}^{-1}(\pi^0_i)$ for all $0\leq i \leq n_0$. 
  Due to Lemmas  \ref{LemmaAncestorConstruction} and 
  \ref{LemmaConstructedRelAncEquiv}, there
  is an 
  extension $\pi^1_{n_0}\prec \rho^1$ such that 
  $\RelAnc{l'+1}{\rho} \RelAncequiv{l'+1}{n_2'}{z}{n_1'}
  \RelAnc{l'+1}{\rho^1}$ and $\pi^1_i\in\RelAnc{l'+1}{\rho^1}$ for all $0\leq
  i \leq 
  n_0$. If $\RelAnc{l'+1}{\rho^1} \cap \RelAnc{l'+1}{\bar\rho'}= 
  \emptyset$ we set 
  $\rho':=\rho^1$ and we are done. 

  Otherwise we can repeat this process,
  defining $\pi^2_i:=\varphi_{l}^{-1}(\pi^1_i)$ for the maximal $n_1\leq
  n_0$ such that  
  $\pi^1_{i}\in \RelAnc{3l'+3}{\bar\rho'}$
  for all $0\leq i\leq n_0$. Then we extend this run to some run
  $\rho^2$. 
  If this process 
  terminates with the construction of some run $\rho^i$ such that 
  $\RelAnc{l'+1}{\rho^i} \cap \RelAnc{l'+1}{\bar\rho'} = \emptyset$,
  we
  set $\rho':=\rho^i$ and we are done. 
  If this is not the case, recall that $\RelAnc{3l'+3}{\bar\rho'}$ is
  finite. 
  Thus, we eventually reach the
  step were we have defined $\pi^0_0, \pi^1_0, \dots, \pi^m_0$ for some $m\in\N$
  such that for the first time $\pi^m_0 = \pi^i_0$ for some
  $i< m$.  But if $i>0$, then 
  \begin{align*}
    \pi^{m-1}_0 =\varphi_{l}(\pi^m_0)= \varphi_{l}(\pi^i_0)=\pi^{i-1}_0.  
  \end{align*}
  This
  contradicts the minimality of 
  $m$. We conclude that $\pi^m_0=\pi^0_0$ which implies that 
  \mbox{$\pi^0_0\in \RelAnc{3l'+3}{\bar\rho}$.} 
  Furthermore, by definition 
  we have  $\pi^0_0\in \RelAnc{l'+1}{\rho}$ and
  there is a maximal $i$ such that  $\pi^0_i\in\RelAnc{3l'+3}{\bar\rho}$. 
  Since $z>\lvert \RelAnc{l}{\bar\rho} \rvert$, 
  we can apply Corollary \ref{Cor:AncestorConstruction} 
  and construct a chain $\varphi_l(\pi^0_i)\prec \rho_{i+1}' \prec
  \rho_{i+2}'\prec  \dots
  \prec \rho'$ such that
  $\bar\rho, \rho \RelAncequiv{l'}{n_2'}{z}{n_1'} \bar\rho',
  \rho'$. 
\end{proof}

We have seen that  that Duplicator can answer every global 
challenge of Spoiler. But we still need to prove that she can choose
an element whose relevant ancestors all end in small stacks. 
The use of the pumping construction from Lemma 
\ref{Lem:RunsendInShorStacks} allows to prove this fact. 

\begin{lemma} \label{LemmaGlobalStep2}
  Let $n,l',n_1',n_2',z\in\N$, let
  $l:=4l'+5$, $n_1:=n_1'+2(l'+1)+1$, and let
  $n_2:=n_2'+4^{l'+1}+1$.  
  Furthermore, let $\bar\rho$ be an $n$-tuple of runs 
  such that  
  \mbox{$\height(\pi) \leq \BoundHeight(n,z,l,n_1,n_2)$} and
  $\lvert \pi\rvert \leq \BoundWidth(n,z,l,n_1,n_2)$ for all
  $\pi\in\RelAnc{l}{\bar\rho}$. 
  If $\rho$ is a run such
  that $\RelAnc{l'+1}{\bar\rho} \cap \RelAnc{l'+1}{\rho}=\emptyset$, 
  then there is
  some run $\rho'$ such that
  \begin{align*}
    &\bar\rho,\rho \RelAncequiv{l'}{n_2}{z}{n_1} \bar\rho, \rho',\\
    &\height(\pi)\leq \BoundHeight(n+1, z, l', n_1', n_2'), \text{ and}\\
    &\lvert \pi \rvert\leq \BoundWidth(n+1, z, l', n_1', n_2')    
  \end{align*}
  for 
  all $\pi\in \RelAnc{l'}{\bar\rho,\rho'}$.
\end{lemma}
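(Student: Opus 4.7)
I propose to prove Lemma \ref{LemmaGlobalStep2} by combining the shrinking construction of Lemma \ref{Lem:RunsendInShorStacks} with the chain-building technique of Lemma \ref{LemmaAncestorConstruction}, exploiting the simultaneous transfer property of Proposition \ref{Prop:ConstructOneStep} to maintain disjointness from $\RelAnc{l}{\bar\rho}$.

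\medskip

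\noindent\textbf{Outline.}
First, enumerate $\RelAnc{l'+1}{\rho}$ as a chain $\rho_0 \prec \rho_1 \prec \dots \prec \rho_m=\rho$ where consecutive elements are connected by $\trans{}$ or $\plusedge$ (Proposition \ref{Prop:NextRelAnc}). By Lemma \ref{LemmaMinRelAnc}, the minimal element $\rho_0$ is either $\Pop{2}^{l'+1}(\rho)$ or of width $1$. Apply Lemma \ref{Lem:RunsendInShorStacks} iteratively (at equivalence level $n_2'+n_1'+4^{l'+1}-1$) to obtain a run $\hat\rho_0$ ending in the same state as $\rho_0$, with $\TOP{2}(\hat\rho_0)\wordequiv{n_2'+n_1'+4^{l'+1}-1}{z}\TOP{2}(\rho_0)$, topmost word bounded by $\FuncBoundTopWord(n_2'+n_1'+4^{l'+1}-1,z)$, height bounded by $H_1^{\text{glob}}$, and width bounded by $\FuncBoundWidthWord(H_1^{\text{glob}})$ (this matches the $H^{\text{glob}}$ branch of Definition \ref{Def:BoundingFunctions}). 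If $\rho_0=\Pop{2}^{l'+1}(\rho)$ we also ensure stack-equivalence on the lower level stacks; since these iterated pops have width at most $l'+1$, the small extra cost is absorbed into the bounds.

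\medskip

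\noindent\textbf{Extending the chain.}
Next, build $\hat\rho_1\prec\hat\rho_2\prec\cdots\prec\hat\rho_m=:\rho'$ inductively, mirroring the moves connecting $\rho_i$ to $\rho_{i+1}$. If $\rho_i\trans{\delta}\rho_{i+1}$, apply the same transition $\delta$ to $\hat\rho_i$ (legal by stack equivalence and Proposition \ref{Prop:CompatibilityStackOpTypeq}). If $\rho_i\plusedge \rho_{i+1}$, invoke Proposition \ref{Prop:ConstructOneStep} to obtain $\hat\rho_{i+1}$ extending $\hat\rho_i$ by a short new top word, equivalent to the one added by $\rho_{i+1}$ at the next lower level. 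The crucial point is that Proposition \ref{Prop:ConstructOneStep} permits simultaneous transfer of up to $z-1$ competing runs; since $z > n\cdot 4^l \geq |\RelAnc{l}{\bar\rho}|+m$, we may provide as competitors all elements of $\RelAnc{l}{\bar\rho}$ whose decomposition agrees with $\hat\rho_i$ on the prefix-and-state data, forcing the returned $\hat\rho_{i+1}$ to differ from every one of them. Thus $\hat\rho_{i+1}\notin\RelAnc{l}{\bar\rho}$.

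\medskip

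\noindent\textbf{Equivalence of combined tuples and bounds.}
By Lemma \ref{LemmaConstructedRelAncEquiv}, the construction yields $\RelAnc{l'+1}{\rho'}=\{\hat\rho_i:0\leq i\leq m\}$ and $\rho\RelAncequiv{l'+1}{n_2}{z}{n_1}\rho'$ (the loss of $4^{l'+1}$ in the $\wordequiv{}{}{}$-index is absorbed by having chosen level $n_2'+4^{l'+1}+1=n_2$ initially). By construction $\RelAnc{l'+1}{\rho'}\cap\RelAnc{l'+1}{\bar\rho}=\emptyset$, so the isomorphism $\varphi$ between $\RelAnc{l'+1}{\rho}$ and $\RelAnc{l'+1}{\rho'}$ extends trivially by the identity on $\RelAnc{l'}{\bar\rho}$ to witness $\bar\rho,\rho\RelAncequiv{l'}{n_2}{z}{n_1}\bar\rho,\rho'$ (there are no cross-edges to worry about, exactly as in the conclusion of Lemma \ref{LemmaGlobalStep1}). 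For the bounds: inductively $\height(\hat\rho_{i+1})\leq H_{i+1}^{\text{glob}}$ in the plus-one case, which is bounded by $\BoundHeight(n+1,z,l',n_1',n_2')$ after $n_1'+4^{l'}\geq m$ steps (using Corollary \ref{CorDistRelAnc} to bound how many new words can appear). The width of every element of $\RelAnc{l'}{\rho'}$ deviates from $|\hat\rho_0|$ by at most $l'+1$ (again Corollary \ref{CorDistRelAnc}), giving $|\pi|\leq \FuncBoundWidthWord(H_1^{\text{glob}}) + (l'+1) \leq \BoundWidth(n+1,z,l',n_1',n_2')$.

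\medskip

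\noindent\textbf{Main obstacle.}
The delicate part is the simultaneous treatment of two competing requirements during the plus-one step: preserving the prescribed $\wordequiv{n_2-i}{z}$-equivalence class of the newly added word (so that Lemma \ref{LemmaConstructedRelAncEquiv} ultimately applies), while avoiding the finite set of runs in $\RelAnc{l}{\bar\rho}$ that happen to extend $\hat\rho_i$. This is precisely the reason Proposition \ref{Prop:ConstructOneStep} is formulated with a simultaneous transfer of $m-1$ competitor runs; we must verify that $z>n\cdot 4^l$ is indeed large enough to handle all potential collisions at every step, which in turn justifies the conservative choice of the threshold $z$ in Proposition \ref{Prop:2NPT-Strategy}.
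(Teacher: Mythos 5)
Your overall scaffolding (decompose $\RelAnc{l'+1}{\rho}$ into a chain, replace the minimal element by a short run, rebuild the chain with single transitions and Proposition \ref{Prop:ConstructOneStep}, then invoke Lemma \ref{LemmaConstructedRelAncEquiv}) matches the paper, but your mechanism for guaranteeing $\RelAnc{l'+1}{\rho'}\cap\RelAnc{l'+1}{\bar\rho}=\emptyset$ does not work. You delegate disjointness entirely to the competitor-avoidance feature of Proposition \ref{Prop:ConstructOneStep}, but (a) it says nothing about the base element: after shrinking, $\hat\rho_0$ may land anywhere, including inside $\RelAnc{l'+1}{\bar\rho}$; (b) it says nothing about steps $\rho_i\trans{\delta}\rho_{i+1}$, where $\hat\rho_{i+1}$ is the \emph{unique} $\delta$-successor of $\hat\rho_i$ and no avoidance is possible; and (c) even in the $\plusedge$ case, the proposition requires, for each competitor $\hat\pi$ extending $\hat\rho_i$, a matching run extending $\rho_i$ with an equivalent new top word, and you do not establish that such partners exist (in Corollary \ref{Cor:AncestorConstruction} they are supplied by the isomorphism $\varphi_l$; here both tuples are $\bar\rho$ and your chain has already diverged from the original one). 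The paper avoids all of this by a size dichotomy: it keeps $\rho'_j:=\rho_j$ verbatim as long as all runs so far respect the level-$n$ bounds $\BoundHeight(n,z,l,n_1,n_2)$ and $\BoundWidth(n,z,l,n_1,n_2)$, and once it is forced to modify, it arranges that every modified run \emph{exceeds} one of these bounds while staying below the level-$(n+1)$ bounds. Since every member of $\RelAnc{l}{\bar\rho}$ satisfies the level-$n$ bounds by hypothesis, a modified run can never lie in $\RelAnc{l}{\bar\rho}$, and an unmodified prefix lying in $\RelAnc{l'+1}{\bar\rho}$ would contradict $\RelAnc{l'+1}{\rho}\cap\RelAnc{l'+1}{\bar\rho}=\emptyset$. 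This ``keep it large'' invariant is the key idea your proposal is missing.

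A second gap concerns the hypothesis $\rho_0\stackequiv{n_2}{z}{n_1}\hat\rho_0$ of Lemma \ref{LemmaConstructedRelAncEquiv}. Lemma \ref{Lem:RunsendInShorStacks} only controls $\TOP{2}(\hat\rho_0)$ up to $\wordequiv{}{z}$, i.e.\ the topmost word; the equivalence $\stackequiv{n_2}{z}{n_1}$ requires the top $n_1$ words of the two final stacks to be pairwise equivalent, and your remark that the ``iterated pops have width at most $l'+1$'' does not supply this. The paper handles it by first extending the chain \emph{backwards} from $\rho_0$ by up to $n_1'-1$ further $\plusedge$/clone steps to a run $\rho_{m_0}$ (so that either $m_0=1-n_1'$ or $\lvert\rho_0\rvert<n_1'$), shrinking only $\rho_{m_0}$, and then rebuilding forward; every word of the final stack of $\rho'_0$ that the $\stackequivTyp{n_2'}{z}{n_1'}$-colouring inspects is then created under controlled equivalence. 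This backward extension is also why the recursion for $H^{\text{glob}}_{i+1}$ in Definition \ref{Def:BoundingFunctions} runs over $n_1'+4^{l'+1}$ rather than $4^{l'+1}$ steps. Without both repairs the claimed isomorphism of expanded relevant-ancestor structures does not follow.
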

\begin{proof}
  Let $\rho_0 \prec \rho_1 \prec \dots \prec \rho_m := \rho$ be runs such that
  $\RelAnc{l'+1}{\rho}=\{\rho_i:0\leq i \leq m\}$. 

  Let $m_0 > -n_1'$ be minimal such that there are runs
  $\rho_{m_0}  \prec  \rho_{{m_0}+1} \prec
    \dots \prec 
  \rho_0$ such that  for each $m_0\leq i < 0$ we have
  $\rho_i \plusedge \rho_{i+1}$ or $\rho_i \trans{\delta}
  \rho_{i+1}$ for some $\Clone{2}$ transition $\delta$. Due to the
  construction,  either $m_0=1-n_1'$ or $\lvert \rho_0\rvert< n_1'$.

  We construct runs $\rho_{m_0}' \prec \rho_{m_0+1}' \prec
  \dots \prec \rho_m'$ ending in small stacks such that 
  \mbox{$\rho_m'
    \RelAncequiv{l'}{n_2'}{z}{n_1'} \rho_m$} as follows.
  If $\height(\rho_{m_0}) \leq \BoundHeight(n, z, l, n_1, n_2)$ and 
  $\lvert \rho_{m_0}\rvert \leq \BoundWidth(n, z, l, n_1, n_2)$, then we
  set $\rho_{m_0}':=\rho_{m_0}$. 
  
  Otherwise,  Lemma 
  \ref{Lem:RunsendInShorStacks} provides a run $\rho'_{m_0}$ such
  that 
  \begin{align*}
    &\height(\rho'_{m_0})\leq H_1^{\text{glob}}:=\BoundHeight(n, z, l, n_1, n_2) +
    \ConstBoundHeightWord + \FuncBoundTopWord(n_2'+n_1'+4^{l'+1}-1)
    \\
    &\lvert \rho'_{m_0}\rvert \leq
    \BoundWidth(n, z, l, n_1, n_2) + \FuncBoundWidthWord( H_1^{\text{glob}})
    ,\\
    &\text{either } \height(\rho'_{m_0})> \BoundHeight(n,z,l,n_1,n_2)
    \text{ or }
    \lvert \rho'_{m_0} \rvert > \BoundWidth(n,z,l,n_1,n_2), \text{ and}\\
    &\rho_{m_0} \stackequiv{n_2' + n_1' + 4^{l'+1}-1}{z}{1} \rho'_{m_0}.
  \end{align*}
  The last condition just says that $\TOP{2}(\rho_{m_0})
  \wordequiv{n_2'+n_1'+4^{l'}-1}{z} \TOP{2}(\rho'_{m_0})$. 

  Having constructed $\rho'_{i}$ for $i< m$, we construct
  $\rho'_{i+1}$ as follows. 
  \begin{enumerate}
  \item If $\height(\rho_j) \leq \BoundHeight(n, z, l, n_1, n_2)$ and
    $\lvert \rho_j \rvert \leq \BoundWidth(n, z, l, n_1, n_2)$ for all 
    $m_0 \leq j \leq i+1$, set \mbox{$\rho_{i+1}':=\rho_{i+1}$.}
  \item Otherwise, if $\rho_i \trans{\delta} \rho_{i+1}$ then define
    $\rho'_{i+1}$ such that $\rho'_i \trans{\delta} \rho'_{i+1}$.
  \item If none of the previous cases applies, 
    then $\rho_i \plusedge \rho_{i+1}$ and using 
    Proposition \ref{Prop:ConstructOneStep} we construct 
    $\rho'_{i+1}$ such that 
    \begin{enumerate}
    \item $\rho'_i \plusedge \rho'_{i+1}$,
    \item $\height(\rho'_{i+1}) \leq 
      \BoundHeightOnestepConstructionSimultanious(1,
      \max\{\height(\rho'_i), \BoundHeight(n,z,l,n_1,n_2)\},
      n_2'+n_1'-(i+1-m_0))$,
    \item $\height(\rho'_{i+1}) > \BoundHeight(n,z,l,n_1,n_2)$ if
      $\height(\rho_j)\leq \BoundHeight(n, z, l, n_1, n_2)$ for all
      $m_0\leq j\leq i$ and if
      $\lvert \rho'_j \rvert \leq \BoundWidth(n, z, l, n_1, n_2)$ for
      all $m_0 \leq j \leq i+1$, and
    \item $\TOP{2}(\rho_{i+1})
      \wordequiv{n_2'+n_1'+4^{l'+1}-(i+1-m_0)-1}{z} \TOP{2}(\rho'_{i+1})$.
    \end{enumerate}
  \end{enumerate}
  First of all, note that 
  $n_2' + n_1' +4^{l'+1} -(0-m_0) - 1 \geq n_2' +  4^{l'+1}$ whence
  $\rho_0 \stackequiv{n_2'+4^{l'+1}}{z}{n_1'} \rho_0'$. 
  Since $m\leq 4^{l'+1}$ and since $\lvert \rho'_j \rvert \geq \lvert
  \rho'_0 \rvert$ for all $0 \leq j \leq m$, we conclude that 
  $\rho_{j} \stackequiv{n_2'}{z}{n_1'} \rho_j'$ for all $0\leq j \leq
  m$. 
  Using Lemma \ref{LemmaConstructedRelAncEquiv}, we see that
  $\rho' \RelAncequiv{l'+1}{n_2'}{z}{n_1'} \rho$. 
  
  Furthermore, $\RelAnc{l'+1}{\rho'} \cap
  \RelAnc{l'+1}{\bar\rho}=\emptyset$:
  heading for a contradiction, assume that 
  \begin{align*}
    \rho'_i\in
    \RelAnc{l'+1}{\rho'}\cap\RelAnc{l'+1}{\bar\rho}    
  \end{align*}
 for some $0\leq i \leq m$. 
  Then $\rho'_j\in\RelAnc{3l'+3}{\bar\rho} \subseteq
  \RelAnc{l}{\bar\rho}$ for all 
  $0\leq j \leq i$. Thus, 
  \mbox{$\height(\rho'_j)\leq \BoundHeight(n, z, l, n_1, n_2)$} and 
  $\lvert\rho'_j\rvert \leq  \BoundWidth(n, z, l, n_1, n_2)$ for all
  $0\leq j \leq i$.  
  By construction, it follows that $\rho'_j=\rho_j$ 
  for all $m_0 \leq i \leq j$. But then
  $\rho_j=\rho'_j\in\RelAnc{l'+1}{\rho}\cap\RelAnc{l'+1}{\bar\rho}$
  which contradicts 
  $\RelAnc{l'+1}{\rho}\cap\RelAnc{l'+1}{\bar\rho} = \emptyset$.

  Thus, $\RelAnc{l'}{\bar\rho}$ and $\RelAnc{l'}{\rho'}$ do not touch whence 
  $\bar\rho,\rho \RelAncequiv{l'}{n_2}{z}{n_1} \bar\rho, \rho'$. 
\end{proof}

\subsection{Construction of Short Equivalent Runs}
\label{SubsectionRunBounds}

Combining the results of the previous section, we obtain that for each
$n$-tuple in $\HONPT(\mathcal{S})$ there is an
$\FO{k}$-equivalent one such that the relevant 
ancestors of the second tuple only contain runs that end in small
stacks. 
In order to prove
Proposition \ref{Prop:2NPT-Strategy}, 
we still have to bound the length of these runs. 
For this purpose, we use
Corollaries \ref{Cor:GlobalBoundRun} and \ref{Cor:GlobalBoundRun2}
in order to replace long runs between relevant
ancestors by shorter ones. 


\begin{proof}[of Proposition \ref{Prop:2NPT-Strategy}]
  Using the Lemmas  \ref{LemmaLocalStep} -- \ref{LemmaGlobalStep2}, we
  find some candidate $\hat\rho'$ such that
  \mbox{$\bar\rho,\rho \RelAncequiv{l'}{n_2'}{z}{n_1'} 
    \bar\rho', \hat \rho'$}
  and the height and width 
  of the last stacks of all $\pi\in \RelAnc{l'+1}{\hat \rho'}$ are bounded by
  \mbox{$\BoundHeight(n+1,z,l',n_1',n_2')$} and 
  $\BoundWidth(n+1,z,l',n_1',n_2')$, respectively. 

  Recall that there is a chain
  $\hat \rho_0'\prec 
  \hat \rho_1' \prec \dots \prec \hat \rho_m'=\hat \rho'$ such that
  $\RelAnc{l'+1}{\hat \rho'}  
  = \{\hat \rho_i': 0 \leq i \leq m\}$. This chain satisfies 
  $0\leq m\leq 4^{(l'+1)}$ 
  and
  $\hat \rho_i'\trans{\delta} \hat \rho_{i+1}'$ or 
  $\hat \rho_i' \plusedge \hat \rho_{i+1}'$ for all
  $0\leq i < m$.

  If $\hat \rho_0'\notin\RelAnc{3l'+3}{\bar\rho'}$, then
  we can use Corollary \ref{Cor:GlobalBoundRun} and choose some $\rho_0'$  
  that ends in the same configuration as $\hat \rho_0'$ such
  that $\rho_0'\notin \RelAnc{3l'+3}{\bar\rho'}$ and 
  \begin{align*}
    \length(\rho_0')\leq 1+ &2\cdot \BoundHeight(n+1,z,l',n_1',n_2') \cdot
    \BoundWidth(n+1, z, l', n_1', n_2'))\\ &\cdot 
    (1+ \FuncBoundLoopLength{\mathcal{N}}{z}(\BoundHeight(n+1, z, l',
    n_1', n_2'))).     
  \end{align*}

  If $\hat \rho_0'\in\RelAnc{3l'+3}{\bar\rho'}$ 
  let $0 \leq i \leq m$ be maximal such that 
  $\hat \rho_i'\in\RelAnc{l}{\bar\rho'}$.
  In this case let $\rho_j' :=\hat \rho_j'$ for all $0\leq j \leq i$. 
  
  By now, we have obtained a chain $\rho_0' \prec \rho_1' \prec \dots \prec
  \rho_i'$ for some $0\leq i \leq m$.   
  Using Corollary \ref{Cor:GlobalBoundRun2}, we can extend this chain to a chain
  $\{\rho_i': 0\leq i \leq m\}$ such that 
  \begin{enumerate}
  \item $\rho'_i(\length(\rho'_i)) = \hat\rho_i'(\length(\hat\rho_i'))$,
    i.e., $\rho_i'$ and $\hat\rho_i'$ end in the  same configuration, 
  \item   $\rho_i'\trans{\delta} \rho_{i+1}'$  iff  
    $\hat \rho_i'\trans{\delta} \hat \rho_{i+1}'$   for all $0\leq i < m$,
  \item $\rho_i' \plusedge \rho_{i+1}'$  iff 
    $\hat \rho_i' \plusedge \hat \rho_{i+1}'$ for all
    $0\leq i < m$,
  \item   $\length(\rho_{i+1}') \leq \length(\rho_i') + 
    2 \cdot\BoundHeight(n+1,z,l',n_1',n_2') \cdot 
    (1+ \FuncBoundLoopLength{\mathcal{N}}{z}(
    \BoundHeight(n+1, z, l', n_1',
    n_2')))$, and
  \item  $\hat \rho_j'\in\RelAnc{3l'+3}{\bar\rho'}$  for
    all $0\leq j \leq i$ implies  $\rho_i'=\hat \rho_i'$ (here we use
    that $z> n \cdot 4^{3l'+3}$).
  \end{enumerate}

  Using Lemma \ref{LemmaConstructedRelAncEquiv}, we conclude that $\rho'
  \RelAncequiv{l'+1}{n_2'}{z}{n_1'} 
  \hat \rho'$ for $\rho':=\rho_m'$. Furthermore, we claim that 
  $\RelAnc{l'+1}{\bar\rho'} \cap \RelAnc{l'+1}{\hat \rho'} =   
  \RelAnc{l'+1}{\bar\rho'} \cap \RelAnc{l'+1}{\rho'}$. 
  By definition the inclusion from left to right is clear. For the
  other direction, assume that
  there is some element 
  \mbox{$\rho_i'\in \RelAnc{l'+1}{\bar\rho'} \cap
  \RelAnc{l'+1}{\rho'}$.}
  By Lemma \ref{CorRelAncDistBound}, this implies that 
  $\rho_j'\in \RelAnc{3l'+3}{\bar\rho'}$ 
  for all $0\leq j \leq i$. Thus, $\rho_i'=\hat \rho_i'$, which
  implies that 
  $\rho_i'\in \RelAnc{l'+1}{\bar\rho'} \cap \RelAnc{l'+1}{\hat\rho'}$. 
  
  We conclude that $\bar\rho, \rho 
  \RelAncequiv{l'}{n_2'}{z}{n_1'} \bar\rho', \hat \rho'
  \RelAncequiv{l'}{n_2'}{z}{n_1'} \bar\rho', \rho'$.
  By definition, the length of $\rho'$ is bounded by 
  $\BoundRunLength(n+1, z, l', n_1', n_2')$ (cf. Definition 
  \ref{Def:BoundingFunctions}).  
\end{proof}

\section{FO Model Checking Algorithm for Level 2  
  Nested Pushdown  Trees} 
\label{sec:FODecidability}

Fix a $2$-PS $\mathcal{N}$ and set $\mathfrak{N}:=\HONPT(\mathcal{N})$.
We have shown that  if $\mathfrak{N} \models \exists x \varphi(x)$
then  there is a small run $\rho$ such that 
$\mathfrak{N}\models \varphi(\rho)$. Even when we add parameters
$\rho_1, \dots, \rho_n$
this result still holds, i.e., there is a short witness $\rho$ 
compared to the length of the parameters. Hence, we can decide
$\FO{}$ on $2$-NPT with the
following algorithm.

\begin{enumerate}
\item Given a $2$-PS $\mathcal{N}$ and a
  first-order sentence $\varphi$, the algorithm first computes the
  quantifier rank $q$ of $\varphi$.
\item Then it computes numbers $z,l^1,l^2,l^3,\dots, l^q, n_1^1, n_1^2,
  n_1^3 \dots, n_1^q, n_2^1, n_2^2, n_2^3, \dots, n_2^q\in\N$ such
  that for each $i< q$ the numbers $z,l^i,l^{i+1},n_1^i, n_1^{i+1},
  n_2^i, n_2^{i+1}$ can be used as parameters in 
  Proposition \ref{Prop:2NPT-Strategy}.
%
\item These numbers define a constraint $S=(S^{\mathfrak{N}}(i))_{i\leq
    q}$ for Duplicator's strategy in the $q$-round game on
  $\mathfrak{N}$ and $\mathfrak{N}$ as follows. 
  We set $(\rho_1, \rho_2, \dots, \rho_m)\in S^{\mathfrak{N}}_m$
  if for each $i\leq m$ and $\pi\in\RelAnc{l_i}{\rho_i}$
  \begin{align*}
    &\length(\pi)\leq \BoundRunLength(i,z,l^i,n_1^i,n_2^i), \\
    &\height(\pi)\leq \BoundHeight(i,z,l^i,n_1^i,n_2^i), \text{ and}\\
    &\lvert\pi\rvert \leq \BoundWidth(i,z,l^i,n_1^i,n_2^i). 
  \end{align*}
\item Due to Proposition \ref{Prop:2NPT-Strategy}, Duplicator has an
  $S$-preserving strategy in the $q$-round game on $\mathfrak{N}$
  and $\mathfrak{N}$. Thus, applying the algorithm SModelCheck
  (cf. Algorithm \ref{AlgoSPReservingModelCheck} in Section
  \ref{sec:EFGame}) decides whether $\mathfrak{N}\models \varphi$. 
\end{enumerate}

\subsection{Complexity of the Algorithm}
At the moment, we do not know any bound on the complexity of the
algorithm presented. 
The problem towards giving a complexity bound on our algorithm 
is that the function 
$\FuncBoundLoopLength{\mathcal{N}}{z}$ bounding the size of the
shortest $z$ loops of each stack depends nonuniformly on the $2$-PS
$\mathcal{N}$. For a fixed $2$-PS $\mathcal{N}$ we can compute this
dependence, but we have no general bound on the result in terms of
$\lvert \mathcal{N} \rvert$. 
A possible approach to concrete bounds on
$\FuncBoundLoopLength{\mathcal{N}}{z}$ may be the application of 
Pumping lemmas for $2$-PS (cf.~\cite{%
  Hayashi73,Parys2012HOPGpumping}). But further
investigations on this question are necessary. 

If we restrict our attention to the case of  $1$-NPT the picture
changes notably. The $\FO{}$ model checking problem on $1$-NPT can be
solved by an 
$2$-EXPTIME alternating Turing machine, i.e., $\FO{}$ model checking
on $1$-NPT is in ATIME($\exp_2$). We already proved this bound
in \cite{Kartzow09} using 
a different approach. In the final part of this section, we sketch how 
this result also follows from the approach in this paper.
Recall the characterisation of
$\plusedge$ 
in the $1$-NPT case from Remark 
\ref{rem:CharacterisePlusOneRelation}. We have 
$\rho\plusedge \rho\circ\pi$ if $\pi$ performs a
push transition followed by level $1$-loop (i.e., a run that starts
and ends in the same stack and never inspects this stack). 
Furthermore, for $\rho_n$ the minimal element of $\RelAnc{n}{\rho}$ we
have $\rho_n=\Pop{1}^n(\rho)$. Furthermore, successive elements of
$\RelAnc{n}{\rho}$ are connected by a single edge or by 
$\plusedge$. Thus, the final stacks of two
successive relevant ancestors differ in at most one letter. 
The following lemma tells us that the number of
$\plusedge$ edges starting at some element $\rho$
only depends on the state of $\rho$ and the symbol on top of the
stack. 
\begin{lemma}
  Let $\mathcal{N}$ be a $1$-PS. Let $q,\hat q\in Q$,
  $w,w'\in\Sigma^*$ and $a\in \Sigma$. 
  Then there is a bijection between the runs from $(q,wa)$ to $(\hat
  q, wa)$ that never visit $w$ and the runs from 
  $(q,w'a)$ to $(\hat q, w'a)$ that never visit $w'$. 
\end{lemma}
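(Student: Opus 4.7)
The plan is to exhibit an explicit bijection by a prefix-replacement argument: given a run from $(q,wa)$ to $(\hat q,wa)$ that never visits $w$, we copy its transition sequence verbatim to obtain a run from $(q,w'a)$ to $(\hat q,w'a)$ that never visits $w'$. This is essentially the 1-dimensional analogue of Lemma \ref{Lem:BlumensathHOLevel2}, and for words the argument is particularly transparent.

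First I would fix the shape of such a loop. Let $\rho = c_0 \trans{\delta_1} c_1 \trans{\delta_2} \cdots \trans{\delta_n} c_n$ be a run with $c_0 = (q,wa)$, $c_n = (\hat q, wa)$, and $c_i \neq (q', w)$ for every $0 < i < n$ and $q' \in Q$. Since $1$-stacks are totally ordered by $\leq$ and a single stack operation changes the stack by one letter, the fact that $\rho$ never reaches a stack strictly shorter than $wa$ forces $wa \leq c_i$ for all $i$, i.e., each configuration has the form $c_i = (q_i, w \cdot a u_i)$ for some (possibly empty) word $u_i$, with $u_0 = u_n = \varepsilon$.

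Second, I would observe that each transition $\delta_i = (q_{i-1}, \sigma, q_i, \op)$ fires based only on the state $q_{i-1}$ and $\TOP{1}(c_{i-1}) = \TOP{1}(au_{i-1})$, neither of which depends on $w$. Therefore the identical transition sequence $\delta_1, \delta_2, \dots, \delta_n$ is fireable from $(q, w'a)$, producing configurations $c'_i = (q_i, w' \cdot au_i)$; in particular $c'_n = (\hat q, w'a)$ and no $c'_i$ equals $(q', w')$ for any $q'\in Q$ and $0 < i < n$ (since the word-length of $c'_i$ equals $|w'| + 1 + |u_i| > |w'|$ whenever $au_i \neq \varepsilon$, and $au_i = \varepsilon$ only at $i = 0, n$). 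This defines a map $\Phi_{w,w'}: \rho \mapsto \rho'$ from loops of $wa$ avoiding $w$ to loops of $w'a$ avoiding $w'$.

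Finally, the map $\Phi_{w',w}$ defined symmetrically is clearly a two-sided inverse: applying $\Phi_{w,w'}$ and then $\Phi_{w',w}$ replaces the prefix $w$ by $w'$ and back again, yielding the original transition sequence and hence the original run. Therefore $\Phi_{w,w'}$ is a bijection. There is no real obstacle here; the only thing to be careful about is to verify that substituting the prefix genuinely produces a valid run (which follows from the fact that the fired transitions depend only on state and top symbol, and the top symbol is unaffected by the substitution) and that $\Phi_{w,w'}(\rho)$ indeed never visits $w'$, which follows from the length bound $|c'_i| = |w'| + 1 + |u_i| > |w'|$ for all intermediate $i$.
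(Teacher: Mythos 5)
Your proof is correct and takes the same route as the paper, which simply states that the bijection is given by the stack replacement $[w/w']$ and cites Lemma \ref{Lem:BlumensathHOLevel2}; you have merely unfolded that prefix-replacement argument in full for the level-$1$ case. (The parenthetical ``$au_i=\varepsilon$ only at $i=0,n$'' is a harmless slip, since $au_i$ is never empty; the length bound you give already suffices.)
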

\begin{proof}
  The bijection is given by the stack replacement $[w/w']$
  (cf. Lemma \ref{Lem:BlumensathHOLevel2}). 
\end{proof}

Using the previous observation, it is straightforward to prove the
following lemma. 
\begin{lemma} \label{lem:LoopsContextFree}
  Let $\mathcal{N}$ be a $1$-PS and let $q,\hat q\in Q$,
    $w\in\Sigma^*$ and $a\in \Sigma$. 
    The set 
    \begin{align*}
      \{\rho: \rho(0)=(q,wa), \rho(\length(\rho))=(q',wa),
      \text{ and } \rho(i)\neq w\text{ for all }0\leq i \leq
      \length(\rho)\}       
    \end{align*}
    is a context-free language accepted by some $1$-PS of 
    size linear in $\lvert \mathcal{N} \rvert$. 
    Furthermore, the set of runs from the
    initial configuration to $(q,w)$ of $\mathcal{N}$ forms a
    context-free language that is accepted by some $1$-PS of size
    linear in $\lvert \mathcal{N} \rvert$. 
\end{lemma}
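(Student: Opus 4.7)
The plan is to exhibit for each of the two languages an explicit $1$-PS recognising it, where the $1$-PS reads a transition sequence $\delta_1\delta_2\dots\delta_m\in\Delta^*$ as input and simulates $\mathcal{N}$ on its own stack.

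For the first claim, I build a $1$-PS $\mathcal{M}$ with input alphabet $\Delta$, state set $Q\cup\{q_{\mathrm{init}},q_{\mathrm{acc}}\}$, and stack alphabet $\Sigma\cup\{\bot'\}$ for a fresh bottom marker $\bot'$. Starting in $q_{\mathrm{init}}$ with stack $[\bot']$, $\mathcal{M}$ initialises by pushing $a$ and moving to state $q$; thereafter, on reading $\delta=(p,\sigma,p',\op)\in\Delta$ in state $p$ with $\TOP{1}=\sigma$, it applies $\op$ and moves to $p'$. We block every transition that would pop $\bot'$. $\mathcal{M}$ accepts when it is in state $\hat q$ with stack exactly $[\bot',a]$. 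By the preceding lemma (the bijection given by the stack replacement $[\bot'/w]$, which is a special case of Lemma \ref{Lem:BlumensathHOLevel2}), a sequence $\delta_1\dots\delta_m$ is accepted by $\mathcal{M}$ iff it is the transition sequence of a run of $\mathcal{N}$ from $(q,wa)$ to $(\hat q, wa)$ that never visits a configuration with stack $w$; the ``never visits $w$'' condition is enforced precisely by the prohibition on popping $\bot'$. The construction is syntactic in $\mathcal{N}$, so $\lvert\mathcal{M}\rvert=O(\lvert\mathcal{N}\rvert)$.

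For the second claim, I build a $1$-PS $\mathcal{M}'$ that simulates $\mathcal{N}$ directly from the initial configuration $(q_0,\bot)$. The states of $\mathcal{M}'$ are $Q$ together with a chain of $\lvert w\rvert+1$ verification states $v_{\lvert w\rvert},v_{\lvert w\rvert-1},\dots,v_0$, one for each prefix of $w$ read from top to bottom. On reading $\delta\in\Delta$, $\mathcal{M}'$ applies $\delta$ exactly as $\mathcal{N}$ would. Nondeterministically, whenever $\mathcal{M}'$ is in state $q$, it may switch to $v_{\lvert w\rvert}$ and stop reading input; in state $v_i$ with $i>0$ it requires $\TOP{1}$ to equal the $i$th letter of $w$ (counted from the top), pops, and moves to $v_{i-1}$; the unique accepting configuration is state $v_0$ with empty stack. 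Again the construction is syntactic and its size is $O(\lvert\mathcal{N}\rvert+\lvert w\rvert)$, which is what the lemma claims up to the usual conflation of the two parameters.

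Both automata are $1$-PS, so the accepted languages are context-free; the only subtle point is the faithfulness of the simulation, and this follows directly from the definition of the transition relation of $\mathcal{N}$, since at every step the topmost symbol test and the stack operation coincide with those of $\mathcal{N}$. The main point to double-check is that the end-of-run check in each construction is correct: for $\mathcal{M}$ this uses the bijection from the preceding lemma so that the $[\bot'/w]$-replacement of an accepting run is precisely a run of $\mathcal{N}$ of the required form, and for $\mathcal{M}'$ the verification phase guarantees that the stack of $\mathcal{N}$ at the moment $\mathcal{M}'$ enters $v_{\lvert w\rvert}$ was exactly $w$.
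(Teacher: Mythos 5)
Your construction is correct and fleshes out exactly what the paper leaves implicit ("using the previous observation, it is straightforward\dots"): you invoke the preceding stack-replacement bijection to make the loop language independent of $w$, and then observe that a $1$-PS reading transition sequences over $\Delta$ can simulate $\mathcal{N}$, with the freshness of $\bot'$ guaranteeing that any intermediate visit to the forbidden stack dead-ends the simulation. The only caveat is one you already flag yourself: the automaton for the second language necessarily has size $O(\lvert\mathcal{N}\rvert+\lvert w\rvert)$ rather than $O(\lvert\mathcal{N}\rvert)$ (no family of acceptors whose size depends only on $\lvert\mathcal{N}\rvert$ could distinguish the infinitely many target configurations $(q,w)$), so that imprecision lies in the lemma's statement rather than in your proof, and it is harmless for the intended application, where only the final state and topmost symbol are constrained.
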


Using the pumping lemma for context
free-languages \cite{Bar-HillelPS61}, we derive the following bound on
short elements of context-free languages. 
\begin{lemma} \label{lem:CFPumping}
  There is a fixed polynomial $p$ such that the following holds.
  Let $L$ be some context-free language that is accepted by a $1$-PS
  $\mathcal{N}$. If $L$ contains $k$ elements, then there are pairwise
  distinct words $w_1, w_2, \ldots, w_k\in L$ such that  
  length $\lvert w_i \rvert$ is bounded by  
  $k \cdot \exp(p(\lvert N \rvert))$ for all $1\leq i \leq k$. 
\end{lemma}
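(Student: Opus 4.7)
\medskip

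\noindent\textbf{Proof plan for Lemma \ref{lem:CFPumping}.}

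The plan is to convert the $1$-PS $\mathcal{N}$ into a context-free grammar $G$ in Chomsky normal form, then apply the Bar-Hillel pumping lemma together with a counting argument. First I would apply the standard triple construction to turn $\mathcal{N}$ into a CFG with $O(\lvert Q\rvert^2 \lvert \Sigma\rvert)$ nonterminals, and then convert this CFG to Chomsky normal form. Both steps introduce only polynomial blow-up, so the resulting CNF grammar $G$ has $n \leq q(\lvert \mathcal{N}\rvert)$ nonterminals for some fixed polynomial $q$. In particular $L(G) = L$ and $2^n \leq \exp(p(\lvert \mathcal{N}\rvert))$ for a suitable polynomial $p$.

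The key combinatorial statement I want to prove is: if $\lvert L\rvert \geq k$, then $L$ contains $k$ pairwise distinct words of length at most $k \cdot 2^n$. Recall that for a CNF grammar with $n$ nonterminals, the Bar-Hillel pumping lemma yields: every word $w\in L$ with $\lvert w\rvert > 2^n$ admits a decomposition $w = u v x z y$ with $\lvert vz\rvert \geq 1$, $\lvert vxz\rvert \leq 2^n$, and $uxy \in L$. Hence from any $w$ of length $> 2^n$ I can produce a strictly shorter word $w'\in L$ with $\lvert w\rvert - \lvert w'\rvert \leq 2^n$ (since $\lvert vz\rvert \leq \lvert vxz\rvert \leq 2^n$).

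Now suppose, aiming at a contradiction, that $L$ contains fewer than $k$ words of length $\leq k\cdot 2^n$. Since $\lvert L\rvert \geq k$, there exists some $w \in L$ with $\lvert w\rvert > k \cdot 2^n$. Iterating the pump-down operation, I construct a sequence $w = w^{(0)}, w^{(1)}, \dots, w^{(T)}$ of pairwise distinct elements of $L$ (they are distinct because their lengths strictly decrease), where $\lvert w^{(i)}\rvert - \lvert w^{(i+1)}\rvert \leq 2^n$, the process stops at the first index $T$ with $\lvert w^{(T)}\rvert \leq 2^n$, and $T \geq (\lvert w\rvert - 2^n)/2^n > k-1$. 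Reading the sequence backwards from $w^{(T)}$, the length grows by at most $2^n$ per step, so $\lvert w^{(T-j)}\rvert \leq (j+1)\cdot 2^n$. Hence $w^{(T)}, w^{(T-1)}, \dots, w^{(T-k+1)}$ are $k$ pairwise distinct elements of $L$, each of length at most $k \cdot 2^n$, contradicting the assumption. Combining this with the CNF size bound gives the claimed length bound $k \cdot \exp(p(\lvert \mathcal{N}\rvert))$.

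The two mildly technical parts are (i) getting a clean polynomial bound on the number of CNF nonterminals produced by the standard PDA-to-CFG-to-CNF conversions (a routine but bookkeeping-heavy estimate), and (ii) justifying that at each pumping step we may indeed choose the pumping decomposition so that $\lvert vz\rvert \geq 1$ (this is part of the statement of the Bar-Hillel lemma for CNF). Neither step presents a genuine obstacle; the only real ``idea'' needed is the backward-counting observation in the last paragraph, which turns the standard pumping lemma into a quantitative bound on the $k$-th shortest word.
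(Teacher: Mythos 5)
Your proposal is correct and follows exactly the route the paper intends: the paper gives no proof of this lemma, merely citing the Bar-Hillel pumping lemma, and your iterated pump-down with the backward length count $\lvert w^{(T-j)}\rvert \leq (j+1)\cdot 2^n$ is a sound way to extract the $k$ short witnesses. The only detail worth double-checking in a full write-up is the polynomial bound on the CNF conversion of the triple construction, which is routine since each transition of a $1$-PS alters the stack by a single symbol.
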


These observations imply that it is rather easy to construct runs with
similar relevant ancestors in a $1$-NPT. 
In the following, we define a simpler notion of equivalent
relevant ancestors in the $1$-NPT case that replaces the one for 
$2$-NPT in Definition \ref{def:RAequivalence}. 

\begin{definition}
  Let $\mathcal{N}$ be some $1$-PS generating
  $\mathfrak{N}:=\HONPT(\mathcal{N})$.  
  Let $\bar\rho:=\rho_1, \rho_2, \dots, \rho_n$ and 
  \mbox{$\bar\rho':=\rho_1', \rho_2', \dots, \rho_n'$} be elements of
  $\mathfrak{N}$. We define  $\bar\rho\approx_l \bar\rho'$ if there is a
  bijection $\varphi: \RelAnc{l}{\bar\rho} \to \RelAnc{l}{\bar\rho'}$ such
  that the following holds:
  \begin{enumerate}
  \item For all $l'\leq l$ and $\pi\in\RelAnc{l'}{\rho_i}$, we have 
    $\varphi(\pi)\in \RelAnc{l'}{\rho_i'}$.
  \item $\varphi$ preserves $\trans{\delta}$-, $\jumpedge$-
    and $\plusedge$-edges.
  \item $\varphi$ preserves states and topmost stack elements, i.e., 
    for $\pi\in\RelAnc{l}{\bar\rho}$ such that $\pi$ ends in
    configuration $(q, wa)$, then $\varphi(\pi)$ ends in $(q, w'a)$
    for some word $w'$. 
  \end{enumerate}
\end{definition}

Using corresponding constructions as in the proof of proposition 
\ref{Prop:2NPT-Strategy}, we can prove that Duplicator has a winning
strategy that only uses elements of doubly exponential size.

\begin{proposition} \label{Prop:1NPT-Strategy}
  There is a polynomial $p$ such that the following holds:
  Let $\mathcal{N}$ be a $1$-PS generating the 
  NPT $\mathfrak{N}:=\HONPT(\mathcal{N})$.  
  Let $l', C\in\N$, and  
  $l := 4l'+5$.
  Furthermore, let $\bar\rho$ and $\bar\rho'$ be $n$-tuples of runs of
  $\mathfrak{N}$ such that  
  $\bar\rho  \approx_l \bar\rho'$, and
  $\length(\pi)\leq  C$
  for all $\pi\in\RelAnc{l}{\bar\rho'}$,
  
  For each $\rho\in \mathfrak{N}$ there is some $\rho'\in
  \mathfrak{N}$ such that 
  \begin{align*}
    \bar\rho, \rho \approx_{l'} \bar\rho', \rho'\text{ and }
    \length(\pi)\leq  C+ 4^{l'} \cdot n\cdot 4^l\exp(p(\lvert N \rvert)) 
    \text{ for all }\pi\in\RelAnc{l'}{\bar\rho',\rho'}.
  \end{align*}
\end{proposition}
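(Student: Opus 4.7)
The plan is to mirror the proof of Proposition \ref{Prop:2NPT-Strategy} but exploit the drastically simpler structure of $1$-NPT, where the role of the equivalence $\stackequiv{n_2}{z}{n_1}$ is replaced by the simple requirement that final states and topmost symbols agree. The key simplification is that by Remark \ref{rem:CharacterisePlusOneRelation}, an edge $\rho \plusedge \rho \circ \pi$ in a $1$-NPT just means $\pi$ starts with a $\Push{a}$ (where $a$ is the new topmost letter) and then performs a loop of $wa$ that never inspects $w$. By Lemma \ref{lem:LoopsContextFree} the set of such $\pi$ is context-free and accepted by a $1$-PS of size linear in $\lvert \mathcal{N}\rvert$, and by Lemma \ref{lem:CFPumping} we can always find many short representatives.

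First, I would enumerate $\RelAnc{l'+1}{\rho}$ as a chain $\rho_{m_0}\prec\rho_{m_0+1}\prec\ldots\prec\rho_m=\rho$ with $m-m_0\leq 4^{l'+1}$, where Proposition \ref{Prop:NextRelAnc} guarantees that successive elements are connected either by a single transition $\trans{\delta}$ or by a $\plusedge$-edge (and $\jumpedge$-edges between them can be read off from the operation labels and widths). Let $j$ be maximal with $\rho_j\in\RelAnc{l}{\bar\rho}$, or $j=m_0-1$ if no such $j$ exists. In the local case ($j\geq m_0$), set $\rho_i':=\varphi(\rho_i)$ for $m_0\leq i\leq j$ where $\varphi$ witnesses $\bar\rho\approx_l\bar\rho'$. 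In the global case, use Lemma \ref{lem:LoopsContextFree} to obtain a $1$-PS of size linear in $\lvert\mathcal{N}\rvert$ accepting the runs from the initial configuration ending in a configuration of the appropriate state/top-symbol (analogous to Lemma \ref{Lem:RunsendInShorStacks}); Lemma \ref{lem:CFPumping} then yields $n\cdot 4^l +1$ such runs of length at most $(n\cdot 4^l+1)\exp(p(\lvert\mathcal{N}\rvert))$, and we pick $\rho'_{m_0}$ to be any of these that is outside $\RelAnc{l}{\bar\rho'}$.

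Next, I extend the chain inductively. If $\rho_i\trans{\delta}\rho_{i+1}$, simply apply $\delta$ to $\rho_i'$; this adds only length $1$. If $\rho_i\plusedge\rho_{i+1}$, then the appended run performs $\Push{a}$ followed by a loop of the new topmost stack (where $a=\TOP{1}(\rho_{i+1})$), and again by Lemma \ref{lem:LoopsContextFree} these extensions form a context-free language (whose accepting $1$-PS depends only on $\mathcal{N}$, $a$, and the start/end states, \emph{not} on the underlying word). By Lemma \ref{lem:CFPumping} we can pick such an extension of length at most $(n\cdot 4^l+1)\exp(p(\lvert\mathcal{N}\rvert))$ that makes $\rho_{i+1}'$ distinct from every element of $\RelAnc{l}{\bar\rho'}$. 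Since there are at most $4^{l'+1}$ extension steps and each adds at most $n\cdot 4^l\exp(p(\lvert\mathcal{N}\rvert))$ (absorbing constants into $p$), the final run $\rho':=\rho_m'$ satisfies the stated length bound $C+4^{l'}\cdot n\cdot 4^l\exp(p(\lvert\mathcal{N}\rvert))$.

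Finally, I verify $\bar\rho,\rho\approx_{l'}\bar\rho',\rho'$ by an argument analogous to Lemma \ref{LemmaConstructedRelAncEquiv}: the chain $\rho_{m_0}'\prec\ldots\prec\rho_m'$ has the same edge pattern as $\rho_{m_0}\prec\ldots\prec\rho_m$ by construction, and the condition that each $\rho_i'$ lies outside $\RelAnc{l}{\bar\rho'}$ whenever $\rho_i$ lies outside $\RelAnc{l}{\bar\rho}$ ensures that touching patterns of relevant-ancestor sets match up (invoking Corollary \ref{CorTouch} in the analogous way). The witnessing bijection is obtained by glueing $\varphi{\restriction}_{\RelAnc{l'}{\bar\rho}}$ with the map $\rho_i\mapsto\rho_i'$. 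The main obstacle is the bookkeeping in the global case when $\RelAnc{l'+1}{\rho}$ nevertheless has small nonempty intersection with $\RelAnc{l'+1}{\bar\rho}$: here I would iterate the construction as in Lemma \ref{LemmaGlobalStep1}, using the finiteness of $\RelAnc{l}{\bar\rho'}$ to guarantee termination, and only then apply the pumping argument to shrink. The rest is routine because, unlike the $2$-NPT case, there is no need for an elaborate family of word equivalences: matching $(\mathrm{state},\TOP{1})$-pairs suffices for all composition arguments in $1$-NPT.
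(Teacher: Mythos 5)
Your proposal follows essentially the same route as the paper's proof: the same local/global case split on whether $\RelAnc{l'}{\rho}$ meets $\RelAnc{l}{\bar\rho}$, the same chain construction along $\trans{}$ and $\plusedge$ edges using Proposition \ref{Prop:NextRelAnc}, the same use of Lemmas \ref{lem:LoopsContextFree} and \ref{lem:CFPumping} with a pigeonhole argument to find short extensions avoiding $\RelAnc{l}{\bar\rho'}$, and the same length bookkeeping over at most $4^{l'}$ steps. The argument is correct and matches the paper's proof in all essentials.
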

\begin{proof}
  As in the $2$-NPT case, we distinguish a local and a global case.
  Let $\varphi$ witness that $\bar\rho \approx_l \bar\rho'$. 
  \begin{itemize}
  \item If $\RelAnc{l'}{\rho} \cap \RelAnc{l}{\bar\rho}\neq\emptyset$
    let $\rho_0$ 
    be maximal in this set. We define $\rho_0':=\varphi(\rho_0)$. 
    There are runs $\rho_0 \prec \rho_1 \prec \rho_2 \prec\dots\prec
    \rho_m$  which form the set $\RelAnc{l'}{\rho} \cap\{\pi:
    \rho_0\preceq \pi\}$. We now inductively construct $\rho_0' \prec
    \rho_1' \prec \rho_2'\prec\dots\prec \rho_m'$ such that
    $\rho':=\rho_m'$ satisfies the claim. 
    Therefore, we copy the edge connecting $\rho_i$ with
    $\rho_{i+1}$. If
    $\rho_i\plusedge\rho_{i+1}$ we can construct a
    run $\rho'_{i+1}\notin\RelAnc{l}{\bar\rho}$ such that $\rho'_i
    \plusedge 
    \rho_{i+1} $. Using Lemmas 
    \ref{lem:LoopsContextFree} and \ref{lem:CFPumping} 
    we can ensure that $\rho'_{i+1}=\rho'_i\circ\pi$ with
    $\length(\pi)$ bounded by  
    $n\cdot 4^l \cdot \exp( p(\lvert \mathcal{N}\rvert))$. 
    Since $m\leq 4^{l'}$ the claim follows by iterative use of this
    observation.
  \item If $\RelAnc{l'}{\rho}\cap \RelAnc{l}{\bar\rho}=\emptyset$, we
    proceed analogously to Lemma \ref{LemmaGlobalStep1}. 
    Let $\rho_0$ be the minimal element of $\RelAnc{l'}{\rho}$. 
    Let $q$ be its final state and $a$ be its final topmost stack
    entry. Let $m$ be the number of occurrences of elements in
    $\RelAnc{l}{\bar\rho}$ that end in state $q$ and with final
    topmost symbol $a$. 
    Due to $\bar\rho \approx_l \bar\rho'$, $m$ is also the number of
    elements in 
    $\RelAnc{l}{\bar\rho'}$ that end in state $q$ and with final
    topmost symbol $a$. 
    Due to Lemmas \ref{lem:LoopsContextFree} and \ref{lem:CFPumping} 
    there are $m+1$ elements of size at most
    $n \cdot 4^l \cdot \exp(p(\lvert \mathcal{N}\rvert ))$
    that end in state $q$ and end with topmost stack entry $a$. 
    By pigeonhole principle, we can set $\rho'_0$ to be one of these
    such that $\rho'_0\notin\RelAnc{l}{\bar\rho'}$. 
    Now, we proceed as in the local case. 
    Let $\rho_0\prec \rho_1 \prec \rho_2 \prec \dots \prec \rho_k$ be
    the enumeration of $\RelAnc{l'}{\rho}$. 
    We define runs $\rho'_0
    \prec \rho'_1 \prec \rho'_2 \prec\dots\prec \rho'_k$ such that 
    $\rho_i$ is connected to $\rho_{i+1}$ via the same edge as
    $\rho_i'$ to $\rho_{i+1}'$. Due to Lemmas  
    \ref{lem:LoopsContextFree} and \ref{lem:CFPumping}, $\rho_{i+1}$
    can be chosen such that the run
    connecting $\rho_i'$ with $\rho_{i+1}'$ is bounded by
    $n \cdot 4^l \cdot \exp(p(\lvert \mathcal{N}\rvert ))$. Since
    $k < 4^{l'}$ we conclude that 
    $\rho':=\rho_k'$ has length at most 
    $4^{l'} \cdot n \cdot 4^l \cdot \exp(p(\lvert \mathcal{N}\rvert
    ))$. \qed
  \end{itemize}
\end{proof}

\begin{corollary}
  $\FO{}$ model checking on $1$-NPT can be solved by an alternating
  Turing machine in $2$-EXPTIME with linearly many alternations in the
  size of the formula.
\end{corollary}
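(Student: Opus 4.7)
The plan is to instantiate the generic model checking algorithm of Section~\ref{sec:EFGame} (Algorithm~\ref{AlgoSPReservingModelCheck}) using the small-witness strategy of Proposition~\ref{Prop:1NPT-Strategy} as the constraint~$S$, and then bound the complexity by tracking the growth of the run-length bound through the iterated application of the proposition.

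First I would, given an input sentence $\varphi$ of quantifier rank $q$, fix numbers $l^q < l^{q-1} < \dots < l^1$ by the recursion $l^{i}=4l^{i+1}+5$ starting from some small $l^q$ (so that $l^i \in \mathcal{O}(4^{q-i})$, whence $4^{l^1}$ is doubly-exponential in $q$). I would then define the family $S=(S^{\mathfrak{N}}(i))_{i\leq q}$ by putting $(\rho_1,\dots,\rho_i)\in S^{\mathfrak{N}}(i)$ iff $\length(\pi)\leq C_i$ for all $\pi\in \RelAnc{l^i}{\rho_j}$ and all $j\leq i$, where $C_0:=0$ and inductively $C_{i+1}:=C_i + 4^{l^{i+1}}\cdot i\cdot 4^{l^i}\cdot\exp(p(\lvert\mathcal{N}\rvert))$, using the polynomial $p$ from Proposition~\ref{Prop:1NPT-Strategy}. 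Proposition~\ref{Prop:1NPT-Strategy} then yields directly that Duplicator has an $S$-preserving strategy in the $q$-round Ehrenfeucht--\Fraisse game on two copies of $\mathfrak{N}$, so by the soundness argument in Section~\ref{sec:EFGame} (equivalence of~\eqref{eqn:RestrictedSearch1}--\eqref{eqn:RestrictedSearch3}) the Algorithm~\ref{AlgoSPReservingModelCheck} correctly decides $\mathfrak{N}\models\varphi$.

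For the complexity bound, I would solve the recursion: unfolding gives $C_q\leq q^2 \cdot 4^{l^1}\cdot \exp(p(\lvert \mathcal{N}\rvert))$, which is at most $\exp(\exp(c\cdot(q+\log\lvert\mathcal{N}\rvert)))$ for some constant~$c$, i.e.\ doubly exponential in $\lvert\varphi\rvert+\lvert\mathcal{N}\rvert$. Each run in~$S^{\mathfrak{N}}(i)$ can be represented by the sequence of at most $C_i$ transitions used, which takes $C_i\cdot\log\lvert\Delta\rvert$ bits; membership in $S^{\mathfrak{N}}(i)$ is effectively decidable by simulating the $1$-PS along the run and computing relevant ancestors, so~$S$ is finitary in the sense of Section~\ref{sec:EFGame}. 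The running time of Algorithm~\ref{AlgoSPReservingModelCheck} is $f_{\mathfrak{N}}(\lvert\varphi\rvert)\cdot\lvert\varphi\rvert$ where $f_{\mathfrak{N}}(i)$ bounds the representation size of elements of $S^{\mathfrak{N}}(i)$; combined with the doubly-exponential bound on $C_q$ this gives an ATIME($\exp_2$) algorithm. The number of alternations is linear in $\lvert\varphi\rvert$ because each quantifier in~$\varphi$ produces exactly one alternation in the recursive calls of the algorithm.

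The main technical obstacle is the bookkeeping for the recursion defining $C_q$: one must choose the $l^i$ large enough that Proposition~\ref{Prop:1NPT-Strategy} applies at every level (in particular $l^i \geq 4l^{i+1}+5$) yet small enough that $4^{l^1}$ stays within a doubly exponential in $\lvert\varphi\rvert$. A second subtlety is that the bound of Proposition~\ref{Prop:1NPT-Strategy} is additive in $C$ with a factor depending only on $n$, $l'$, $l$, and $\lvert\mathcal{N}\rvert$; this additive (rather than multiplicative) behaviour is what keeps us in $2$-EXPTIME rather than climbing to $3$-EXPTIME, so care is needed to invoke the proposition with the correct $n=i$ in round~$i$ and to confirm that the single-exponential factor $\exp(p(\lvert\mathcal{N}\rvert))$ does not compound across rounds.
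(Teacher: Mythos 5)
Your proposal is correct and follows essentially the same route as the paper: the paper's proof is a two-line sketch that invokes ``iterated use of the previous lemmas'' to get a Duplicator strategy choosing runs of length $r\cdot\exp(\exp(q(r)))\cdot\exp(p(\lvert\mathcal{N}\rvert))$, which is exactly the recursion on $l^i$ and $C_i$ that you unfold explicitly before plugging the resulting finitary constraint into Algorithm~\ref{AlgoSPReservingModelCheck}. Your version merely makes the bookkeeping (the choice of $l^i$, the additive growth of $C_i$, and the finitariness of $S$) explicit where the paper leaves it implicit.
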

\begin{proof}
  Let $\varphi\in\FO{r}$ and $\mathcal{N}$ some $1$-PS. 
  Iterated use of the previous lemmas shows that Duplicator has a
  winning strategy in the $r$ round game on $\HONPT(\mathcal{N})$ and
  $\HONPT(\mathcal{N})$ where he chooses elements 
  of size bounded by $r\cdot \exp(\exp(q(r)))\cdot
  \exp(p(\lvert\mathcal{N}\rvert))$ for some fixed polynomials $p$ and
  $q$.
\end{proof}

\section{Conclusion}
\label{sec:Conclusion}

In this paper we extended the notion of a nested pushdown tree and
developed the hierarchy of higher-order nested pushdown trees. These 
are  higher-order pushdown trees enriched by a
jump relation that makes corresponding push and pop operations (of the
highest level) visible. This new hierarchy is an intermediate step
between the 
hierarchy of pushdown trees and the hierarchy of collapsible
pushdown graphs in the sense that it contains expansions of
higher-order pushdown trees and its $n$-th level is uniformly
first-order interpretable 
in the class of collapsible pushdown graphs of level $n+1$. We hope
that further 
study of this hierarchy helps to clarify the relationship between the
hierarchies defined by higher-order pushdown systems and by
collapsible pushdown systems. 
We have shown the decidability of the first-order model checking on
the first two levels of the nested pushdown tree hierarchy (this
contrasts the undecidability of first-order logic on the class of all
level $3$ collapsible pushdown graphs \cite{Broadbent2012}). 
The algorithm is obtained from the analysis of restricted strategies
in Ehrenfeucht-\Fraisse games on 
nested pushdown trees of level $2$. The game analysis gets tractable
due to the theory of \emph{relevant ancestors} in combination with
shrinking constructions for level $2$ pushdown systems. It is open
whether this approach extends to 
higher-levels. The theory of relevant ancestors generalises to all
levels of the hierarchy and provides an understanding of the
Ehrenfeucht-\Fraisse games. Unfortunately, we do not have any kind of
shrinking constructions for pushdown systems of level $3$
or higher. The development of such shrinking
construction may yield 
the necessary bridge between the theory of relevant ancestors of
$n$-NPT and the dynamic-small-witness property needed for the
development of a model checking algorithm. 
Furthermore, better shrinking construction may imply elementary
complexity bounds for the \FO{} model checking on $2$-NPT. 
Another open question concerns the modal $\mu$-calculus model checking
on this new hierarchy. Note that the interpretation of $n$-NPT in
collapsible pushdown 
graphs of level $n+1$ is almost modal but for the reversal of the
jump edges in comparison to the collapse edges used for their
simulation. We conjecture that modal $\mu$-calculus is decidable on
the whole hierarchy of nested pushdown trees. Moreover, it is open
how the nested pushdown tree hierarchy relates exactly to the
(collapsible) higher-order pushdown hierarchies. We conjecture that
the relationships obtained in this paper are optimal but we lack
proofs. For instance, it is open whether $n$-NPT can be
interpreted in level $n$ collapsible pushdown graphs or vice versa (we
conjecture the answer is no).

\bibliography{/u/kartzow/Paper/Kartzow/Standard}
\bibliographystyle{acmtrans}

\end{document}